\newcolumntype{L}{>{\RaggedRight}X} 
\newtheorem{lemma}{Lemma}
\newtheorem{theorem}{Theorem}
\newtheorem{example}{Example}
\newtheorem{remark}{Remark}
\newtheorem{assumption}{Assumption}
\newcommand*{\rom}[1]{\expandafter\@slowromancap\romannumeral #1@}
\let\oldfootnote\footnote
\renewcommand{\footnote}{\unskip\oldfootnote}%
\begin{document}
\begin{bibunit}[apalike]  

\def\spacingset#1{\renewcommand{\baselinestretch}%
{#1}\small\normalsize} \spacingset{1}

\sectionfont{\bfseries\large\sffamily}%
%
\newcommand*\emptycirc[1][1ex]{\tikz\draw (0,0) circle (#1);} 
\newcommand*\halfcirc[1][1ex]{%
  \begin{tikzpicture}
  \draw[fill] (0,0)-- (90:#1) arc (90:270:#1) -- cycle ;
  \draw (0,0) circle (#1);
  \end{tikzpicture}}
\newcommand*\fullcirc[1][1ex]{\tikz\fill (0,0) circle (#1);} 

\let\oldthebibliography\thebibliography
\let\endoldthebibliography\endthebibliography
\renewenvironment{thebibliography}[1]{
  \begin{oldthebibliography}{#1}
    \setlength{\itemsep}{0em}
    \setlength{\parskip}{0em}
}
{
  \end{oldthebibliography}
}

\subsectionfont{\bfseries\sffamily\normalsize}%
%

\def\spacingset#1{\renewcommand{\baselinestretch}%
{#1}\small\normalsize} \spacingset{1}


\begin{center}
    \Large \bf Design-Based Causal Inference with Missing Outcomes: Missingness Mechanisms, Imputation-Assisted Randomization Tests, and Covariate Adjustment
\end{center}


\begin{center}
  \large  $\text{Siyu Heng}^{*, 1}$, $\text{Jiawei Zhang}^{*, 2, 3}$ and $\text{Yang Feng}^{\dagger, 1}$
\end{center}

\begin{center}
   \textit{$^{1}$Department of Biostatistics, New York University, New York, NY}
\end{center}
\begin{center}
  \textit{$^{2}$Courant Institute of Mathematical Sciences, New York University, New York, NY}
\end{center}
\begin{center}
   \textit{$^{3}$Data Science Institute, The University of Chicago, Chicago, IL}
\end{center}

\let \thefootnote\relax\footnotetext{$^{*}$Siyu Heng (siyuheng@nyu.edu) and Jiawei Zhang (jz4721@nyu.edu) contributed equally to this work.}

 \let \thefootnote\relax\footnotetext{$^{\dagger}$Contact: Yang Feng (yang.feng@nyu.edu), Department of Biostatistics, New York University, New York, NY. }

\begin{abstract}

Design-based causal inference, also known as randomization-based or finite-population causal inference, is one of the most widely used causal inference frameworks, largely due to the merit that its validity can be guaranteed by study design (e.g., randomized experiments) and does not require assuming specific outcome-generating distributions or super-population models. Despite its advantages, design-based causal inference can still suffer from other issues, among which outcome missingness is a prevalent and significant challenge. This work systematically studies the outcome missingness problem in design-based causal inference. First, we propose a general and flexible outcome missingness mechanism that can facilitate finite-population-exact randomization tests of no treatment effect. Second, under this general missingness mechanism, we propose a general framework called ``imputation and re-imputation" for conducting randomization tests in design-based causal inference with missing outcomes. We prove that our framework can still ensure finite-population-exact type-I error rate control even when the imputation model was misspecified or when unobserved covariates or interference exist in the missingness mechanism. Third, we extend our framework to conduct covariate adjustment in randomization tests and construct finite-population-valid confidence regions with missing outcomes. Our framework is evaluated via extensive simulation studies and applied to a large-scale randomized experiment. 

\end{abstract}

\noindent%
{\it Keywords:} Finite-population causal inference; Machine learning; Missing data; Randomization inference; Randomized trials. 

\spacingset{1.75} 

\section{Introduction}\label{sec: introduction}

Design-based causal inference, also known as randomization-based or finite-population causal inference, is one of the most fundamental causal inference frameworks and has become increasingly popular in the past two decades \citep{rosenbaum2002observational, imbens2015causal, choi2017estimation, li2017general, basse2019randomization, luo2021leveraging, zhang2023randomization, cohen2024no, zhao2024adjust}. Compared with model-based (i.e., super-population or sampling-based) causal inference, design-based causal inference has two characteristics. First, the only probability distribution needed in design-based causal inference is the randomization of treatment assignments, which can be guaranteed by design, and no distributional assumptions on potential outcomes are required. Second, design-based causal inference focuses explicitly on the study subjects by conditioning on their potential outcomes and makes causal inference more relevant for the study subjects in hand. See \citet{athey2017econometrics} and \citet{li2023randomization} for detailed discussions.

Although design-based causal inference is immune to outcome model/distribution misspecification, it may still suffer from other data-related issues, among which missingness in outcomes is a common one. However, contrary to the rapidly growing literature on missing outcomes issues in model-based (super-population) causal inference, existing literature on outcome missingness in design-based (finite-population) causal inference is scarce, largely due to the challenge that design-based causal inference does not impose any outcome models and, therefore, cannot directly incorporate existing model-based approaches to missing outcomes. Most existing approaches to design-based causal inference with missing outcomes fall into one of the following two categories: 1) \textbf{Class-One Approaches} (in which covariate information is ignored when handling missing outcomes): These methods involve non-informative imputation techniques (e.g., median, mean, or worst-case imputation) or complete case analysis \citep{edgington2007randomization, kennes2012choice, rosenberger2019randomization, heussen2023randomization}; or 2) \textbf{Class-Two Approaches} (in which covariate information is used): These methods integrate model-based multiple imputation or weighting approaches into design-based inference \citep{ivanova2022randomization}.

However, both class-one and class-two approaches have significant limitations. In class-one approaches, all the fruitful covariate information will be discarded when handling missing outcomes, which can substantially decrease statistical efficiency. For class-two approaches, previous simulation studies have shown that the corresponding type-I error rates are often either inflated or conservative, even under some simple settings \citep{ivanova2022randomization}. This is because class-two approaches adopt an empirical model-based strategy for incorporating covariate information and fail to stick to the randomization design. To our knowledge, in design-based causal inference, there is no existing framework for handling outcome missingness that can both incorporate covariate information and rigorously follow the original randomization design.

Our work fills this gap by systematically investigating the outcome missingness problem in design-based causal inference. Specifically, the main contributions of our work include:
\begin{itemize}
    \item We use the potential outcomes framework to propose a general outcome missingness mechanism that can facilitate finite-population-exact randomization tests of no treatment effect in design-based causal inference. We show that the proposed outcome missingness mechanism is flexible enough to cover many existing outcome missingness mechanisms as special cases.
    
    \item Built upon the above missingness mechanism, we propose a general framework called ``imputation and re-imputation" for constructing randomization tests of no treatment effect with outcome missingness. Our framework can incorporate any existing missing outcomes imputation algorithms (from simple linear models to advanced machine learning algorithms) to leverage covariate information during imputation and meanwhile guarantee finite-population-exact type-I error rate control, even when the imputation algorithm or model was misspecified or when unobserved covariates or interference exists in the missingness mechanism. This is also the first design-based causal testing framework that can both 1) incorporate covariate information when handling missing outcomes and 2) rigorously follow the randomization design.

    \item We extend our framework to allow conducting covariate adjustment in finite-population-exact randomization tests with missing outcomes, seeking to further increase power. Some other recent studies investigate covariate adjustment with outcome missingness from the model-based (super-population) perspective (e.g., \citealp{zhao2023covariate, wang2024handling}). Parallelly, our work presents the first rigorous strategy for covariate adjustment with outcome missingness in the design-based (finite-population) setting, which can ensure finite-population-exact type-I error rate control \textit{without} the need for correctly specifying either the outcome-generating or the missingness model.
    
    \item We show how to construct a design-based (finite-population) confidence region with missing outcomes by inverting the proposed imputation-assisted randomization tests.
\end{itemize}
The finite-population-exact type-I error rate control and gains in power under our framework are examined via simulation studies. As a data application, we apply our framework to a large-scale cluster randomized experiment called the Work, Family, and Health Study (\citealp{WFHS2018}). We have also developed open-source and user-friendly \textsf{Python} and \textsf{R} packages \texttt{iArt} (\textbf{i}mputation-\textbf{A}ssisted \textbf{r}andomization \textbf{t}ests) for implementation of our framework (URL: \url{https://iart.readthedocs.io/en/latest/}).

\section{Brief Review of Design-Based Causal Inference With Complete Outcome Data}

We first review the classic framework for design-based causal inference with complete outcome data (i.e., no missing outcomes) \citep{rosenbaum2002observational, imbens2015causal}. Suppose there are $I\geq 1$ strata (or blocks, study centers, or matched sets), and $n_{i}\geq 2$ subjects in stratum $i$. In total, there are $N=\sum_{i=1}^{I}n_{i}$ subjects. For a given subject $j$ in stratum $i$, let $Z_{ij}\in \{0, 1\}$ denote the treatment indicator ($Z_{ij}=1$ if receiving treatment and $Z_{ij}=0$ if receiving control), $\mathbf{x}_{ij}\in \mathbbm{R}^{p_{x}}$ the observed covariates (i.e., pre-treatment characteristics measured before the treatment), and $\mathbf{u}_{ij}\in \mathbbm{R}^{p_{u}}$ the unobserved covariates. Suppose there are $K$ outcomes of interest ($K\geq 1$), and let $Y_{ijk}$ denote the $k$-th outcome of subject $j$ in stratum $i$ (we can omit $k$ in the single outcome case). Let $\mathbf{Z}=(Z_{11}, \dots, Z_{In_{I}})\in \{0,1\}^{N}$ denote the treatment indicator vector, $\mathbf{X}= \{\mathbf{x}_{ij}: i\in[I], j \in [n_{i}]\}$ all the observed covariate information measured before the treatment assignments (henceforth, we let $[a]=\{1, \dots, a\}$ for any positive integer $a$), $\mathbf{U}=\{\mathbf{u}_{ij}: i \in [I], j \in [n_{i}]\}$ all the unobserved covariates, $\mathbf{Y}_{k}=(Y_{11k}, \dots, Y_{In_{I}k})$ all the $k$-th outcomes, and $\mathbf{Y}=(\mathbf{Y}_{1},\dots, \mathbf{Y}_{K})$ all the $K$ outcomes. Under the potential outcomes framework (\citealp{neyman1923application, rubin1974estimating}), we let $Y_{ijk}(1)$ and $Y_{ijk}(0)$ denote the potential $k$-th outcome of subject $j$ in stratum $i$ under treatment and that under control, respectively. Then, we have $Y_{ijk}=Z_{ij}Y_{ijk}(1)+(1-Z_{ij})Y_{ijk}(0)$. In design-based causal inference, one of the most widely considered null hypotheses is Fisher's sharp null of no effect, of which the form (for the $k$-th outcome) is $H_{0, k}: Y_{ijk}(1)=Y_{ijk}(0)$ for all $i, j$. When there are multiple outcomes of interest (i.e., $K\geq 2$), the overall sharp null is $H_{0}: \bigcap_{k=1}^{K}H_{0, k}$. The hypothesis testing part of our work focuses on Fisher's sharp null and its extensions in the single and multiple outcomes cases, which is often regarded as a first step in a causal analysis and widely considered in both methodology research and applied research \citep{rosenbaum2002observational, rosenbaum2020design, imbens2015causal, li2017general}. 

In design-based causal inference,  all the potential outcomes $\mathcal{Y}=\{(Y_{ijk}(0), Y_{ijk}(1)): i \in [I], j \in [n_{i}], k \in [K]\}$ are fixed values, with the only probability distribution that enters into statistical analysis being the randomization of treatment assignments (i.e., the probability distribution of $\mathbf{Z}$ prespecified by design) (\citealp{rosenbaum2002observational, imbens2015causal, li2017general}). For example, in a stratified randomized experiment, researchers randomly assign the treatment to $m_{i}$ subjects among the $n_{i}$ total subjects in stratum $i$, where each $m_{i}$ is a prespecified number. Specifically, let $\mathcal{Z}=\{ \mathbf{z}=(z_{11},\dots, z_{In_{I}})\in \{0, 1\}^{N}:\sum_{j=1}^{n_{i}}z_{ij}=m_{i}, i \in [I]\}$ denote all possible treatment assignments, then we have 
\begin{equation}\label{eqn: randomization assumption}
    P(\mathbf{Z}=\mathbf{z})=\prod_{i=1}^{I}{n_{i}\choose m_{i}}^{-1} \quad \text{for all $\mathbf{z}\in \mathcal{Z}$}. 
\end{equation}
The study design (\ref{eqn: randomization assumption}) (i.e., stratified randomization design) reduces to a completely randomized experiment when $I=1$ and reduces to a paired randomized experiment when $n_{i}=2$ and $m_{i}=1$ for all $i$. In a Bernoulli randomized experiment, the probability of receiving the treatment for each study subject independently and identically follows $\text{Bernoulli}(1/2)$. In a cluster randomized experiment, the unit of treatment assignments is cluster instead of individual. See \citet{imbens2015causal} and \citet{rosenbaum2020design} for detailed introductions to various randomization-based study designs. After fixing a study design (e.g., complete randomization, stratified randomization, Bernoulli randomization, or cluster randomization, among many others), we let $\Omega$ denote the collection of all possible treatment assignments and $\mathcal{P}$ the probability distribution of $\mathbf{Z}$ over $\Omega$, induced by the study design (i.e., we have $\mathbf{Z} \sim \mathcal{P}$). For example, in a stratified randomized experiment, the $\Omega$ is the aforementioned $\mathcal{Z}$, and the $\mathcal{P}$ is the randomization distribution (\ref{eqn: randomization assumption}). In the single outcome case, given the observed value $t$ of some test statistic $T(\mathbf{Z}, \mathbf{Y})$ (e.g., the Wilcoxon rank sum test, the permutational $t$-test, the Kolmogorov-Smirnov test, among many others), we can calculate the finite-population-exact (permutational) $p$-value under $H_{0}$:
\begin{equation}\label{eqn: exact p-value}
   P(T(\mathbf{Z}, \mathbf{Y}) \geq t\mid H_{0})=\sum_{\mathbf{z}\in \Omega}\mathbbm{1}\{T(\mathbf{Z}=\mathbf{z}, \mathbf{Y})\geq t\}\times P(\mathbf{Z}=\mathbf{z}), \text{ where $\mathbf{Z} \sim \mathcal{P}$}. 
\end{equation}
When the sample size $N$ is large, the exact $p$-value (\ref{eqn: exact p-value}) can be approximated via the Monte Carlo method with arbitrary precision (\citealp{rosenbaum2002observational, imbens2015causal}). In the multiple outcomes case, for testing the overall null $H_{0}: \bigcap_{k=1}^{K}H_{0, k}$, researchers can either directly use a test statistic $T(\mathbf{Z}, \mathbf{Y}_{1},\dots, \mathbf{Y}_{K})$ that combines all the $K$ outcomes (e.g., some linear combination of $K$ statistics calculated separately from the $K$ individual outcomes \citep{rosenbaum2016using}), or combine the $K$ $p$-values based on (\ref{eqn: exact p-value}) for each individual null $H_{0,k}$ using the Bonferroni correction or the Holm-Bonferroni method. In addition to testing the null effect, researchers may also want to construct confidence regions for causal parameters. For example, consider the following general class of treatment effect models: 
\begin{equation}\label{eqn: effect model}
    Y_{ijk}(1)=f_{k}(Y_{ijk}(0), \Vec{\beta}_{k}, \mathbf{x}_{ij}), \text{ for } i \in [I], j \in [n_{i}], k \in [K], 
\end{equation}
where each $f_{k}$ is a prespecified map from $Y_{ijk}(0)$ to $Y_{ijk}(1)$ that involve causal parameter(s) $\Vec{\beta}_{k}$ and may also involve the observed covariates $\mathbf{x}_{ij}$. For example, when $f_{k}=Y_{ijk}(0)+\beta_{k}$, the treatment effect model (\ref{eqn: effect model}) is the additive constant effect model. When $f_{k}=\beta_{k}Y_{ijk}(0)$, the model (\ref{eqn: effect model}) corresponds to the multiplicative effect model. When there are interaction terms between $\Vec{\beta}_{k}$ and $\mathbf{x}_{ij}$, the model (\ref{eqn: effect model}) allows for heterogeneous treatment effects. Researchers can obtain a design-based (randomization-based) confidence region for the causal parameter(s) $(\Vec{\beta}_{1},\dots, \Vec{\beta}_{K})$ via inverting the randomization tests for testing Fisher's sharp null with transformed outcomes $Y_{ijk, \Vec{\beta}_{k}}=Z_{ij}Y_{ijk}+(1-Z_{ij})f_{k}(Y_{ijk}, \Vec{\beta}_{k}, \mathbf{x}_{ij})$ (because each $Y_{ijk, \Vec{\beta}_{k}}$ will be invariant under any $\mathbf{Z}$ if model (\ref{eqn: effect model}) holds with the prespecified causal parameters $\Vec{\beta}_{k}$); see \citet{rosenbaum2002observational} for details.

\section{Constructing Finite-Population-Exact and Powerful Randomization Tests with Missing Outcomes}

\subsection{Clarifying the Outcome Missingness Mechanism for Design-Based Testing}

In many practical studies, among the $K\geq 1$ outcomes, one or multiple of them have missingness. Let $M_{ijk}$ denote the observed outcome missingness indicator for the $k$-th outcome of subject $j$ in stratum $i$ (hereafter, we call it subject $ij$ for short): $M_{ijk}=1$ if the $k$-th outcome of subject $ij$ was missing, and $M_{ijk}=0$ otherwise. That is, if we let $Y_{ijk}^{*}$ denote the realized value (possibly with missingness) of the $k$-th outcome (henceforth called ``realized $k$-th outcome") of subject $ij$, then we have $Y_{ijk}^{*}=Y_{ijk}$ if $M_{ijk}=0$ (recall that each $Y_{ijk}$ denotes the true, possibly unobserved, outcome) and $Y_{ijk}^{*}=\text{``Missing"}$ if $M_{ijk}=1$. We let $\mathbf{M}_{k}=(M_{11k}, \dots, M_{In_{I}k})$, $\mathbf{M}=(\mathbf{M}_{1}, \dots, \mathbf{M}_{K})$, $\mathbf{Y}^{*}_{k}=(Y_{11k}^{*}, \dots, Y_{In_{I}k}^{*})$, and $\mathbf{Y}^{*}=(\textbf{Y}_{1}^{*}, \dots, \textbf{Y}_{K}^{*})$. We still let $\mathbf{Y}=(\mathbf{Y}_{1}, \dots, \mathbf{Y}_{K})$ (where $\mathbf{Y}_{k}=(Y_{11k}, \dots, Y_{In_{I}k})$) denote the true, possibly unobserved, outcomes. Since each missingness status $M_{ijk}$ is a post-treatment variable, following the potential outcomes framework \citep{neyman1923application, rubin1974estimating, frangakis1999addressing}, we let $M_{ijk}(\mathbf{z})$ denote the potential missingness status of the $k$-th outcome of subject $ij$ under treatment assignments $\mathbf{z}\in \Omega$. Then, the observed missingness status $M_{ijk}=M_{ijk}(\mathbf{z})$ if $\mathbf{Z}=\mathbf{z}$. Note that here we allow arbitrary interference in the missingness mechanism, i.e., we allow $M_{ijk}$ to depend on $Z_{i^{\prime}j^{\prime}}$ for $i\neq i^{\prime}$ and/or $j\neq j^{\prime}$. We then let $Y_{ijk}^{*}(\mathbf{z})$ denote the potential $k$-th realized outcome of subject $ij$, where $Y_{ijk}^{*}(\mathbf{z})=Y_{ijk}(z_{ij})$ if $M_{ijk}(\mathbf{z})=0$ and $Y_{ijk}^{*}(\mathbf{z})=\text{``Missing"}$ if $M_{ijk}(\mathbf{z})=1$. Then, the observed realized outcome $Y_{ijk}^{*}=Y_{ijk}^{*}(\mathbf{z})$ if $\mathbf{Z}=\mathbf{z}$. In design-based causal inference, all the potential post-treatment variables are fixed values (including the potential true outcomes $\mathcal{Y}=\{(Y_{ijk}(1), Y_{ijk}(0)): i \in [I], j \in [n_{i}], k \in [K]\}$, the potential outcome missingness indicators $\mathcal{M}=\{M_{ijk}(\mathbf{z}): i \in [I], j \in [n_{i}], k \in [K], \mathbf{z}\in\Omega \}$, and the potential realized outcomes $\mathcal{Y}^{*}=\{Y_{ijk}^{*}(\mathbf{z}): i \in [I], j \in [n_{i}], k \in [K], \mathbf{z}\in\Omega \}$), and the only source of randomness is the randomization of treatment assignments $\mathbf{Z}$ \citep{rosenbaum2002observational}. 


In practice, missingness can occur not only in outcomes but also in observed covariates. We let $\mathbf{X}^{*}$ denote the realized values (possibly with missingness) of the observed covariates for the study subjects and still let $\mathbf{X}$ denote the corresponding true covariates values for the study subjects. As will be discussed in detail in Remark~\ref{rem: missing covariates}, a key advantage of our proposed framework is that it does not require any assumptions on the missingness mechanisms of the observed covariates $\mathbf{X}$.


In this work, we consider the following general class of randomization designs:
\begin{assumption}[A General Class of Randomization Designs]\label{assump: randomization design}
   The distribution of treatment assignments $\mathbf{Z}$ is fixed before observing the outcome data and known to us by design. 
\end{assumption}
Assumption~\ref{assump: randomization design} includes many commonly used randomization designs, such as complete randomization, stratified/blocked randomization, paired randomization, Bernoulli randomization, and cluster randomization. Basically, as long as a randomization design is not adaptive or sequential (in which the study population or the distribution of $\mathbf{Z}$ is not fixed before observing part of the outcome data), it naturally satisfies Assumption~\ref{assump: randomization design}.

We now delineate the assumption concerning the outcome missingness mechanism. There are many works clarifying the assumptions concerning the outcome missingness mechanisms required for model-based causal inference with missing outcomes under various settings. In contrast, parallel literature in design-based causal inference is highly scarce. To our knowledge, the only existing work that explicitly listed the assumptions concerning outcome missingness mechanism for randomization-based inference is \citet{heussen2023randomization}. Our discussions on outcome missingness mechanisms broadly differ from the related discussions in \citet{heussen2023randomization} in three aspects. First, our discussions on outcome missingness mechanisms are built on the potential outcomes framework \citep{neyman1923application, rubin1974estimating, frangakis1999addressing}, which is the common language in design-based causal inference. Second, contrary to \citet{heussen2023randomization}, our work does not impose any randomness on missingness and strictly follows the design-based causal inference framework. Third, our discussions explicitly clarify the minimal (weakest) assumption concerning (i) the outcome missingness mechanism required for conducting finite-population-exact design-based hypothesis testing (see Assumption~\ref{assump: conditional indep} stated below), which is typically more flexible than the assumptions considered in \citet{heussen2023randomization} (see Remarks 1--5 below), and (ii) the missingness mechanism required for constructing finite-population-valid design-based confidence regions with missing outcomes (see Assumption~\ref{assump: conditional indep with more restrictions} in Section~\ref{sec: CI with missing outcomes}), respectively.

\begin{assumption}[A General Outcome Missingness Mechanism for Design-Based Hypothesis Testing]\label{assump: conditional indep}
   Conditional on the finite-population dataset in hand, there exists an unknown map $\eta: \mathbb{R}^{N \times (p_{x}+p_{u}+K)} \rightarrow \{0,1\}^{N \times K}$ such that $(\mathbf{M}_{1},\dots, \mathbf{M}_{K})=\eta(\mathbf{X}, \mathbf{U}, \mathbf{Y}_{1}, \dots, \mathbf{Y}_{K})$. 
\end{assumption}

 Assumption~\ref{assump: conditional indep} claims that, conditional on the observed covariates $\mathbf{X}$ and unobserved covariates $\mathbf{U}$ (including unobserved pre-treatment error terms), the treatment assignments $\mathbf{Z}$ can only affect the missingness status $(\mathbf{M}_{1},\dots, \mathbf{M}_{K})$ through their effects on the post-treatment outcomes $(\mathbf{Y}_{1}, \dots, \mathbf{Y}_{K})$. Therefore, another version of Assumption~\ref{assump: conditional indep} is: the outcome missingness $\mathbf{M}$ is unaffected by the treatment assignments $\mathbf{Z}$ \textit{as long as the true outcomes $(\mathbf{Y}_{1}, \dots, \mathbf{Y}_{K})$ are unaffected by $\mathbf{Z}$} (i.e., as long as Fisher's sharp null $H_{0}$ holds). 
 
This assumption is more likely to hold when the randomized experiment is double-blinded and the outcomes are self-reported by study participants. In such cases, participants have no knowledge about the treatment assignments $\mathbf{Z}$ but know the true values of their outcomes. These outcome variables may, in turn, influence whether the study participants choose to respond to outcome-related survey questions. However, Assumption~\ref{assump: conditional indep} is violated if the treatment assignments $\mathbf{Z}$ directly affect outcome missingness $\mathbf{M}$ beyond their effects on the outcomes $\mathbf{Y}$. For example, in an unblinded experiment, the study participants are aware of their treatment status $Z_{ij}$, which may influence their willingness to remain in the study or respond to follow-up survey questions.

We here make five important remarks on Assumption~\ref{assump: conditional indep}. 
\begin{remark}\emph{
    The missingness mechanism described in Assumption~\ref{assump: conditional indep} is a missing-not-at-random missingness mechanism as it allows the missingness status to depend on unobserved covariates and/or true outcome values. Also, because missingness may depend on the true outcome itself (self-censoring), Assumption~\ref{assump: conditional indep} cannot be falsified based on the observed data without additional identification conditions \citep{d2010new, wang2014instrumental}.}
\end{remark}
\begin{remark}\emph{
    Assumption~\ref{assump: conditional indep} allows interference between study subjects -- one subject's covariates and outcomes may affect another subject's outcome missingness status. }
\end{remark}
\begin{remark}\emph{
    In the multiple outcomes case, Assumption~\ref{assump: conditional indep} also allows interference between different outcome variables -- one outcome variable can affect the missingness of another outcome variable. Such a scenario often occurs when the multiple outcomes are measured sequentially in randomized trials \citep{ivanova2022randomization}. }
\end{remark}
\begin{remark}\label{rem: self-censoring in Assumption 2}
\emph{
    Assumption~\ref{assump: conditional indep} allows each missingness status $M_{ijk}$ to depend on $Y_{ijk}$, which is also called ``self-censoring" in the literature \citep{d2010new, wang2014instrumental}. Self-censoring in outcomes is common in practice, especially for self-reported outcomes.}
\end{remark}
\begin{remark}\emph{
    The map $\eta$ in Assumption~\ref{assump: conditional indep} can be any unknown map.}
\end{remark}

In summary, the outcome missingness mechanism considered in Assumption~\ref{assump: conditional indep} allows much flexibility as it allows dependence on observed and unobserved covariates, interference between study subjects and different outcome variables, self-censoring, and arbitrary form of these dependencies (i.e., the map $\eta$ can be an arbitrary unknown map). See Figures 1 and 2 for an illustration of the proposed outcome missingness mechanism in Assumption~\ref{assump: conditional indep}. Meanwhile, researchers should also be aware that Assumption~\ref{assump: conditional indep} will be violated if the treatment assignments $\mathbf{Z}$ can directly affect outcome missingness $\mathbf{M}$, beyond their effect through the path $\mathbf{Z} \rightarrow \mathbf{Y} \rightarrow \mathbf{M}$ (see Remark~\ref{rem: counterexample to Assumption 3} in Appendix D for the detailed causal diagram). This violation is particularly likely in unblinded randomized experiments. 


\begin{figure}[h]
    \centering
    \begin{minipage}[t]{.5\textwidth}
        \centering
        \begin{tikzpicture}
            \node (XU) at (0,0) {$X, U$};
            \node (Y) [below = 1.2cm of XU] {$Y$};
            \node (M) [right = of Y] {$M$};
            \node (Z) [left = of Y] {$Z$};
    
            \path[->] (XU) edge (Y);
            \path[->] (XU) edge (M);
            \path[->] (Z) edge (Y);
            \path[->] (Y) edge (M);
            \draw[->] (XU) -- (Z);
    
            \path (XU) -- (Z) coordinate[pos=0.5] (midpoint1);
            \draw[ thick] ($(midpoint1) + (-0.08,0) +(-0.15,-0.15)$) -- ($(midpoint1)+ (-0.08,0)+ (0.15,0.15)$);
            \draw[ thick] ($(midpoint1)+ (-0.08,0)+(0.15,-0.15)$) -- ($(midpoint1)+ (-0.08,0)+(-0.15,0.15)$);
            \node (annotation1) [left = 0.8cm of midpoint1, align=center] {\scriptsize Removed by \\ \scriptsize Randomization};
            \draw[->] (annotation1) -- ($(midpoint1) + (-0.08,0)$);
        \end{tikzpicture}
        \caption{The single outcome case of \\ Assumptions \ref{assump: randomization design} and \ref{assump: conditional indep}.}
    \end{minipage}%
    \begin{minipage}[t]{.5\textwidth}
        \centering
        \begin{tikzpicture}
            \node (XU) at (0,0) {$X, U$};
            \node (Y1) [below =0.5cm of XU] {$Y_{1}$};
            \node (Y2) [below = 0.2cm of Y1] {$Y_{2}$};
            \node (M1) [right =1cm of Y1] {$M_{1}$};
            \node (M2) [right =1cm of Y2] {$M_{2}$};
            \node (Z) [left = 1cm of Y1] {$Z$};

            \path[->] (XU) edge (Y1);
            \path[->] (XU) edge (M1);
            \path[->] (XU) edge[bend right=45] (Y2);
            \path[->] (XU) edge[bend left=75] (M2);
            \path[->] (Z) edge (Y1);
            \path[->] (Z) edge (Y2);
            \path[->] (Y1) edge (M1);
            \path[->] (Y1) edge (M2);
            \path[->] (Y2) edge (M1);
            \path[->] (Y2) edge (M2);
            \draw[->] (XU) -- (Z);

            \path (XU) -- (Z) coordinate[pos=0.5] (midpoint2);
            \draw[thick] ($(midpoint2)+(-0.15,-0.15)$) -- ($(midpoint2)+(0.15,0.15)$);
            \draw[ thick] ($(midpoint2)+(0.15,-0.15)$) -- ($(midpoint2)+(-0.15,0.15)$);
            \node (annotation2) [left = 0.8cm of midpoint2, align=center] {\scriptsize Removed by\\ \scriptsize Randomization};
            \draw[->] (annotation2) -- (midpoint2);
        \end{tikzpicture}
         \caption{The multiple outcomes case of Assumptions \ref{assump: randomization design} and \ref{assump: conditional indep}.}
    \end{minipage}
\end{figure}

\subsection{Imputation-Assisted Randomization Tests for Design-Based Hypothesis Testing With Missing Outcomes: An ``Imputation and Re-Imputation" Framework}\label{subsec: PART}

As mentioned in Section~\ref{sec: introduction}, most existing studies on design-based hypothesis testing with missing outcomes rely on either non-informative imputation approaches that ignore covariate information (\textbf{Class-One Approaches}) or empirical model-based imputation approaches that do not ensure precise control of the type-I error rate, even in relatively simple settings (\textbf{Class-Two Approaches}). To derive design-based tests with both finite-population-exact type-I error rate control and improved power, we propose a general class of imputation-assisted randomization tests implemented via an ``imputation and re-imputation" framework described in Algorithm~\ref{alg: part}.

The main idea of the imputation and re-imputation framework described in Algorithm~\ref{alg: part} is to embed a missing outcome imputation procedure into each permutation run of a randomization test. Therefore, the imputation and re-imputation framework offers a way to conduct a randomization test with missing outcomes that can fully incorporate covariate information for imputation and, meanwhile, faithfully follow the original randomization design. 1) Compared with a randomization test incorporated with some empirical multiple imputation or weighting method (i.e., the aforementioned Class-Two Approaches), Algorithm~\ref{alg: part} can achieve finite-population-exact type-I error rate control under the general missingness mechanism described in Assumption~\ref{assump: conditional indep}, as will be shown in Theorem~\ref{thm: hypothesis testing} and simulation studies in Section~\ref{subsec: simulation studies}. 2) Compared with a randomization test based on non-informative imputation such as median/mean imputation (i.e., the aforementioned Class-One Approaches), Algorithm~\ref{alg: part} can improve power by making full use of the covariates and outcome information, as will also be shown in Section~\ref{subsec: simulation studies}. 

\begin{algorithm} 
\SetAlgoLined
\caption{The ``imputation and re-imputation" framework.} \label{alg: part}
\vspace*{0.12 cm}
\KwIn{A prespecified number of re-imputation runs $L$ (e.g., $L=10{,}000$), some chosen imputation algorithm $\mathcal{G}$, some chosen test statistics $T$ ($T$ can be any test statistic; see Remark~\ref{rem: test statistics}), the randomization design $\mathcal{P}$ (see Assumption~\ref{assump: randomization design}), the observed treatment indicators $\mathbf{Z}$, the observed covariates $\mathbf{X}^{*}$ with possible missingness, and the observed realized outcomes with missingness $\mathbf{Y}^{*}=(\mathbf{Y}_{1}^{*}, \dots, \mathbf{Y}_{K}^{*})$ for the $K$ outcomes ($K\geq 1$). Note that $\mathbf{Y}^{*}$ contains all the observed missingness status information $\mathbf{M}$. }
\vspace*{0.12 cm}
\begin{enumerate}
    \item Use $(\mathbf{Z}, \mathbf{X}^{*}, \mathbf{Y}^{*})$ and the chosen imputation algorithm $\mathcal{G}$ ($\mathcal{G}$ can be any imputation algorithm; see Remark~\ref{rem: G algorithm}) to obtain the full imputed outcomes $\widehat{\mathbf{Y}}$: 
    \begin{equation*}
 \mathcal{G}: (\mathbf{Z}, \mathbf{X}^{*}, \mathbf{Y}^{*})\mapsto \widehat{\mathbf{Y}}=(\widehat{\mathbf{Y}}_{1}, \dots, \widehat{\mathbf{Y}}_{K}). \quad \text{(The Imputation Step)}
    \end{equation*}
    We then calculate $t=T(\mathbf{Z}, \widehat{\mathbf{Y}})$.
    \item For each $l=1, \dots, L$:
\begin{enumerate}
        \item Randomly generate $\mathbf{Z}^{(l)}$ according to the randomization design $\mathcal{P}$.
    \item Obtain the full imputed outcomes $\widehat{\mathbf{Y}}^{(l)}$ based on $(\mathbf{Z}^{(l)}, \mathbf{X}^{*}, \mathbf{Y}^{*})$ and algorithm $\mathcal{G}$:
    \begin{equation*}
  \mathcal{G}: (\mathbf{Z}^{(l)}, \mathbf{X}^{*}, \mathbf{Y}^{*})\mapsto \widehat{\mathbf{Y}}^{(l)}=(\widehat{\mathbf{Y}}_{1}^{(l)}, \dots, \widehat{\mathbf{Y}}_{K}^{(l)}). \quad \text{(The Re-Imputation Step)}
    \end{equation*}
    \item Calculate $T^{(l)}=T(\mathbf{Z}^{(l)}, \widehat{\mathbf{Y}}^{(l)})$ for the $l$-th re-imputation run.
    \end{enumerate} 
\end{enumerate}
\textbf{Output:} The approximate finite-population-exact $p$-value $\widehat{p}=\frac{1}{L}\sum_{l=1}^{L}\mathbbm{1}\{T^{(l)}\geq t\}$.
\end{algorithm}

We here use a simple example in the single outcome case to illustrate the necessity of the re-imputation step in Algorithm~\ref{alg: part}. Consider a completely randomized experiment with four subjects (indexed by IDs 1--4), among which there are two treated subjects and two control subjects. Therefore, there are six possible treatment assignments: $\mathbf{Z}^{(1)}=(1,0,1,0)$, $\mathbf{Z}^{(2)}=(1,0,0,1)$, $\mathbf{Z}^{(3)}=(0,1,1,0)$, $\mathbf{Z}^{(4)}=(0,1,0,1)$, $\mathbf{Z}^{(5)}=(1,1,0,0)$, and $\mathbf{Z}^{(6)}=(0,0,1,1)$. Suppose the observed treatment assignment $\mathbf{Z}=\mathbf{Z}^{(1)}=(1,0,1,0)$ and the corresponding observed realized outcomes $\mathbf{Y}^{*}=(1, 0, \text{NA}, \text{NA})$, where ``NA" means ``Missing." We consider two imputation-assisted randomization tests: one based on the imputation and re-imputation approach and one based on a one-shot imputation procedure (i.e., without re-imputation). For both approaches, we consider permutational $t$-test statistic (i.e., the sum of outcome values among the treated subjects).

We first calculate the exact $p$-value under sharp null $H_{0}$ using the imputation and re-imputation approach. We consider a natural outcome imputation algorithm $\mathcal{G}$ defined as the following: for each treated (or control) subject whose outcome is missing, we impute its missing outcome using the mean value of all the non-missing outcomes among the treated (or control) subjects; if all the treated (or control) subjects' outcomes are missing, we impute $0.5$ for all the missing outcomes of the treated (or control) subjects. This imputation algorithm is perhaps the most natural one without covariate information. For each treatment assignment $\mathbf{Z}^{(s)}$ $(s=1,\dots, 6)$, based on the observed realized outcomes $\mathbf{Y}^{*}=(1, 0, \text{NA}, \text{NA})$ (fixed under sharp null $H_{0}$ and Assumption~\ref{assump: conditional indep} for different treatment assignments) and imputation algorithm $\mathcal{G}$, the corresponding imputed outcomes $\widehat{\mathbf{Y}}^{(s)}$ can be obtained and the corresponding imputation-assisted permutational $t$-test statistic $T^{(s)}=\mathbf{Z}^{(s)}(\widehat{\mathbf{Y}}^{(s)})^{T}$ can be calculated, which are summarized in Figure~\ref{fig: illustrating re-imput}. We take Datasets 1, 3, and 5 as illustrative examples to show the detailed imputation process. For Dataset 1 in Figure~\ref{fig: illustrating re-imput} (the observed dataset), based on the imputation algorithm $\mathcal{G}$ defined above and the training dataset $(\mathbf{Z}^{(1)}, \mathbf{Y}^{*})$, the imputed outcome for a treated subject will be $1$ (because the only treated subject with available outcome data has outcome $1$) and that for a control subject will be $0$ (because the only control subject with available outcome data has outcome $0$). Instead, for Dataset 3 in Figure~\ref{fig: illustrating re-imput} (a permuted dataset), based on the imputation algorithm $\mathcal{G}$ and the training dataset $(\mathbf{Z}^{(3)}, \mathbf{Y}^{*})$, the imputed outcome for a treated subject will be $0$ (because the only treated subject with available outcome data has outcome $0$ in the permuted dataset $(\mathbf{Z}^{(3)}, \mathbf{Y}^{*})$) and that for a control subject will be $1$ (because the only control subject with available outcome data has outcome $1$ in the permuted dataset). For Dataset 5 in Figure~\ref{fig: illustrating re-imput} (a permuted dataset), the imputation algorithm $\mathcal{G}$, when applied to the training dataset $(\mathbf{Z}^{(5)}, \mathbf{Y}^{*})$, assigns an imputed outcome of $0.5$ for a control subject. This is because there are no control subjects with available outcome data in the permuted dataset $(\mathbf{Z}^{(5)}, \mathbf{Y}^{*})$. Therefore, we assign the imputed outcome of $0.5$, representing the mean/median of the possible outcomes of $0$ and $1$, to control subjects with missing outcomes. Meanwhile, for Dataset 5, both treated subjects have available outcome data, and their ``imputed" outcomes are naturally set as the values of their available outcomes. See Figure~\ref{fig: illustrating re-imput} for the imputed results for all six datasets. Therefore, the one-sided $p$-value reported by the imputation and re-imputation approach given the observed dataset (Dataset 1 in Figure~\ref{fig: illustrating re-imput}) is $P(T\geq T^{(1)}\mid H_{0})=|\{\mathbf{Z}^{(s)}: T^{(s)}=\mathbf{Z}^{(s)}(\widehat{\mathbf{Y}}^{(s)})^{T}\geq 2, s=1,\dots, 6\}|/6=1/3$.
\newcolumntype{C}[1]{>{\centering\arraybackslash}p{#1}}
\begin{figure}[t]
\begin{tabular}
{|C{0.57cm}|C{0.57cm}|C{0.57cm}|C{0.57cm}|C{0.57cm}|}
\hline
\multicolumn{5}{|c|}{Dataset 1 (Observed)} \\
\hline
ID & \(\mathbf{Z}^{(1)}\) & \(\mathbf{Y}^{*}\) & \(\widehat{\mathbf{Y}}^{(1)}\) & \(\widehat{\mathbf{Y}}_{*}^{(1)}\) \\
\hline
1 & $1$ & $1$ & $1$ & $1$ \\
2 & $0$ & $0$ & $0$ & $0$ \\
3 & $1$ & NA & $1$ & $1$ \\
4 & $0$ & NA & $0$ & $0$ \\
\hline
\multicolumn{5}{|c|}{\( T^{(1)}= 2; T^{(1)}_{*} = 2 \)} \\
\hline
\end{tabular}
\quad 
\begin{tabular}{|C{0.57cm}|C{0.57cm}|C{0.57cm}|C{0.57cm}|C{0.57cm}|}
\hline
\multicolumn{5}{|c|}{Dataset 2 (Permuted)} \\
\hline
ID & \(\mathbf{Z}^{(2)}\) & \(\mathbf{Y}^{*}\) & \(\widehat{\mathbf{Y}}^{(2)}\) & \(\widehat{\mathbf{Y}}_{*}^{(2)}\) \\
\hline
1 & $1$ & $1$ & $1$ & $1$ \\
2 & $0$ & $0$ & $0$ & $0$ \\
3 & $0$ & NA & $0$ & $0$ \\
4 & $1$ & NA & $1$ & $1$ \\
\hline
\multicolumn{5}{|c|}{\( T^{(2)} = 2; T^{(2)}_{*} = 2 \)} \\
\hline
\end{tabular}
\quad 
\begin{tabular}{|C{0.57cm}|C{0.57cm}|C{0.57cm}|C{0.57cm}|C{0.57cm}|}
\hline
\multicolumn{5}{|c|}{Dataset 3 (Permuted)} \\
\hline
ID & \(\mathbf{Z}^{(3)}\) & \(\mathbf{Y}^{*}\) & \(\widehat{\mathbf{Y}}^{(3)}\) & \(\widehat{\mathbf{Y}}_{*}^{(3)}\) \\
\hline
1 & $0$ & $1$ & $1$ & $1$ \\
2 & $1$ & $0$ & $0$ & $0$ \\
3 & $1$ & NA & $0$ & $1$ \\
4 & $0$ & NA & $1$ & $0$ \\
\hline
\multicolumn{5}{|c|}{\( T^{(3)} = 0; T^{(3)}_{*}= 1 \)} \\
\hline
\end{tabular}

\vspace{0.5cm} 

\begin{tabular}{|C{0.57cm}|C{0.57cm}|C{0.57cm}|C{0.57cm}|C{0.57cm}|}
\hline
\multicolumn{5}{|c|}{Dataset 4 (Permuted)} \\
\hline
ID & \(\mathbf{Z}^{(4)}\) & \(\mathbf{Y}^{*}\) & \(\widehat{\mathbf{Y}}^{(4)}\) & \(\widehat{\mathbf{Y}}_{*}^{(4)}\) \\
\hline
1 & $0$ & $1$ & $1$ & $1$ \\
2 & $1$ & $0$ & $0$ & $0$ \\
3 & $0$ & NA & $1$ & $0$ \\
4 & $1$ & NA & $0$ & $1$ \\
\hline
\multicolumn{5}{|c|}{\( T^{(4)} = 0; T^{(4)}_{*} = 1 \)} \\
\hline
\end{tabular}
\quad 
\begin{tabular}{|C{0.57cm}|C{0.57cm}|C{0.57cm}|C{0.57cm}|C{0.57cm}|}
\hline
\multicolumn{5}{|c|}{Dataset 5 (Permuted)} \\
\hline
ID & \(\mathbf{Z}^{(5)}\) & \(\mathbf{Y}^{*}\) & \(\widehat{\mathbf{Y}}^{(5)}\) & \(\widehat{\mathbf{Y}}_{*}^{(5)}\) \\
\hline
1 & $1$ & $1$ & $1$ & $1$ \\
2 & $1$ & $0$ & $0$ & $0$ \\
3 & $0$ & NA & $0.5$ & $0$ \\
4 & $0$ & NA & $0.5$ & $0$ \\
\hline
\multicolumn{5}{|c|}{\( T^{(5)} = 1; T^{(5)}_{*} = 1 \)} \\
\hline
\end{tabular}
\quad
\begin{tabular}{|C{0.57cm}|C{0.57cm}|C{0.57cm}|C{0.57cm}|C{0.57cm}|}
\hline
\multicolumn{5}{|c|}{Dataset 6 (Permuted)} \\
\hline
ID & \(\mathbf{Z}^{(6)}\) & \(\mathbf{Y}^{*}\) & \(\widehat{\mathbf{Y}}^{(6)}\) & \(\widehat{\mathbf{Y}}_{*}^{(6)}\) \\
\hline 
1 & $0$ & $1$ & $1$ & $1$ \\
2 & $0$ & $0$ & $0$ & $0$ \\
3 & $1$ & NA & $0.5$ & $1$ \\
4 & $1$ & NA & $0.5$ & $1$ \\
\hline
\multicolumn{5}{|c|}{\( T^{(6)} = 1; T^{(6)}_{*} = 2 \)} \\
\hline
\end{tabular}
\caption{An illustrative example of the necessity of re-imputation.}
\label{fig: illustrating re-imput}
\end{figure}

We then calculate the exact $p$-value under sharp null $H_{0}$ using a one-shot imputation procedure (i.e., without re-imputation). Specifically, to facilitate a fair comparison, we still use the aforementioned algorithm $\mathcal{G}$ and a training dataset (e.g., that obtained from sample splitting or some external dataset) that is the same as that used in the imputation step of the imputation and re-imputation approach (i.e., the observed Dataset 1 in Figure~\ref{fig: illustrating re-imput}). This will give us an ``oracle" imputation model $\mathcal{G}_{*}$: we impute value 1 for missing outcomes among the treated subjects and value 0 for missing outcomes among the control subjects. Therefore, for Dataset 1 (the observed dataset), the imputed outcomes given by the one-shot imputation approach are the same as those given by the imputation and re-imputation approach, i.e., we have $\mathcal{G}_{*}(\mathbf{Z}^{(1)}, \mathbf{Y}^{*})=\mathcal{G}(\mathbf{Z}^{(1)}, \mathbf{Y}^{*})=(1, 0, 1, 0)$. However, an intrinsic difference between the imputation and re-imputation approach using algorithm $\mathcal{G}$ and a one-shot imputation approach using model $\mathcal{G}_{*}$ is that the former one will re-train the imputation model using $\mathcal{G}$ for each of the six permutations of $\mathbf{Z}$ (i.e., the re-imputation step). In contrast, the latter one will not re-train the imputation model and will stick to $\mathcal{G}_{*}$ (i.e., imputing the missing outcome as value $1$ for treated subjects with missing outcomes and value $0$ for control subjects with missing outcomes) in each permutation. We can then calculate the imputed outcomes reported by this one-shot imputation approach (denoted as $\widehat{\mathbf{Y}}_{*}^{(s)}$) based on each permuted dataset $(\mathbf{Z}^{(s)}, \mathbf{Y}^{*})$, as well as the corresponding permutational $t$-test statistic $T^{(s)}_{*}=\mathbf{Z}^{(s)}(\widehat{\mathbf{Y}}^{(s)}_{*})^{T}$; see Figure~\ref{fig: illustrating re-imput} for details. Therefore, the one-sided (greater than) exact $p$-value reported by the one-shot imputation approach is $P(T_{*}\geq T^{(1)}_{*}\mid H_{0})=|\{\mathbf{Z}^{(s)}: T^{(s)}_{*}=\mathbf{Z}^{(s)}(\widehat{\mathbf{Y}}^{(s)}_{*})^{T}\geq 2, s=1,\dots, 6\}|/6=1/2$, which is much larger than the $p$-value 1/3 reported by the imputation and re-imputation approach. 

The above example is intended to illustrate the necessity of re-imputation compared with the one-shot imputation approach. It is not to show the superior performance of the imputation and re-imputation approach over classic randomization tests based on non-informative imputation (e.g., median or mean imputation). If we use median/mean imputation in the above example, the imputed outcomes will always be $(1, 0, 0.5, 0.5)$ for all six possible treatment assignments $\mathbf{Z}^{(s)}$ $(s=1,\dots, 6)$, and the corresponding $p$-value is $1/3$, which is the same as the $p$-value reported by the imputation and re-imputation approach. This is because there is no covariate information or imbalance of missingness proportion among the treated versus control groups in this simple example. In Section~\ref{subsec: simulation studies}, we will conduct comprehensive simulation studies to show that, when covariate information is available, compared with the commonly used non-informative imputation approach in design-based causal inference, the imputation and re-imputation approach can increase statistical power by making better use of the covariate information.

Here are some additional remarks on Algorithm~\ref{alg: part}.
\begin{remark}\emph{
    For the full imputed outcomes $\widehat{\mathbf{Y}}=(\widehat{\mathbf{Y}}_{1},\dots, \widehat{\mathbf{Y}}_{K})$, in which $\widehat{\mathbf{Y}}_{k}=(\widehat{Y}_{11k},\dots, \widehat{Y}_{In_{I}k})$, we have each $\widehat{Y}_{ijk}=Y_{ijk}^{*}=Y_{ijk}$ if missing indicator $M_{ijk}=0$ and $\widehat{Y}_{ijk}$ equals the outcome value imputed by some pre-specified imputation algorithm $\mathcal{G}$ if $M_{ijk}=1$. }
\end{remark}

\begin{remark}\label{rem: G algorithm}\emph{
    As will be shown in Theorem~\ref{thm: hypothesis testing}, the finite-sample validity of the $p$-value reported by Algorithm~\ref{alg: part} does \textit{not} require the outcome imputation algorithm/model $\mathcal{G}$ to be correctly specified. That is, the algorithm $\mathcal{G}$ for outcome imputation can be chosen from any existing imputation algorithms based on practical researchers' preference, such as $k$-nearest neighbors imputation, hot-deck imputation \citep{andridge2010review}, or chained equations imputation based on linear regressions (ordinary or regularized) or flexible machine learning methods (e.g., boosting or deep neural networks) \citep{pedregosa2011scikit}. The choice of the outcome imputation algorithm/model $\mathcal{G}$ embedded in Algorithm~\ref{alg: part} does \textit{not} affect the finite-sample validity of the reported $p$-value but may affect power. }
\end{remark}

\begin{remark}\label{rem: test statistics}\emph{
   The test statistic $T=T(\mathbf{Z}, \widehat{\mathbf{Y}})$ can be any randomization-based test statistic based on $\mathbf{Z}$ and $\widehat{\mathbf{Y}}$. In the multiple outcomes case (when $K\geq 2$), we can either 1) choose $T(\mathbf{Z}, \widehat{\mathbf{Y}})$ to be some overall test statistic that uses the information of all the $K$ outcomes (\citealp{rosenbaum2016using}), or 2) combine the $K$ individual $p$-values through the Bonferroni correction or the Holm-Bonferroni method. See Remark~\ref{rem: test statistics with additional details} in Appendix D for more details.}
\end{remark}

\begin{remark}\label{rem: missingness as test}
    \emph{The test statistic $T_{M}(\mathbf{Z}, \mathbf{M})=\sum_{i=1}^{I}\sum_{j=1}^{n_{i}}\{Z_{ij}(\sum_{k=1}^{K}M_{ijk})\}$ (i.e., the number of missing outcomes among the treated subjects), which has been used in previous studies to test if the treatment would affect missingness, is in principle also valid for testing the null $H_{0}$ under Assumption~\ref{assump: conditional indep}. However, in many settings, the $T_{M}$ is not practically useful for testing the null $H_{0}$ because it has no statistical power when the treatment has an effect on the true outcomes but no effect on outcome missingness status (i.e., when Assumption~\ref{assump: conditional indep with more restrictions} in Section~\ref{sec: CI with missing outcomes} holds, which is an important special case of Assumption~\ref{assump: conditional indep}). In contrast, randomization tests constructed by Algorithm~\ref{alg: part} do not have this deficiency.
    }
\end{remark}

The following Theorem~\ref{thm: hypothesis testing} shows that, under Assumptions \ref{assump: randomization design} and \ref{assump: conditional indep},  as the number of re-imputation runs $L$ increases, the $p$-value $\widehat{p}$ reported by Algorithm~\ref{alg: part} converges almost surely to the true $p$-value under $H_{0}$ with rate $O_{p}(1/\sqrt{L})$. This result does not depend on the sample size $N$ and does \textit{not} require correct specification of the missingness model.
\begin{theorem}\label{thm: hypothesis testing}
   Consider the imputation and re-imputation framework in Algorithm~\ref{alg: part} paired with any chosen imputation algorithm $\mathcal{G}: (\mathbf{Z}, \mathbf{X}^{*}, \mathbf{Y}^{*})\mapsto \widehat{\mathbf{Y}}=(\widehat{\mathbf{Y}}_{1}, \dots, \widehat{\mathbf{Y}}_{K})$ and any chosen test statistic $T=T(\mathbf{Z}, \widehat{\mathbf{Y}})$. Let $t$ be the observed value of $T$ based on the observed data $(\mathbf{Z}, \mathbf{X}^{*}, \mathbf{Y}^{*})$ and imputation algorithm $\mathcal{G}$. Let $p=P(T\geq t\mid H_{0})$ denote the true finite-population-exact $p$-value under $H_{0}$. For the approximate $p$-value $\widehat{p}$ reported by Algorithm~\ref{alg: part}, under Assumptions \ref{assump: randomization design} and \ref{assump: conditional indep}, we have $\widehat{p}\xrightarrow{a.s.} p$ as the number of re-imputation runs $L\rightarrow \infty$. Moreover, for any $\epsilon>0$ and for all $L$, we have $P(| \widehat{p}-p|\geq \epsilon)\leq 2\exp(-2L\epsilon^{2})$.

\end{theorem}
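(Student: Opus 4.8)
The plan is to reduce the reported $p$-value $\widehat{p}$ to a sample average of i.i.d.\ bounded indicators and then invoke the strong law of large numbers and Hoeffding's inequality. The crucial preliminary step---and the only place where the null $H_{0}$ together with Assumptions~\ref{assump: randomization design} and \ref{assump: conditional indep} enter---is to show that the realized outcome vector $\mathbf{Y}^{*}$ fed into the imputation algorithm $\mathcal{G}$ is \emph{invariant} to the treatment assignment under $H_{0}$. Under Fisher's sharp null $Y_{ijk}(1)=Y_{ijk}(0)$, each true outcome $Y_{ijk}=Z_{ij}Y_{ijk}(1)+(1-Z_{ij})Y_{ijk}(0)$ equals the common value $Y_{ijk}(0)$ regardless of $\mathbf{Z}$, so $\mathbf{Y}=(\mathbf{Y}_{1},\dots,\mathbf{Y}_{K})$ is a fixed quantity. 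By Assumption~\ref{assump: conditional indep}, $\mathbf{M}=\eta(\mathbf{X},\mathbf{U},\mathbf{Y}_{1},\dots,\mathbf{Y}_{K})$; since $\mathbf{X},\mathbf{U}$ are pre-treatment and $\mathbf{Y}$ is fixed under $H_{0}$, the missingness pattern $\mathbf{M}$ is also invariant to $\mathbf{Z}$, and hence so are $\mathbf{Y}^{*}$ (the true outcomes masked by $\mathbf{M}$) and the pre-treatment $\mathbf{X}^{*}$. This is precisely the content of the ``removed by randomization'' arrow in Figures~1 and 2: conditional on $(\mathbf{X},\mathbf{U},\mathbf{Y})$, the assignment $\mathbf{Z}$ carries no further information about $\mathbf{M}$.

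Given this invariance, I would define the composite map $\phi(\mathbf{z})=T(\mathbf{z},\mathcal{G}(\mathbf{z},\mathbf{X}^{*},\mathbf{Y}^{*}))$ evaluated at the \emph{fixed} inputs $(\mathbf{X}^{*},\mathbf{Y}^{*})$; when $\mathcal{G}$ is stochastic, $\phi$ additionally carries an internal randomness argument whose law is identical across runs and independent of $\mathbf{Z}$. The observed statistic is $t=\phi(\mathbf{Z})$, fixed once the data are observed, whereas each re-imputation run produces $T^{(l)}=\phi(\mathbf{Z}^{(l)})$ with $\mathbf{Z}^{(l)}\sim\mathcal{P}$ drawn i.i.d.\ (Assumption~\ref{assump: randomization design}) and with fresh, independent internal randomness. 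By definition the target is $p=P(T\ge t\mid H_{0})=P(\phi(\mathbf{Z}')\ge t)$ for a single fresh draw $\mathbf{Z}'\sim\mathcal{P}$. Consequently, conditional on the observed data, the indicators $\xi_{l}:=\mathbbm{1}\{T^{(l)}\ge t\}$, $l=1,\dots,L$, are i.i.d.\ Bernoulli variables with mean exactly $p$.

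With this i.i.d.\ Bernoulli reduction, both conclusions are immediate. Writing $\widehat{p}=\frac{1}{L}\sum_{l=1}^{L}\xi_{l}$, the strong law of large numbers gives $\widehat{p}\xrightarrow{a.s.}p$ as $L\to\infty$. For the finite-$L$ bound, each $\xi_{l}$ takes values in $[0,1]$, so Hoeffding's inequality for bounded i.i.d.\ summands yields $P(|\widehat{p}-p|\ge\epsilon)\le 2\exp(-2L\epsilon^{2})$ for all $\epsilon>0$ and all $L$. Since each $\xi_{l}\in[0,1]$ regardless of $N$, the bound---and the implied $O_{p}(1/\sqrt{L})$ rate coming from $\mathrm{Var}(\widehat{p})=p(1-p)/L\le 1/(4L)$---is uniform in the sample size $N$, matching the claimed dimension-free rate.

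I expect the main obstacle to be conceptual rather than computational: correctly establishing the invariance of $\mathbf{Y}^{*}$ under $H_{0}$ and Assumption~\ref{assump: conditional indep}, and in particular noting that $\eta$ acts on the \emph{true} outcomes $\mathbf{Y}$ rather than on the realized ones, so that the self-censoring and interference permitted by Assumption~\ref{assump: conditional indep} do not destroy invariance. Once $\mathbf{Y}^{*}$ is pinned down as a fixed input, the re-imputation step ensures $\mathcal{G}$ is re-fit identically across permutations, the $T^{(l)}$ share a common law, and the i.i.d.\ reduction goes through; the remaining law-of-large-numbers and Hoeffding steps are routine.
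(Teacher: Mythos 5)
Your proposal is correct and follows essentially the same route as the paper's proof: the key step in both is showing that under $H_{0}$ and Assumption~\ref{assump: conditional indep} the true outcomes $\mathbf{Y}$, hence the missingness indicators $\mathbf{M}=\eta(\mathbf{X},\mathbf{U},\mathbf{Y})$, hence the realized outcomes $\mathbf{Y}^{*}$, are invariant to $\mathbf{Z}$ (the content of the paper's Lemma~\ref{lem: p-value}, including the deterministic/stochastic $\mathcal{G}$ split), after which the indicators $\mathbbm{1}\{T^{(l)}\geq t\}$ are i.i.d.\ Bernoulli with mean $p$ and the strong law of large numbers plus Hoeffding's inequality finish the argument exactly as in Appendix~A.1.
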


The detailed proofs of all theorems in this paper can be found in Appendix A in the online supplementary materials. Here are three additional remarks we would like to offer.
\begin{remark}\label{rem: weakest assumption}\emph{
    As shown in Appendix A.1, the key point of finite-population-exact type-I error rate control for testing $H_{0}$ with missing outcomes is that, \textit{under the null $H_{0}$},  the observed realized outcomes $\mathbf{Y}^{*}$ (including the observed missingness status $\mathbf{M}$) are invariant under different treatment assignments $\mathbf{Z}$. This also agrees with the principle of testing $H_{0}$ with complete outcome data using classic randomization tests \citep{rosenbaum2002observational, imbens2015causal}. Instead, if the missingness indicators $\mathbf{M}$ also depend on some post-treatment variables other than $(\mathbf{X}, \mathbf{U}, \mathbf{Y})$ (i.e., if Assumption~\ref{assump: conditional indep} was violated), then $\mathbf{M}$ can vary with different treatment assignments $\mathbf{Z}$ \textit{even under the null $H_{0}$}, which makes it impossible to construct a finite-population-exact randomization test for $H_{0}$ without additional strong assumptions. }
\end{remark}

\begin{remark}\label{rem: missing covariates}
    \emph{
    Theorem~\ref{thm: hypothesis testing} (as well as Theorem~\ref{thm: hypothesis testing with covariate adjustment} in Section~\ref{sec: covariate adjustment}) remains valid when missingness occurs in observed covariates. The key rationale behind this property is that the covariates $\mathbf{X}^{*}$, which may contain missing values, are measured before the treatment assignments. As a result, the missingness patterns in $\mathbf{X}^{*}$ remain unchanged across different $\mathbf{Z}$. Similarly,  under Fisher's sharp null $H_{0}$ and Assumption~\ref{assump: conditional indep}, the observed realized outcomes $\mathbf{Y}^{*}$, which may also have missing values, are invariant under different $\mathbf{Z}$. These properties ensure that the only source of randomness in the imputation-assisted randomization test described in Algorithm~\ref{alg: part} comes from the randomization of treatment assignments $\mathbf{Z}$. Therefore, by drawing Monte Carlo simulations from the randomization distribution of $\mathbf{Z}$, the re-imputation step in Algorithm~\ref{alg: part} (as well as Algorithm~\ref{alg: part with covariate adjustment} in Section~\ref{sec: covariate adjustment}) can asymptotically fully characterize the distribution of the test statistic under $H_{0}$. See proofs in Appendices A.1 and A.2 and the simulation studies in Appendix B.3 for more details. }
\end{remark}

\begin{remark}
   \emph{Note that whenever we use the term ``finite-population-exact" throughout this paper (as referenced in Algorithm~\ref{alg: part} and Theorem~\ref{thm: hypothesis testing}), it denotes that a certain object (e.g., $p$-value) is exact with respect to the finite population in the study dataset which could be of any sample size. It is important to emphasize that the asymptotics discussed in Theorems 1--3 of this paper are for increasing the number of re-imputation runs, $L$ (which can be sufficiently large given adequate computational resources), and do not necessitate an increase in the sample size, $N$. This is because our methods and results hold for any $N$.}
\end{remark}

\subsection{Simulation Studies}\label{subsec: simulation studies}

We conduct simulation studies to investigate the type-I error rate and power of imputation-assisted randomization tests constructed by the imputation and re-imputation framework described in Algorithm~\ref{alg: part}. We compare these tests with both the randomization tests based on non-informative imputation approaches such as median imputation (referred to as the \textbf{Class-One Approaches} in Section~\ref{sec: introduction}) and the randomization tests on based model-based imputation approaches (referred to as the \textbf{Class-Two Approaches} in Section~\ref{sec: introduction}).

We consider a stratified randomized experiment with $I$ strata. In each stratum, there are $10$ subjects, among which we randomly assign $5$ subjects to receive the treatment and the remaining $5$ subjects to receive the control. We generate a five-dimensional observed covariates vector $(x_{ij1}, \dots, x_{ij5})$ for each subject $j$ in stratum $i$ using the following data generating process: $(x_{ij1}, x_{ij2}) \overset{\text{i.i.d.}}{\sim} \mathcal{N}\left(\begin{pmatrix} \frac{1}{2} \\ -\frac{1}{3} \end{pmatrix}, \begin{pmatrix} 1 & \frac{1}{2} \\ \frac{1}{2} & 1 \end{pmatrix}\right)$, $(x_{ij3}, x_{ij4}) \overset{\text{i.i.d.}}{\sim} \text{Laplace}\left(\begin{pmatrix} 0 \\ \frac{1}{\sqrt{3}} \end{pmatrix}, \begin{pmatrix} 1 & \frac{1}{\sqrt{2}} \\ \frac{1}{\sqrt{2}} & 1 \end{pmatrix}\right)$, and $x_{ij5} \overset{\text{i.i.d.}}{\sim} \text{Bernoulli}(1/3)$.  We also generate an aggregate unobserved covariate $u_{ij}$ (including the unobserved error term in the outcome missingness generating process) for each subject $ij$ according to $u_{ij} \overset{\text{i.i.d.}}{\sim} N(0, 0.2)$. In the outcome-generating process, we include a stratum-level random effect $\alpha_{i}$ and an individual-level random effect $\epsilon_{ij}$, where $\alpha_{i} \overset{\text{i.i.d.}}{\sim} N(0, 0.1)$ and $\epsilon_{ij} \overset{\text{i.i.d.}}{\sim} N(0, 0.2)$. Therefore, the total variance of unobserved terms is $\text{var}(u_{ij})+\text{var}(\alpha_{i})+\text{var}(\epsilon_{ij})=0.5$ (note that we will normalize the coefficients of the terms involving observed covariates to make the variance contributed by observed terms and that contributed by unobserved terms comparable). We consider the following four data-generating models consisting of a model for generating the true outcome $Y_{ij}$ and a model for generating the missingness indicator $M_{ij}$, with increasing complexity:
\begin{itemize}
    \item  Model 1 (Constant treatment effect; linear model for the true outcome; linear selection model for the missingness status; without interference in the missingness mechanism): $Y_{ij} = \beta Z_{ij} + \frac{1}{\sqrt{5}} \sum \limits_{p=1}^{5} x_{ijp} + u_{ij} + \alpha_{i}+\epsilon_{ij}$ and $M_{ij} = \mathbbm{1} \left\{ \frac{1}{\sqrt{5}} \sum \limits_{p=1}^{5} p x_{ijp} + Y_{ij} + u_{ij} > \lambda \right\}$.
        
    \item Model 2 (Constant treatment effect; non-linear model for the true outcome; non-linear selection model for the missingness status; without interference in the missingness mechanism): $Y_{ij} = \beta Z_{ij} + \frac{1}{\sqrt{5}} \sum \limits_{p=1}^{5} x_{ijp}+ \frac{1}{5} \sum \limits_{p=1}^{5} \sum \limits_{p'=1}^{5} x_{ijp} \sigma(1 - x_{ijp^{\prime}}) + u_{ij} + \alpha_{i} + \epsilon_{ij}$ and $M_{ij} = \mathbbm{1} \left\{ \frac{1}{\sqrt{5}} \sum \limits_{p=1}^{5} p x_{ijp} + \frac{1}{\sqrt{5}} \sum \limits_{p=1}^{5} p \cos(x_{ijp}) + 10 \sigma(Y_{ij}) + u_{ij} > \lambda \right\}$.

    \item Model 3 (Heterogeneous treatment effects; non-linear model for the true outcome; non-linear selection model for the missingness status; without interference in the missingness mechanism): $Y_{ij} = \beta Z_{ij} \left(1 + x_{ij1} + \frac{1}{\sqrt{5}} \sum_{p=1}^{5} |x_{ijp}| \right) + \frac{1}{\sqrt{5}} \sum_{p=1}^{5} x_{ijp} + \frac{1}{5} \sum_{p=1}^{5} \sum_{p'=1}^{5} x_{ijp} \sigma(1 - x_{ijp^{\prime}}) + u_{ij} + \alpha_{i}+\epsilon_{ij}$, and the model for $M_{ij}$ is the same as in Model 2.

    \item Model 4 (Heterogeneous treatment effects; non-linear model for the true outcome; non-linear selection model for the missingness status; with interference in the missingness mechanism): the model for $Y_{ij}$ is the same as in Model 3 and\\ $M_{ij} = \mathbbm{1} \left\{ \frac{1}{\sqrt{5}} \sum \limits_{p=1}^{5} p x_{ijp} + \frac{1}{\sqrt{5}} \sum \limits_{p=1}^{5} p \cos(x_{ijp}) + 10 \sigma(Y_{ij}) + u_{ij} + \frac{1}{3}\sum \limits_{j'=1}^{10}x_{ij'1} +  \frac{1}{3}\sum \limits_{j'=1}^{10}Y_{ij'} > \lambda \right\}$.
 \end{itemize}   
In Models 1--4, each $\lambda$ is a tuning parameter for controlling the outcome missingness rate (we set $\lambda$ such that the outcome missingness rate is $50\%$ for each simulated dataset), and the $\sigma$ function is defined as $\sigma(x)=\exp(x)/(1+\exp(x))$. For each model, we consider two scenarios: a small sample size scenario in which we set the total sample size $N=50$ (corresponding to $I=5$) and a large sample size scenario in which we set $N=1000$ (corresponding to $I=100$). In each model and each simulation scenario, we implement the following four methods for design-based hypothesis testing with missing outcomes: 
\begin{itemize}
    \item Method 1 (Median Imputation): Classic randomization test based on median imputation for missing outcomes. 
    
    \item Method 2 (Algo 1 -- Linear): Imputation-assisted randomization test based on the imputation and re-imputation framework described in Algorithm~\ref{alg: part}, in which we set the embedded outcome imputation algorithm $\mathcal{G}$ to be chained equations imputation based on linear regression (more specifically, Bayesian ridge regression). 
    
    \item Method 3 (Algo 1 -- Boosting): Imputation-assisted randomization test based on the imputation and re-imputation framework described in Algorithm~\ref{alg: part}, in which we set the embedded outcome imputation algorithm $\mathcal{G}$ to be chained equations imputation based on a boosting algorithm in machine learning. Specifically, in the small sample size scenario (when $N=50$), the boosting algorithm we choose is the well-known XGBoost algorithm \citep{chen2016xgboost}, which is a gradient boosting decision tree algorithm that has relatively robust performance even under a small or moderate sample size. In the large sample size scenario (when $N=1000$), the boosting algorithm we choose is the widely used LightGBM algorithm \citep{ke2017lightgbm}, which is a newer gradient boosting decision tree algorithm that has less computational cost than XGBoost and is particularly suitable for implementing more re-imputation runs (i.e., larger $L$) in Algorithm~\ref{alg: part} with large datasets. 
    
    \item Method 4 (Oracle): Classic randomization test with the complete outcome data (with the oracle/true outcomes). 
\end{itemize}

When studying the type-I error rate control, we also compare our methods with two model-based imputation methods (one based on a linear imputation model and one based on boosting) for randomization tests with missing outcomes. The key difference between these model-based methods and our proposed framework is that these model-based methods do \textit{not} have the re-imputation step; instead, they keep the imputed outcomes fixed when permuting the treatment labels in a randomization test (\citealp{ivanova2022randomization}). As will be shown in Figure~\ref{fig: single outcome simulations}, these model-based methods fail to ensure valid type-I error rate control under the given simulation settings. Therefore, to ensure fair comparisons, we include these model-based methods only in type-I error rate evaluations and exclude them from the power simulations. Also, for all the above methods, we use an adjusted Wilcoxon rank sum test statistic $T_{\text{adj}}(\mathbf{Z}, \widehat{\mathbf{Y}})$ (one-sided), in which $T_{\text{adj}}(\mathbf{Z}, \widehat{\mathbf{Y}})=\sum_{i=1}^{I}\sum_{j=1}^{n_{i}}Z_{ij}\text{rank}_{\text{adj}}(\widehat{Y}_{ij})$, where we set $\text{rank}_{\text{adj}}(\widehat{Y}_{ij})=\sum_{i^{\prime}j^{\prime}: M_{i^{\prime}j^{\prime}}=0}\mathbbm{1}\{Y_{ij}\geq Y_{i^{\prime}j^{\prime}}\}$ if $M_{ij}=0$ and $\text{rank}_{\text{adj}}(\widehat{Y}_{ij})=\sum_{i^{\prime}j^{\prime}: M_{i^{\prime}j^{\prime}}=1}\mathbbm{1}\{Y_{ij}\geq Y_{i^{\prime}j^{\prime}}\}$ if $M_{ij}=1$. That is, $T_{\text{adj}}(\mathbf{Z}, \widehat{\mathbf{Y}})$ considers the ranks among the non-missing outcomes and those among the imputed values for the missing outcomes separately. Therefore, the statistical scores (i.e., adjusted ranks) contributed by the subjects with non-missing outcome data are unchanged under Methods 1--4, and the differences in the statistical power of Methods 1--4 can only be due to their different performances in imputing missing outcomes. This facilitates a fair and transparent comparison.

The simulated type-I error rate (obtained via setting the effect size $\beta=0$ in the outcome-generating model, with level $\alpha=0.05$ and 10,000 simulated datasets) and the simulated power (corresponding to $\beta> 0$ in the outcome generation, with level $\alpha=0.05$ and 2000 simulated datasets) are reported in Figure~\ref{fig: single outcome simulations}. The simulation results in Figure~\ref{fig: single outcome simulations} deliver the following messages. First, the imputation and re-imputation framework ensures finite-population-exact type-I error rate control (for the $\beta=0$ case) with either linear imputation models (Method 2) or a flexible machine learning imputation algorithm (Method 3) across all Models 1--4. This confirms the theoretical guarantee of finite-population-exact type-I error rate control of the imputation and re-imputation framework, even when the imputation algorithm/model is misspecified or when unobserved covariates or interference exists in the missingness mechanism, as proved in Theorem~\ref{thm: hypothesis testing}. In contrast, model-based imputation methods fail to maintain valid type-I error rate control, exhibiting type-I error rates significantly exceeding the nominal 0.05 significance level under the considered simulation settings. Second, the imputation and re-imputation framework can increase power by using an imputation algorithm (either based on a simple linear model or a flexible machine learning algorithm) to incorporate the covariate information into a randomization test with imputed outcomes. Third, when the sample size is small (e.g., $N=50$), the imputation and re-imputation approach based on a linear model (i.e., Method 1) is typically more powerful than that based on a boosting algorithm under Models 1--4. Fourth, as the sample size increases (e.g., $N=1000$), and as the outcome and missingness models evolve from simpler (Model 1) to more complex forms (Models 3 and 4), shifting from linear to nonlinear models and from constant to heterogeneous treatment effects, the incorporation of a flexible machine learning algorithm such as boosting into the imputation and re-imputation framework becomes increasingly advantageous. 

\begin{figure}[H]
      \centering
        \captionsetup[subfigure]{skip=2pt} 
        \begin{minipage}{\textwidth}
          \centering
          \includegraphics[width=\textwidth, trim=0 10 0 0, clip]{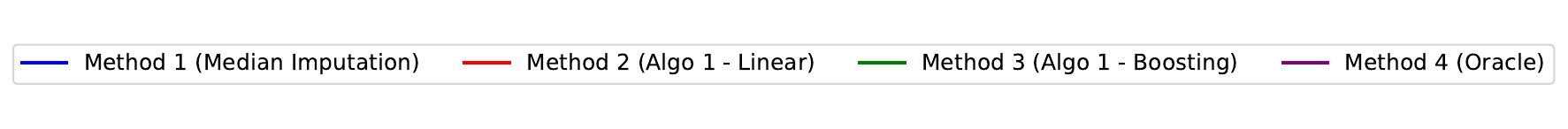}
        \end{minipage}
        
        \vspace*{-8pt}
        
        \begin{minipage}{\textwidth}
          \centering
          \makebox[\textwidth]{
            \includegraphics[width=1.2\textwidth]{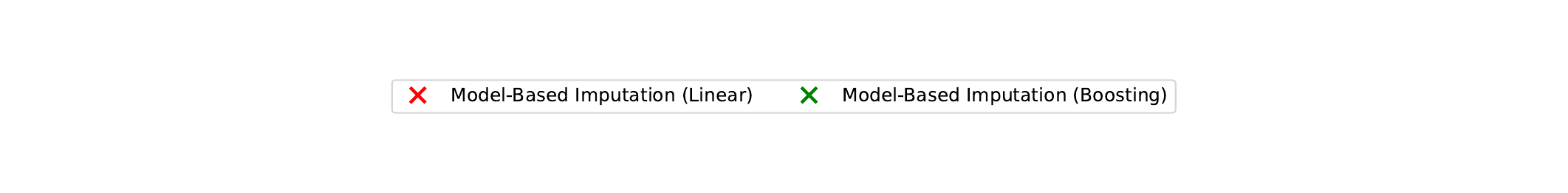}
          }
        \end{minipage}
        \vspace*{-20pt}
          
      \begin{subfigure}[b]{0.4\textwidth}
        \includegraphics[width=\textwidth]{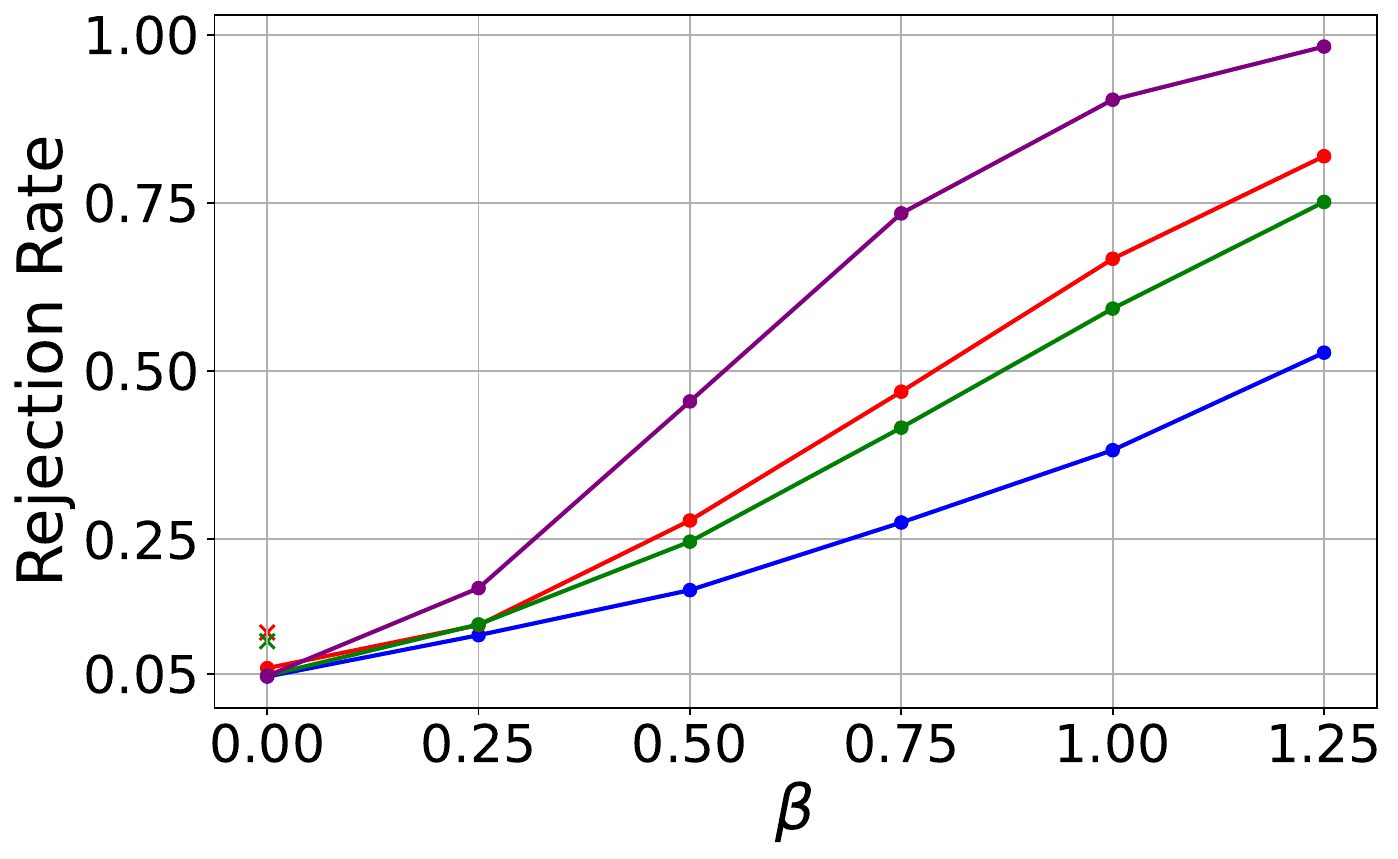}
        \caption{Model 1 ($N=50$)}
      \end{subfigure}
      \hspace{0.1cm}
      \begin{subfigure}[b]{0.4\textwidth}
        \includegraphics[width=\textwidth]{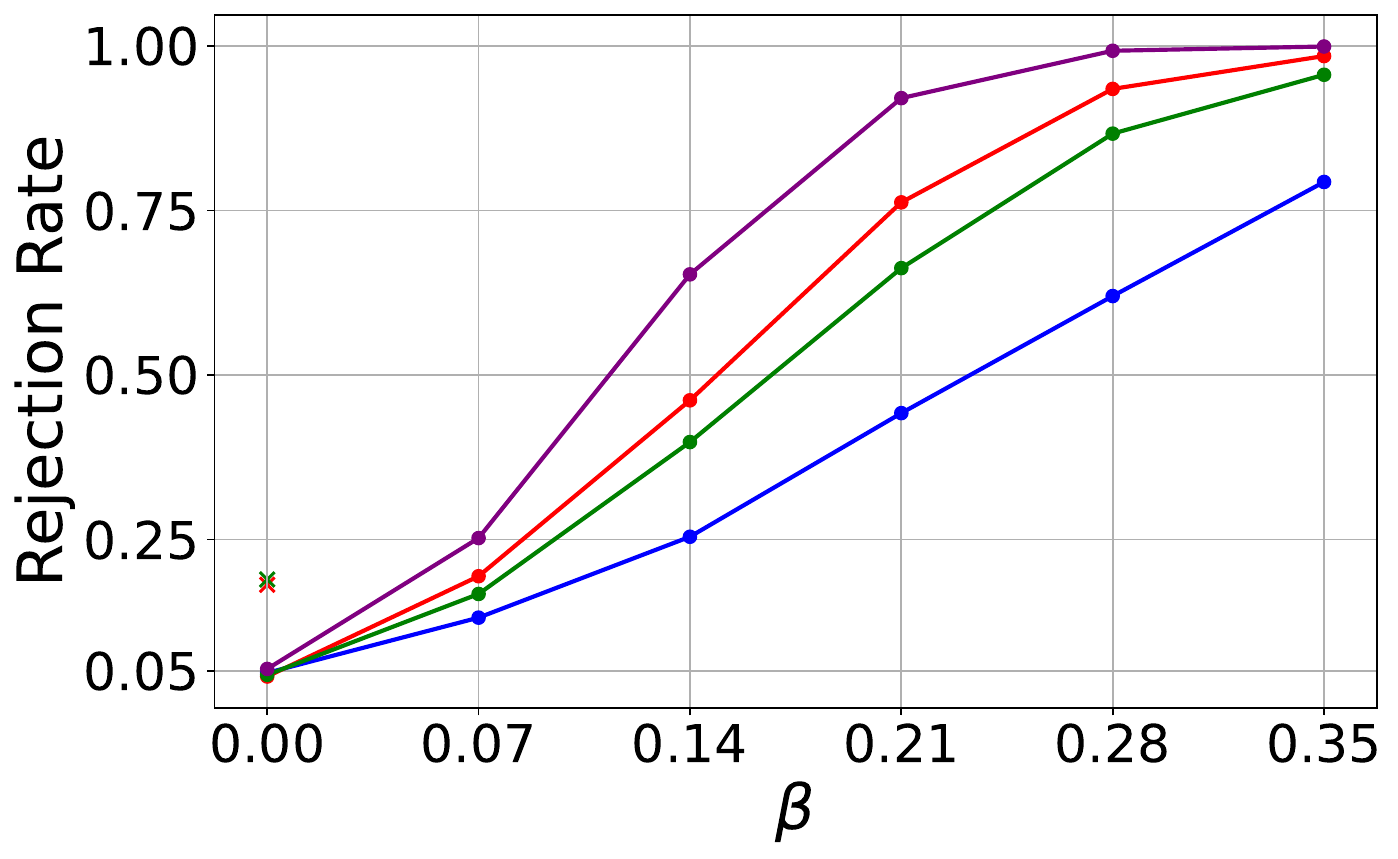}
        \caption{Model 1 ($N=1000$)}
      \end{subfigure}


      \begin{subfigure}[b]{0.4\textwidth}
        
        \includegraphics[width=\textwidth]{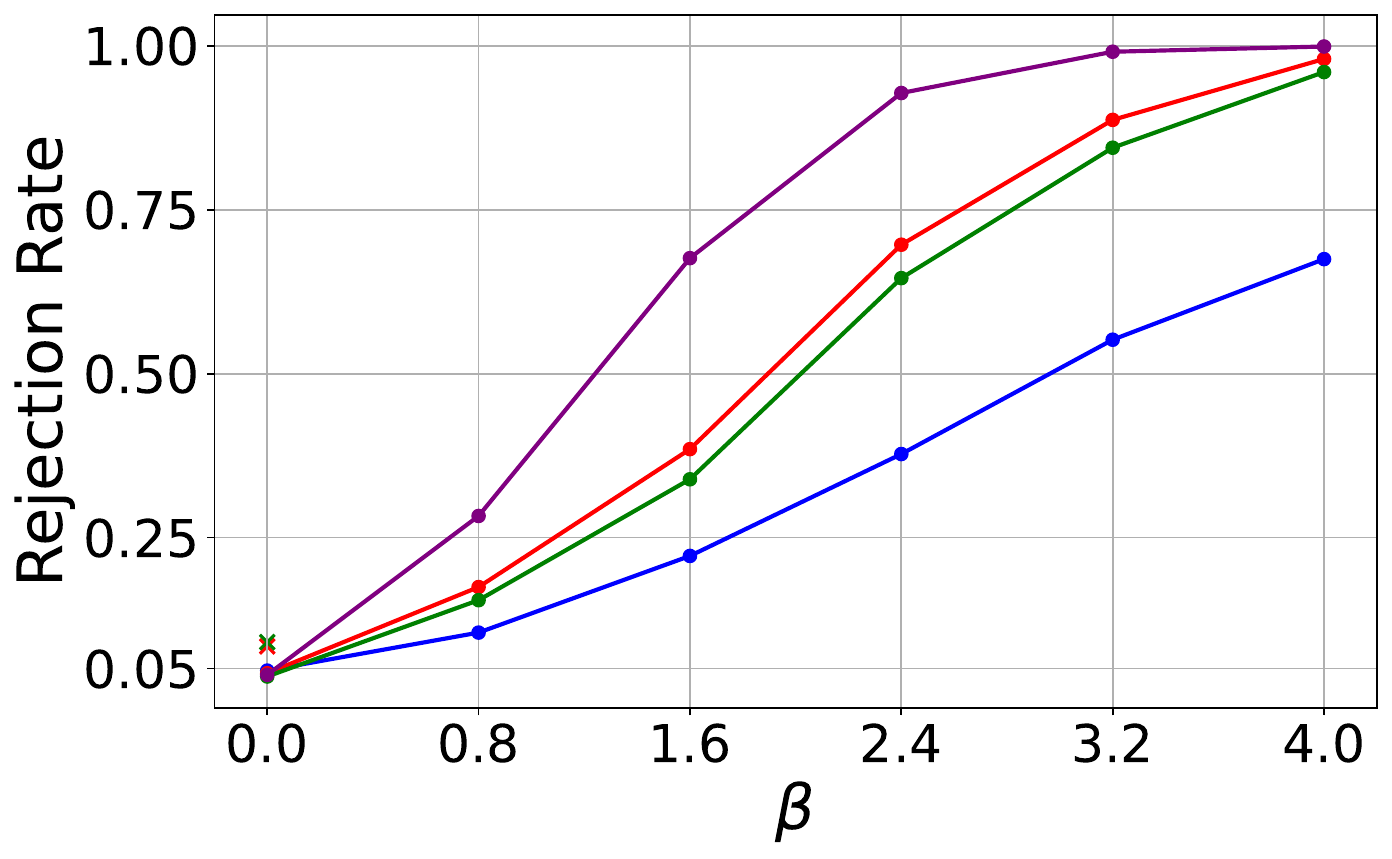}
        \caption{Model 2 ($N=50$)}
      \end{subfigure}
      \hspace{0.1cm}
      \begin{subfigure}[b]{0.4\textwidth}
        
        \includegraphics[width=\textwidth]{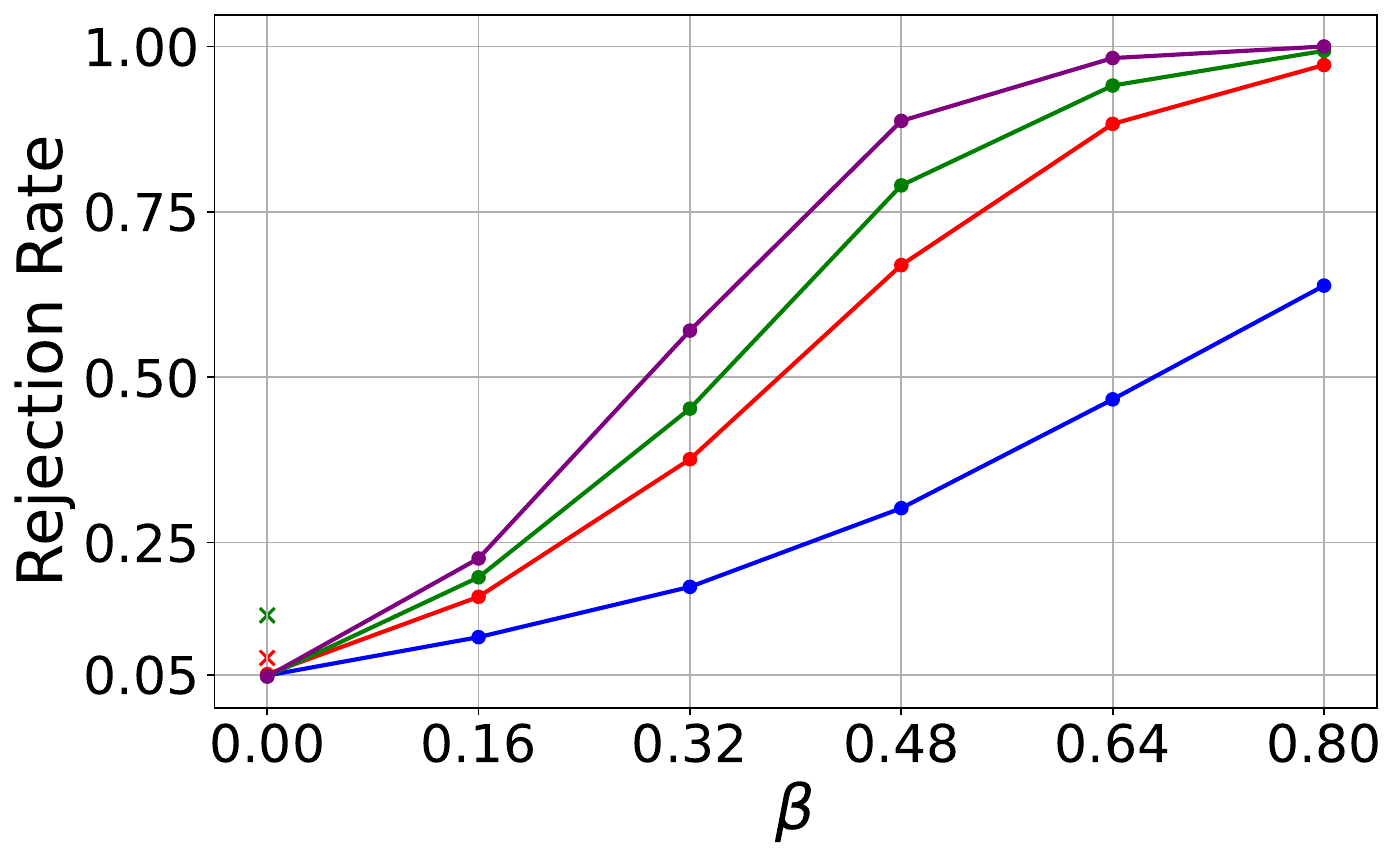}
        \caption{Model 2 ($N=1000$)}
      \end{subfigure}
    
    
      \begin{subfigure}[b]{0.4\textwidth}
        
        \includegraphics[width=\textwidth]{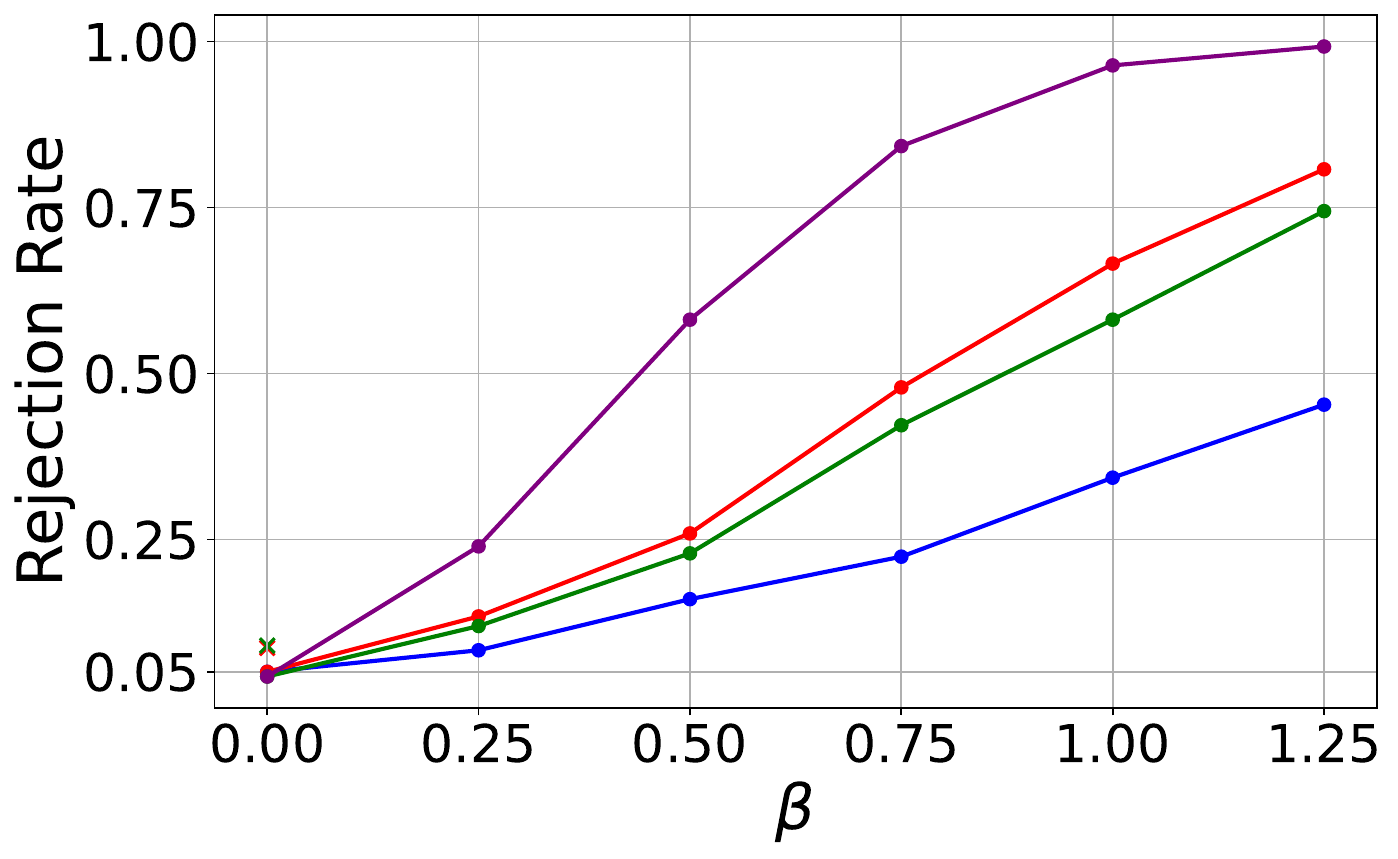}
        \caption{Model 3 ($N=50$)}
      \end{subfigure}
      \hspace{0.1cm}
      \begin{subfigure}[b]{0.4\textwidth}
        
        \includegraphics[width=\textwidth]{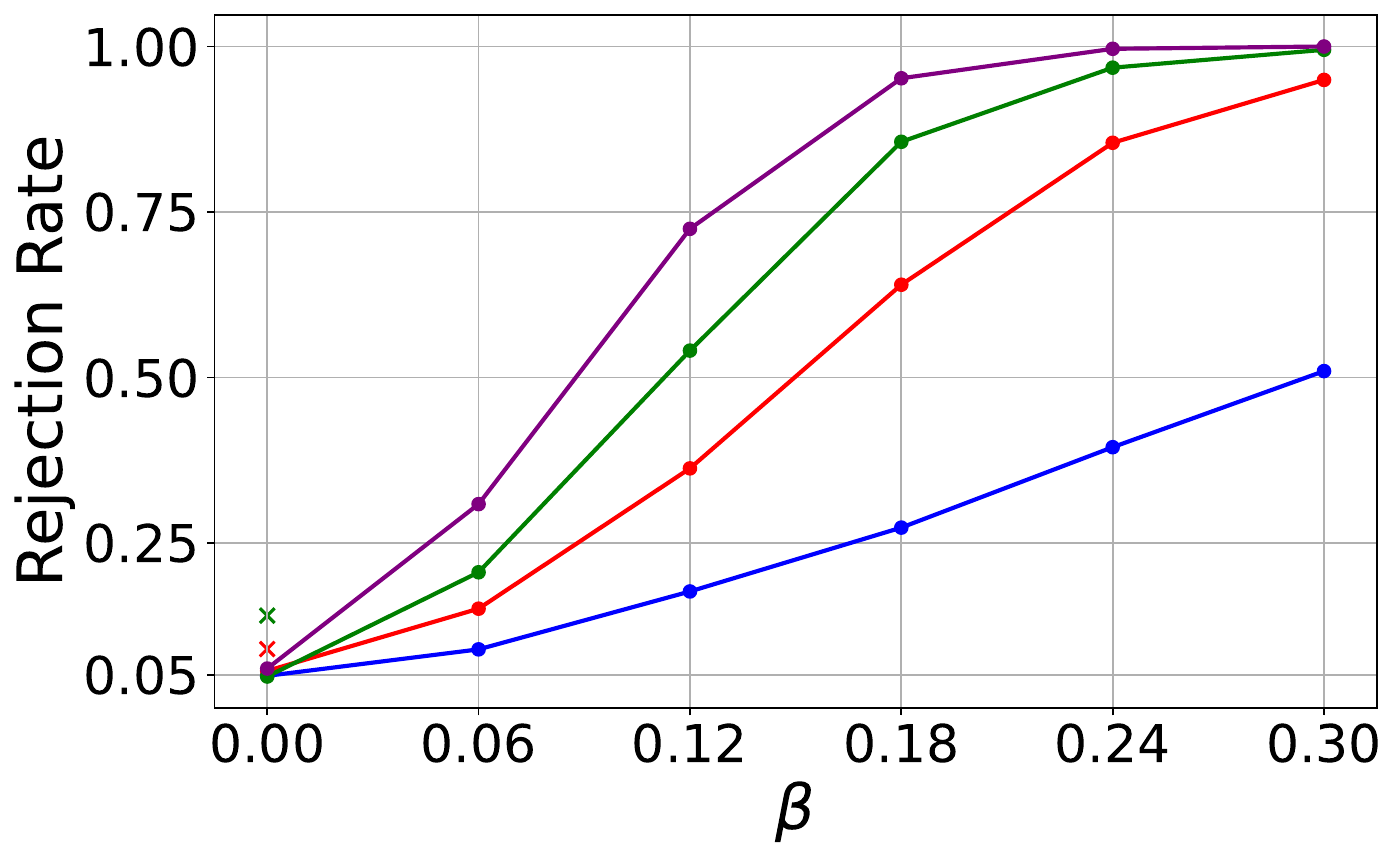}
        \caption{Model 3 ($N=1000$)}
      \end{subfigure}

      
      \begin{subfigure}[b]{0.4\textwidth}
        
        \includegraphics[width=\textwidth]{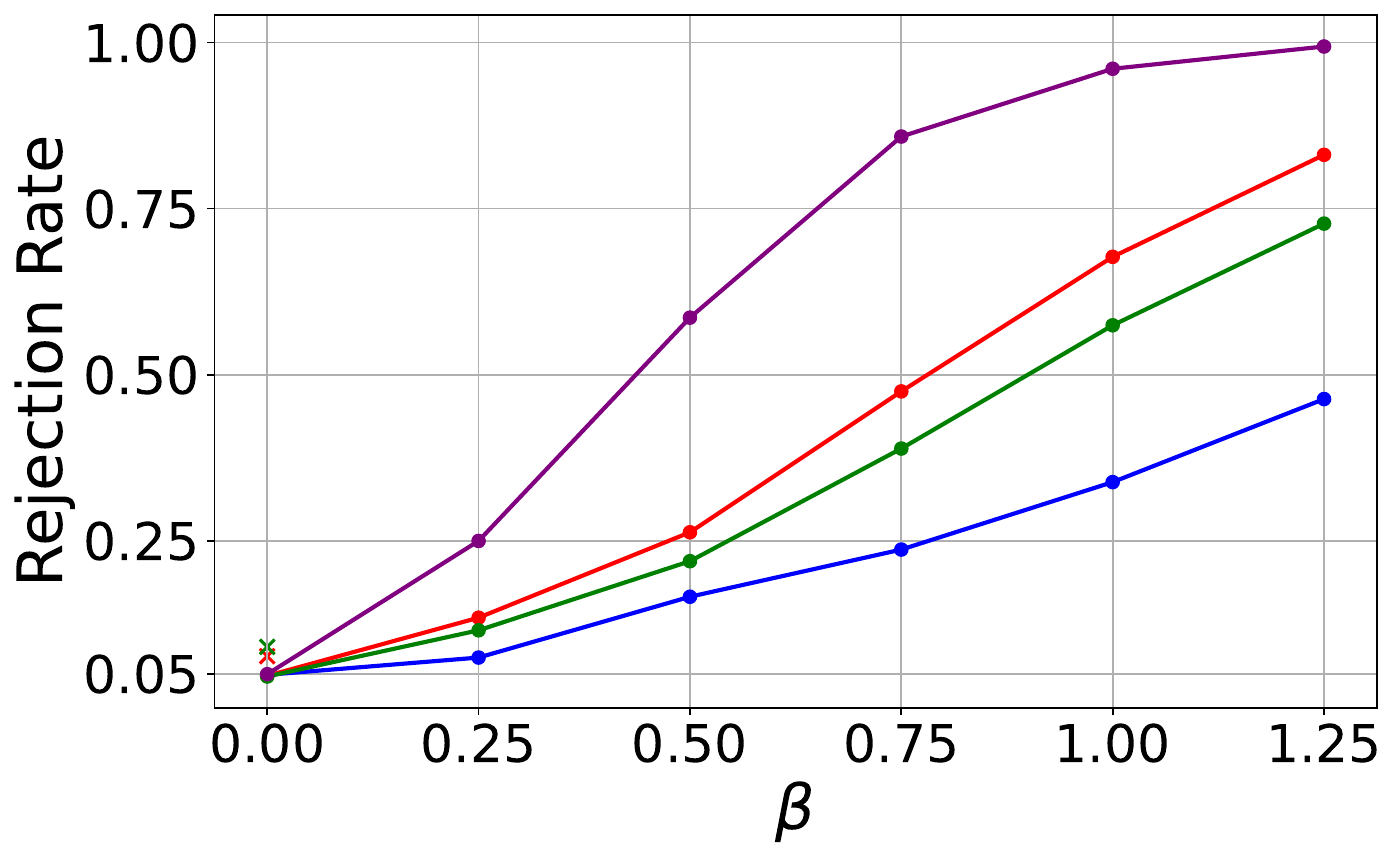}
        \caption{Model 4 ($N=50$)}
      \end{subfigure}
      \hspace{0.1cm}
      \begin{subfigure}[b]{0.4\textwidth}
        
        \includegraphics[width=\textwidth]{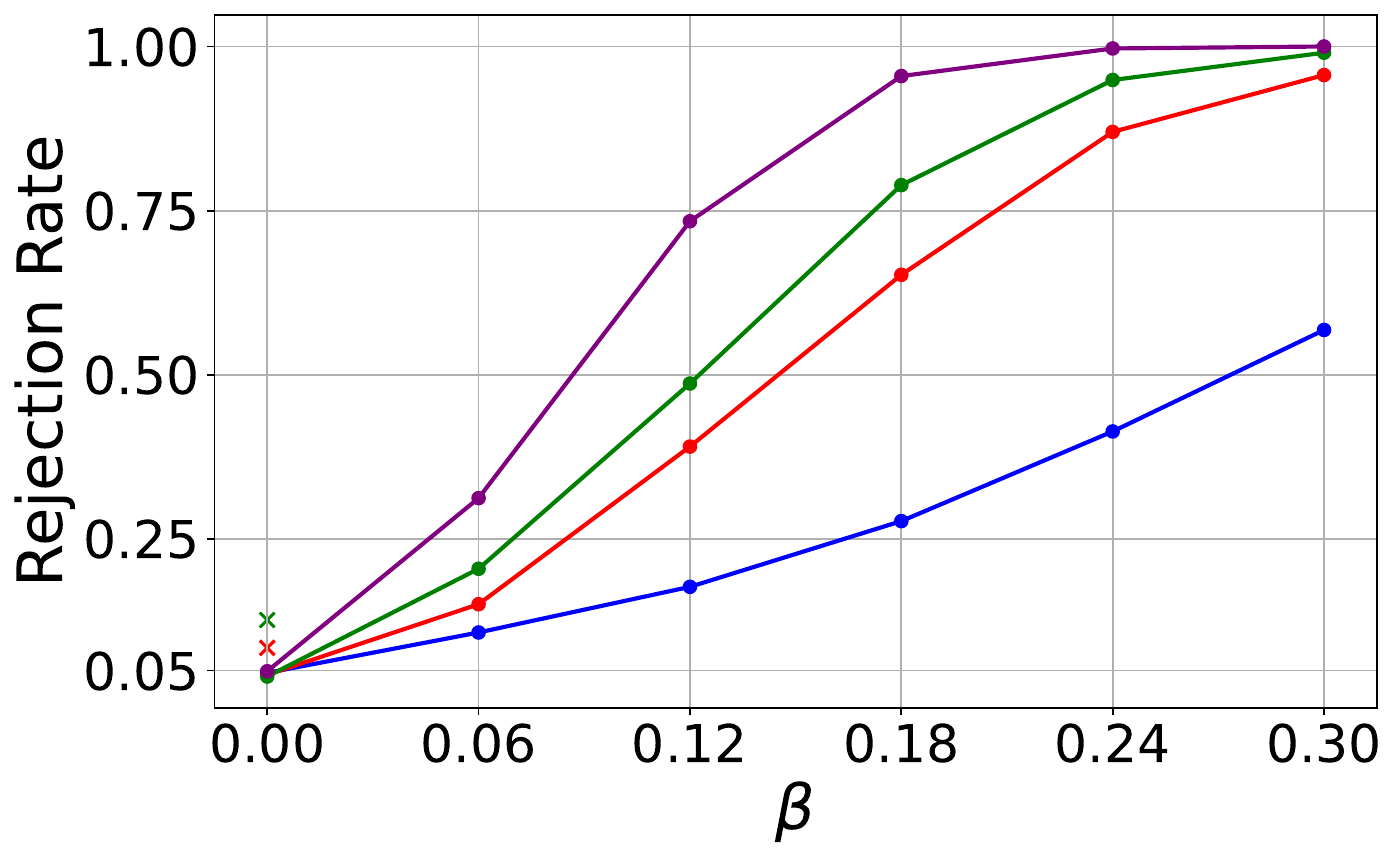}
        \caption{Model 4 ($N=1000$)}
      \end{subfigure}    
      \caption{Type-I error rate (when effect size $\beta=0$) and power (when effect size $\beta>0$) of Methods 1--4 under Models 1--4 with sample size $N=50$ and $N=1000$ (level $\alpha=0.05$).  }
      \label{fig: single outcome simulations}
\end{figure}

In Appendix B.1 in the supplementary materials, we repeat the above simulation studies but change the outcome missingness rate from $50\%$ to $25\%$, and find that the general patterns and insights still hold. In Appendix C.2, we report the computation time for the proposed imputation and re-imputation framework described in Algorithm~\ref{alg: part} under Models 1--4. Additionally, we provide practical guidance for researchers on how to weigh computational costs when choosing the imputation algorithm $\mathcal{G}$ in Algorithm~\ref{alg: part}. In Appendix B.3, we replicate simulation studies under Models 1-4 while allowing for missingness in the covariates $\mathbf{x}_{ij}$. The simulation results confirm the finite-population-exact validity of Algorithm~\ref{alg: part} in the presence of covariate missingness and demonstrate its improved power compared to non-informative imputation approaches in these settings.

\section{Covariate Adjustment in Randomization Tests with Missing Outcomes via the Imputation and Re-Imputation Framework}\label{sec: covariate adjustment}

In design-based causal inference with complete outcome data, researchers often seek to further improve statistical power via incorporating a covariate adjustment procedure in a randomization test \citep{rosenbaum2002covariance, lin2013agnostic, cohen2024no, zhao2024adjust}. For example, in Fisher's randomization tests of no treatment effect with complete outcome data, a common strategy for covariate adjustment is to conduct a randomization test based on residuals obtained from some working outcome model (e.g., some regression model or flexible machine learning algorithm) based only on the observed covariates, without looking at the treatment information \citep{rosenbaum2002covariance}. In Algorithm~\ref{alg: part with covariate adjustment} below (on page 26), we show how to conduct covariate adjustment under the proposed imputation and re-imputation framework. 
\begin{algorithm}  
\SetAlgoLined
\caption{``Imputation and re-imputation" with covariate adjustment.} \label{alg: part with covariate adjustment}
\vspace*{0.12 cm}
\KwIn{A prespecified number of re-imputation runs $L$ (e.g., $L=10{,}000$), some chosen imputation algorithm $\mathcal{G}$, some chosen test statistics $T$ ($T$ can be any test statistics), the randomization design $\mathcal{P}$ (see Assumption~\ref{assump: randomization design}), the observed treatment indicators $\mathbf{Z}$, the observed covariates $\mathbf{X}^{*}$ with possible missingness, and the observed realized outcomes with missingness $\mathbf{Y}^{*}=(\mathbf{Y}_{1}^{*}, \dots, \mathbf{Y}_{K}^{*})$ for the $K$ outcomes ($K\geq 1$). Note that $\mathbf{Y}^{*}$ contains all the observed missingness status information $\mathbf{M}=(\mathbf{M}_{1},\dots, \mathbf{M}_{K})$. }
\vspace*{0.12 cm}
\begin{enumerate}
\item 
    \begin{enumerate}
        \item Use $(\mathbf{Z}, \mathbf{X}^{*}, \mathbf{Y}^{*})$ and the chosen imputation algorithm $\mathcal{G}$ ($\mathcal{G}$ can be any imputation algorithm; recall Remark~\ref{rem: G algorithm}) to obtain the full imputed outcomes $\widehat{\mathbf{Y}}$: 
    \begin{equation*}
 \mathcal{G}: (\mathbf{Z}, \mathbf{X}^{*}, \mathbf{Y}^{*})\mapsto \widehat{\mathbf{Y}}=(\widehat{\mathbf{Y}}_{1}, \dots, \widehat{\mathbf{Y}}_{K}). \quad \text{(The Imputation Step)}
    \end{equation*}
\item Use $(\mathbf{X}^{*}, \widehat{\mathbf{Y}})$ and some chosen outcome prediction model $\mathcal{H}$ (e.g., $\mathcal{H}$ can be any regression model or machine learning algorithm; see Remark~\ref{rem: model H}) to get the fitted outcomes $\widetilde{\mathbf{Y}}$ (without using the treatment information $\mathbf{Z}$):
\begin{equation*}
 \mathcal{H}: (\mathbf{X}^{*}, \widehat{\mathbf{Y}})\mapsto \widetilde{\mathbf{Y}}=(\widetilde{\mathbf{Y}}_{1}, \dots, \widetilde{\mathbf{Y}}_{K}), \quad \text{(Covariate Adjustment after Imputation)}
    \end{equation*}
where each $\widetilde{\mathbf{Y}}_{k}=(\widetilde{Y}_{11k}, \dots, \widetilde{Y}_{In_{I}k})$ denote the predicted $k$-th outcomes. 
    
    \item Calculate $t=T(\mathbf{Z}, \bm{\epsilon})$, where $\bm{\epsilon}=\widehat{\mathbf{Y}}-\widetilde{\mathbf{Y}}$ is the residuals vector and $T$ is some chosen test statistic based on $(\mathbf{Z}, \bm{\epsilon})$.
    \end{enumerate}
    \item For each $l=1, \dots, L$, do the following steps:
\begin{enumerate}
        \item Randomly generate $\mathbf{Z}^{(l)}$ according to the randomization design $\mathcal{P}$.
    \item Obtain the full imputed outcomes $\widehat{\mathbf{Y}}^{(l)}$ based on $(\mathbf{Z}^{(l)}, \mathbf{X}^{*}, \mathbf{Y}^{*})$ and algorithm $\mathcal{G}$:
    \begin{equation*}
  \mathcal{G}: (\mathbf{Z}^{(l)}, \mathbf{X}^{*}, \mathbf{Y}^{*})\mapsto \widehat{\mathbf{Y}}^{(l)}=(\widehat{\mathbf{Y}}_{1}^{(l)}, \dots, \widehat{\mathbf{Y}}_{K}^{(l)}). \quad \text{(The Re-Imputation Step)}
    \end{equation*}
    \item  Use $(\mathbf{X}^{*}, \widehat{\mathbf{Y}}^{(l)})$ and working outcome prediction model $\mathcal{H}$ to get the fitted outcomes $\widetilde{\mathbf{Y}}$ (without using the treatment information $\mathbf{Z}$):
\begin{equation*}
 \mathcal{H}: (\mathbf{X}^{*}, \widehat{\mathbf{Y}}^{(l)})\mapsto \widetilde{\mathbf{Y}}^{(l)}=(\widetilde{\mathbf{Y}}_{1}^{(l)}, \dots, \widetilde{\mathbf{Y}}_{K}^{(l)}), \quad \text{(Covariate Adjustment after Re-Imputation)}
    \end{equation*}
where each $\widetilde{\mathbf{Y}}_{k}^{(l)}=(\widetilde{Y}_{11k}^{(l)}, \dots, \widetilde{Y}_{In_{I}k}^{(l)})$ denote the predicted $k$-th outcomes. 
    \item Calculate $T^{(l)}=T(\mathbf{Z}^{(l)}, 
  \bm{\epsilon}^{(l)})$ and store the value, where $\bm{\epsilon}^{(l)}=\widehat{\mathbf{Y}}^{(l)}-\widetilde{\mathbf{Y}}^{(l)}$ is the residuals vector for the $l$-th permutation.
    \end{enumerate} 
\end{enumerate}
\textbf{Output:} The approximate exact $p$-value $\widehat{p}_{\text{adj}}=\frac{1}{L}\sum_{l=1}^{L}\mathbbm{1}\{T^{(l)}\geq t\}$.
\end{algorithm}

Algorithm~\ref{alg: part with covariate adjustment} can be viewed as embedding the classic Rosenbaum-type covariate adjustment (\citealp{rosenbaum2002covariance}) for Fisher's randomization tests into our proposed imputation and re-imputation framework. We here give some remarks on Algorithm~\ref{alg: part with covariate adjustment}.

\begin{remark}\label{rem: model H}\emph{
    As will be shown in Theorem~\ref{alg: part with covariate adjustment}, the finite-sample validity of the $p$-value reported by Algorithm~\ref{alg: part with covariate adjustment} does \textit{not} need the working covariate adjustment model $\mathcal{H}$ in Algorithm~\ref{alg: part with covariate adjustment} to be correctly specified. In other words, the choice of the working model $\mathcal{H}$ may affect statistical power but not finite-sample validity of Algorithm~\ref{alg: part with covariate adjustment}. } 
\end{remark}
\begin{remark}\emph{In Algorithm ~\ref{alg: part with covariate adjustment}, we have $\bm{\epsilon}=(\bm{\epsilon}_{1}, \dots, \bm{\epsilon}_{K})$ with each $\bm{\epsilon}_{k}=(\widehat{\epsilon}_{11k}, \dots, \widehat{\epsilon}_{In_{I}k})$, where
 $\widehat{\epsilon}_{ijk}=\widehat{Y}_{ijk}-\widetilde{Y}_{ijk}$ are the residuals obtained via conducting covariate adjustment with imputed outcomes, i.e., the difference in the imputed outcome $\widehat{Y}_{ijk}$ using the fully observed data $(\mathbf{Z}, \mathbf{X}^{*}, \mathbf{Y}^{*})$ and the predicted outcome $\widetilde{Y}_{ijk}$ only using $(\mathbf{X}^{*}, \widehat{\mathbf{Y}})$ (without using the information of $\mathbf{Z}$), of the $k$-th outcome of subject $ij$. }
\end{remark}

\begin{remark}\emph{The imputation and re-imputation framework with covariate adjustment (i.e., Algorithm~\ref{alg: part with covariate adjustment}) also allows the covariates to have missing values because its finite-population-exact type-I error rate control (i.e., Theorem~\ref{thm: hypothesis testing with covariate adjustment}) still holds with missing covariates. }
\end{remark}

The following Theorem~\ref{thm: hypothesis testing with covariate adjustment} shows that, under Assumptions \ref{assump: randomization design} and \ref{assump: conditional indep}, as the number of re-imputation runs $L$ increases, the $p$-value $\widehat{p}_{\text{adj}}$ reported by Algorithm~\ref{alg: part with covariate adjustment} converges to the true finite-population-exact $p$-value under $H_{0}$ with the rate $O_{p}(1/\sqrt{L})$. This result holds for any sample size $N$ and does \textit{not} require $\mathcal{G}$ or $\mathcal{H}$ to be correctly specified.
\begin{theorem}\label{thm: hypothesis testing with covariate adjustment}
    Let $\mathcal{G}$ be any outcome imputation algorithm used in Algorithm~\ref{alg: part with covariate adjustment} and $\mathcal{H}$ any prediction model for covariate adjustment based on imputed outcomes used in Algorithm~\ref{alg: part with covariate adjustment}. Let $T=T(\mathbf{Z}, \bm{\epsilon})$ be any test statistic based on $\mathbf{Z}$ and residuals $\bm{\epsilon}$ after covariate adjustment,  and let $t$ denote the value of $T$ based on the observed data $(\mathbf{Z}, \mathbf{X}^{*}, \mathbf{Y}^{*})$. Let $p_{\text{adj}}=P(T\geq t\mid H_{0})$ denote the finite-population-exact $p$-value (with covariate adjustment) under $H_{0}$. For the approximate $p$-value $\widehat{p}_{\text{adj}}$ reported by the imputation and re-imputation approach with covariate adjustment described in Algorithm~\ref{alg: part with covariate adjustment}, under Assumptions \ref{assump: randomization design} and \ref{assump: conditional indep}, we have $\widehat{p}_{\text{adj}}\xrightarrow{a.s.} p_{\text{adj}}$ as the number of re-imputation runs $L\rightarrow \infty$. Moreover, for any $\epsilon>0$ and for all $L$, we have $P(| \widehat{p}_{\text{adj}}-p_{\text{adj}}|\geq \epsilon)\leq 2\exp(-2L\epsilon^{2})$.
\end{theorem}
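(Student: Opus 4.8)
The plan is to reduce Theorem~\ref{thm: hypothesis testing with covariate adjustment} to Theorem~\ref{thm: hypothesis testing} by recognizing that Algorithm~\ref{alg: part with covariate adjustment} is a relabeled instance of Algorithm~\ref{alg: part}. The foundation is the invariance fact established in the proof of Theorem~\ref{thm: hypothesis testing} and summarized in Remark~\ref{rem: weakest assumption}: under Fisher's sharp null $H_{0}$ and Assumption~\ref{assump: conditional indep}, the potential realized outcomes $Y_{ijk}^{*}(\mathbf{z})$ do not depend on $\mathbf{z}$. Indeed, under $H_{0}$ each true outcome $Y_{ijk}$ is invariant in $\mathbf{z}$, and since Assumption~\ref{assump: conditional indep} writes $(\mathbf{M}_{1},\dots,\mathbf{M}_{K})=\eta(\mathbf{X},\mathbf{U},\mathbf{Y}_{1},\dots,\mathbf{Y}_{K})$ with $\mathbf{X},\mathbf{U}$ fixed pre-treatment quantities, the missingness pattern $\mathbf{M}$ is invariant in $\mathbf{z}$ as well; hence the observed $\mathbf{Y}^{*}$, which encodes both $\mathbf{Y}$ and $\mathbf{M}$, equals the same fixed array no matter which treatment assignment is realized.

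Given this invariance, I would define the composite map
\begin{equation*}
\widetilde{\mathcal{G}}:(\mathbf{Z},\mathbf{X}^{*},\mathbf{Y}^{*})\mapsto \bm{\epsilon}=\mathcal{G}(\mathbf{Z},\mathbf{X}^{*},\mathbf{Y}^{*})-\mathcal{H}\big(\mathbf{X}^{*},\mathcal{G}(\mathbf{Z},\mathbf{X}^{*},\mathbf{Y}^{*})\big),
\end{equation*}
which takes exactly the inputs $(\mathbf{Z},\mathbf{X}^{*},\mathbf{Y}^{*})$ that an imputation algorithm in Algorithm~\ref{alg: part} is permitted to use. Then Step~1 of Algorithm~\ref{alg: part with covariate adjustment} computes $t=T(\mathbf{Z},\widetilde{\mathcal{G}}(\mathbf{Z},\mathbf{X}^{*},\mathbf{Y}^{*}))$ and each re-imputation run computes $T^{(l)}=T(\mathbf{Z}^{(l)},\widetilde{\mathcal{G}}(\mathbf{Z}^{(l)},\mathbf{X}^{*},\mathbf{Y}^{*}))$, which is precisely Algorithm~\ref{alg: part} run with imputation algorithm $\widetilde{\mathcal{G}}$ and test statistic $T$ (now read as a functional of $(\mathbf{Z},\bm{\epsilon})$, which Theorem~\ref{thm: hypothesis testing} allows since $T$ there may be any statistic of the treatment vector and the algorithm's output; that the output is a residual vector rather than an imputed outcome vector is immaterial to the argument). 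The one structural feature I must verify is that the covariate-adjustment model $\mathcal{H}$ consumes only $(\mathbf{X}^{*},\widehat{\mathbf{Y}})$ and never the treatment vector, so that $\widetilde{\mathcal{G}}$ remains a legitimate map of the allowed inputs; this is exactly how $\mathcal{H}$ enters Steps 1(b) and 2(c).

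With the reduction in place, I would invoke Theorem~\ref{thm: hypothesis testing} directly: under $H_{0}$ and Assumption~\ref{assump: conditional indep} the invariance of $\mathbf{Y}^{*}$ makes $(\mathbf{Z}^{(l)},\bm{\epsilon}^{(l)})$ an exact draw from the randomization distribution of $(\mathbf{Z},\bm{\epsilon})$, so the $T^{(l)}$ are i.i.d.\ copies of $T$ under the null and $\widehat{p}_{\text{adj}}=\frac{1}{L}\sum_{l=1}^{L}\mathbbm{1}\{T^{(l)}\geq t\}$ is an average of i.i.d.\ $\text{Bernoulli}(p_{\text{adj}})$ variables conditional on the fixed finite population and the observed $t$. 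The strong law then gives $\widehat{p}_{\text{adj}}\xrightarrow{a.s.}p_{\text{adj}}$, and Hoeffding's inequality gives the $2\exp(-2L\epsilon^{2})$ tail bound for every $L$, with no dependence on the sample size $N$.

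The part that requires the most care is the stochastic-algorithm case (Remarks~\ref{rem: G algorithm} and \ref{rem: model H}), where $\mathcal{G}$ and/or $\mathcal{H}$ inject auxiliary randomness so that $\widetilde{\mathcal{G}}$ is no longer deterministic. Here I would make precise that $p_{\text{adj}}=P(T\geq t\mid H_{0})$ is the probability taken jointly over the randomization of the assignment and the algorithmic randomness, and that each re-imputation run $l$ carries its own independent randomness seed; this renders $\{\mathbbm{1}\{T^{(l)}\geq t\}\}_{l=1}^{L}$ i.i.d.\ $\text{Bernoulli}(p_{\text{adj}})$ and leaves the strong law and Hoeffding step untouched. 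This bookkeeping, together with confirming that $\mathcal{H}$ never reads the treatment vector, is the main (if mild) obstacle, since everything else transfers immediately from Theorem~\ref{thm: hypothesis testing}.
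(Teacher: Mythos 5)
Your proposal is correct, and it gets there by a different organization than the paper. The paper does not invoke Theorem~\ref{thm: hypothesis testing} as a black box; instead, Appendix A.2 proves a separate Lemma~\ref{lem: p-value with covariate adjustment} that mirrors Lemma~\ref{lem: p-value}, writing out the exact null distribution of $T(\mathbf{Z},\bm{\epsilon})$ with the composite pipeline $\mathcal{G}$-then-$\mathcal{H}$ made explicit (an indicator inside the sum over $\Omega$ in the deterministic case, a probability in the stochastic case), and then repeats the i.i.d.-Bernoulli / law-of-large-numbers / Hoeffding argument essentially verbatim. The mathematical substance coincides with yours: both proofs rest on (i) the invariance of $\mathbf{Y}^{*}$ (hence of $\mathbf{M}$) under $H_{0}$ together with Assumption~\ref{assump: conditional indep}, and (ii) the observation that $\mathcal{H}$ consumes only $(\mathbf{X}^{*},\widehat{\mathbf{Y}})$, so the residual vector is a (possibly stochastic) function of $(\mathbf{Z},\mathbf{X}^{*},\mathbf{Y}^{*})$ alone. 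What your reduction via the composite map $\widetilde{\mathcal{G}}$ buys is economy: no new lemma is needed, because the proof of Theorem~\ref{thm: hypothesis testing} never uses any genuinely ``imputation-like'' property of $\widehat{\mathbf{Y}}$ (such as agreeing with $\mathbf{Y}^{*}$ on non-missing entries), so a residual-valued map with the same inputs and output shape is a legitimate instance; notably, the paper itself endorses this reduction style in the opposite direction in Appendix A.3, where setting $\mathcal{H}\equiv 0$ collapses Algorithm~\ref{alg: part with covariate adjustment} to Algorithm~\ref{alg: part}. What the paper's re-derivation buys is the explicit covariate-adjusted exact $p$-value formula of Lemma~\ref{lem: p-value with covariate adjustment}, which is of independent use (e.g., exact enumeration over $\Omega$ for small designs with deterministic $\mathcal{G}$ and $\mathcal{H}$), and it sidesteps any quibble about whether a residual-producing map falls under the literal statement of Theorem~\ref{thm: hypothesis testing}. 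Your handling of the stochastic case---defining $p_{\text{adj}}$ jointly over the design randomness and the algorithmic randomness, with independent algorithmic randomness across the $L$ runs---is exactly how the paper's Case-2 arguments treat it, so no gap remains there.
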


The key takeaway from Theorem~\ref{thm: hypothesis testing with covariate adjustment} (of which Theorem~\ref{thm: hypothesis testing} is a special case) is a central message of this paper: as long as the treatment assignments do not affect outcome missingness under Fisher's sharp null $H_{0}$ (as ensured by Assumption~\ref{assump: conditional indep}), any test statistics based on treatment assignments $\mathbf{Z}$, outcomes $\mathbf{Y}^{*}$ (which include outcome missingness indicators $\mathbf{M}$), and pre-treatment covariates $\mathbf{X}$ is permutational-invariant under $H_{0}$. This invariance holds because both the observed outcomes and missingness remain fixed under $H_{0}$ and Assumption~\ref{assump: conditional indep}, making the test finite-population-exact for testing $H_{0}$, regardless of the chosen imputation or covariate adjustment model in the proposed framework. However, if the treatment assignments $\mathbf{Z}$ can affect outcome missingness $\mathbf{M}$, the arguments in Theorem~\ref{thm: hypothesis testing with covariate adjustment} (and Theorem~\ref{thm: hypothesis testing}) no longer hold, as the test statistic may incorrectly reject Fisher's sharp null $H_{0}$ due to the induced association between $\mathbf{Z}$ and $\mathbf{M}$.

In Appendix B.2 of the online supplementary materials, we conduct simulation studies to investigate the type-I error rate and power of the imputation and re-imputation approach with covariate adjustment described in Algorithm~\ref{alg: part with covariate adjustment}. The simulation results show that the imputation and re-imputation approach with covariate adjustment (i) can achieve finite-population-exact type-I error rate control for any imputation algorithm $\mathcal{G}$ and covariate adjustment model $\mathcal{H}$, (ii) may further improve the statistical power of the imputation and re-imputation approach in many settings, and (iii) outperforms non-informative imputation (e.g., median imputation) with covariate adjustment. See Appendix B.2 for details.

There are some other recent works on covariate adjustment in randomized experiments with outcome missingness (e.g., \citealp{chang2023covariate, zhao2023covariate, wang2024handling}), all of which have made significant contributions in the super-population setting. In contrast, our approach is tailored for the finite-population setting. Because of this, the statistical settings, the required assumptions, and the proposed approaches of these recent works and those of our work are very different. For example, our framework can still guarantee finite-population-exact type-I error rate control even when both the imputation algorithm/model and covariate adjustment model were misspecified or when unobserved covariates or interference exists in the missingness mechanism. See Remark~\ref{rem: differences with other cov adj} in Appendix D for details.

\section{Finite-Population-Exact Confidence Regions With Missing Outcomes via Inverting An Imputation-Assisted Randomization Test}\label{sec: CI with missing outcomes}

In addition to testing the null hypothesis of no treatment effect, researchers are often also concerned about confidence regions for causal parameters. As reviewed in Section~\ref{assump: conditional indep}, in design-based causal inference with complete outcome data, these two tasks (i.e., hypothesis testing and construction of a confidence region) are equivalent: a finite-population-exact confidence region can be obtained via inverting finite-population-exact randomization tests \citep{rosenbaum2002observational}. However, such equivalence no longer holds in the presence of missing outcomes. As we will show in this section, constructing a finite-population-exact confidence region requires a stronger assumption than those needed for finite-population-exact hypothesis testing (i.e., Assumption~\ref{assump: conditional indep}). We state such an assumption below.  
\begin{assumption}[A General Outcome Missingness Mechanism for Design-Based Confidence Regions]\label{assump: conditional indep with more restrictions}
    Conditional on the finite-population dataset in hand, there exists an unknown map $\phi: \mathbb{R}^{N \times (p_{x}+p_{u})} \rightarrow \{0,1\}^{N \times K}$ such that $(\mathbf{M}_{1},\dots, \mathbf{M}_{K})=\phi(\mathbf{X}, \mathbf{U})$. 
\end{assumption}

Assumption~\ref{assump: conditional indep with more restrictions} can imply an outcome missingness mechanism considered in \citet{heussen2023randomization} (specifically, Condition 1 in \citet{heussen2023randomization}, which states that outcome missingness is stochastically independent of treatment assignments) under the potential outcomes framework. It claims that missingness status can be fully captured by the observed and unobserved covariates (including unobserved pre-treatment error terms in the outcome missingness mechanism), which can be tested based on the observed data. Assumption~\ref{assump: conditional indep with more restrictions} is a much stronger assumption than Assumption~\ref{assump: conditional indep} as the missingness status no longer depends on the post-treatment outcomes and, therefore, will not be affected by treatment assignments (even when Fisher's sharp null $H_{0}$ does not hold).


Theorem~\ref{thm: confidence region} shows that, under Assumptions \ref{assump: randomization design} and \ref{assump: conditional indep with more restrictions}, we can construct a finite-population-exact confidence region with missing outcomes, even when both the imputation and covariate adjustment models were misspecified (the proof is in Appendix A.3).
\begin{theorem}\label{thm: confidence region}
    Consider the same setting as that in Theorem~\ref{assump: conditional indep} and further assume that Assumptions \ref{assump: randomization design} and \ref{assump: conditional indep with more restrictions} hold. Consider the following null hypothesis with prespecified functions $f_{1},\dots, f_{K}$ and prespecified values of the causal parameters $\Vec{\beta}_{0}=(\Vec{\beta}_{01}, \dots, \Vec{\beta}_{0 K})$:
    \begin{equation}\label{eqn: effect model with specified beta_0}
  H_{\Vec{\beta}_{0}}: Y_{ijk}(1)=f_{k}(Y_{ijk}(0), \Vec{\beta}_{0k}, \mathbf{x}_{ij}), \text{ for } i \in [I], j \in [n_{i}], k \in [K]. 
\end{equation}
 Let $\mathbf{Y}^{*}_{\Vec{\beta}_{0}}=(\mathbf{Y}^{*}_{1, \Vec{\beta}_{01}}, \dots, \mathbf{Y}^{*}_{K, \Vec{\beta}_{0K}})$ be the transformed realized outcomes (with missingness), in which we have $\mathbf{Y}^{*}_{k, \Vec{\beta}_{0k}}=(Y^{*}_{11k, \Vec{\beta}_{0k}}, \dots, Y^{*}_{In_{I}k, \Vec{\beta}_{0k}})$ where $Y^{*}_{ijk, \Vec{\beta}_{0k}}=\text{``Missing"}$ if $M_{ijk}=1$ and $Y^{*}_{ijk, \Vec{\beta}_{0k}}=Z_{ij}Y_{ijk}^{*}+(1-Z_{ij})f_{k}(Y_{ijk}^{*}, \Vec{\beta}_{0k}, \mathbf{x}_{ij})=Z_{ij}Y_{ijk}+(1-Z_{ij})f_{k}(Y_{ijk}, \Vec{\beta}_{0k}, \mathbf{x}_{ij})$ if $M_{ijk}=0$. Let $\widehat{p}(\Vec{\beta}_{0})$ denote the approximate $p$-value reported by Algorithm~\ref{alg: part} (without covariate adjustment after imputation) or Algorithm~\ref{alg: part with covariate adjustment} (with covariate adjustment after imputation) based on the inputs $(\mathbf{Z}, \mathbf{X}, \mathbf{Y}^{*}_{\Vec{\beta}_{0}})$. For the prespecified level $\alpha\in (0, 1)$, we define the confidence region $CR=\{\Vec{\beta}_{0}: \widehat{p}(\Vec{\beta}_{0})\geq \alpha\}$. As the number of re-imputation runs $L$ increases, for the unknown true causal parameters $\Vec{\beta}=(\Vec{\beta}_{1},\dots, \Vec{\beta}_{K})$ that satisfy $Y_{ijk}(1)=f_{k}(Y_{ijk}(0), \Vec{\beta}_{k}, \mathbf{x}_{ij})$ for all $i, j, k,$ we have $\lim_{L\rightarrow\infty }  P(\Vec{\beta} \in CR)\geq 1-\alpha$.
    \end{theorem}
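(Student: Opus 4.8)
The plan is to reduce the coverage guarantee to the exactness of a single randomization test evaluated at the \emph{true} parameter value $\Vec{\beta}$, and then to transfer that exactness to the Monte Carlo $p$-value $\widehat{p}(\Vec{\beta})$ using the convergence already established in Theorems~\ref{thm: hypothesis testing} and \ref{thm: hypothesis testing with covariate adjustment}. By the definition of the confidence region, the event $\{\Vec{\beta}\in CR\}$ is exactly $\{\widehat{p}(\Vec{\beta})\geq \alpha\}$, so it suffices to lower-bound $\lim_{L\to\infty} P(\widehat{p}(\Vec{\beta})\geq \alpha)$ by $1-\alpha$.

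First I would establish the key invariance property: when the prespecified value coincides with the truth ($\Vec{\beta}_0=\Vec{\beta}$), the transformed realized outcomes $\mathbf{Y}^{*}_{\Vec{\beta}}$ are invariant across all $\mathbf{z}\in\Omega$. For a non-missing coordinate ($M_{ijk}=0$), if $Z_{ij}=1$ then $Y^{*}_{ijk,\Vec{\beta}_k}=Y_{ijk}(1)$, while if $Z_{ij}=0$ then $Y^{*}_{ijk,\Vec{\beta}_k}=f_k(Y_{ijk}(0),\Vec{\beta}_k,\mathbf{x}_{ij})=Y_{ijk}(1)$ by the true effect model; in either case the value equals the fixed potential outcome $Y_{ijk}(1)$, which does not depend on $\mathbf{Z}$. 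The remaining ingredient is invariance of the \emph{missingness pattern} itself, and this is precisely where the stronger Assumption~\ref{assump: conditional indep with more restrictions} is required rather than Assumption~\ref{assump: conditional indep}: since $\mathbf{M}=\phi(\mathbf{X},\mathbf{U})$ depends only on pre-treatment covariates, the set of missing coordinates is identical for every treatment assignment. Under the weaker Assumption~\ref{assump: conditional indep}, by contrast, missingness may depend on the true outcomes $\mathbf{Y}$, which themselves vary with $\mathbf{Z}$ whenever the treatment has a nonzero effect, so $\mathbf{Y}^{*}_{\Vec{\beta}}$ would fail to be invariant — this is the structural reason confidence-region construction demands more than hypothesis testing.

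Given this invariance, running Algorithm~\ref{alg: part} or Algorithm~\ref{alg: part with covariate adjustment} on the input $(\mathbf{Z},\mathbf{X}^{*},\mathbf{Y}^{*}_{\Vec{\beta}})$ is formally identical to testing Fisher's sharp null on the fixed realized outcomes $\mathbf{Y}^{*}_{\Vec{\beta}}$, so the exactness argument behind Theorems~\ref{thm: hypothesis testing} and \ref{thm: hypothesis testing with covariate adjustment} (see Remark~\ref{rem: weakest assumption}) applies without change. Concretely, because the observed $\mathbf{Z}$ and each re-imputation draw $\mathbf{Z}^{(l)}$ are i.i.d.\ from $\mathcal{P}$ while $\mathbf{Y}^{*}_{\Vec{\beta}}$ is held fixed, the statistics $T(\mathbf{Z},\widehat{\mathbf{Y}})$ and $T(\mathbf{Z}^{(l)},\widehat{\mathbf{Y}}^{(l)})$ are exchangeable; hence the exact $p$-value $p(\Vec{\beta})=P(T\geq t\mid H_{\Vec{\beta}})$ is super-uniform, i.e.\ $P(p(\Vec{\beta})\leq \alpha)\leq \alpha$, equivalently $P(p(\Vec{\beta})>\alpha)\geq 1-\alpha$, where the probability is over $\mathbf{Z}\sim\mathcal{P}$.

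Finally I would pass from $p(\Vec{\beta})$ to its Monte Carlo approximation. By Theorems~\ref{thm: hypothesis testing} and \ref{thm: hypothesis testing with covariate adjustment}, $\widehat{p}(\Vec{\beta})\xrightarrow{a.s.} p(\Vec{\beta})$ as $L\to\infty$ on the joint space of $\mathbf{Z}$ and the re-imputation draws, so $\liminf_{L\to\infty}\mathbbm{1}\{\widehat{p}(\Vec{\beta})\geq \alpha\}\geq \mathbbm{1}\{p(\Vec{\beta})>\alpha\}$ pointwise; applying Fatou's lemma then yields $\liminf_{L\to\infty} P(\Vec{\beta}\in CR)\geq P(p(\Vec{\beta})>\alpha)\geq 1-\alpha$. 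I expect the main obstacle to be the first step: carefully verifying that it is invariance of the missingness pattern (not merely of the non-missing values) that forces Assumption~\ref{assump: conditional indep with more restrictions}, and that the transformed realized outcomes are well defined and fixed at every coordinate. The reduction to Theorem~\ref{thm: hypothesis testing}, the super-uniformity lemma, and the Fatou argument handling the boundary case $p(\Vec{\beta})=\alpha$ are comparatively routine.
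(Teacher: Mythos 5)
Your proof is correct and shares the paper's core reduction: you establish exactly the paper's key lemma (its Lemma~\ref{lem: p-value with adjusted outcomes}), namely that at the true $\Vec{\beta}$ the transformed realized outcomes $\mathbf{Y}^{*}_{\Vec{\beta}}$ are fixed across all $\mathbf{z}\in\Omega$ --- invariance of the non-missing values following from the true effect model, and invariance of the missingness pattern being precisely what forces Assumption~\ref{assump: conditional indep with more restrictions} --- so that Algorithm~\ref{alg: part} or \ref{alg: part with covariate adjustment} run on $(\mathbf{Z},\mathbf{X},\mathbf{Y}^{*}_{\Vec{\beta}})$ is a bona fide exact randomization test and $\widehat{p}(\Vec{\beta})\xrightarrow{a.s.} p(\Vec{\beta})$. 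Where you genuinely diverge is in the two finishing steps. First, for super-uniformity $P(p(\Vec{\beta})\leq\alpha)\leq\alpha$, the paper re-derives this from scratch by enumerating the atoms $\xi_{1},\dots,\xi_{M}$ of the randomization distribution of the test statistic and doing a case analysis on where $\alpha$ falls (following Proposition 1.1 of Luo et al., 2021); you instead invoke the standard probability-integral-transform/exchangeability fact. Your route is valid and arguably more robust: the paper's enumeration implicitly treats $T$ as taking finitely many values (deterministic $\mathcal{G}$, $\mathcal{H}$), whereas the exchangeability argument covers stochastic imputation algorithms, where $t, T^{(1)},\dots,T^{(L)}$ are i.i.d.\ draws, without modification. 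Second, for transferring the guarantee from $p(\Vec{\beta})$ to $\widehat{p}(\Vec{\beta})$, the paper passes to convergence in law and splits into two cases according to whether $\alpha$ is a continuity point of the distribution function of $p(\Vec{\beta})$, introducing an auxiliary point $\varsigma$ to handle the atom case; your Fatou argument, via the pointwise bound $\liminf_{L}\mathbbm{1}\{\widehat{p}(\Vec{\beta})\geq\alpha\}\geq\mathbbm{1}\{p(\Vec{\beta})>\alpha\}$, dispatches the atom at $\alpha$ in one stroke and is cleaner. One shared caveat: both your argument and the paper's strictly control $\liminf_{L} P(\Vec{\beta}\in CR)$ rather than establishing that the limit exists; stating the conclusion with $\liminf$ (or noting existence of the limit separately) would be the careful phrasing, but this imprecision is inherited from the theorem statement itself and is not a defect of your proof relative to the paper's.
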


\begin{remark}\label{rem: Assumption 3 is strong}\emph{
As shown in Appendix A.3, a key component of the proof of Theorem~\ref{thm: confidence region} is that the missingness indicators $\mathbf{M}$ are invariant under any $\Vec{\beta}_{0}$ in $H_{\Vec{\beta}_{0}}$, which holds under Assumption~\ref{assump: conditional indep with more restrictions}, but may not hold under Assumption~\ref{assump: conditional indep}. Specifically, under Assumption~\ref{assump: conditional indep} (i.e., missingness indicators are allowed to depend on the true outcomes $\mathbf{Y}$ and therefore can change with the treatment assignments), the observed missingness indicators under the observed treatment assignments $\mathbf{Z}$ are $(\mathbf{M}_{1}(\mathbf{Z}), \dots, \mathbf{M}_{K}(\mathbf{Z}))=\eta(\mathbf{X}, \mathbf{U}, \mathbf{Y}(\mathbf{Z}))$ and those under another set of treatment assignments $\mathbf{Z}^{\prime}$ will be $(\mathbf{M}_{1}(\mathbf{Z}^{\prime}), \dots, \mathbf{M}_{K}(\mathbf{Z}^{\prime}))=\eta(\mathbf{X}, \mathbf{U}, \mathbf{Y}(\mathbf{Z}^{\prime}))$. Note that under $H_{\Vec{\beta}_{0}}$, the $\mathbf{Y}(\mathbf{Z}^{\prime})$ may differ from $\mathbf{Y}(\mathbf{Z})$ when $\mathbf{Z}^{\prime}\neq \mathbf{Z}$. Therefore, without additional assumptions (i.e., Assumption~\ref{assump: conditional indep with more restrictions}), missingness status $\mathbf{M}$ may change with different treatment assignments $\mathbf{Z}$, and a finite-population-exact randomization test for $H_{\Vec{\beta}_{0}}$ with an arbitrary $\Vec{\beta}_{0}$ (or equivalently, finite-population-exact confidence region for $\Vec{\beta}$) may not be feasible. However, for testing the null $H_{0}$, Assumption~\ref{assump: conditional indep} is sufficient. }
    \end{remark}

\section{Real Data Application}\label{sec: real data}

We apply our new framework to the Work, Family, and Health Study (\citealp{WFHS2018}), which is a large-scale cluster randomized experiment (CRE) for improving the well-being and work-family balance of employees. In this CRE, half of the 56 working groups (clusters) in an anonymized company were randomly assigned a workplace intervention (treatment), and the remaining half were controls. The average number of study individuals in a cluster is about 18.6, with 1044 study individuals in total. The outcome variable is control over work hours at the 6-month follow-up, a nearly continuous score ranging from 1 to 5 (19.1\% missingness). We consider the following seven covariates (recall that our framework also allows covariate missingness): the baseline value of the outcome measure (10.7\% missingness), employee or manager indicator (no missingness), group job functions (0.8\% missingness), time adequacy (4.9\% missingness), psychological distress (0.1\% missingness), perceived stress (21.5\% missingness), and work-family conflict (0.1\% missingness). In Table~\ref{tab: real data}, we report the one-sided $p$-values under the null hypothesis of no treatment effect $H_{0}$ and one-sided 95\% confidence intervals (CIs) for an additive treatment effect, given by the Wilcoxon rank sum test based on the following six approaches: median imputation without or with covariate adjustment (using a linear covariate adjustment model), the imputation and re-imputation framework without covariate adjustment (i.e., Algorithm~\ref{alg: part}) with linear imputation model or boosting imputation algorithm, and imputation and re-imputation framework with covariate adjustment (i.e., Algorithm~\ref{alg: part with covariate adjustment}) with linear imputation and covariate adjustment models or boosting imputation and covariate adjustment models. See Appendix E for more details on the dataset and procedure. 


\setlength{\tabcolsep}{2pt} 
\begin{table}[H]
\small
\centering
\begin{tabular}{lc c c c c c c}
\toprule
\multirow{2}{*}{} &\multicolumn{3}{c}{Without Covariate Adjustment}&\multicolumn{3}{c}{With Covariate Adjustment} \\
\cmidrule(rl){2-4} \cmidrule(rl){5-7} 
 & Median \  & Algo 1 (Linear) \ & Algo 1 (Boosting) \ & Median \ & Algo 2 (Linear) \ & Algo 2 (Boosting) \ \\
\midrule
$p$-value   &   $0.120$ & $0.103$ & $0.104$  & $0.067$ &  $0.030$ & $0.025$  \\
95\% CI & $[-0.124, 4]$ & $[-0.093,4]$ & $[-0.121,4]$  & $[-0.031,4]$ & $[0.036, 4]$ & $[0.024, 4]$ \\
\bottomrule
\end{tabular}
\caption{One-sided $p$-values and CIs reported by median imputation, Algorithm~\ref{alg: part}, and Algorithm~\ref{alg: part with covariate adjustment}. The outcome missingness rate is 19.1\%.}
\label{tab: real data}
\end{table}

Table~\ref{tab: real data} shows that the proposed imputation and re-imputation framework can typically make causal inference for the WFHS more informative (e.g., smaller $p$-values under $H_{0}$ and/or narrower CIs), especially when incorporated with covariate adjustment. For example, under covariate adjustment and one-sided significance level $0.05$, we fail to reject $H_{0}$ when using a classic randomization test with non-informative (e.g., median) imputation. Still, we are able to reject $H_{0}$ if applying the imputation and re-imputation framework incorporated with either linear models or boosting. Recall that these gains in efficiency are based on a 19.1\% outcome missingness rate. We expect these gains to be even more substantial with a higher outcome missingness rate.

\section{Discussion of Future Research Directions}

Our work also suggests future research directions. For example, for the hypothesis testing part, our work focuses on Fisher's sharp null and its extensions in model (\ref{eqn: effect model with specified beta_0}). These null hypotheses are widely used in both methodology research and applied research \citep{rosenbaum2002observational, rosenbaum2020design, imbens2015causal}. However, our work has not explicitly discussed other common null hypotheses in design-based causal inference, such as Neyman's weak null \citep{neyman1923application, imbens2015causal, zhao2024adjust} and attributable effects \citep{rosenbaum2001effects, rosenbaum2002observational, hansen2009attributing, choi2017estimation}. When the outcome variable is discrete, previous studies \citep{rigdon2015randomization, li2016exact, rosenbaum2001effects, hansen2009attributing} showed that testing Neyman's weak null or attributable effects is equivalent to testing a sequence of carefully formulated Fisher's sharp null hypotheses. This sheds light on using our framework to study other null hypotheses (e.g., Neyman's weak null with discrete outcomes and attributable effects) in design-based causal inference with missing outcomes. 

Another meaningful future research direction is to extend our framework to handle missingness in survival data. In Appendix F, we derive some preliminary results on how to apply our framework to right-censored survival data with missing censoring indicators.


\section*{Supplementary Material}

Online supplementary material includes technical proofs, additional theoretical results, additional details, additional simulation studies, and related discussions and remarks.

\section*{Acknowledgements}

The authors thank the joint editor, the associate editor, and two anonymous referees for their valuable comments and helpful advice. The authors thank Rebecca Betensky, Matthew Blackwell, Kan Chen, Iván Díaz, Xinran Li, Naijia Liu, Samuel Pimentel, James Robins, William Rosenberger, Peizan Sheng, Dylan Small, Linbo Wang, Yuhao Wang, and Zeyang Yu for the helpful discussions. The work of Siyu Heng was in part supported by the NIH Grant R21DA060433, a NYU GPH Research Support Grant, and an NYU Research Catalyst Prize. The work of Jiawei Zhang was in part supported by a NYU GPH Research Support Grant and an NYU Research Catalyst Prize when he was working as an undergraduate research assistant at New York University, under the supervision of Siyu Heng. The work of Yang Feng was in part supported by the NIH Grant R21AG074205, the NSF Grant DMS-2324489, and a NYU GPH Research Support Grant.

\putbib[references]  
\end{bibunit}

\clearpage


\begin{bibunit}[apalike]  

\begin{center}
    \large \bf Online Supplementary Materials for ``Design-Based Causal Inference with Missing Outcomes: Missingness Mechanisms, Imputation-Assisted Randomization Tests, and Covariate Adjustment" by Heng, Zhang, and Feng
\end{center}

\smallskip
\smallskip
\smallskip

\begin{abstract}
    Online supplementary materials (Appendices A--G) include technical proofs, additional theoretical results, additional details, additional simulation studies, and related discussions and remarks. Specifically, Appendix A includes detailed proofs of all the theorems stated in the main text. Appendix B presents additional simulation studies, including those on the covariate adjustment in design-based causal inference with missing outcomes based on the proposed framework. Appendix C gives more details on the simulation studies. Appendix D includes additional discussions and remarks. Appendix E contains more details about the real data application. Appendices F and G discuss how to extend our proposed approaches to causal inference with survival data (censored data).
\end{abstract}
\clearpage

\section*{Appendix A: Proofs of the Theorems}

\subsection*{Appendix A.1: Proof of Theorem~\ref{thm: hypothesis testing}}

To prove Theorem~\ref{thm: hypothesis testing}, we first prove the following lemma, which gives an explicit form of the finite-population-exact distribution of the test statistic $T(\mathbf{Z}, \widehat{\mathbf{Y}})$ (based on imputed outcomes) under $H_{0}$.

\begin{lemma}\label{lem: p-value}
  Let $\mathcal{G}: (\mathbf{Z}, \mathbf{X}^{*}, \mathbf{Y}^{*})\mapsto \widehat{\mathbf{Y}}$ be any outcome imputation algorithm for obtaining the imputed outcomes $\widehat{\mathbf{Y}}$, and $T(\mathbf{Z}, \widehat{\mathbf{Y}})$ be any test statistic based on $\mathbf{Z}$ and $\widehat{\mathbf{Y}}$. Let $P(T(\mathbf{Z}, \widehat{\mathbf{Y}})\geq t \mid H_{0})$ be the finite-population-exact $p$-value under $H_{0}$ given the observed value $t$ of $T(\mathbf{Z}, \widehat{\mathbf{Y}})$. Under Assumptions \ref{assump: randomization design} and \ref{assump: conditional indep}, if the missing outcome imputation algorithm $\mathcal{G}$ is deterministic (i.e., $\widehat{\mathbf{Y}}=\mathcal{G}(\mathbf{Z}, \mathbf{X}^{*}, \mathbf{Y}^{*})$ is a deterministic vector given $(\mathbf{Z}, \mathbf{X}^{*}, \mathbf{Y}^{*})$), we have
    \begin{equation}\label{eqn: one-sided exact p-value deterministic}
    P(T(\mathbf{Z}, \widehat{\mathbf{Y}})\geq t \mid H_{0})=\sum_{\mathbf{z}\in \Omega}\big[\mathbbm{1}\{T(\mathbf{Z}=\mathbf{z}, \widehat{\mathbf{Y}}=\mathcal{G}(\mathbf{Z}=\mathbf{z}, \mathbf{X}^{*}, \mathbf{Y}^{*}))\geq t\}\times P(\mathbf{Z}=\mathbf{z})\big],
    \end{equation}
    and if the missing outcome imputation algorithm $\mathcal{G}$ is stochastic (i.e., $\widehat{\mathbf{Y}}=\mathcal{G}(\mathbf{Z}, \mathbf{X}^{*}, \mathbf{Y}^{*})$ is a random vector conditional on $(\mathbf{Z}, \mathbf{X}^{*}, \mathbf{Y}^{*})$), we have
    \begin{equation}\label{eqn: one-sided exact p-value stochastic}
    P(T(\mathbf{Z}, \widehat{\mathbf{Y}})\geq t \mid H_{0})=\sum_{\mathbf{z}\in \Omega}\big[P\{T(\mathbf{Z}=\mathbf{z}, \widehat{\mathbf{Y}}=\mathcal{G}(\mathbf{Z}=\mathbf{z}, \mathbf{X}^{*}, \mathbf{Y}^{*}))\geq t\}\times P(\mathbf{Z}=\mathbf{z})\big].
    \end{equation}
\end{lemma}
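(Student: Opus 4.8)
The plan is to exploit the single structural fact already flagged in Remark~\ref{rem: weakest assumption}: under Fisher's sharp null $H_{0}$ together with Assumption~\ref{assump: conditional indep}, the observed realized outcomes $\mathbf{Y}^{*}$ (including the missingness pattern $\mathbf{M}$) form a fixed quantity that does not vary with the treatment assignment $\mathbf{Z}$. Once this invariance is established, the test statistic $T(\mathbf{Z}, \widehat{\mathbf{Y}})$ becomes a function of $\mathbf{Z}$ alone (modulo the internal randomness of $\mathcal{G}$ in the stochastic case), and both displayed identities follow by conditioning on the realized value of $\mathbf{Z}$ and invoking the law of total probability.

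First I would establish the invariance. Under $H_{0}$ we have $Y_{ijk}(1)=Y_{ijk}(0)$ for every $i,j,k$, so the realized true outcome $Y_{ijk}(z_{ij})$ takes the same value for either $z_{ij}\in\{0,1\}$; hence the vector of true outcomes $\mathbf{Y}(\mathbf{z})$ is identical for all $\mathbf{z}\in\Omega$. Because the observed and unobserved covariates $\mathbf{X},\mathbf{U}$ are pre-treatment and thus fixed, Assumption~\ref{assump: conditional indep} gives $(\mathbf{M}_{1}(\mathbf{z}),\dots,\mathbf{M}_{K}(\mathbf{z}))=\eta(\mathbf{X},\mathbf{U},\mathbf{Y}(\mathbf{z}))$, and since $\mathbf{Y}(\mathbf{z})$ is invariant the missingness indicators $\mathbf{M}(\mathbf{z})$ are invariant as well. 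Combining the two invariances, the realized outcome $Y^{*}_{ijk}(\mathbf{z})$ — equal to $Y_{ijk}(z_{ij})$ when $M_{ijk}(\mathbf{z})=0$ and to ``Missing'' otherwise — does not depend on $\mathbf{z}$, so there is a fixed vector $\mathbf{y}^{*}$ with $\mathbf{Y}^{*}(\mathbf{z})=\mathbf{y}^{*}$ for every $\mathbf{z}$. The same reasoning applies verbatim to the possibly-incomplete covariates $\mathbf{X}^{*}$, which are pre-treatment and therefore also fixed.

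With the invariance in hand I would conclude by conditioning. Using $\mathbf{Z}\sim\mathcal{P}$ (Assumption~\ref{assump: randomization design}),
\[
P\big(T(\mathbf{Z}, \widehat{\mathbf{Y}})\geq t\mid H_{0}\big)=\sum_{\mathbf{z}\in\Omega}P\big(T(\mathbf{Z},\widehat{\mathbf{Y}})\geq t\mid \mathbf{Z}=\mathbf{z}, H_{0}\big)\,P(\mathbf{Z}=\mathbf{z}).
\]
On the event $\{\mathbf{Z}=\mathbf{z}\}$ the imputation algorithm receives the fixed inputs $(\mathbf{z},\mathbf{X}^{*},\mathbf{Y}^{*})$, so $\widehat{\mathbf{Y}}=\mathcal{G}(\mathbf{z},\mathbf{X}^{*},\mathbf{Y}^{*})$. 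If $\mathcal{G}$ is deterministic this makes $T(\mathbf{z},\mathcal{G}(\mathbf{z},\mathbf{X}^{*},\mathbf{Y}^{*}))$ a constant, so the conditional probability collapses to the indicator $\mathbbm{1}\{T(\mathbf{z},\mathcal{G}(\mathbf{z},\mathbf{X}^{*},\mathbf{Y}^{*}))\geq t\}$, yielding (\ref{eqn: one-sided exact p-value deterministic}). If $\mathcal{G}$ is stochastic, the only randomness remaining given $\mathbf{Z}=\mathbf{z}$ is the internal randomness of $\mathcal{G}$ (which is independent of $\mathbf{Z}$), so the conditional probability equals $P\{T(\mathbf{z},\mathcal{G}(\mathbf{z},\mathbf{X}^{*},\mathbf{Y}^{*}))\geq t\}$, yielding (\ref{eqn: one-sided exact p-value stochastic}).

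The main obstacle — really the only nontrivial point — is the careful bookkeeping in the invariance step: one must treat $\mathbf{M}$ as a genuinely post-treatment object defined through the potential statuses $M_{ijk}(\mathbf{z})$ and argue that the \emph{composition} $\eta(\mathbf{X},\mathbf{U},\mathbf{Y}(\mathbf{z}))$ is constant in $\mathbf{z}$, which hinges precisely on $H_{0}$ collapsing the two potential outcomes into one so that $\mathbf{Y}(\mathbf{z})$ stops varying. Everything downstream (the conditioning and the deterministic-versus-stochastic split) is routine. A secondary point worth stating cleanly is that in the stochastic case the internal randomization of $\mathcal{G}$ is drawn independently of $\mathbf{Z}$, which is what permits writing the conditional probability as the unconditional $P\{T(\mathbf{z},\mathcal{G}(\mathbf{z},\mathbf{X}^{*},\mathbf{Y}^{*}))\geq t\}$.
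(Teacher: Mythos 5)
Your proposal is correct and follows essentially the same route as the paper's own proof: establish that under $H_{0}$ and Assumption~\ref{assump: conditional indep} the realized outcomes $\mathbf{Y}^{*}$ (hence the missingness pattern $\mathbf{M}$) are invariant across all $\mathbf{z}\in\Omega$, then expand the $p$-value over the randomization distribution, splitting into the deterministic and stochastic cases for $\mathcal{G}$. Your write-up is in fact slightly more explicit than the paper's — spelling out the law-of-total-probability conditioning on $\{\mathbf{Z}=\mathbf{z}\}$ and the independence of $\mathcal{G}$'s internal randomness from $\mathbf{Z}$ — but these are refinements of the same argument, not a different one.
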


\begin{proof}
    We prove Lemma~\ref{lem: p-value} by discussing the following two cases separately.

    \textbf{Case 1: The missing outcome imputation algorithm $\mathcal{G}$ is deterministic.}

    To prove that equation (\ref{eqn: one-sided exact p-value deterministic}) holds, the key point is to show that the realized outcomes $\mathbf{Y}^{*}=(Y_{111}^{*}, \dots, Y_{In_{I}K}^{*})$ are invariant under any treatment indicator vector $\mathbf{Z}$. Note that under $H_{0}$, the true outcomes $\mathbf{Y}=(Y_{111}, \dots, Y_{In_{I}K})$ are invariant under any $\mathbf{Z}$. Moreover, based on Assumption~\ref{assump: conditional indep}, such invariance of $\mathbf{Y}$ further implies that the observed missingness indicators $\mathbf{M}=(M_{111}, \dots, M_{In_{I}K})$ are also invariant under any $\mathbf{Z}$. Recall that for each $Y_{ijk}^{*}$ (i.e., the $k$-th realized outcome of subject $ij$), we have $Y_{ijk}^{*}=Y_{ijk}$ if $M_{ijk}=0$ and $Y_{ijk}^{*}=\text{``Missing"}$ if $M_{ijk}=1$. Therefore, the realized outcomes $\mathbf{Y}^{*}=(Y_{111}^{*}, \dots, Y_{In_{I}K}^{*})$ are invariant under any treatment indicator vector $\mathbf{Z}$. That is, if we let $\mathbf{Y}^{*}(\mathbf{z})=(Y_{111}^{*}(\mathbf{z}), \dots, Y_{In_{I}K}^{*}(\mathbf{z}))$ denote the potential realized outcomes under the treatment indicator vector $\mathbf{Z}=\mathbf{z}$, we have the observed realized outcomes $\mathbf{Y}^{*}=\mathbf{Y}^{*}(\mathbf{z})$ for any $\mathbf{z}$. Then, we have 
    \begin{align*}
        P(T(\mathbf{Z}, \widehat{\mathbf{Y}})\geq t \mid H_{0})&=\sum_{\mathbf{z}\in \Omega}\big[\mathbbm{1}\{T(\mathbf{Z}=\mathbf{z}, \widehat{\mathbf{Y}}=\mathcal{G}(\mathbf{Z}=\mathbf{z}, \mathbf{X}^{*}, \mathbf{Y}^{*}(\mathbf{z})))\geq t\}\times P(\mathbf{Z}=\mathbf{z})\big]\\
        &=\sum_{\mathbf{z}\in \Omega}\big[\mathbbm{1}\{T(\mathbf{Z}=\mathbf{z}, \widehat{\mathbf{Y}}=\mathcal{G}(\mathbf{Z}=\mathbf{z}, \mathbf{X}^{*}, \mathbf{Y}^{*}))\geq t\}\times P(\mathbf{Z}=\mathbf{z})\big].
    \end{align*}
    Therefore, equation (\ref{eqn: one-sided exact p-value deterministic}) holds. 
    
 \textbf{Case 2: The missing outcome imputation algorithm $\mathcal{G}$ is stochastic.}

    In Case 1, we have shown that the observed realized outcomes $\mathbf{Y}^{*}=\mathbf{Y}^{*}(\mathbf{z})$ for any $\mathbf{z}$ when \textit{both} the null $H_{0}$ and Assumption~\ref{assump: conditional indep} hold. Then, we have 
    \begin{align*}
        P(T(\mathbf{Z}, \widehat{\mathbf{Y}})\geq t \mid H_{0})&=\sum_{\mathbf{z}\in \Omega}\big[P\{T(\mathbf{Z}=\mathbf{z}, \widehat{\mathbf{Y}}=\mathcal{G}(\mathbf{Z}=\mathbf{z}, \mathbf{X}^{*}, \mathbf{Y}^{*}(\mathbf{z})))\geq t\}\times P(\mathbf{Z}=\mathbf{z})\big]\\
        &=\sum_{\mathbf{z}\in \Omega}\big[P\{T(\mathbf{Z}=\mathbf{z}, \widehat{\mathbf{Y}}=\mathcal{G}(\mathbf{Z}=\mathbf{z}, \mathbf{X}^{*}, \mathbf{Y}^{*}))\geq t\}\times P(\mathbf{Z}=\mathbf{z})\big].
    \end{align*}
    Therefore, equation (\ref{eqn: one-sided exact p-value stochastic}) also holds. 
\end{proof}

We are now ready to prove Theorem~\ref{thm: hypothesis testing}.

\begin{proof}[(Proof of Theorem~\ref{thm: hypothesis testing})]
    Recall that in Algorithm~\ref{alg: part}, we consider 
   \begin{align*}
        \widehat{p}=\frac{1}{L}\sum_{l=1}^{L}\mathbbm{1}\{T^{(l)}\geq t\}=\frac{1}{L}\sum_{l=1}^{L}\mathbbm{1}\{T(\mathbf{Z}^{(l)}, \widehat{\mathbf{Y}}^{(l)})\geq t\}.
   \end{align*}
  For $l=1,\dots, L$, random variables $\mathbbm{1}\{T^{(l)}\geq t\}$ independently and identically follow Bernoulli distribution. Moreover, invoking the arguments in Lemma~\ref{lem: p-value}, if the missing outcome imputation algorithm $\mathcal{G}$ is deterministic, we have 
  \begin{align*}
      &\quad \ P(T^{(l)}\geq t \mid H_{0}) \\
      &=P(T(\mathbf{Z}^{(l)}, \widehat{\mathbf{Y}}^{(l)})\geq t \mid H_{0}) \\
      &= P(T(\mathbf{Z}^{(l)}, \mathcal{G}(\mathbf{Z}^{(l)}, \mathbf{X}^{*}, \mathbf{Y}^{*}))\geq t \mid H_{0}) \quad (\text{recall that $\mathbf{Y}^{*}$ is the observed realized outcomes}) \\
      &=\sum_{\mathbf{z}\in \Omega}\big[\mathbbm{1}\{T(\mathbf{Z}=\mathbf{z}, \widehat{\mathbf{Y}}=\mathcal{G}(\mathbf{Z}=\mathbf{z}, \mathbf{X}^{*}, \mathbf{Y}^{*}))\geq t\}\times P(\mathbf{Z}=\mathbf{z})\big]\\
      &=\sum_{\mathbf{z}\in \Omega}\big[\mathbbm{1}\{T(\mathbf{Z}=\mathbf{z}, \widehat{\mathbf{Y}}=\mathcal{G}(\mathbf{Z}=\mathbf{z}, \mathbf{X}^{*}, \mathbf{Y}^{*}(\mathbf{z})))\geq t\}\times P(\mathbf{Z}=\mathbf{z})\big]\\
      &=P(T(\mathbf{Z}, \widehat{\mathbf{Y}})\geq t \mid H_{0}),
  \end{align*}
  and if the missing outcome imputation algorithm $\mathcal{G}$ is stochastic, we also have 
  \begin{align*}
     P(T^{(l)}\geq t \mid H_{0}) &=P(T(\mathbf{Z}^{(l)}, \widehat{\mathbf{Y}}^{(l)})\geq t \mid H_{0}) \\
      &= P(T(\mathbf{Z}^{(l)}, \mathcal{G}(\mathbf{Z}^{(l)}, \mathbf{X}^{*}, \mathbf{Y}^{*}))\geq t \mid H_{0})\\
      &=\sum_{\mathbf{z}\in \Omega}\big[P\{T(\mathbf{Z}=\mathbf{z}, \widehat{\mathbf{Y}}=\mathcal{G}(\mathbf{Z}=\mathbf{z}, \mathbf{X}^{*}, \mathbf{Y}^{*}))\geq t\}\times P(\mathbf{Z}=\mathbf{z})\big]\\
      &=\sum_{\mathbf{z}\in \Omega}\big[P\{T(\mathbf{Z}=\mathbf{z}, \widehat{\mathbf{Y}}=\mathcal{G}(\mathbf{Z}=\mathbf{z}, \mathbf{X}^{*}, \mathbf{Y}^{*}(\mathbf{z})))\geq t\}\times P(\mathbf{Z}=\mathbf{z})\big]\\
      &=P(T(\mathbf{Z}, \widehat{\mathbf{Y}})\geq t \mid H_{0}).
  \end{align*}
  Therefore, by the law of large numbers, we have 
  \begin{equation*}
      \text{$\widehat{p} \xrightarrow{a.s.} p$ as the number of re-imputation runs $L\rightarrow \infty$.}
  \end{equation*}
  Moreover, by Hoeffding's inequality (\citealp{hoeffding1994probability}), for any $\epsilon>0$ and for all $L$, we have 
  \begin{equation*}
      P(| \widehat{p}-p|\geq \epsilon)\leq 2\exp(-2L\epsilon^{2}).
  \end{equation*}
  
\end{proof}

\begin{remark}
   \emph{The rationale of Theorem~\ref{assump: conditional indep} is clear: under the outcome missingness mechanism in Assumption~\ref{assump: conditional indep} and the null hypothesis $H_{0}$, the realized outcomes $\mathbf{Y}^{*}$ (including the information of missingness status $\mathbf{M}$) are invariant under different treatment assignments $\mathbf{Z}$. Therefore, we can derive an explicit form of the finite-population-exact $p$-value in terms of the observed data and the output of the outcome imputation algorithm $\mathcal{G}$ (see Lemma~\ref{lem: p-value}). Based on this explicit form, we can then use Monte Carlo simulations equipped with a missing data imputation algorithm $\mathcal{G}$ (i.e., the re-imputation step) to approximate the $p$-value $P(T(\mathbf{Z}, \widehat{\mathbf{Y}})\geq t \mid H_{0})$ as described in Algorithm~\ref{alg: part}, which gives us the result in Theorem~\ref{alg: part}.} 
\end{remark}

\subsection*{Appendix A.2: Proof of Theorem~\ref{thm: hypothesis testing with covariate adjustment} }

Similar to the proof of Theorem~\ref{thm: hypothesis testing}, the first step of proving Theorem~\ref{thm: hypothesis testing with covariate adjustment} is to give an explicit form of the finite-population-exact distribution of the test statistic $T(\mathbf{Z}, \bm{\epsilon})$ based on the residuals $\bm{\epsilon}=\widehat{\mathbf{Y}}-\widetilde{\mathbf{Y}}$ under the null hypothesis $H_{0}$.

\begin{lemma}\label{lem: p-value with covariate adjustment}
 Let $\mathcal{G}: (\mathbf{Z}, \mathbf{X}^{*}, \mathbf{Y}^{*})\mapsto \widehat{\mathbf{Y}}$ be any outcome imputation algorithm used in Algorithm~\ref{alg: part with covariate adjustment} and $\mathcal{H}: (\mathbf{X}^{*}, \widehat{\mathbf{Y}})\mapsto \widetilde{\mathbf{Y}}$ any prediction model for covariate adjustment used in Algorithm~\ref{alg: part with covariate adjustment} based on imputed outcomes $\widehat{\mathbf{Y}}$. Let $T=T(\mathbf{Z}, \bm{\epsilon})$ be any test statistic based on $\mathbf{Z}$ and residuals $\bm{\epsilon}=\widehat{\mathbf{Y}}-\widetilde{\mathbf{Y}}$ after covariate adjustment, and let $t$ denote the observed value of $T$ based on the observed data $(\mathbf{Z}, \mathbf{X}^{*}, \mathbf{Y}^{*})$. Let $p_{\text{adj}}=P(T(\mathbf{Z}, \bm{\epsilon})\geq t\mid H_{0})$ denote the finite-population-exact $p$-value (with covariate adjustment) under $H_{0}$. Under Assumptions \ref{assump: randomization design} and \ref{assump: conditional indep}, if both the missing outcome imputation algorithm $\mathcal{G}$ and the prediction model $\mathcal{H}$ for the covariate adjustment are deterministic (i.e., $\widehat{\mathbf{Y}}=\mathcal{G}(\mathbf{Z}, \mathbf{X}^{*}, \mathbf{Y}^{*})$ is a deterministic vector given $(\mathbf{Z}, \mathbf{X}^{*}, \mathbf{Y}^{*})$ and $\bm{\epsilon}$ is a deterministic vector given $(\mathbf{X}^{*}, \widehat{\mathbf{Y}})$), we have
    \begin{align}\label{eqn: one-sided exact p-value with adjustment deterministic}
    &P(T(\mathbf{Z}, \bm{\epsilon})\geq t \mid H_{0}) \nonumber \\ 
    &\quad \quad =\sum_{\mathbf{z}\in \Omega}\big[\mathbbm{1}\{T(\mathbf{Z}=\mathbf{z}, \bm{\epsilon}=\mathcal{G}(\mathbf{Z}=\mathbf{z}, \mathbf{X}^{*}, \mathbf{Y}^{*})-\mathcal{H}(\mathbf{X}^{*}, \mathcal{G}(\mathbf{Z}=\mathbf{z}, \mathbf{X}^{*}, \mathbf{Y}^{*})))\geq t\}\times P(\mathbf{Z}=\mathbf{z})\big],  
    \end{align}
    and if either the missing outcome imputation algorithm $\mathcal{G}$ or the prediction model $\mathcal{H}$ for the covariate adjustment is stochastic (i.e., either $\widehat{\mathbf{Y}}=\mathcal{G}(\mathbf{Z}, \mathbf{X}^{*}, \mathbf{Y}^{*})$ is a random vector given $(\mathbf{Z}, \mathbf{X}^{*}, \mathbf{Y}^{*})$ or $\bm{\epsilon}$ is a random vector given $(\mathbf{X}^{*}, \widehat{\mathbf{Y}})$), we have
    \begin{align}\label{eqn: one-sided exact p-value with adjustment stochastic}
    &P(T(\mathbf{Z}, \bm{\epsilon})\geq t \mid H_{0}) \nonumber \\ 
    &\quad \quad =\sum_{\mathbf{z}\in \Omega}\big[P\{T(\mathbf{Z}=\mathbf{z}, \bm{\epsilon}=\mathcal{G}(\mathbf{Z}=\mathbf{z}, \mathbf{X}^{*}, \mathbf{Y}^{*})-\mathcal{H}(\mathbf{X}^{*}, \mathcal{G}(\mathbf{Z}=\mathbf{z}, \mathbf{X}^{*}, \mathbf{Y}^{*})))\geq t\}\times P(\mathbf{Z}=\mathbf{z})\big].
    \end{align}
\end{lemma}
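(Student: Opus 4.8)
The plan is to reduce Lemma~\ref{lem: p-value with covariate adjustment} to the already-established Lemma~\ref{lem: p-value} by folding the covariate-adjustment step into a single composite map. Define $\mathcal{G}_{\mathcal{H}}(\mathbf{Z}, \mathbf{X}^{*}, \mathbf{Y}^{*}) := \mathcal{G}(\mathbf{Z}, \mathbf{X}^{*}, \mathbf{Y}^{*}) - \mathcal{H}(\mathbf{X}^{*}, \mathcal{G}(\mathbf{Z}, \mathbf{X}^{*}, \mathbf{Y}^{*}))$, whose output is precisely the residual vector $\bm{\epsilon} = \widehat{\mathbf{Y}} - \widetilde{\mathbf{Y}}$ computed in Algorithm~\ref{alg: part with covariate adjustment}. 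Then $T(\mathbf{Z}, \bm{\epsilon}) = T(\mathbf{Z}, \mathcal{G}_{\mathcal{H}}(\mathbf{Z}, \mathbf{X}^{*}, \mathbf{Y}^{*}))$, so the target $p$-value is exactly the object treated in Lemma~\ref{lem: p-value} with the imputation algorithm $\mathcal{G}$ replaced by $\mathcal{G}_{\mathcal{H}}$ and the same test statistic $T$. Moreover, $\mathcal{G}_{\mathcal{H}}$ is deterministic whenever both $\mathcal{G}$ and $\mathcal{H}$ are deterministic, and is a stochastic map otherwise, matching the two cases in the lemma.

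The first thing I would verify is that the invariance argument driving the proof of Lemma~\ref{lem: p-value} survives the composition. Under Fisher's sharp null $H_{0}$ the true outcomes $\mathbf{Y}$ are invariant across treatment assignments, and by Assumption~\ref{assump: conditional indep} this forces the missingness indicators $\mathbf{M}$, and hence the realized outcomes, to satisfy $\mathbf{Y}^{*}(\mathbf{z}) = \mathbf{Y}^{*}$ for every $\mathbf{z} \in \Omega$. Since the adjustment model $\mathcal{H}$ reads only $(\mathbf{X}^{*}, \widehat{\mathbf{Y}})$ and $\mathbf{X}^{*}$ is fixed pre-treatment, the only channel by which an assignment can enter $\bm{\epsilon}$ is through $\widehat{\mathbf{Y}} = \mathcal{G}(\mathbf{Z}, \mathbf{X}^{*}, \mathbf{Y}^{*})$; the invariance of $\mathbf{Y}^{*}$ thus guarantees that any $\mathbf{z}$-dependence of $\bm{\epsilon}$ enters solely through the explicit $\mathbf{Z}$-slot of $\mathcal{G}_{\mathcal{H}}$ and never through its $\mathbf{Y}^{*}$-slot.

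Applying Lemma~\ref{lem: p-value} directly to $\mathcal{G}_{\mathcal{H}}$ then yields, in the deterministic case,
\begin{align*}
P(T(\mathbf{Z}, \bm{\epsilon}) \geq t \mid H_{0}) &= \sum_{\mathbf{z} \in \Omega} \big[ \mathbbm{1}\{ T(\mathbf{Z}=\mathbf{z}, \mathcal{G}_{\mathcal{H}}(\mathbf{Z}=\mathbf{z}, \mathbf{X}^{*}, \mathbf{Y}^{*}(\mathbf{z}))) \geq t \} \times P(\mathbf{Z}=\mathbf{z}) \big] \\
&= \sum_{\mathbf{z} \in \Omega} \big[ \mathbbm{1}\{ T(\mathbf{Z}=\mathbf{z}, \mathcal{G}_{\mathcal{H}}(\mathbf{Z}=\mathbf{z}, \mathbf{X}^{*}, \mathbf{Y}^{*})) \geq t \} \times P(\mathbf{Z}=\mathbf{z}) \big],
\end{align*}
where the second equality uses $\mathbf{Y}^{*}(\mathbf{z}) = \mathbf{Y}^{*}$; unfolding the definition of $\mathcal{G}_{\mathcal{H}}$ recovers equation~(\ref{eqn: one-sided exact p-value with adjustment deterministic}). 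The stochastic case is identical after replacing the indicator with the conditional probability $P\{\cdot\}$, which gives equation~(\ref{eqn: one-sided exact p-value with adjustment stochastic}). I do not anticipate a substantive obstacle, as the statement is a clean reduction; the only point requiring care is confirming that composing a possibly-stochastic imputation map with a possibly-stochastic adjustment map preserves the conditional-distribution structure that Lemma~\ref{lem: p-value} assumes, so that the deterministic/stochastic dichotomy transfers cleanly to $\mathcal{G}_{\mathcal{H}}$.
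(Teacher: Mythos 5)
Your proof is correct, and it rests on exactly the two facts the paper's own proof uses: the invariance of $\mathbf{Y}^{*}$ across treatment assignments under $H_{0}$ together with Assumption~\ref{assump: conditional indep}, and the observation that $\mathbf{Z}$ can enter the residuals only through $\widehat{\mathbf{Y}}$. The difference is organizational but genuine: the paper re-runs the Lemma~\ref{lem: p-value} argument from scratch, splitting into the deterministic and stochastic cases and writing out the sums with the composition $\mathcal{G}(\cdot)-\mathcal{H}(\mathbf{X}^{*},\mathcal{G}(\cdot))$ displayed explicitly, whereas you fold the adjustment into the single composite map $\mathcal{G}_{\mathcal{H}}$ and invoke Lemma~\ref{lem: p-value} as a black box. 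Your route is shorter and makes transparent why covariate adjustment costs nothing for exactness: Algorithm~\ref{alg: part with covariate adjustment} is literally Algorithm~\ref{alg: part} run with the ``imputation'' map $\mathcal{G}_{\mathcal{H}}$ in place of $\mathcal{G}$. The one point a fully rigorous write-up should state explicitly---and you only gesture at it---is that Lemma~\ref{lem: p-value} is phrased for maps whose output is a vector of imputed outcomes $\widehat{\mathbf{Y}}$, while $\mathcal{G}_{\mathcal{H}}$ outputs residuals; the application is legitimate because the proof of Lemma~\ref{lem: p-value} uses nothing about the output beyond its being a deterministic or conditionally random function of $(\mathbf{Z}, \mathbf{X}^{*}, \mathbf{Y}^{*})$, but since the lemma's statement does not literally cover such maps, you should either note this or restate Lemma~\ref{lem: p-value} for arbitrary maps $(\mathbf{Z}, \mathbf{X}^{*}, \mathbf{Y}^{*})\mapsto \mathbb{R}^{N\times K}$ before citing it. Your handling of the deterministic/stochastic dichotomy (the composite is deterministic precisely when both $\mathcal{G}$ and $\mathcal{H}$ are, and is conditionally random otherwise) is correct and matches the paper's case split.
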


\begin{proof}
    We prove Lemma~\ref{lem: p-value with covariate adjustment} by discussing the following two cases separately.

    \textbf{Case 1: Both the missing outcome imputation algorithm $\mathcal{G}$ and the prediction model $\mathcal{H}$ for the covariate adjustment are deterministic.}

    In the proof of Lemma~\ref{lem: p-value}, we have shown that the realized outcomes $\mathbf{Y}^{*}=(Y_{111}^{*}, \dots, Y_{In_{I}K}^{*})$ are invariant under any treatment indicator vector $\mathbf{Z}$. Moreover, the residuals $\bm{\epsilon}=\widehat{\mathbf{Y}}-\widetilde{\mathbf{Y}}$ only depend on the observed covariates $\mathbf{X}^{*}$ and the imputed outcomes $\widehat{\mathbf{Y}}$, i.e., the treatment vector $\mathbf{Z}$ can only affect $\bm{\epsilon}$ through its effect on $\widehat{\mathbf{Y}}$. Then, we have 
    \begin{align*}
        &\quad \ P(T(\mathbf{Z}, \bm{\epsilon})\geq t \mid H_{0})\\
        &=P(T(\mathbf{Z}, \widehat{\mathbf{Y}}-\widetilde{\mathbf{Y}})\geq t \mid H_{0})\\
        &=\sum_{\mathbf{z}\in \Omega}\big[\mathbbm{1}\{T(\mathbf{Z}=\mathbf{z}, \bm{\epsilon}=\mathcal{G}(\mathbf{Z}=\mathbf{z}, \mathbf{X}^{*}, \mathbf{Y}^{*}(\mathbf{z}))-\mathcal{H}(\mathbf{X}^{*}, \mathcal{G}(\mathbf{Z}=\mathbf{z}, \mathbf{X}^{*}, \mathbf{Y}^{*}(\mathbf{z}))))\geq t\}\times P(\mathbf{Z}=\mathbf{z})\big]\\
        &=\sum_{\mathbf{z}\in \Omega}\big[\mathbbm{1}\{T(\mathbf{Z}=\mathbf{z}, \bm{\epsilon}=\mathcal{G}(\mathbf{Z}=\mathbf{z}, \mathbf{X}^{*}, \mathbf{Y}^{*})-\mathcal{H}(\mathbf{X}^{*}, \mathcal{G}(\mathbf{Z}=\mathbf{z}, \mathbf{X}^{*}, \mathbf{Y}^{*})))\geq t\}\times P(\mathbf{Z}=\mathbf{z})\big].
    \end{align*}
    Therefore, equation (\ref{eqn: one-sided exact p-value with adjustment deterministic}) holds. 
    
 \textbf{Case 2: If either the missing outcome imputation algorithm $\mathcal{G}$ or the prediction model $\mathcal{H}$ for the covariate adjustment is stochastic.}

    Similarly, since we have shown that the observed realized outcomes $\mathbf{Y}^{*}=\mathbf{Y}^{*}(\mathbf{z})$ for any $\mathbf{z}$ when \textit{both} the null $H_{0}$ and Assumption~\ref{assump: conditional indep} hold, we have 
    \begin{align*}
        &\quad \ P(T(\mathbf{Z}, \bm{\epsilon})\geq t \mid H_{0})\\
        &=P(T(\mathbf{Z}, \widehat{\mathbf{Y}}-\widetilde{\mathbf{Y}})\geq t \mid H_{0})\\
        &=\sum_{\mathbf{z}\in \Omega}\big[P\{T(\mathbf{Z}=\mathbf{z}, \bm{\epsilon}=\mathcal{G}(\mathbf{Z}=\mathbf{z}, \mathbf{X}^{*}, \mathbf{Y}^{*}(\mathbf{z}))-\mathcal{H}(\mathbf{X}^{*}, \mathcal{G}(\mathbf{Z}=\mathbf{z}, \mathbf{X}^{*}, \mathbf{Y}^{*}(\mathbf{z}))))\geq t\}\times P(\mathbf{Z}=\mathbf{z})\big]\\
        &=\sum_{\mathbf{z}\in \Omega}\big[P\{T(\mathbf{Z}=\mathbf{z}, \bm{\epsilon}=\mathcal{G}(\mathbf{Z}=\mathbf{z}, \mathbf{X}^{*}, \mathbf{Y}^{*})-\mathcal{H}(\mathbf{X}^{*}, \mathcal{G}(\mathbf{Z}=\mathbf{z}, \mathbf{X}^{*}, \mathbf{Y}^{*})))\geq t\}\times P(\mathbf{Z}=\mathbf{z})\big].
    \end{align*}
    Therefore, equation (\ref{eqn: one-sided exact p-value with adjustment stochastic}) also holds. 
\end{proof}

We are now ready to prove Theorem~\ref{thm: hypothesis testing with covariate adjustment}.

\begin{proof}[(Proof of Theorem~\ref{thm: hypothesis testing with covariate adjustment})]
    Recall that in Algorithm~\ref{alg: part with covariate adjustment}, we consider 
   \begin{align*}
        \widehat{p}_{\text{adj}}=\frac{1}{L}\sum_{l=1}^{L}\mathbbm{1}\{T^{(l)}\geq t\}=\frac{1}{L}\sum_{l=1}^{L}\mathbbm{1}\{T(\mathbf{Z}^{(l)}, \bm{\epsilon}^{(l)})\geq t\}.
   \end{align*}
  Random variables $\mathbbm{1}\{T^{(l)}\geq t\}$ independently and identically follow Bernoulli distribution. Invoking the arguments in Lemma~\ref{lem: p-value with covariate adjustment}, if both the missing outcome imputation algorithm $\mathcal{G}$ and the prediction model $\mathcal{H}$ for the covariate adjustment are deterministic, we have 
  \begin{align*}
      &\quad \ P(T^{(l)}\geq t \mid H_{0}) \\
      &=P(T(\mathbf{Z}^{(l)}, \bm{\epsilon}^{(l)})\geq t \mid H_{0}) \\
      &= P(T(\mathbf{Z}^{(l)}, \mathcal{G}(\mathbf{Z}^{(l)}, \mathbf{X}^{*}, \mathbf{Y}^{*})-\mathcal{H}(\mathbf{X}^{*}, \mathcal{G}(\mathbf{Z}^{(l)}, \mathbf{X}^{*}, \mathbf{Y}^{*})))\geq t \mid H_{0})\\
      &=\sum_{\mathbf{z}\in \Omega}\big[\mathbbm{1}\{T(\mathbf{Z}=\mathbf{z}, \bm{\epsilon}=\mathcal{G}(\mathbf{Z}^{(l)}, \mathbf{X}^{*}, \mathbf{Y}^{*})-\mathcal{H}(\mathbf{X}^{*}, \mathcal{G}(\mathbf{Z}^{(l)}, \mathbf{X}^{*}, \mathbf{Y}^{*})))\geq t\}\times P(\mathbf{Z}=\mathbf{z})\big]\\
      &=\sum_{\mathbf{z}\in \Omega}\big[\mathbbm{1}\{T(\mathbf{Z}=\mathbf{z}, \bm{\epsilon}=\mathcal{G}(\mathbf{Z}^{(l)}, \mathbf{X}^{*}, \mathbf{Y}^{*}(\mathbf{z}))-\mathcal{H}(\mathbf{X}^{*}, \mathcal{G}(\mathbf{Z}^{(l)}, \mathbf{X}^{*}, \mathbf{Y}^{*}(\mathbf{z}))))\geq t\}\times P(\mathbf{Z}=\mathbf{z})\big]\\
      &=P(T(\mathbf{Z}, \bm{\epsilon})\geq t \mid H_{0}),
  \end{align*}
  and if either the missing outcome imputation algorithm $\mathcal{G}$ or the prediction model $\mathcal{H}$ for the covariate adjustment is stochastic, we also have 
  \begin{align*}
      &\quad \ P(T^{(l)}\geq t \mid H_{0}) \\
      &= P(T(\mathbf{Z}^{(l)}, \mathcal{G}(\mathbf{Z}^{(l)}, \mathbf{X}^{*}, \mathbf{Y}^{*})-\mathcal{H}(\mathbf{X}^{*}, \mathcal{G}(\mathbf{Z}^{(l)}, \mathbf{X}^{*}, \mathbf{Y}^{*})))\geq t \mid H_{0})\\
      &=\sum_{\mathbf{z}\in \Omega}\big[P\{T(\mathbf{Z}=\mathbf{z}, \bm{\epsilon}=\mathcal{G}(\mathbf{Z}^{(l)}, \mathbf{X}^{*}, \mathbf{Y}^{*})-\mathcal{H}(\mathbf{X}^{*}, \mathcal{G}(\mathbf{Z}^{(l)}, \mathbf{X}^{*}, \mathbf{Y}^{*})))\geq t\}\times P(\mathbf{Z}=\mathbf{z})\big]\\
      &=\sum_{\mathbf{z}\in \Omega}\big[P\{T(\mathbf{Z}=\mathbf{z}, \bm{\epsilon}=\mathcal{G}(\mathbf{Z}^{(l)}, \mathbf{X}^{*}, \mathbf{Y}^{*}(\mathbf{z}))-\mathcal{H}(\mathbf{X}^{*}, \mathcal{G}(\mathbf{Z}^{(l)}, \mathbf{X}^{*}, \mathbf{Y}^{*}(\mathbf{z}))))\geq t\}\times P(\mathbf{Z}=\mathbf{z})\big]\\
      &=P(T(\mathbf{Z}, \bm{\epsilon})\geq t \mid H_{0}).
  \end{align*}
  Therefore, by the law of large numbers, we have 
  \begin{equation*}
      \text{$\widehat{p}_{\text{adj}}\xrightarrow{a.s.} p_{\text{adj}}$ as the number of re-imputation runs $L\rightarrow \infty$.}
  \end{equation*}
  Moreover, by Hoeffding's inequality (\citealp{hoeffding1994probability}), for any $\epsilon>0$ and for all $L$, we have 
  \begin{equation*}
      P(|\widehat{p}_{\text{adj}}-p_{\text{adj}}|\geq \epsilon)\leq 2\exp(-2L\epsilon^{2}).
  \end{equation*}
  
\end{proof}

\subsection*{Appendix A.3: Proof of Theorem~\ref{thm: confidence region}}

In Theorem~\ref{thm: confidence region}, the $p$-value $\widehat{p}(\Vec{\beta}_{0})$ can either be reported by Algorithm~\ref{alg: part} (without covariate adjustment after imputation) or Algorithm~\ref{alg: part with covariate adjustment} (with covariate adjustment after imputation) based on the inputs $(\mathbf{Z}, \mathbf{X}, \mathbf{Y}^{*}_{\Vec{\beta}_{0}})$. If we set the prediction model $\mathcal{H}$ for the covariate adjustment as a zero function, Algorithm~\ref{alg: part with covariate adjustment} will reduce to Algorithm~\ref{alg: part}. Therefore, it suffices to prove Theorem~\ref{thm: confidence region} with the $p$-value $\widehat{p}(\Vec{\beta}_{0})$ reported by Algorithm~\ref{alg: part with covariate adjustment}, of which the key is to prove the following lemma. 

\begin{lemma}\label{lem: p-value with adjusted outcomes}
Let $\mathcal{G}$ be any outcome imputation algorithm used in Algorithm~\ref{alg: part with covariate adjustment} and $\mathcal{H}$ any prediction model for covariate adjustment based on imputed outcomes used in Algorithm~\ref{alg: part with covariate adjustment}. Consider the unknown true causal parameters $\Vec{\beta}=(\Vec{\beta}_{1},\dots, \Vec{\beta}_{K})$ that satisfy $Y_{ijk}(1)=f_{k}(Y_{ijk}(0), \Vec{\beta}_{k}, \mathbf{x}_{ij}), \text{ for } i \in [I], j \in [n_{i}], k \in [K],$ with the prespecified maps $f_{k}$ (i.e., the treatment effect models (\ref{eqn: effect model})). Define the transformed realized outcomes (with missingness) under the true causal parameters $\Vec{\beta}=(\Vec{\beta}_{1},\dots, \Vec{\beta}_{K})$ as $\mathbf{Y}^{*}_{\Vec{\beta}}=(\mathbf{Y}^{*}_{1, \Vec{\beta}_{1}}, \dots, \mathbf{Y}^{*}_{K, \Vec{\beta}_{K}})$, in which we have $\mathbf{Y}^{*}_{k, \Vec{\beta}_{k}}=(Y^{*}_{11k, \Vec{\beta}_{k}}, \dots, Y^{*}_{In_{I}k, \Vec{\beta}_{k}})$ where $Y^{*}_{ijk, \Vec{\beta}_{k}}=\text{``Missing"}$ if $M_{ijk}=1$ and $Y^{*}_{ijk, \Vec{\beta}_{k}}=Z_{ij}Y_{ijk}^{*}+(1-Z_{ij})f_{k}(Y_{ijk}^{*}, \Vec{\beta}_{k}, \mathbf{x}_{ij})=Z_{ij}Y_{ijk}+(1-Z_{ij})f_{k}(Y_{ijk}, \Vec{\beta}_{k}, \mathbf{x}_{ij})$ if $M_{ijk}=0$. Let $T_{\Vec{\beta}}=T(\mathbf{Z}, \bm{\epsilon})$ be any test statistic based on $\mathbf{Z}$ and residuals $\bm{\epsilon}$ after covariate adjustment (based on $\mathbf{Y}^{*}_{\Vec{\beta}})$, and let $t_{\Vec{\beta}}$ denote the observed value of $T_{\Vec{\beta}}$ based on the observed $(\mathbf{Z}, \mathbf{X}, \mathbf{Y}^{*}_{\Vec{\beta}})$ (under the oracle ${\Vec{\beta}}$). Let $p(\Vec{\beta})=P(T_{\Vec{\beta}}\geq t_{\Vec{\beta}})$ and consider the approximate $p$-value $\widehat{p}(\Vec{\beta})$ reported by the imputation and re-imputation approach with covariate adjustment described in Algorithm~\ref{alg: part with covariate adjustment} based on the $(\mathbf{Z}, \mathbf{X}, \mathbf{Y}^{*}_{\Vec{\beta}})$. Under Assumptions \ref{assump: randomization design} and \ref{assump: conditional indep with more restrictions}, we have $\widehat{p}(\Vec{\beta})\xrightarrow{a.s.} p(\Vec{\beta})$ as the number of re-imputation runs $L\rightarrow \infty$. 
    \end{lemma}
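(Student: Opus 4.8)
The plan is to reduce Lemma~\ref{lem: p-value with adjusted outcomes} to the already-proved Lemma~\ref{lem: p-value with covariate adjustment} by showing that, at the \emph{true} causal parameters $\Vec{\beta}$, the transformed realized-outcome vector $\mathbf{Y}^{*}_{\Vec{\beta}}$ --- both its non-missing values and its missingness pattern --- is invariant under every treatment assignment $\mathbf{z}\in\Omega$. Once this invariance is in hand, feeding $\mathbf{Y}^{*}_{\Vec{\beta}}$ into Algorithm~\ref{alg: part with covariate adjustment} as the ``outcome data'' is formally identical to testing Fisher's sharp null with an invariant realized-outcome vector, so the explicit finite-population-exact $p$-value formula and the Monte Carlo convergence argument of Lemma~\ref{lem: p-value with covariate adjustment} carry over verbatim.

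First I would establish invariance of the non-missing values. For any unit $ij$ with $M_{ijk}=0$, the definition gives $Y^{*}_{ijk,\Vec{\beta}_{k}}=Z_{ij}Y_{ijk}+(1-Z_{ij})f_{k}(Y_{ijk},\Vec{\beta}_{k},\mathbf{x}_{ij})$. Substituting $Y_{ijk}=Z_{ij}Y_{ijk}(1)+(1-Z_{ij})Y_{ijk}(0)$ and using the effect model $Y_{ijk}(1)=f_{k}(Y_{ijk}(0),\Vec{\beta}_{k},\mathbf{x}_{ij})$ satisfied by the true $\Vec{\beta}$, a short computation gives $Y^{*}_{ijk,\Vec{\beta}_{k}}=Y_{ijk}(1)$ in both the $Z_{ij}=1$ and $Z_{ij}=0$ cases. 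Hence each non-missing transformed value equals the fixed potential outcome $Y_{ijk}(1)$ and does not vary with the treatment assignment; this is the classical Rosenbaum transformation argument, and equivalently the transformed variable obeys a sharp null since its potential outcomes under treatment and control coincide.

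Second --- and this is the crux --- I would establish invariance of the missingness pattern. The missingness status of $Y^{*}_{ijk,\Vec{\beta}_{k}}$ is exactly $M_{ijk}$. Under Assumption~\ref{assump: conditional indep with more restrictions} we have $(\mathbf{M}_{1},\dots,\mathbf{M}_{K})=\phi(\mathbf{X},\mathbf{U})$, a function of pre-treatment quantities only, so $\mathbf{M}$ cannot change with $\mathbf{z}$. Combining the two steps, the entire transformed realized-outcome object $\mathbf{Y}^{*}_{\Vec{\beta}}$ is invariant under all $\mathbf{z}\in\Omega$. I expect this missingness step to be the main obstacle, since it is precisely where Assumption~\ref{assump: conditional indep with more restrictions} is indispensable and Assumption~\ref{assump: conditional indep} fails: under Assumption~\ref{assump: conditional indep} the missingness is allowed to depend on $\mathbf{Y}$, which varies with $\mathbf{z}$ under the effect model, so $\mathbf{M}$ (and hence $\mathbf{Y}^{*}_{\Vec{\beta}}$) would no longer be invariant --- exactly the phenomenon flagged in Remark~\ref{rem: Assumption 3}.

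Finally, with $\mathbf{Y}^{*}_{\Vec{\beta}}$ invariant, I would invoke the argument of Lemma~\ref{lem: p-value with covariate adjustment} with $\mathbf{Y}^{*}_{\Vec{\beta}}$ in place of $\mathbf{Y}^{*}$. The imputation-then-covariate-adjustment map $\mathcal{G}$ and $\mathcal{H}$ applied to $(\mathbf{z},\mathbf{X},\mathbf{Y}^{*}_{\Vec{\beta}})$ depends on the assignment only through $\mathbf{z}$ and not through $\mathbf{Y}^{*}_{\Vec{\beta}}$, so the exact $p$-value $p(\Vec{\beta})=P(T_{\Vec{\beta}}\geq t_{\Vec{\beta}})$ admits the explicit finite-population form summing the indicator $\mathbbm{1}\{\cdot\}$ (or its probability, in the stochastic case) against $P(\mathbf{Z}=\mathbf{z})$ over $\Omega$. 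The indicators $\mathbbm{1}\{T^{(l)}\geq t_{\Vec{\beta}}\}$ generated across the $L$ re-imputation runs are then i.i.d.\ Bernoulli with mean $p(\Vec{\beta})$, so the strong law of large numbers yields $\widehat{p}(\Vec{\beta})\xrightarrow{a.s.}p(\Vec{\beta})$, and Hoeffding's inequality supplies the $2\exp(-2L\epsilon^{2})$ concentration bound, completing the proof. The reduction covering the Algorithm~\ref{alg: part} case follows by taking the covariate-adjustment model $\mathcal{H}$ to be the zero function, as noted before the lemma.
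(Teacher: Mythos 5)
Your proposal is correct and follows essentially the same route as the paper's proof: establish that under the true $\Vec{\beta}$ the transformed values equal the fixed potential outcomes $Y_{ijk}(1)$ (the Rosenbaum transformation identity), use Assumption~\ref{assump: conditional indep with more restrictions} to make the missingness pattern $\mathbf{M}=\phi(\mathbf{X},\mathbf{U})$ assignment-invariant, conclude that $\mathbf{Y}^{*}_{\Vec{\beta}}$ is invariant over $\Omega$, and then rerun the Lemma~\ref{lem: p-value with covariate adjustment} argument (explicit finite-population $p$-value form, i.i.d.\ Bernoulli indicators, strong law of large numbers, Hoeffding) with $\mathbf{Y}^{*}_{\Vec{\beta}}$ in place of $\mathbf{Y}^{*}$. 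Your reduction of the Algorithm~\ref{alg: part} case by taking $\mathcal{H}\equiv 0$ is also exactly how the paper handles it.
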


\begin{proof}
    We let $\mathbf{Y}^{*}_{\Vec{\beta}}(\mathbf{z})$ denote the values of the transformed realized outcomes under $\mathbf{Z}=\mathbf{z}\in \Omega$. Note that under the true causal parameters $\Vec{\beta}=(\Vec{\beta}_{1},\dots, \Vec{\beta}_{K})$ that satisfy $Y_{ijk}(1)=f_{k}(Y_{ijk}(0), \Vec{\beta}_{k}, \mathbf{x}_{ij}) \text{ for all } i, j, k,$, the transformed true outcomes $Y_{ijk, \Vec{\beta}_{k}}=Z_{ij}Y_{ijk}+(1-Z_{ij})f_{k}(Y_{ijk}, \Vec{\beta}_{k}, \mathbf{x}_{ij})$ are invariant under different treatment assignments $\mathbf{Z}$, which follows from $Y_{ijk, \Vec{\beta}_{k}}(1)=Y_{ijk}(1)=f_{k}(Y_{ijk}(0), \Vec{\beta}_{k}, \mathbf{x}_{ij})=Y_{ijk, \Vec{\beta}_{k}}(0)$. Moreover, under Assumption~\ref{assump: conditional indep with more restrictions}, the missingness indicators $M_{ijk}$ are also invariant under different treatment assignments $\mathbf{Z}$. Therefore, the transformed realized outcomes $\mathbf{Y}^{*}_{\Vec{\beta}}(\mathbf{z})$ are invariant under different $\mathbf{z}\in \Omega$. In other words, under the true causal parameters $\Vec{\beta}=(\Vec{\beta}_{1},\dots, \Vec{\beta}_{K})$ and Assumption~\ref{assump: conditional indep with more restrictions}, the observed values of the transformed realized outcomes $\mathbf{Y}^{*}_{\Vec{\beta}}$ equal $\mathbf{Y}^{*}_{\Vec{\beta}}(\mathbf{z})$ for all $\mathbf{z}\in \Omega$.

    Then, we can follow a similar argument to that in the proof of Lemma~\ref{lem: p-value with covariate adjustment} to show the desired conclusion. Specifically, in Algorithm~\ref{alg: part with covariate adjustment} with the transformed realized outcomes $\mathbf{Y}^{*}_{\Vec{\beta}}(\mathbf{z})$ under the true causal parameters $\Vec{\beta}$ in the treatment effects model (\ref{eqn: effect model}), we consider 
   \begin{align*}
        \widehat{p}(\Vec{\beta})=\frac{1}{L}\sum_{l=1}^{L}\mathbbm{1}\{T^{(l)}_{\Vec{\beta}}\geq t_{\Vec{\beta}}\}=\frac{1}{L}\sum_{l=1}^{L}\mathbbm{1}\{T(\mathbf{Z}^{(l)}, \bm{\epsilon}^{(l)}_{\Vec{\beta}})\geq t_{\Vec{\beta}}\},
   \end{align*}
   in which $\bm{\epsilon}^{(l)}_{\Vec{\beta}}=\mathcal{G}(\mathbf{Z}^{(l)}, \mathbf{X}, \mathbf{Y}^{*}_{\Vec{\beta}})-\mathcal{H}(\mathbf{X}, \mathcal{G}(\mathbf{Z}^{(l)}, \mathbf{X}, \mathbf{Y}^{*}_{\Vec{\beta}}))$. For $l=1,\dots, L$, random variables $\mathbbm{1}\{T^{(l)}_{\Vec{\beta}}\geq t_{\Vec{\beta}}\}$ independently and identically follow Bernoulli distribution. Moreover, since we have shown that the observed $\mathbf{Y}^{*}_{\Vec{\beta}}$ equal $\mathbf{Y}^{*}_{\Vec{\beta}}(\mathbf{z})$ for all $\mathbf{z}\in \Omega$, if both the imputation algorithm $\mathcal{G}$ and the working model $\mathcal{H}$ for the covariate adjustment are deterministic, we have 
  \begin{align*}
      &\quad \ P(T^{(l)}_{\Vec{\beta}}\geq t_{\Vec{\beta}} \mid H_{0}) \\
      &=P(T_{\Vec{\beta}}(\mathbf{Z}^{(l)}, \bm{\epsilon}^{(l)}_{\Vec{\beta}})\geq t_{\Vec{\beta}} \mid H_{0}) \\
      &= P(T_{\Vec{\beta}}(\mathbf{Z}^{(l)}, \mathcal{G}(\mathbf{Z}^{(l)}, \mathbf{X}, \mathbf{Y}^{*}_{\Vec{\beta}})-\mathcal{H}(\mathbf{X}, \mathcal{G}(\mathbf{Z}^{(l)}, \mathbf{X}, \mathbf{Y}^{*}_{\Vec{\beta}})))\geq t_{\Vec{\beta}} \mid H_{0})\\
      &=\sum_{\mathbf{z}\in \Omega}\big[\mathbbm{1}\{T_{\Vec{\beta}}(\mathbf{Z}=\mathbf{z}, \bm{\epsilon}_{\Vec{\beta}}=\mathcal{G}(\mathbf{Z}^{(l)}, \mathbf{X}, \mathbf{Y}^{*}_{\Vec{\beta}})-\mathcal{H}(\mathbf{X}, \mathcal{G}(\mathbf{Z}^{(l)}, \mathbf{X}, \mathbf{Y}^{*}_{\Vec{\beta}})))\geq t_{\Vec{\beta}}\}\times P(\mathbf{Z}=\mathbf{z})\big]\\
      &=\sum_{\mathbf{z}\in \Omega}\big[\mathbbm{1}\{T_{\Vec{\beta}}(\mathbf{Z}=\mathbf{z}, \bm{\epsilon}_{\Vec{\beta}}=\mathcal{G}(\mathbf{Z}^{(l)}, \mathbf{X}, \mathbf{Y}^{*}_{\Vec{\beta}}(\mathbf{z}))-\mathcal{H}(\mathbf{X}, \mathcal{G}(\mathbf{Z}^{(l)}, \mathbf{X}, \mathbf{Y}^{*}_{\Vec{\beta}}(\mathbf{z}))))\geq t_{\Vec{\beta}}\}\times P(\mathbf{Z}=\mathbf{z})\big]\\
      &=P(T_{\Vec{\beta}}(\mathbf{Z}, \bm{\epsilon}_{\Vec{\beta}})\geq t_{\Vec{\beta}} \mid H_{0}),
  \end{align*}
  and if either the missing outcome imputation algorithm $\mathcal{G}$ or the prediction model $\mathcal{H}$ for the covariate adjustment is stochastic, we also have 
  \begin{align*}
      &\quad \ P(T^{(l)}_{\Vec{\beta}}\geq t_{\Vec{\beta}} \mid H_{0}) \\
      &= P(T_{\Vec{\beta}}(\mathbf{Z}^{(l)}, \mathcal{G}(\mathbf{Z}^{(l)}, \mathbf{X}, \mathbf{Y}^{*}_{\Vec{\beta}})-\mathcal{H}(\mathbf{X}, \mathcal{G}(\mathbf{Z}^{(l)}, \mathbf{X}, \mathbf{Y}^{*}_{\Vec{\beta}})))\geq t_{\Vec{\beta}} \mid H_{0})\\
      &=\sum_{\mathbf{z}\in \Omega}\big[P\{T_{\Vec{\beta}}(\mathbf{Z}=\mathbf{z}, \bm{\epsilon}_{\Vec{\beta}}=\mathcal{G}(\mathbf{Z}^{(l)}, \mathbf{X}, \mathbf{Y}^{*}_{\Vec{\beta}})-\mathcal{H}(\mathbf{X}, \mathcal{G}(\mathbf{Z}^{(l)}, \mathbf{X}, \mathbf{Y}^{*}_{\Vec{\beta}})))\geq t_{\Vec{\beta}}\}\times P(\mathbf{Z}=\mathbf{z})\big]\\
      &=\sum_{\mathbf{z}\in \Omega}\big[P\{T_{\Vec{\beta}}(\mathbf{Z}=\mathbf{z}, \bm{\epsilon}_{\Vec{\beta}}=\mathcal{G}(\mathbf{Z}^{(l)}, \mathbf{X}, \mathbf{Y}^{*}_{\Vec{\beta}}(\mathbf{z}))-\mathcal{H}(\mathbf{X}, \mathcal{G}(\mathbf{Z}^{(l)}, \mathbf{X}, \mathbf{Y}^{*}_{\Vec{\beta}}(\mathbf{z}))))\geq t_{\Vec{\beta}}\}\times P(\mathbf{Z}=\mathbf{z})\big]\\
      &=P(T_{\Vec{\beta}}(\mathbf{Z}, \bm{\epsilon}_{\Vec{\beta}})\geq t_{\Vec{\beta}} \mid H_{0}).
  \end{align*}
  Therefore, by the law of large numbers, we have 
  \begin{equation*}
      \text{$\widehat{p}(\Vec{\beta})\xrightarrow{a.s.} p(\Vec{\beta})$ as the number of re-imputation runs $L\rightarrow \infty$.}
  \end{equation*}
  Moreover, by Hoeffding's inequality (\citealp{hoeffding1994probability}), for any $\epsilon>0$ and for all $L$, we have 
  \begin{equation*}
      P(|\widehat{p}(\Vec{\beta})-p(\Vec{\beta})|\geq \epsilon)\leq 2\exp(-2L\epsilon^{2}).
  \end{equation*}
\end{proof}

Then Theorem~\ref{thm: confidence region} can be proved based on Lemma~\ref{lem: p-value with adjusted outcomes}.

\begin{proof}[(Proof of Theorem~\ref{thm: confidence region})]

We first follow a similar argument to the proof of Proposition 1.1 in \citet{luo2021leveraging} to show that $P(p(\Vec{\beta})\leq \alpha)\leq \alpha$. Specifically, we let $T_{\Vec{\beta}, (1)}< T_{\Vec{\beta}, (2)}< \dots< T_{\Vec{\beta}, (M)}$ denote the ordered $M$ unique values of the test statistic $T_{\Vec{\beta}}$ among all possible treatment assignments $\mathbf{z}\in \Omega$. For $m=1,\dots, M$, we define $\xi_{m}=P(T_{\Vec{\beta}}=T_{\Vec{\beta}, (m)})>0$, with $\sum_{m=1}^{M}\xi_{m}=1$. Then, the $p(\Vec{\beta})=P(T_{\Vec{\beta}}\geq t_{\Vec{\beta}})$ takes values $\xi_{M}, \xi_{M}+\xi_{M-1}, \dots, \xi_{M}+\xi_{M-1}+\dots + \xi_{1}$ with probabilities $\xi_{M}, \xi_{M-1}, \dots, \xi_{1}$ respectively. Moreover, for any $\alpha\in (\xi_{M},1)$, there exists a unique $1 \leq m_{0} \leq M-1$ such that $\alpha\in (\xi_{M}+\xi_{M-1}+\dots + \xi_{M-m_{0}+1}, \xi_{M}+\xi_{M-1}+\dots + \xi_{M-m_{0}}]/\{1\}$. This implies that
\begin{align*}
    P(p(\Vec{\beta})\leq \alpha)&=\sum_{j=1}^{m_{0} }P(p(\Vec{\beta})= \xi_{M}+\xi_{M-1}+\dots + \xi_{M-j+1})\\
    &=\xi_{M}+\xi_{M-1}+\dots + \xi_{M-m_{0}+1}\\
    &\leq \alpha.
\end{align*}
If $\alpha\in (0, \xi_{M})$, we have $P(p(\Vec{\beta})\leq \alpha)=0\leq \alpha$. If $\alpha= \xi_{M}$, we have $P(p(\Vec{\beta})\leq \alpha)=P(p(\Vec{\beta})= \xi_{M})=\xi_{M}=\alpha$. Therefore, we have shown that $P(p(\Vec{\beta})\leq \alpha)\leq \alpha$ for any $\alpha\in (0, 1)$.

We then use the above result to prove the desired conclusion. If $\alpha<\xi_{M}+\xi_{M-1}+\dots + \xi_{M-m_{0}}$, the $P(p(\Vec{\beta})\leq x)$ is a continuous function of $x$ around the point $x=\alpha$. Therefore, invoking Lemma~\ref{lem: p-value with adjusted outcomes} (i.e., $\widehat{p}(\Vec{\beta})\xrightarrow{a.s.} p(\Vec{\beta})$ as $L\rightarrow\infty$), we have 
\begin{align*}
    \lim_{L\rightarrow\infty }  P(\Vec{\beta} \in CR)&=\lim_{L\rightarrow\infty } P(\widehat{p}(\Vec{\beta})\geq \alpha )\\
    &\geq 1-\lim_{L\rightarrow\infty } P(\widehat{p}(\Vec{\beta})\leq \alpha )\quad \text{(because $\widehat{p}(\Vec{\beta})\xrightarrow{a.s.} p(\Vec{\beta})$ implies $\widehat{p}(\Vec{\beta})\xrightarrow{\mathcal{L}} p(\Vec{\beta})$)} \\
    &=1-P(p(\Vec{\beta})\leq \alpha ) \\
    &\geq 1-\alpha.
\end{align*}
If $\alpha=\xi_{M}+\xi_{M-1}+\dots + \xi_{M-m_{0}}$ with $m_{0}<M-1$, we consider some $\varsigma\in (\xi_{M}+\xi_{M-1}+\dots + \xi_{M-m_{0} }, \xi_{M}+\xi_{M-1}+\dots + \xi_{M-m_{0}-1})=(\alpha, \xi_{M}+\xi_{M-1}+\dots + \xi_{M-m_{0}-1})$. Then, the $P(p(\Vec{\beta})\leq x)$ is a continuous function of $x$ around the point $x=\varsigma$. Therefore, invoking Lemma~\ref{lem: p-value with adjusted outcomes} (i.e., $\widehat{p}(\Vec{\beta})\xrightarrow{a.s.} p(\Vec{\beta})$ as $L\rightarrow\infty$), we have 
\begin{align*}
    \lim_{L\rightarrow\infty }  P(\Vec{\beta} \in CR)&=\lim_{L\rightarrow\infty } P(\widehat{p}(\Vec{\beta})\geq \alpha )\\
    &= 1-\lim_{L\rightarrow\infty } P(\widehat{p}(\Vec{\beta})< \alpha )\\
    &\geq 1-\lim_{L\rightarrow\infty } P(\widehat{p}(\Vec{\beta})\leq \varsigma )\\
    &=1-P(p(\Vec{\beta})\leq \varsigma ) \quad \text{(because $\widehat{p}(\Vec{\beta})\xrightarrow{a.s.} p(\Vec{\beta})$ implies $\widehat{p}(\Vec{\beta})\xrightarrow{\mathcal{L}} p(\Vec{\beta})$)} \\
     &=1-P(p(\Vec{\beta})\leq \alpha ) \\
    &\geq 1-\alpha.
\end{align*}
Therefore, we have shown that $\lim_{L\rightarrow\infty }  P(\Vec{\beta} \in CR)\geq 1-\alpha$.
    
\end{proof}

We briefly summarize the principle of the proof of Theorem~\ref{thm: confidence region}. Note that, under Assumption~\ref{assump: conditional indep with more restrictions}, the missingness indicators $\mathbf{M}$ are invariant under any treatment assignments $\mathbf{Z}$ and any $H_{\Vec{\beta}_{0}}$. Then, under $H_{\Vec{\beta}_{0}}$, the transformed realized outcomes $\mathbf{Y}^{*}_{\Vec{\beta}_{0}}$ are invariant under different $\mathbf{Z}$. In other words, testing $H_{\Vec{\beta}_{0}}$ is equivalent to testing Fisher's sharp null of no effect with the transformed outcomes $\mathbf{Y}^{*}_{\Vec{\beta}_{0}}$. Therefore, applying Algorithm~\ref{alg: part} or Algorithm~\ref{alg: part with covariate adjustment} to $\mathbf{Y}^{*}_{\Vec{\beta}_{0}}$ can produce (approximate) finite-population-exact $p$-values $\widehat{p}(\Vec{\beta}_{0})$ for testing each $H_{\Vec{\beta}_{0}}$, and an approximate finite-population-exact confidence region $CR$ can be obtained via inverting the $p$-values $\widehat{p}(\Vec{\beta}_{0})$ for various $\Vec{\beta}_{0}$. 

\section*{Appendix B: Additional Simulation Studies}

\subsection*{Appendix B.1: Simulation Studies with A Different Outcome Missingness Rate}

In Appendix B.1, we follow the setting and procedure in Section~\ref{subsec: simulation studies} to conduct similar simulation studies, but setting the outcome missingness rate as $25\%$. The corresponding simulation results in Figure~\ref{fig: single outcome simulations with 25 missingness rate} follow a similar pattern to that in Figure~\ref{fig: single outcome simulations} in the main text and confirms finite-population-exact type-I error rate control and gains in power compared with non-informative imputation (e.g., median imputation). Meanwhile, Figure~\ref{fig: single outcome simulations with 25 missingness rate} confirms that as the outcome missingness rate decreases, the gap in power between the randomization test based on oracle outcomes and the imputation and re-imputation framework becomes smaller, especially when the sample size is large. 

\begin{figure}[bp!]
      \centering
        \captionsetup[subfigure]{skip=2pt} 
        \includegraphics[width=\textwidth]{imgs/legendcustomlinestypes.pdf}
      \begin{subfigure}[b]{0.4\textwidth}
        
        \includegraphics[width=\textwidth]{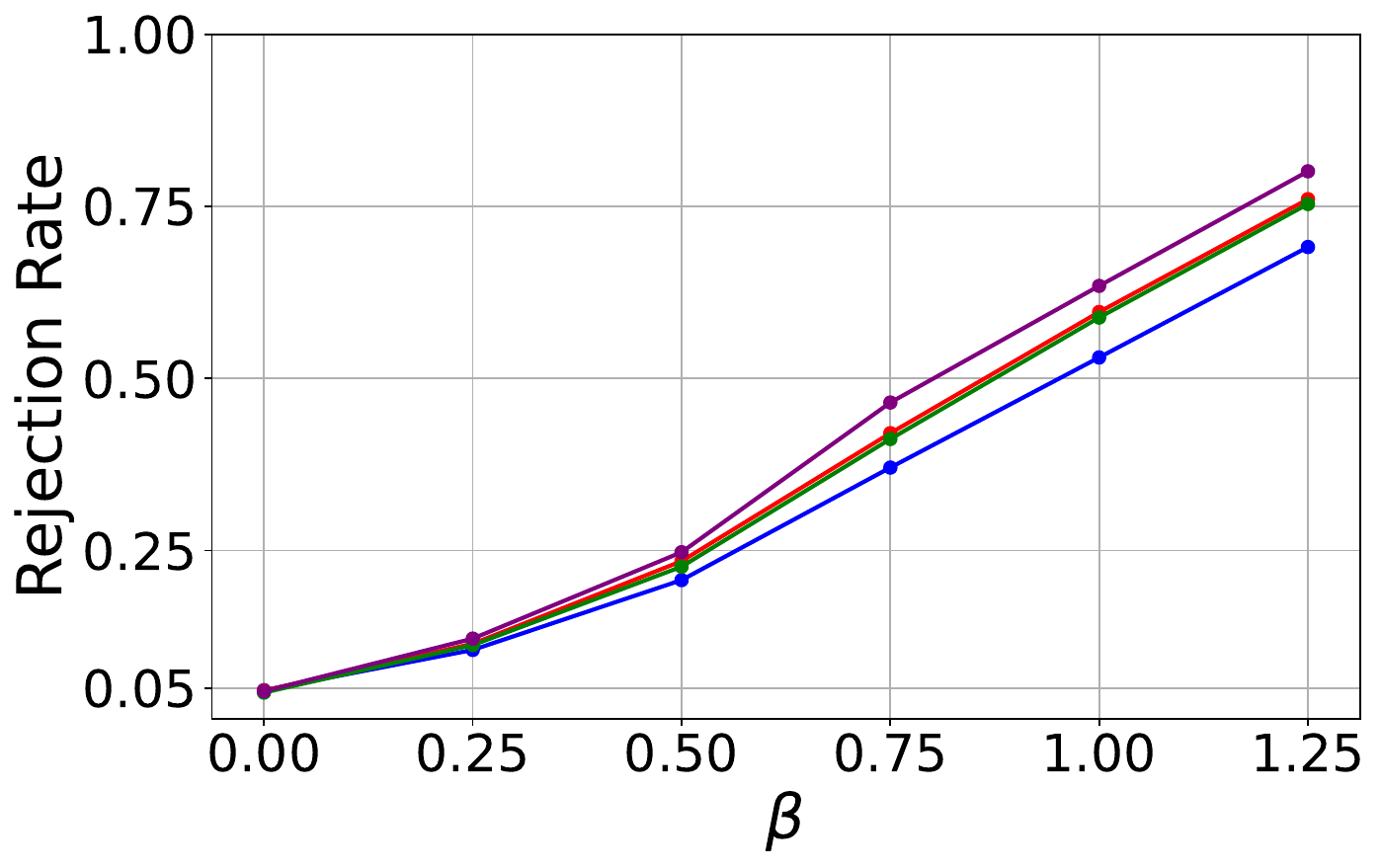}
        \caption{Model 1 ($N=50$)}
      \end{subfigure}
      \hspace{0.1cm}
      \begin{subfigure}[b]{0.4\textwidth}
        \includegraphics[width=\textwidth]{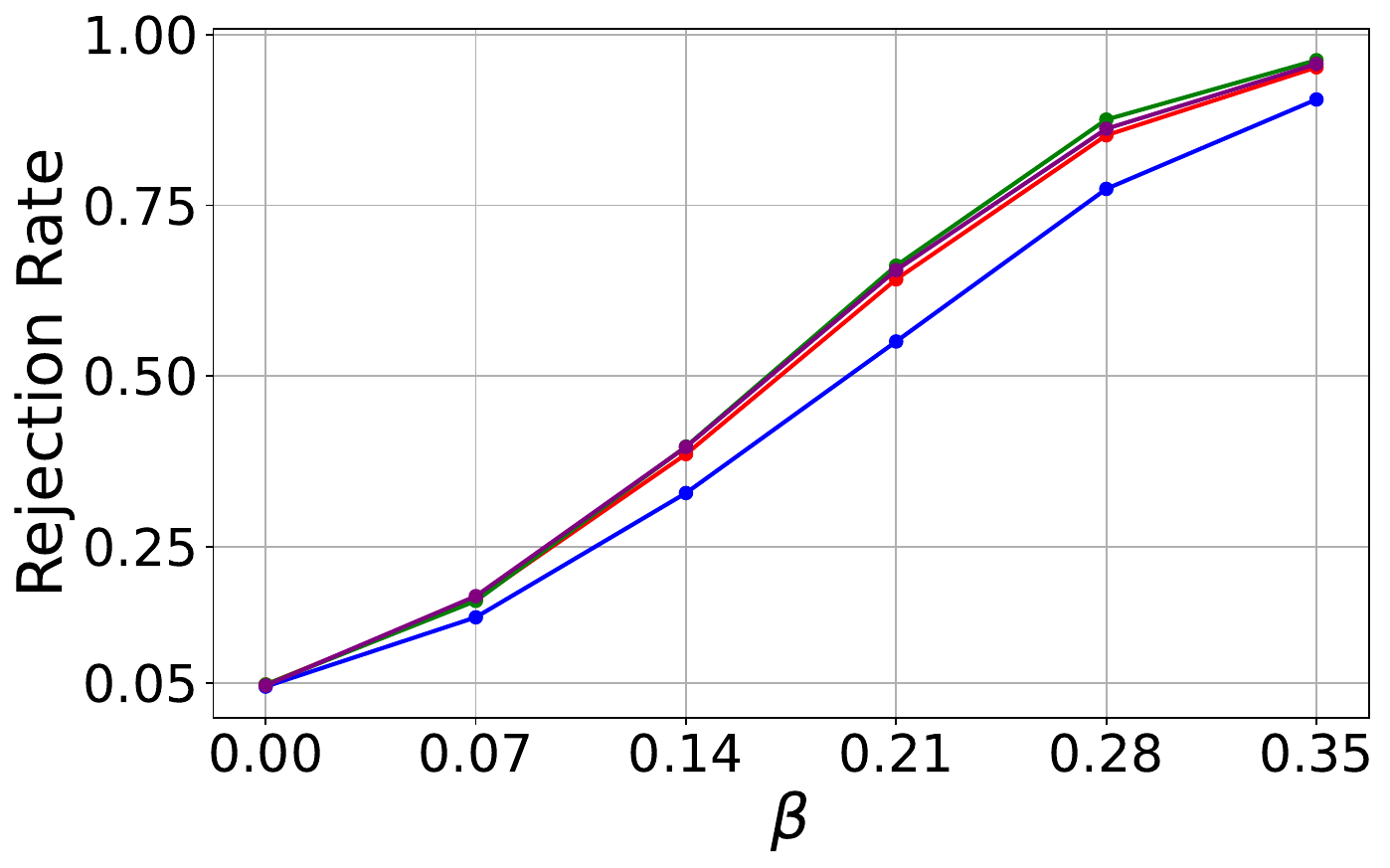}
        \caption{Model 1 ($N=1000$)}
      \end{subfigure}


      \begin{subfigure}[b]{0.4\textwidth}
        
        \includegraphics[width=\textwidth]{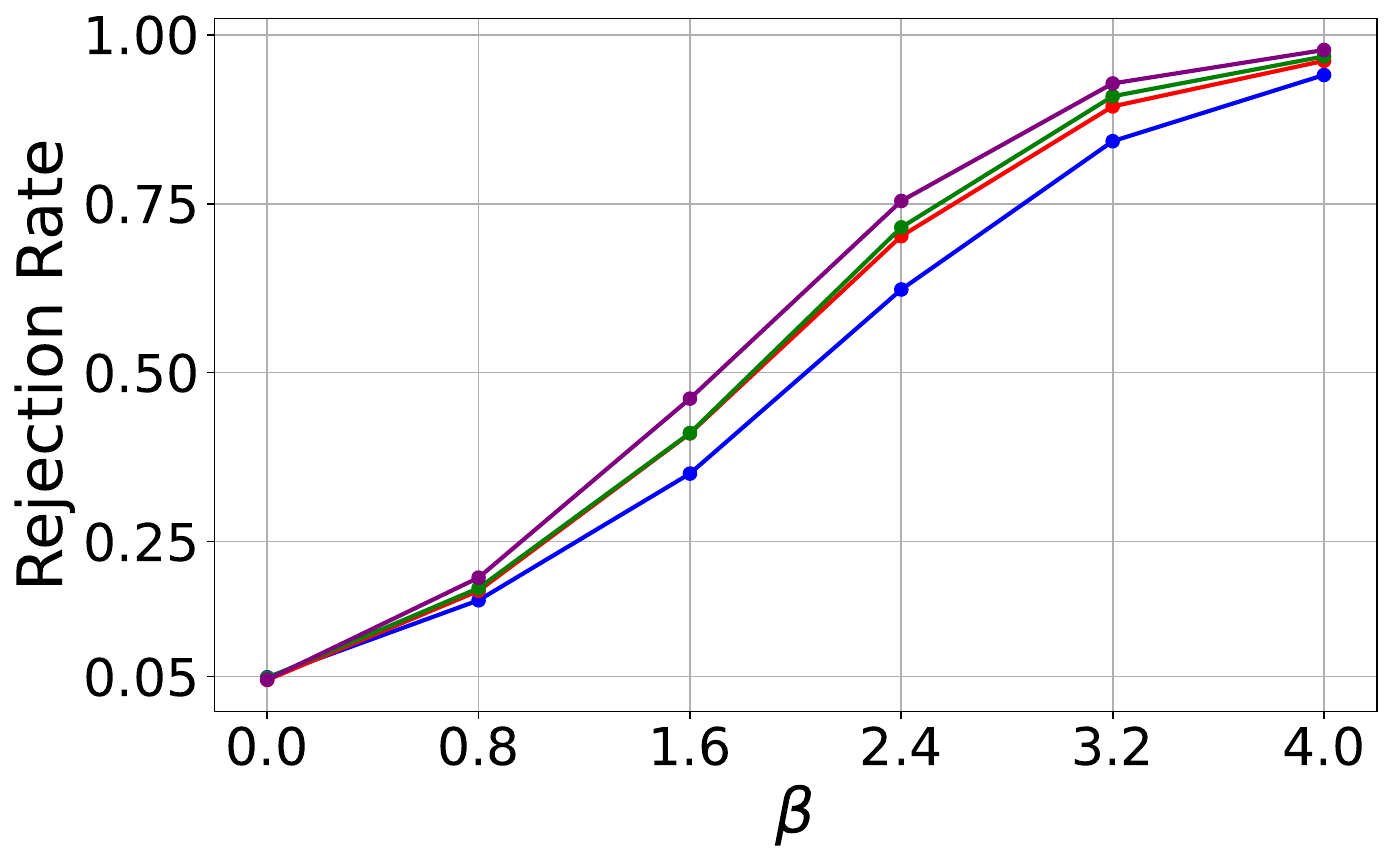}
        \caption{Model 2 ($N=50$)}
      \end{subfigure}
      \hspace{0.1cm}
      \begin{subfigure}[b]{0.4\textwidth}
        
        \includegraphics[width=\textwidth]{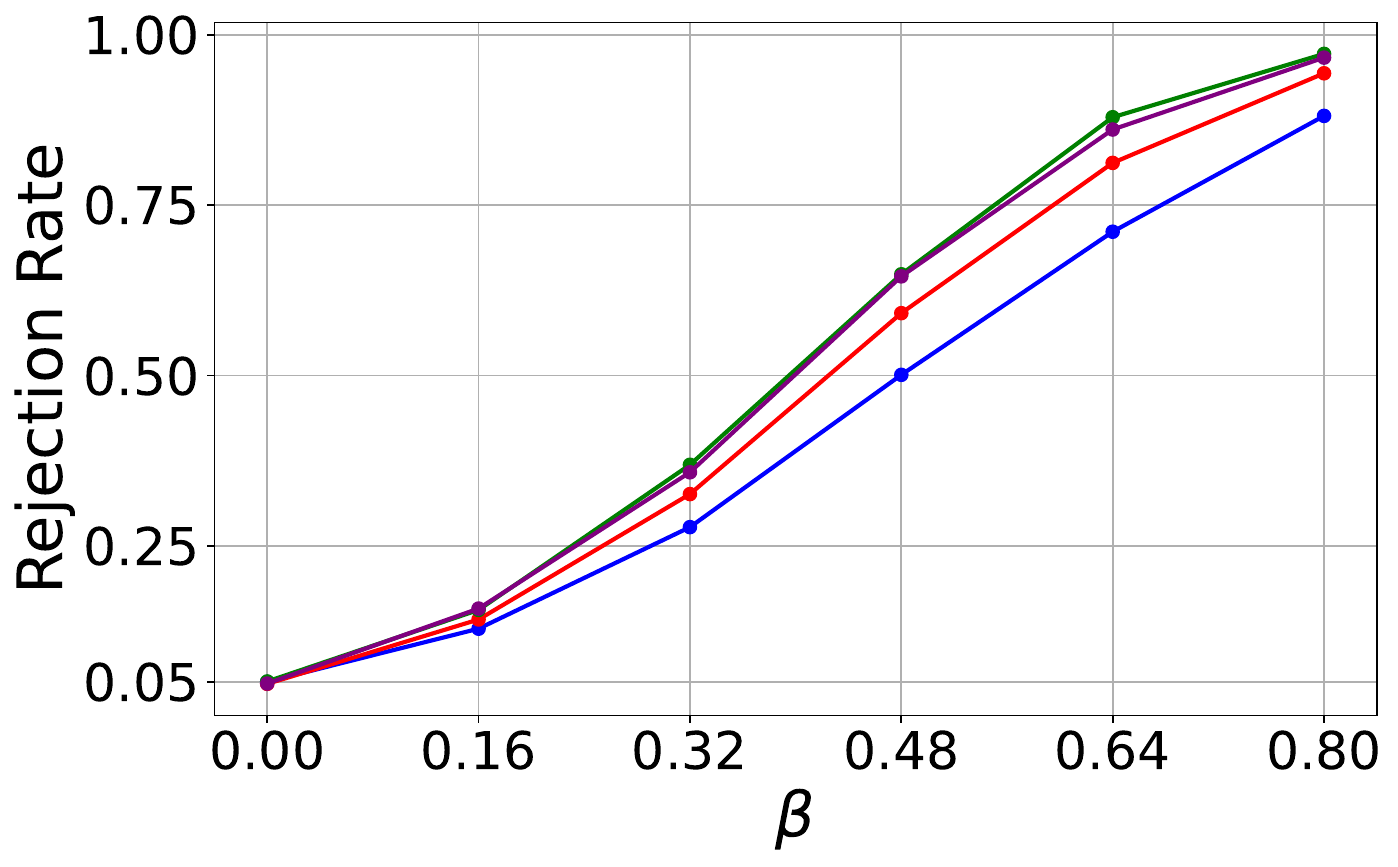}
        \caption{Model 2 ($N=1000$)}
      \end{subfigure}
    
    
      \begin{subfigure}[b]{0.4\textwidth}
        
        \includegraphics[width=\textwidth]{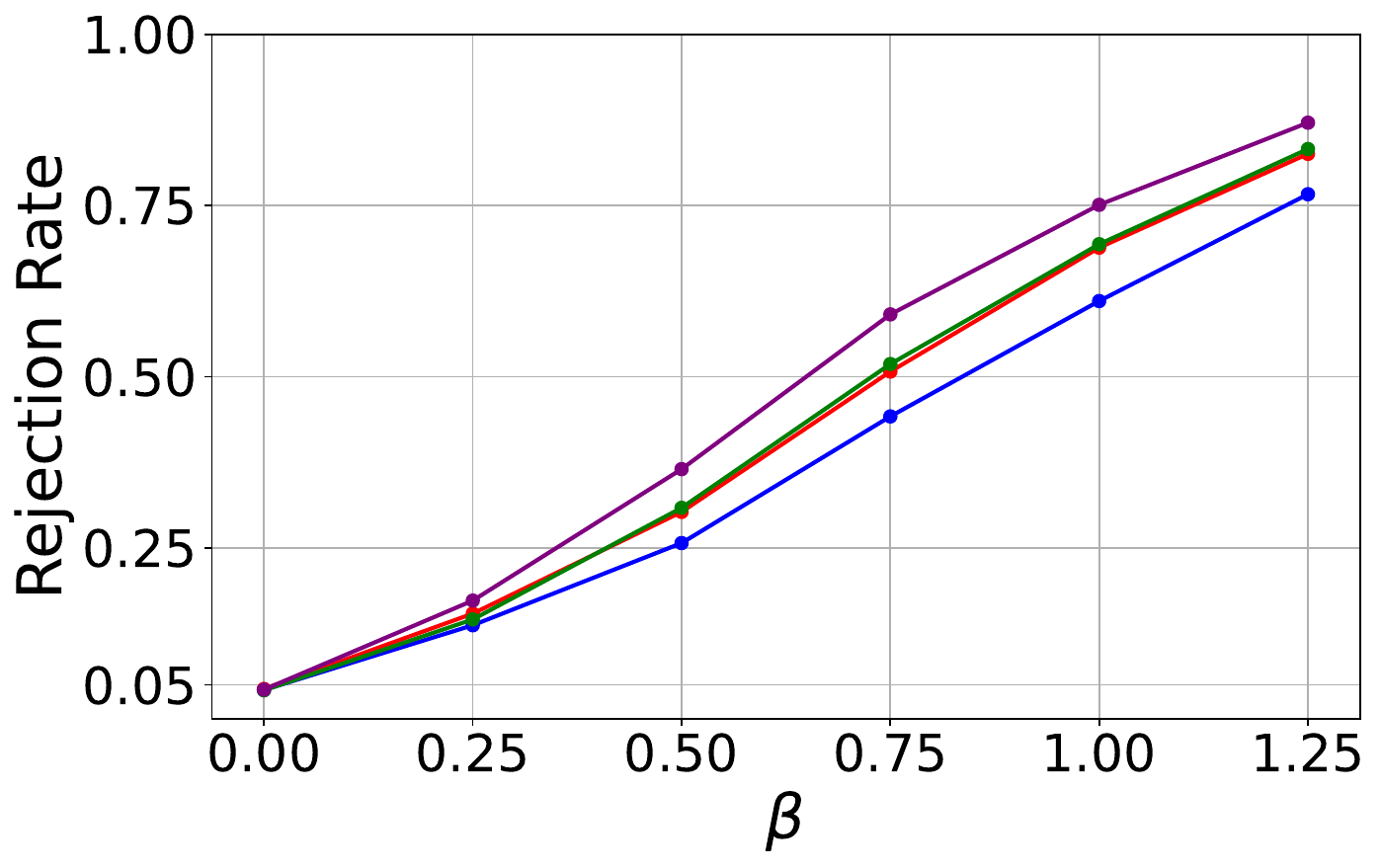}
        \caption{Model 3 ($N=50$)}
      \end{subfigure}
      \hspace{0.1cm}
      \begin{subfigure}[b]{0.4\textwidth}
        
        \includegraphics[width=\textwidth]{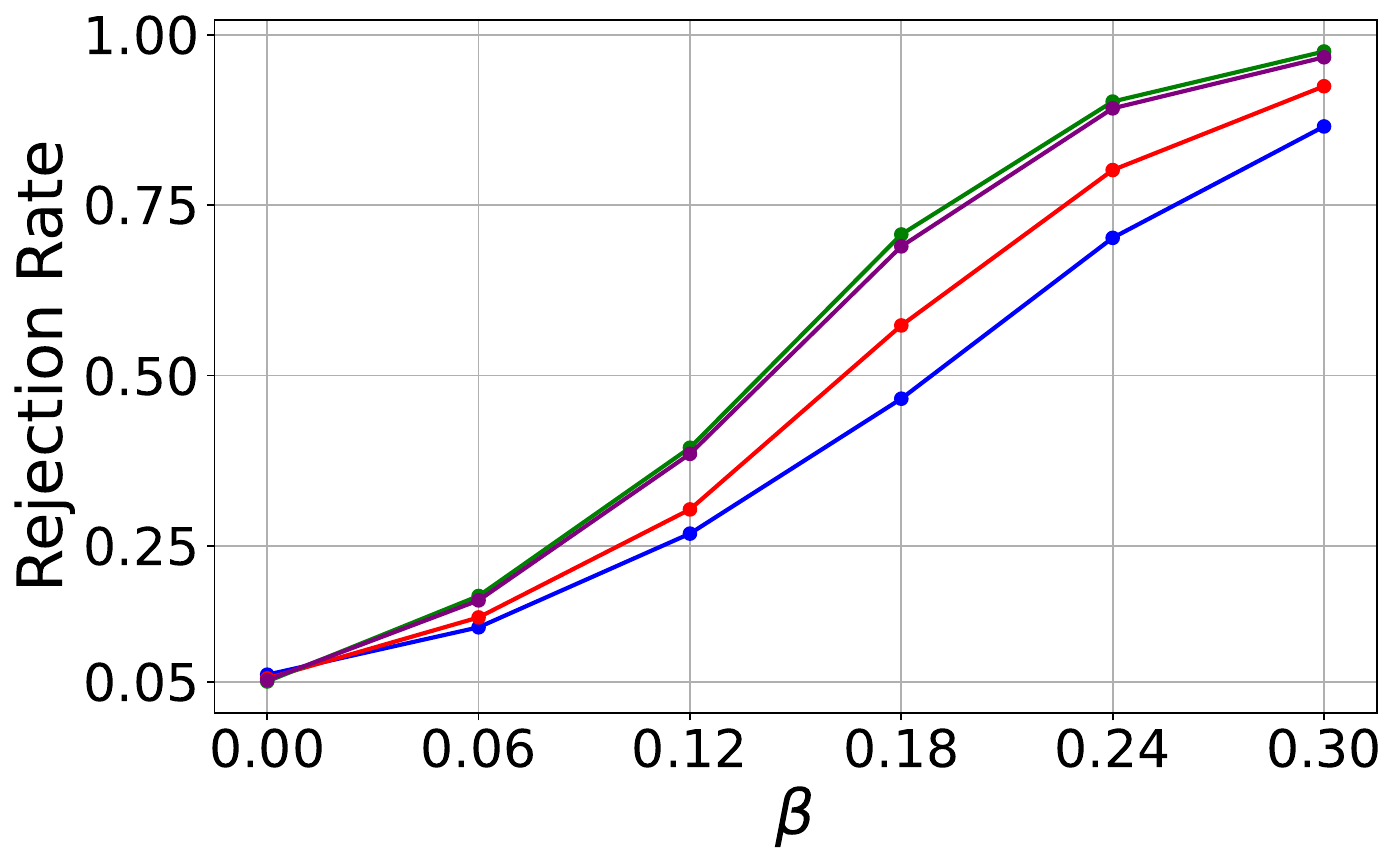}
        \caption{Model 3 ($N=1000$)}
      \end{subfigure}

      
      \begin{subfigure}[b]{0.4\textwidth}
        
        \includegraphics[width=\textwidth]{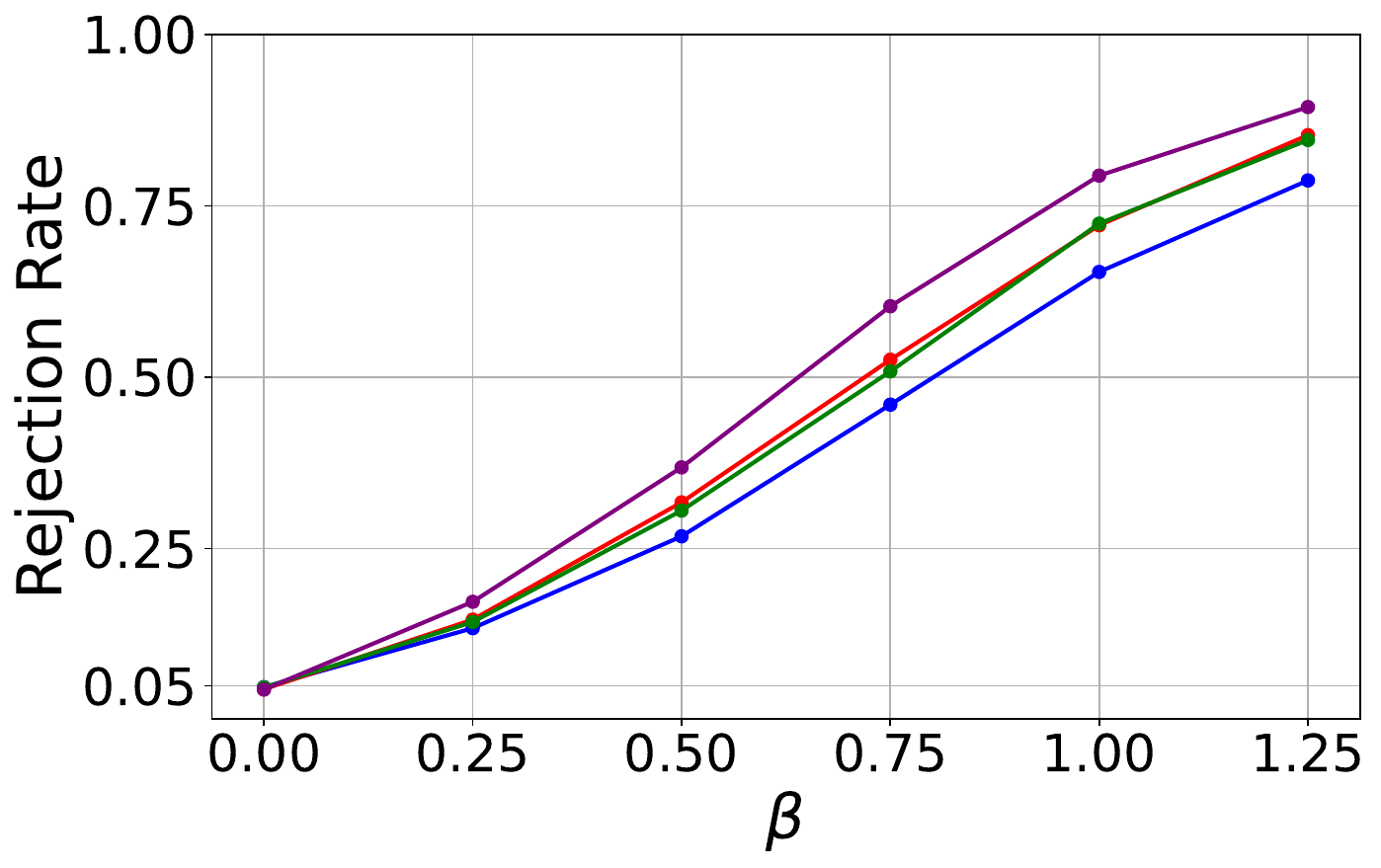}
        \caption{Model 4 ($N=50$)}
      \end{subfigure}
      \hspace{0.1cm}
      \begin{subfigure}[b]{0.4\textwidth}
        
        \includegraphics[width=\textwidth]{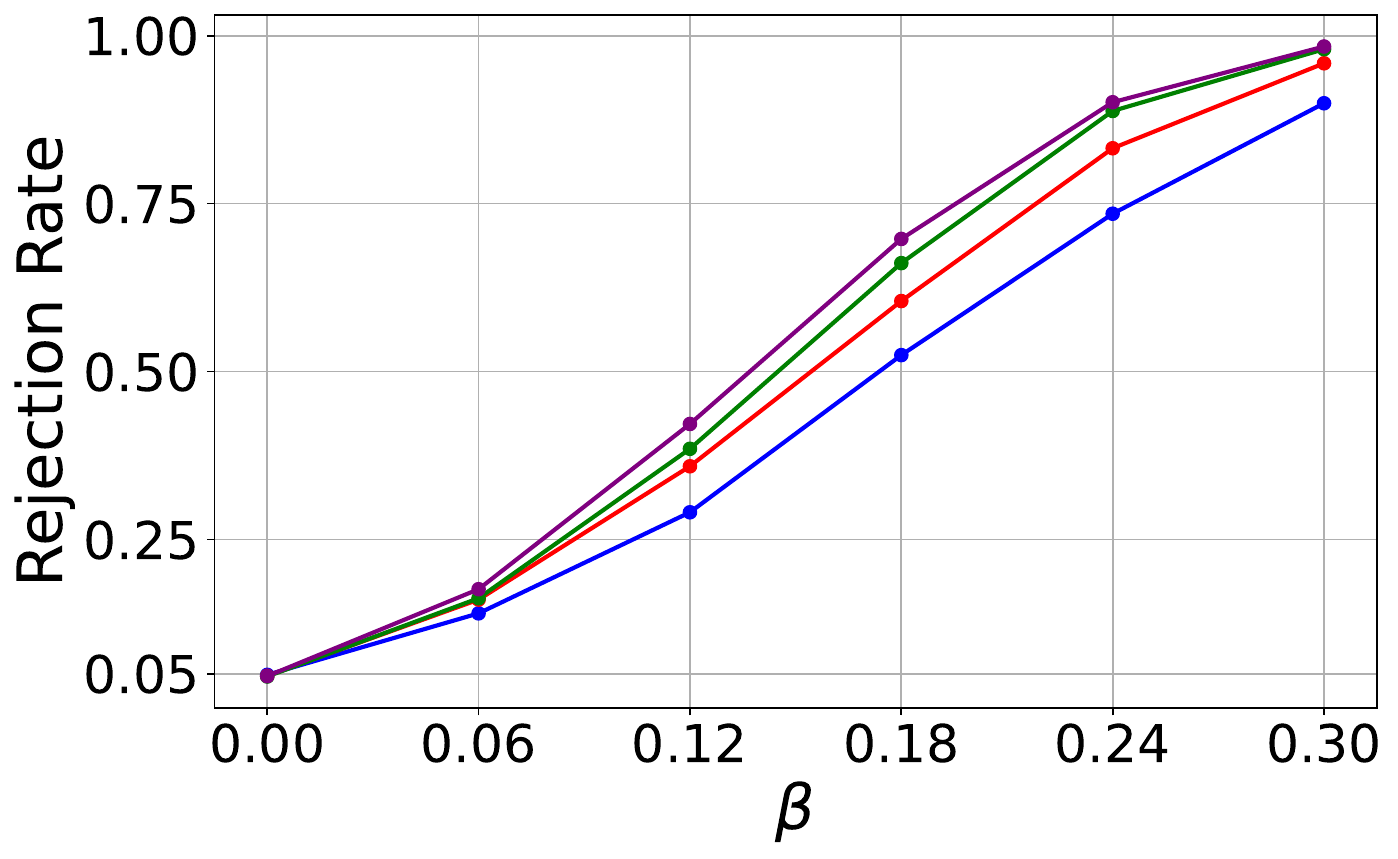}
        \caption{Model 4 ($N=1000$)}
      \end{subfigure}    
      \caption{Type-I error rate (when effect size $\beta=0$) and power (when effect size $\beta>0$) of Methods 1--4 under Models 1--4 with sample size $N=50$ and $N=1000$ (level $\alpha=0.05$). The outcome missingness rate is $25\%$. }
      \label{fig: single outcome simulations with 25 missingness rate}
\end{figure}

\subsection*{Appendix B.2: Simulation Studies on Covariate Adjustment in Randomization Tests with Missing Outcomes}

In Appendix B.2, we present simulation studies on covariate adjustment in imputation-assisted randomization tests with missing outcomes, using the imputation and re-imputation framework with covariate adjustment (i.e., Algorithm~\ref{alg: part with covariate adjustment} in Section~\ref{sec: covariate adjustment} of the main text). Specifically, we consider the same simulation setting and data-generating processes as those in Section~\ref{subsec: simulation studies} (i.e., Models 1--4 in Section~\ref{subsec: simulation studies}), plus an additional Model 6 as follows:
\begin{itemize}
    \item Model 6 (Heterogeneous treatment effects; non-linear model for the true outcome; non-linear selection model for the missingness status; with interference in the missingness mechanism): $Y_{ij} = \beta Z_{ij} x_{ij1}^2 + \frac{1}{\sqrt{5}} \sum_{p=1}^{5} x_{ijp}^2 + u_{ij} + \alpha_{i}+\epsilon_{ij}$ and $M_{ij} = \mathbbm{1} \left\{ \frac{1}{\sqrt{5}} \sum_{p=1}^{5} p x_{ijp} + \frac{1}{\sqrt{5}} \sum_{p=1}^{5} p \cos(x_{ijp}) + 10 \sigma(Y_{ij}) + u_{ij} + \frac{1}{10}\sum_{j=1}^{10}x_{ij1} +  \frac{1}{10}\sum_{j=1}^{10}Y_{ij} > \lambda \right\}$.
 \end{itemize}   
As emphasized in Section~\ref{subsec: simulation studies}, the data-generating processes considered in Section~\ref{subsec: simulation studies} and Appendix B.2 are only intended for automatically and conveniently generating finite-population datasets for our simulation studies, and our framework does not require assuming any outcome-generating distributions or super-population models.

We consider both a small sample size scenario in which we set the total sample size $N=50$ (corresponding to the number of strata $I=5$) and a large sample size scenario in which we set $N=1000$ (corresponding to $I=100$). In each model and each simulation scenario, we implement the following six methods (three without covariate adjustment after imputation and three with covariate adjustment after imputation): 
\begin{itemize}
    \item Method 1 (Algo 1 -- Linear): Imputation-assisted randomization test based on the imputation and re-imputation framework \textit{without covariate adjustment after imputation} (i.e., Algorithm~\ref{alg: part}). We set the embedded outcome imputation algorithm $\mathcal{G}$ to be chained equations imputation based on linear regression (more specifically, Bayesian ridge regression).
    
    \item Method 2 (Algo 2 -- Linear): Imputation-assisted randomization test based on the imputation and re-imputation framework \textit{with covariate adjustment after imputation} (i.e., Algorithm~\ref{alg: part with covariate adjustment}). As in Method 1, we still set the embedded outcome imputation algorithm $\mathcal{G}$ to be chained equations imputation based on linear regression (more specifically, Bayesian ridge regression). Additionally, we also set the embedded working model $\mathcal{H}$ for covariate adjustment after imputation as linear regression (more specifically, Bayesian ridge regression).
    
    \item Method 3 (Algo 1 -- Boosting): Imputation-assisted randomization test based on the imputation and re-imputation framework \textit{without covariate adjustment after imputation} (i.e., Algorithm~\ref{alg: part}), in which we set the embedded outcome imputation algorithm $\mathcal{G}$ to be chained equations imputation based on a boosting algorithm. Following the same strategy used in Section~\ref{subsec: simulation studies}, when $N=50$, the boosting algorithm we choose is the popular XGBoost algorithm (\citealp{chen2016xgboost}), and when $N=1000$, the boosting algorithm we choose is the popular LightGBM algorithm (\citealp{ke2017lightgbm}) (see Section~\ref{subsec: simulation studies} in the main text for the detailed considerations).

    \item Method 4 (Algo 2 -- Boosting): Imputation-assisted randomization test based on the imputation and re-imputation framework \textit{with covariate adjustment after imputation} (i.e., Algorithm~\ref{alg: part with covariate adjustment}), in which we still set the embedded outcome imputation algorithm $\mathcal{G}$ to be chained equations imputation based on a boosting algorithm. Additionally, we also set the embedded working model $\mathcal{H}$ for covariate adjustment after imputation as a boosting algorithm. Following the same strategy used in Section~\ref{subsec: simulation studies}, the boosting algorithm we choose (for both the imputation algorithm $\mathcal{G}$ and the covariate adjustment model $\mathcal{H}$) is the popular XGBoost algorithm (\citealp{chen2016xgboost}) when $N=50$ and the popular LightGBM algorithm (\citealp{ke2017lightgbm}) when $N=1000$ (see Section~\ref{subsec: simulation studies} for the detailed considerations).

   \item Method 5 (Median Imputation): Classic randomization test based on median imputation for missing outcomes (without covariate adjustment).

   \item Method 6 (Median Imputation with Covariate Adjustment): Classic randomization test based on median imputation for missing outcomes, with covariate adjustment based on linear covariate adjustment model (same as that used in Method 2). 
    
\end{itemize}
The simulated type-I error rate (corresponding to the rejection rate when effect size $\beta=0$) and power (corresponding to the rejection rate when effect size $\beta>0$) are reported in Figure~\ref{fig: single outcome simulations with covariate adjustment}. Each number in Figure~\ref{fig: single outcome simulations with covariate adjustment} is based on 2000 simulated datasets and 10,000 re-imputation runs in Algorithm~\ref{alg: part with covariate adjustment}. 
 
\begin{figure}[htp!]
      \centering
        \captionsetup[subfigure]{skip=2pt} 
        \includegraphics[width=\textwidth]{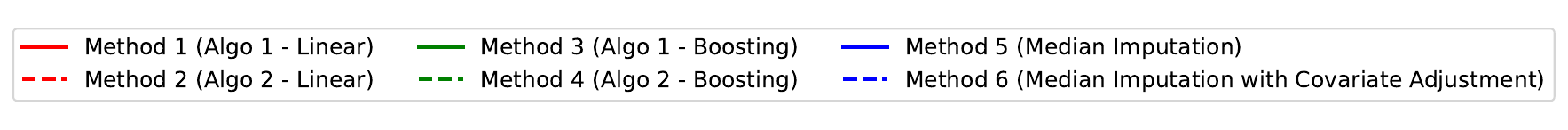}

      \begin{subfigure}[b]{0.4\textwidth}     
        \includegraphics[width=\textwidth]{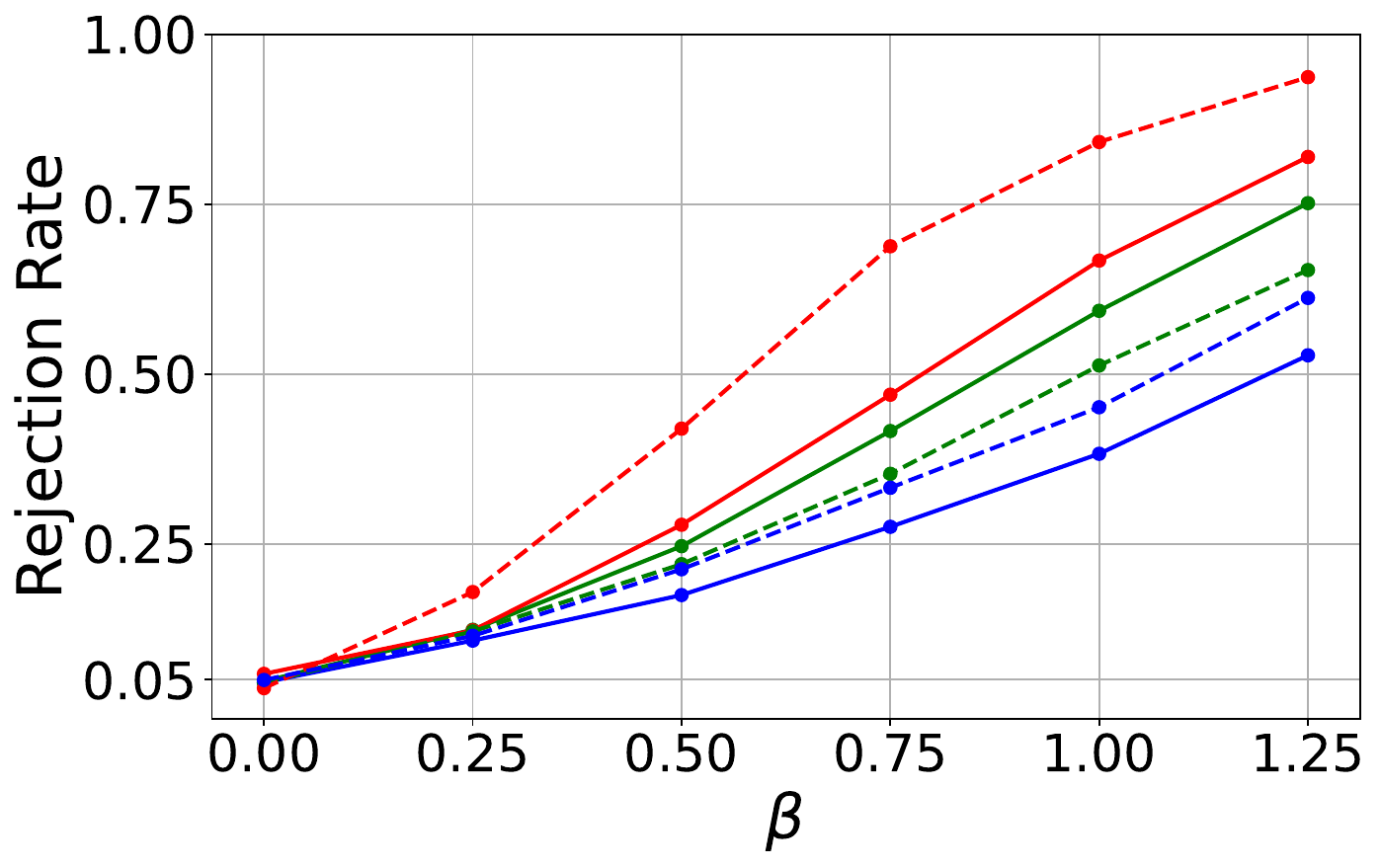}
        \caption{Model 1 ($N=50$)}
      \end{subfigure}
      \hspace{0.1cm}
      \begin{subfigure}[b]{0.4\textwidth}
        \includegraphics[width=\textwidth]{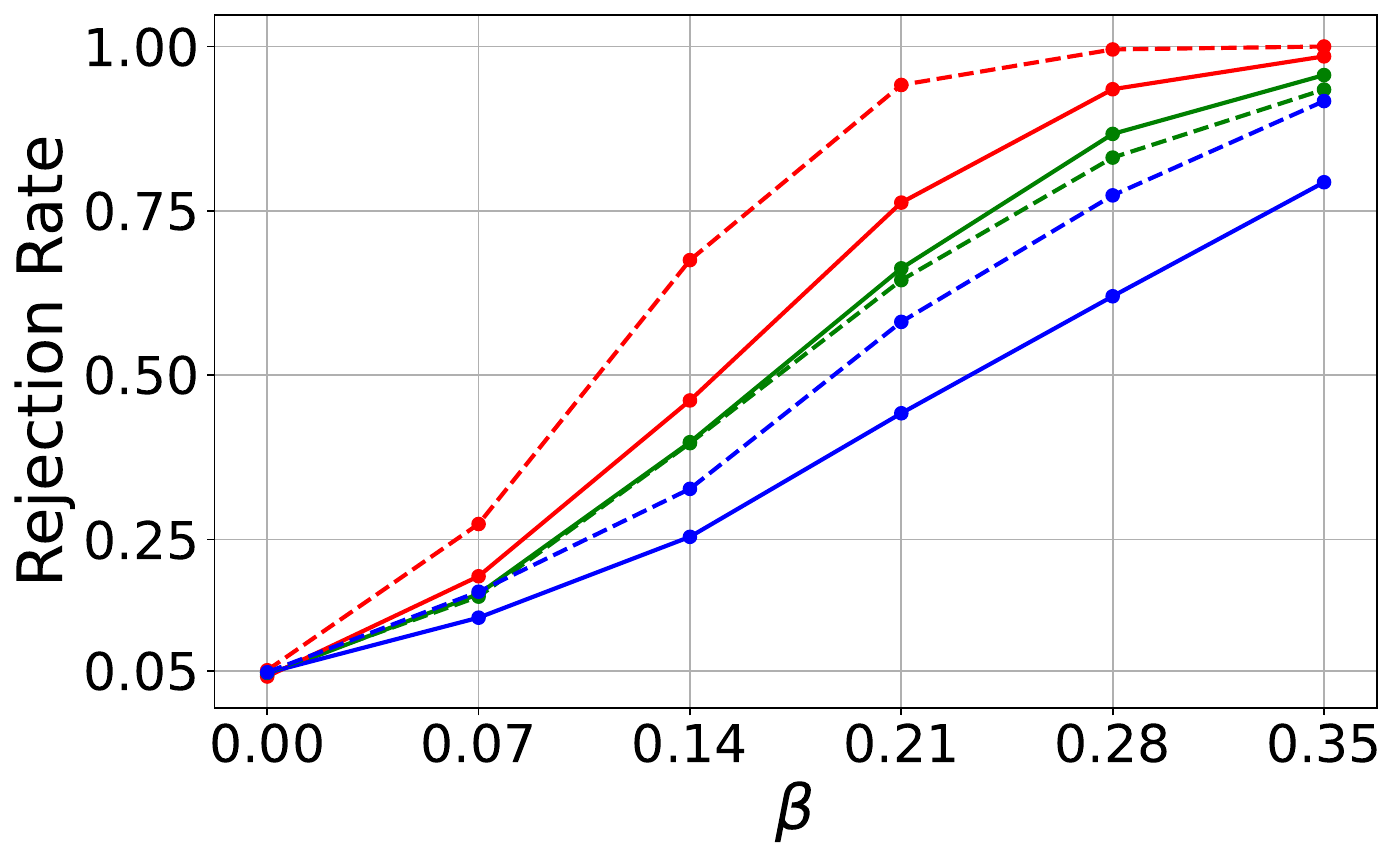}
        \caption{Model 1 ($N=1000$)}
      \end{subfigure}

      \begin{subfigure}[b]{0.4\textwidth}     
        \includegraphics[width=\textwidth]{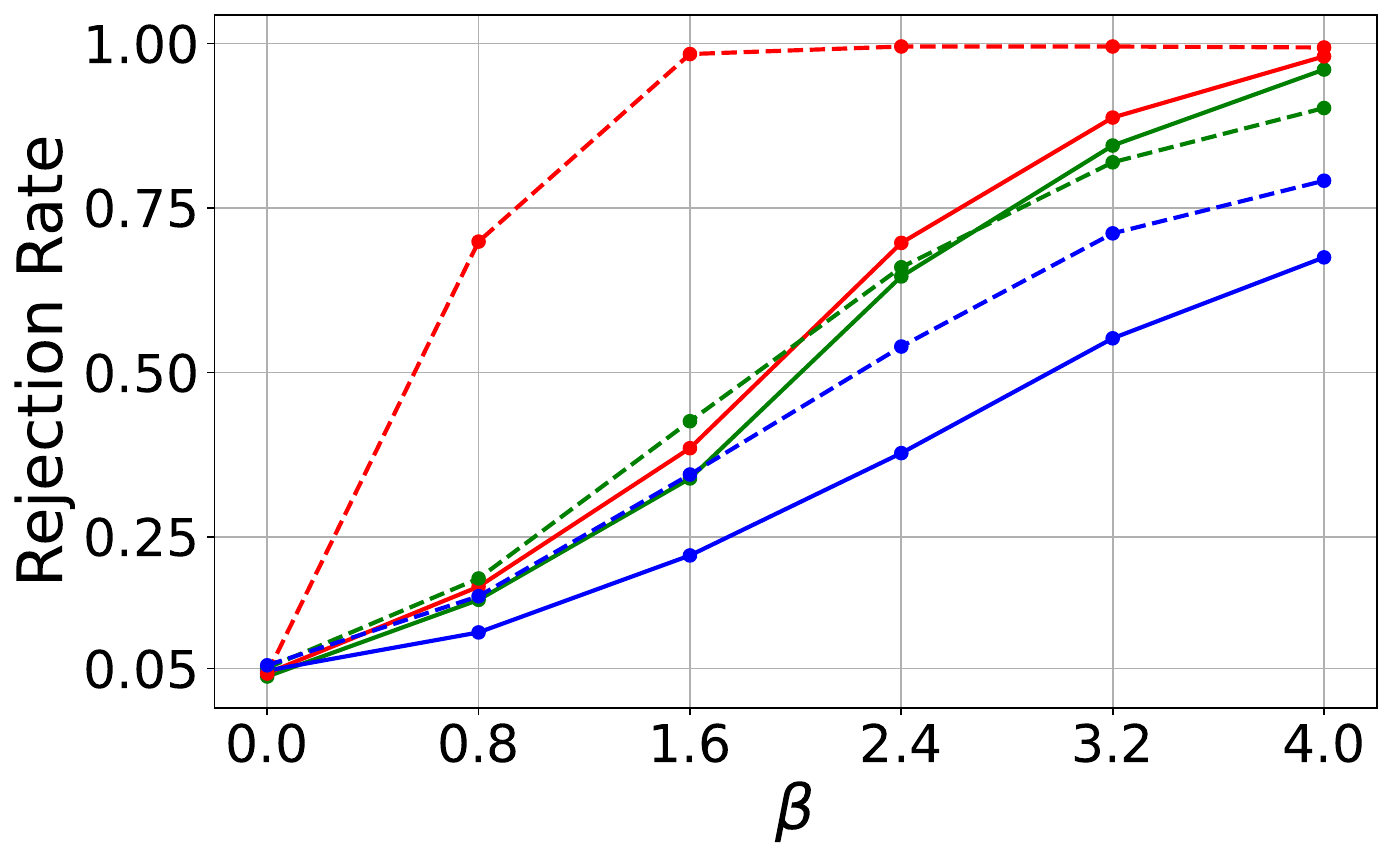}
        \caption{Model 2 ($N=50$)}
      \end{subfigure}
      \hspace{0.1cm}
      \begin{subfigure}[b]{0.4\textwidth}
        \includegraphics[width=\textwidth]{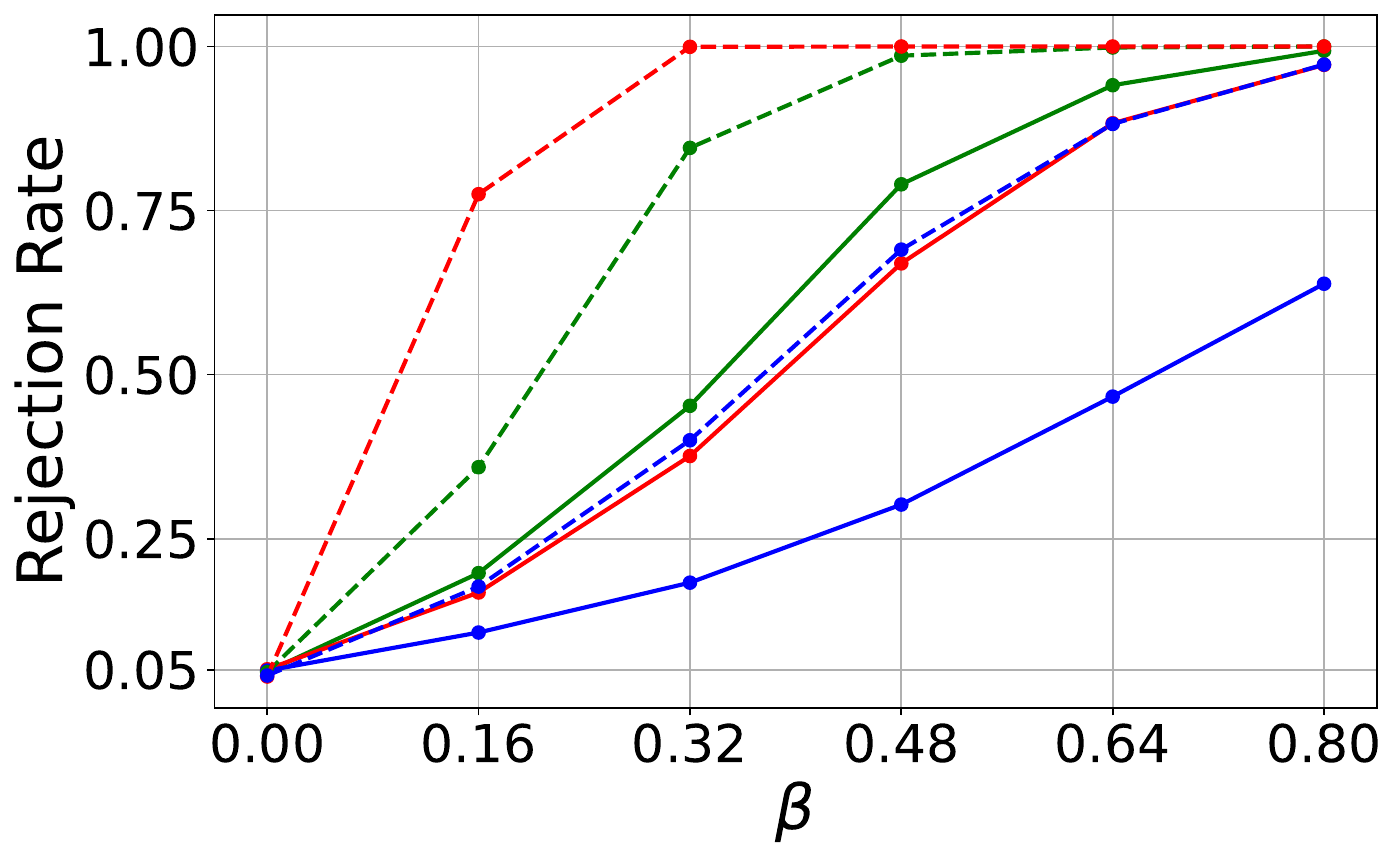}
        \caption{Model 2 ($N=1000$)}
      \end{subfigure}
      
      \begin{subfigure}[b]{0.4\textwidth}     
        \includegraphics[width=\textwidth]{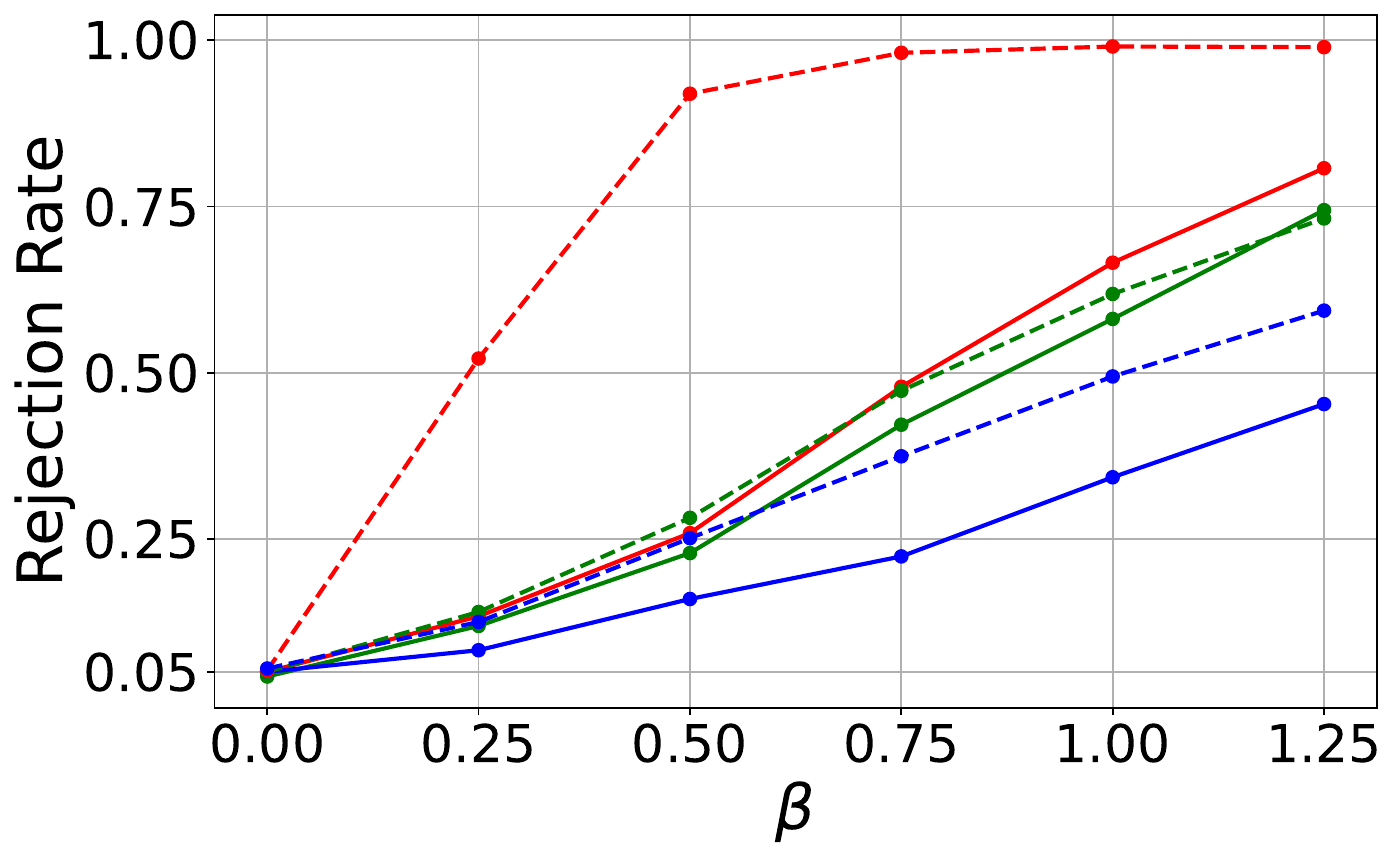}
        \caption{Model 3 ($N=50$)}
      \end{subfigure}
      \hspace{0.1cm}
      \begin{subfigure}[b]{0.4\textwidth}
        \includegraphics[width=\textwidth]{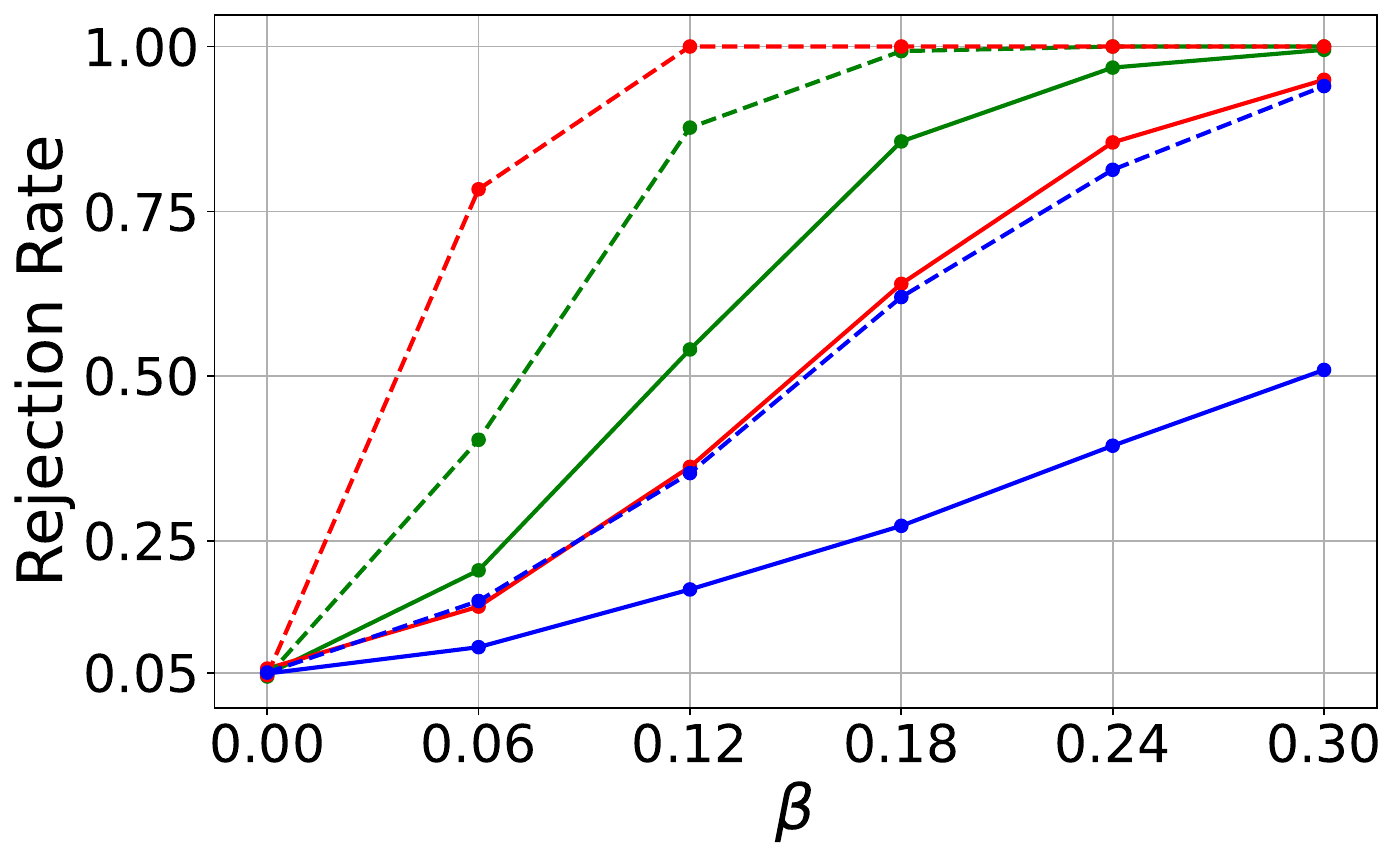}
        \caption{Model 3 ($N=1000$)}
      \end{subfigure}

      \begin{subfigure}[b]{0.4\textwidth}     
        \includegraphics[width=\textwidth]{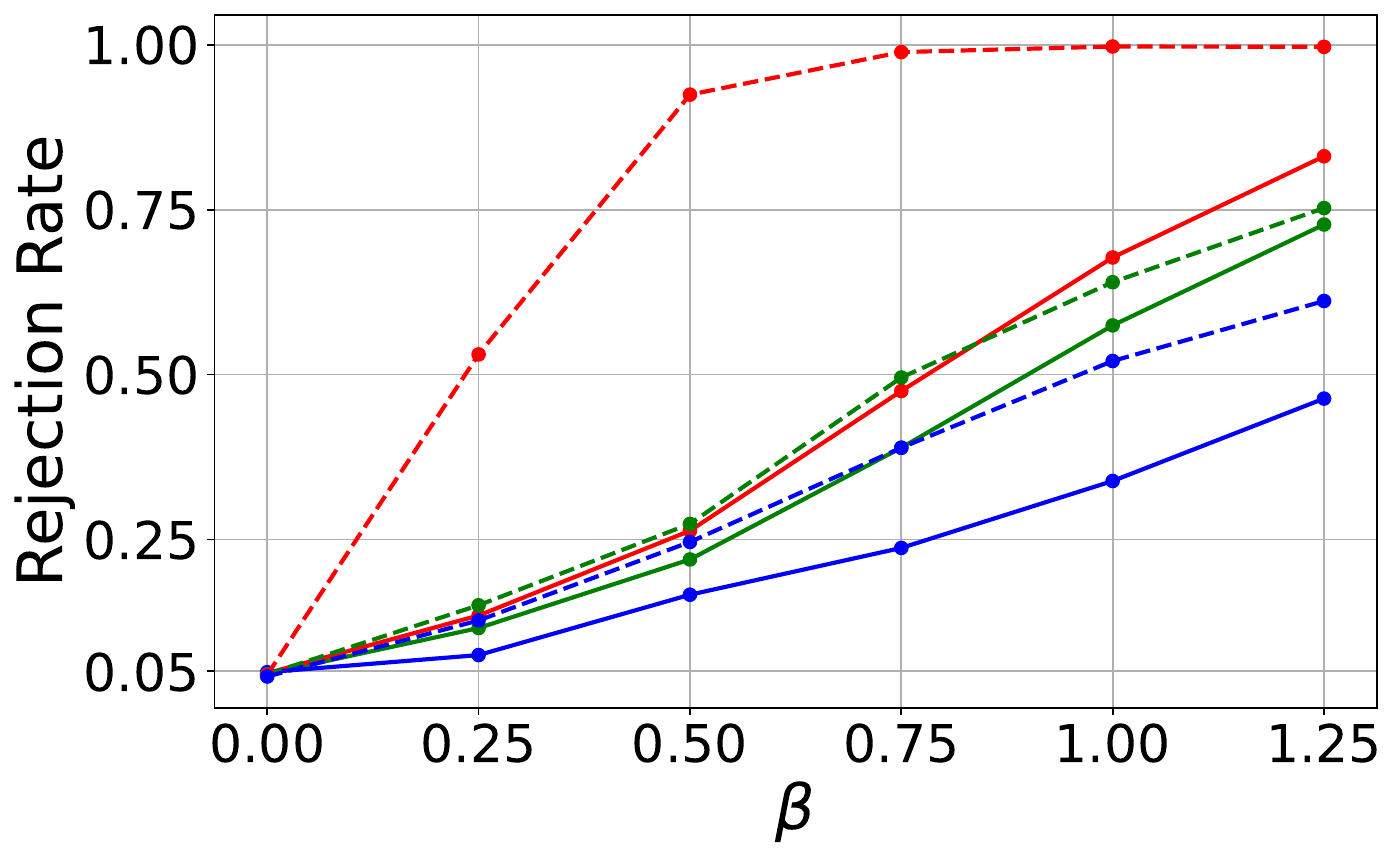}
        \caption{Model 4 ($N=50$)}
      \end{subfigure}
      \hspace{0.1cm}
      \begin{subfigure}[b]{0.4\textwidth}
        \includegraphics[width=\textwidth]{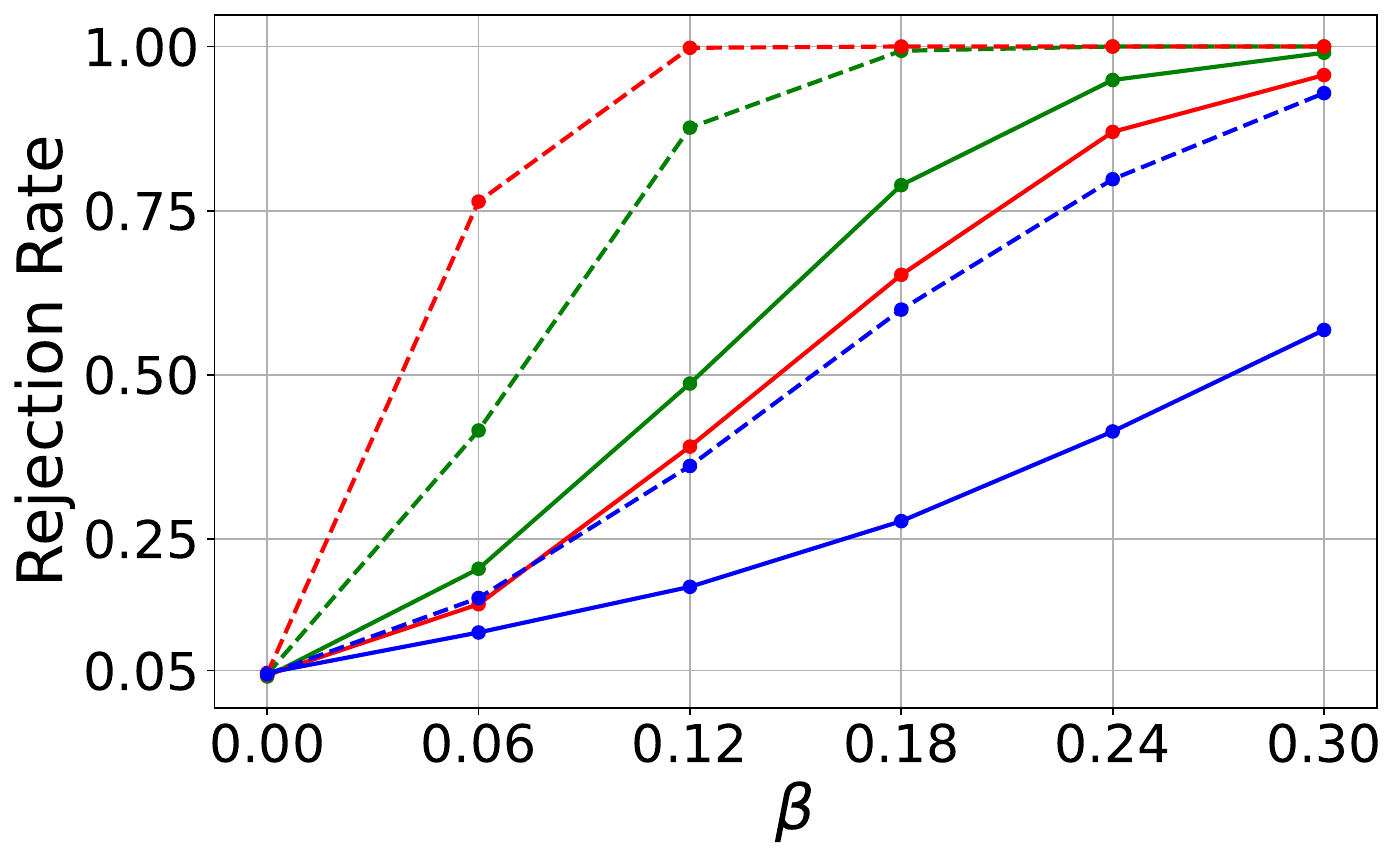}
        \caption{Model 4 ($N=1000$)}
      \end{subfigure}


      \begin{subfigure}[b]{0.4\textwidth}
        
        \includegraphics[width=\textwidth]{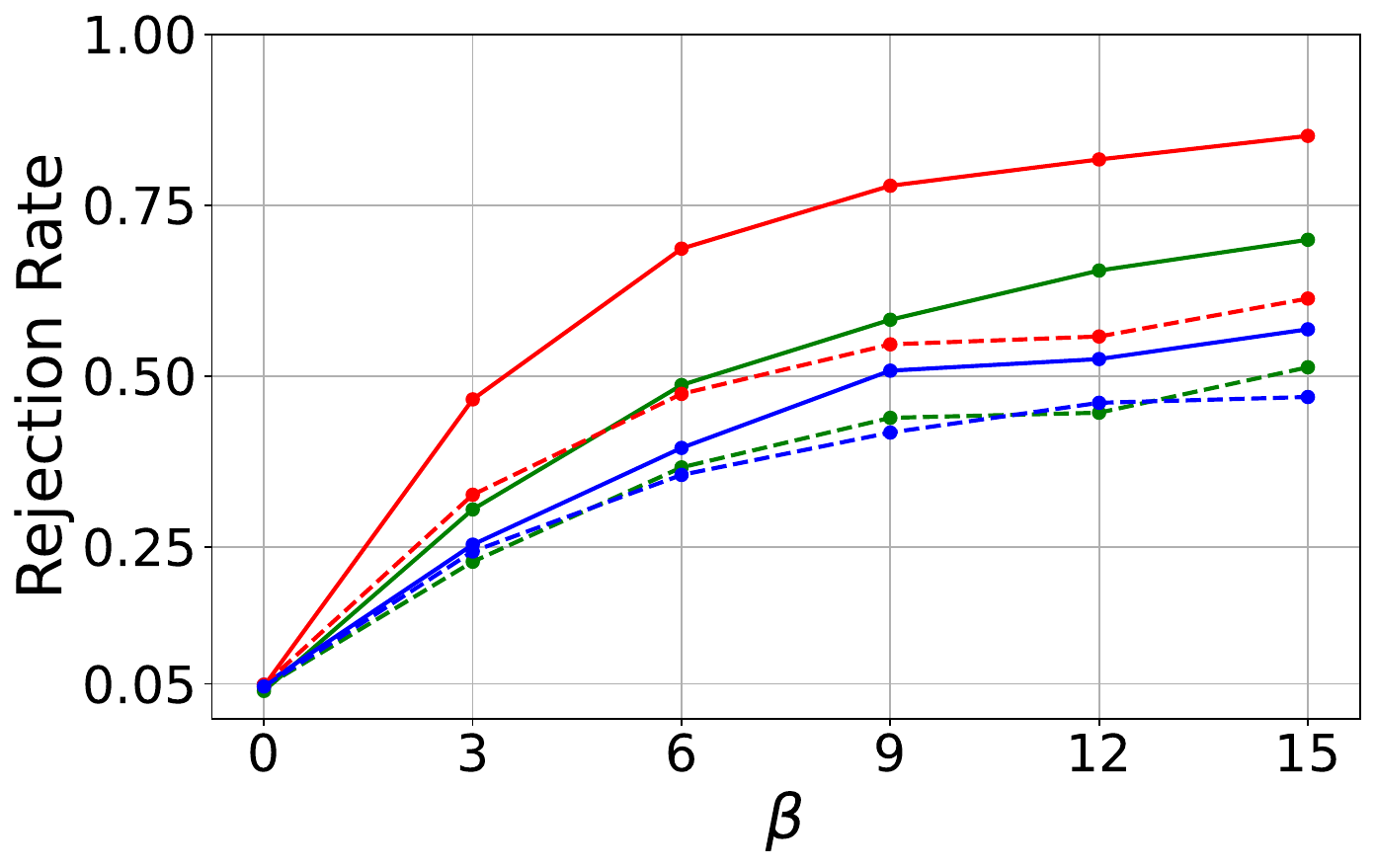}
        \caption{Model 6 ($N=50$)}
      \end{subfigure}
      \hspace{0.1cm}
      \begin{subfigure}[b]{0.4\textwidth}
        
        \includegraphics[width=\textwidth]{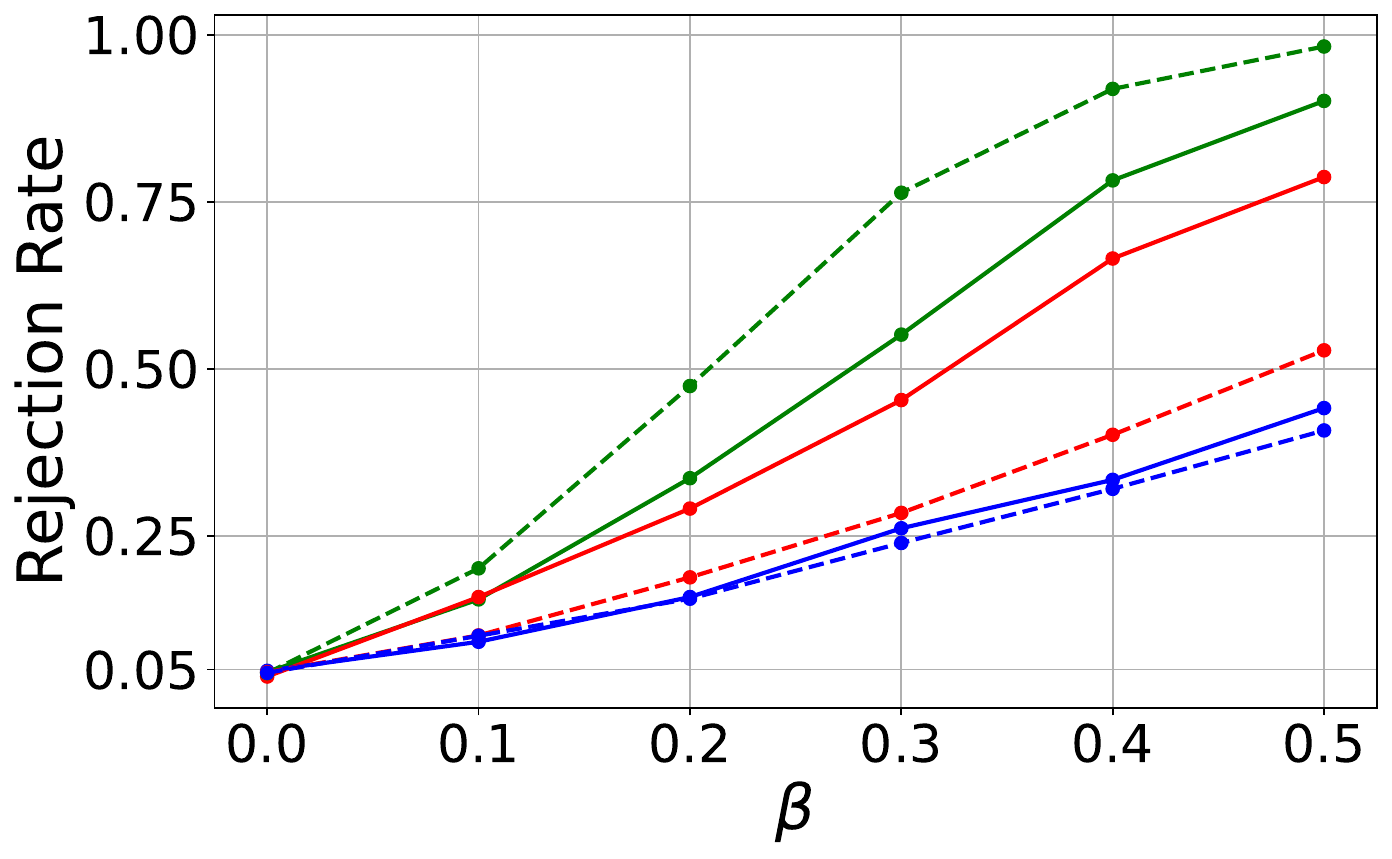}
        \caption{Model 6 ($N=1000$)}
      \end{subfigure}  
      \caption{Type-I error rate (when effect size $\beta=0$) and power (when effect size $\beta>0$) of Methods 1--4 under Models 1--4 and 6 with sample size $N=50$ and $N=1000$ (level $\alpha=0.05$). The outcome missingness rate is $50\%$. }
      \label{fig: single outcome simulations with covariate adjustment}
\end{figure}

From the simulation results in Figure~\ref{fig: single outcome simulations with covariate adjustment}, we can obtain three key insights. First, the imputation and re-imputation framework with covariate adjustment (i.e., Algorithm~\ref{alg: part with covariate adjustment}) can guarantee finite-population-exact type-I error rate control (corresponding to the $\beta=0$ case) with either linear imputation and covariate adjustment models (Method 2) or flexible machine learning imputation and covariate adjustment models (Method 4) under all the considered models (ranging from linear to non-linear outcome models and missingness models, with or without interference in the outcome missingness mechanism, in the presence of unobserved covariates). This confirms the theoretical guarantee of the finite-population-exact type-I error rate control of the imputation and re-imputation framework with covariate adjustment, even when the outcome imputation algorithm/model or the covariate adjustment model was misspecified or when unobserved covariates or interference exists in the missingness mechanism (as stated in Theorem~\ref{alg: part with covariate adjustment}). 

Second, for both the small sample size case ($N=50$) and large sample size case ($N=1000$), in many settings (e.g., Models 2--4), incorporating a covariate adjustment step can further improve the power of an imputation-assisted randomization test constructed by the imputation and re-imputation framework (i.e., comparing Algorithm~\ref{alg: part with covariate adjustment} with Algorithm~\ref{alg: part}). Also, such improvements in power can be substantial, and whether incorporating a linear covariate adjustment model or a flexible machine learning covariate adjustment model results in even more substantial gains in power depends on the specific data-generating process and sample size. However, there is no guarantee that incorporating a covariate adjustment step will always improve the power of the imputation and re-imputation framework; see the simulated results under Model 6 as a counter-example. This pattern also agrees with that of the classic Rosenbaum-type covariate adjustment method for Fisher's randomization tests with complete outcome data -- although implementing a covariate adjustment step in Fisher's randomization tests is promising to improve power in many settings, there is no guarantee that covariate adjustment will always improve power for Fisher's randomization tests (\citealp{rosenbaum2002covariance, small2008randomization}).  

Third, the imputation and re-imputation framework with covariate adjustment typically outperforms non-informative imputation (e.g., median imputation) with covariate adjustment in terms of finite-sample power, especially when the sample size is large (e.g., $N=1000$). 

\subsection*{Appendix B.3: Simulation Studies with Missingness in Covariates}

We replicate the setup (i.e., Models 1-4) and procedures outlined in Section~\ref{subsec: simulation studies} to conduct similar simulation studies while allowing for missingness in covariates. Specifically, we introduce missingness in the covariate $x_{ij5}$ for some subjects, where the corresponding missingness indicator $M^{x}_{ij}$ is generated by the following model:
\begin{equation*}
    M^{x}_{ij} = \mathbbm{1} \left\{ \frac{1}{\sqrt{5}} \sum_{p=1}^{5} x_{ijp} + \frac{1}{\sqrt{5}} \sum_{p=1}^{5} x_{ijp}^2  + \sigma(u_{ij}) > \lambda_{x} \right\}.
\end{equation*}
For each of the 1000 simulated datasets under each setting, we tune the parameter $\lambda_{x}$ to ensure a 25\% missingness rate for $x_{ij5}$. As discussed in Section~\ref{subsec: simulation studies}, the data-generating processes presented in both Section~\ref{subsec: simulation studies} and Appendix B.3 are intended solely to automatically and efficiently generate finite-sample datasets for simulation studies. Our framework does not require assumptions about outcome-generating distributions or super-population models for missingness in outcomes or covariates.

\begin{figure}[H]
      \centering
        \captionsetup[subfigure]{skip=2pt} 

        \begin{minipage}{\textwidth}
          \centering
          \includegraphics[width=\textwidth, trim=0 10 0 0, clip]{imgs/legendcustomlinestypes.pdf}
        \end{minipage}

        \vspace*{-8pt}
        
        \begin{minipage}{\textwidth}
          \centering
          \makebox[\textwidth]{
            \includegraphics[width=1.2\textwidth]{imgs/legend_typeone_error.pdf}
          }
        \end{minipage}
        \vspace*{-26pt}
      
      \begin{subfigure}[b]{0.4\textwidth}
        \includegraphics[width=\textwidth]{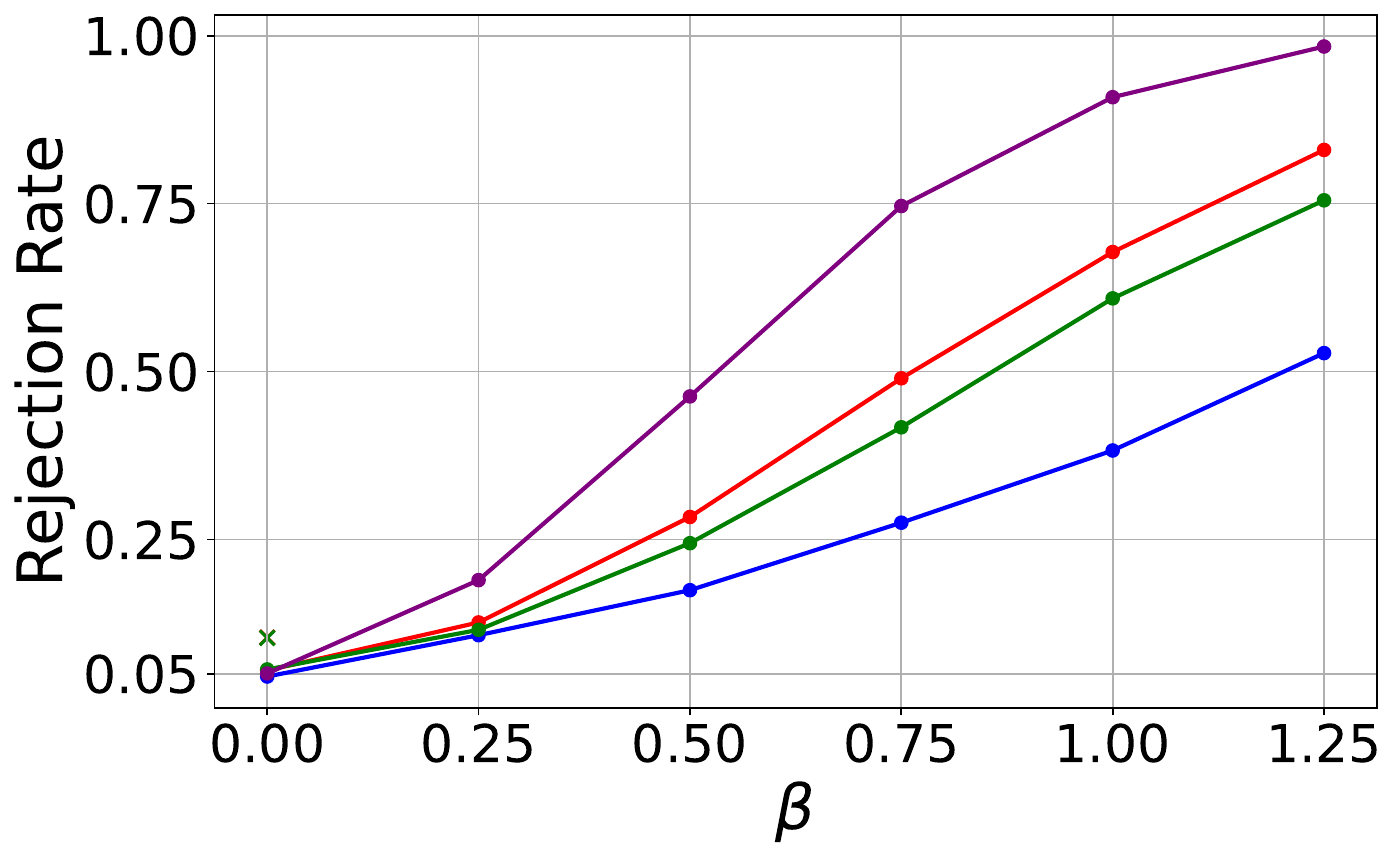}
        \caption{Model 1 ($N=50$)}
      \end{subfigure}
      \hspace{0.1cm}
      \begin{subfigure}[b]{0.4\textwidth}
        \includegraphics[width=\textwidth]{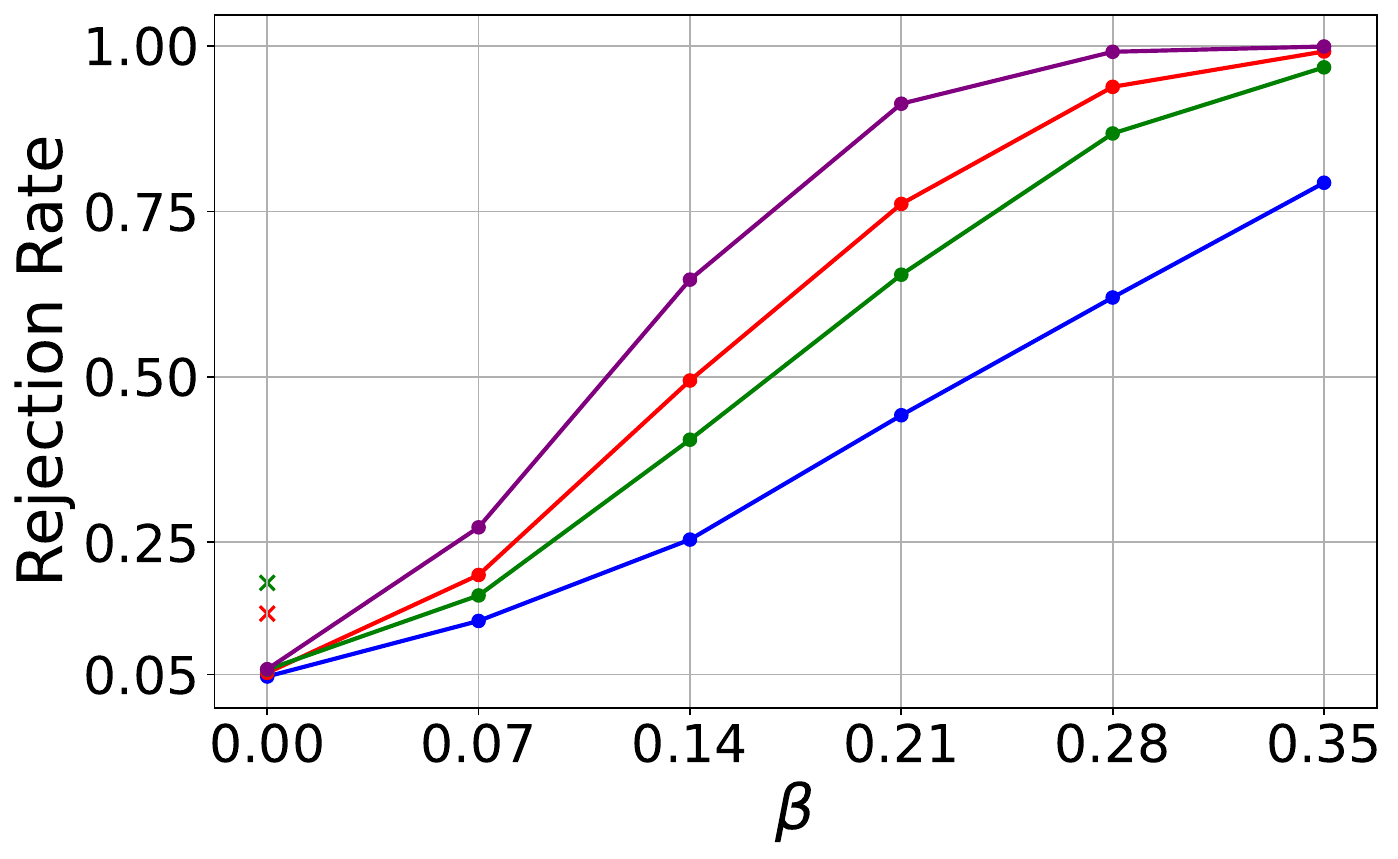}
        \caption{Model 1 ($N=1000$)}
      \end{subfigure}


      \begin{subfigure}[b]{0.4\textwidth}
        
        \includegraphics[width=\textwidth]{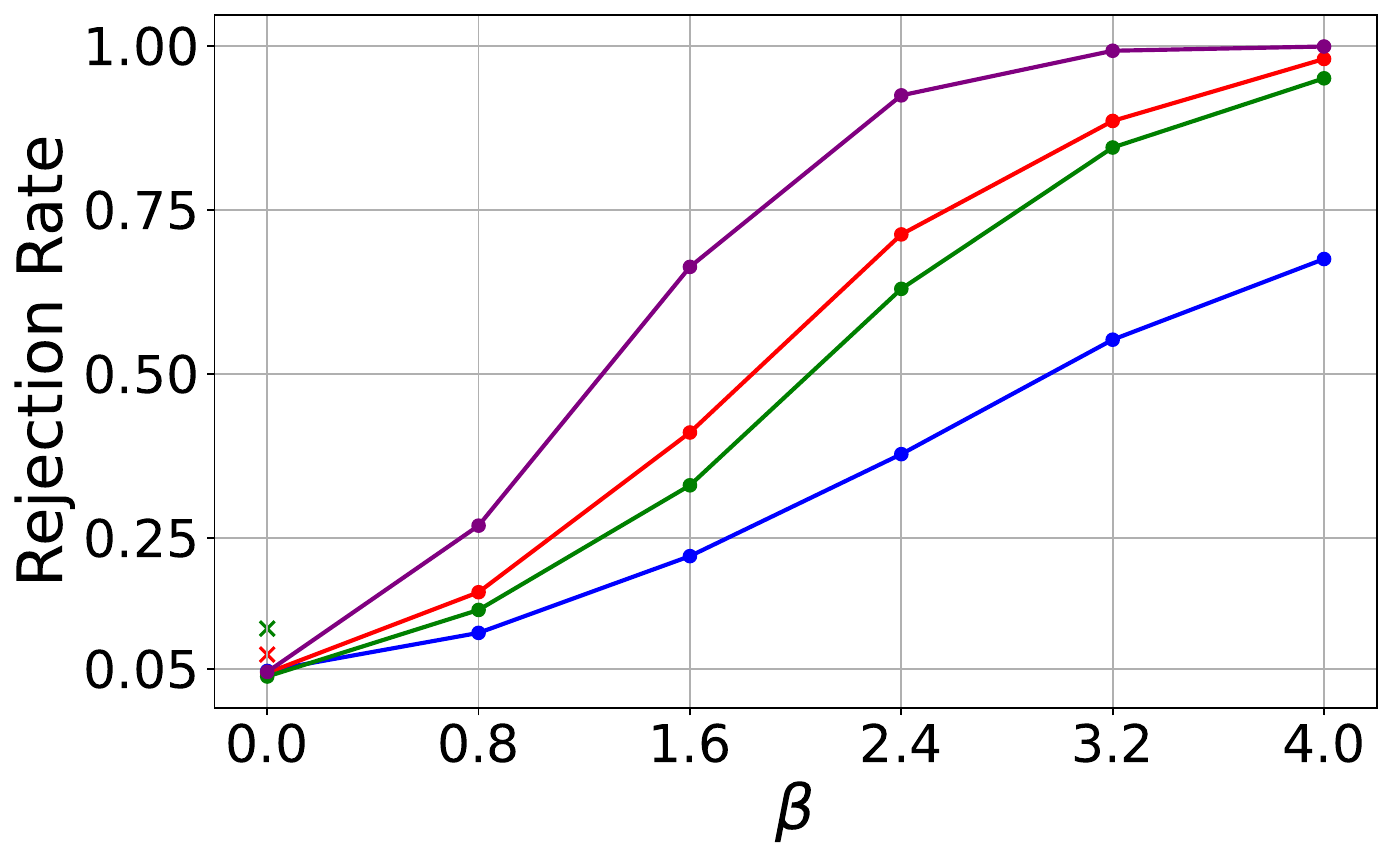}
        \caption{Model 2 ($N=50$)}
      \end{subfigure}
      \hspace{0.1cm}
      \begin{subfigure}[b]{0.4\textwidth}
        
        \includegraphics[width=\textwidth]{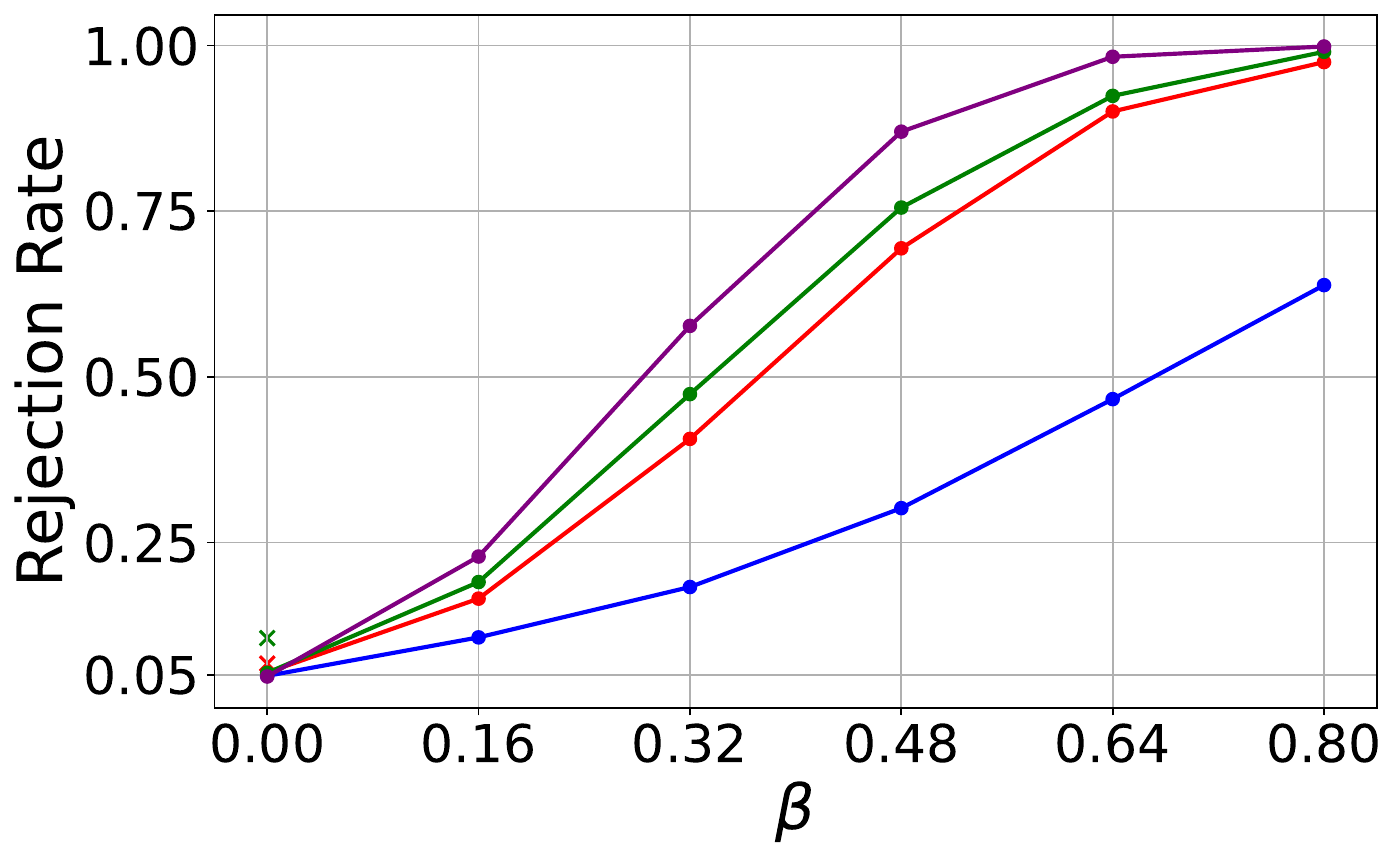}
        \caption{Model 2 ($N=1000$)}
      \end{subfigure}
    
    
      \begin{subfigure}[b]{0.4\textwidth}
        
        \includegraphics[width=\textwidth]{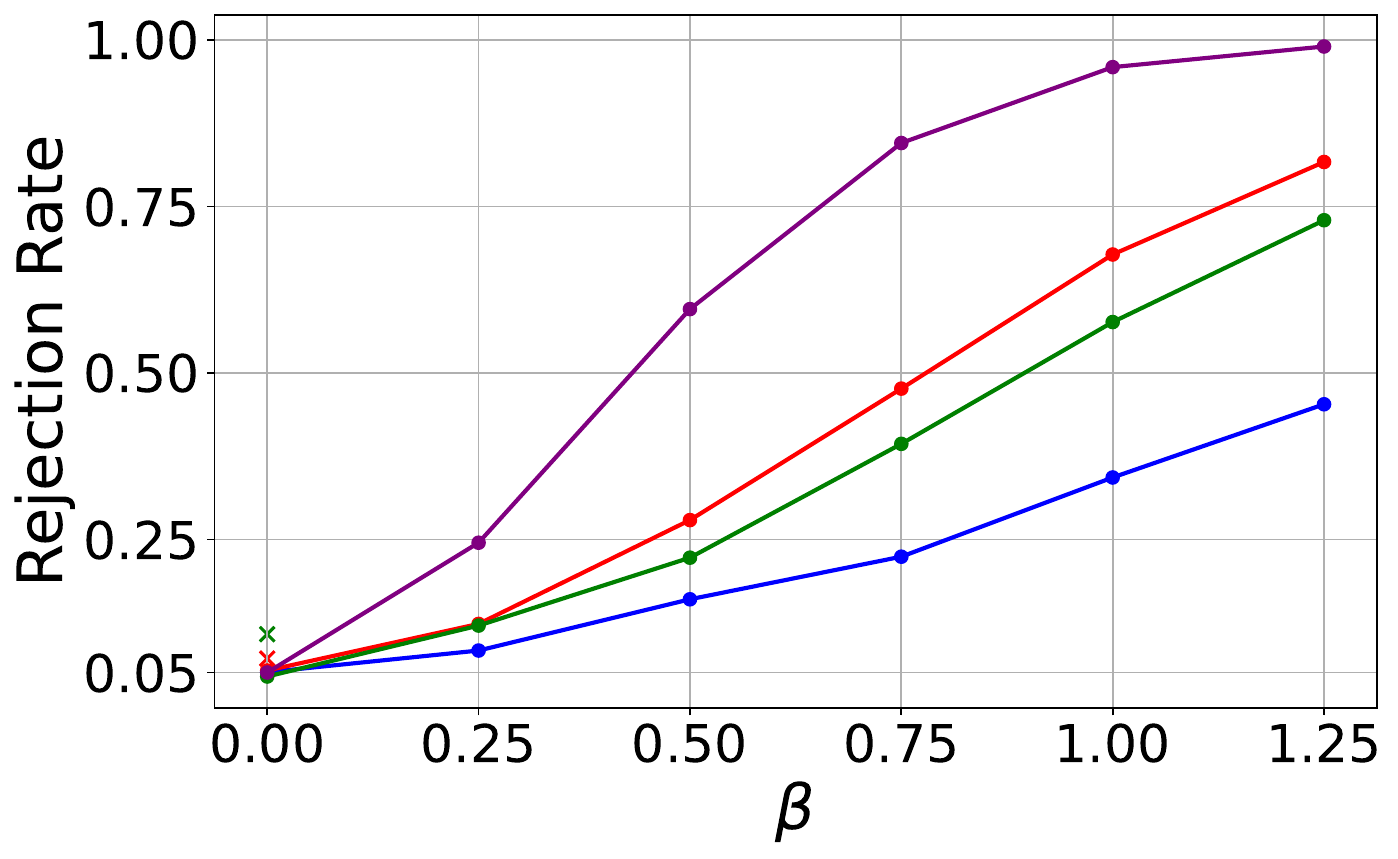}
        \caption{Model 3 ($N=50$)}
      \end{subfigure}
      \hspace{0.1cm}
      \begin{subfigure}[b]{0.4\textwidth}
        
        \includegraphics[width=\textwidth]{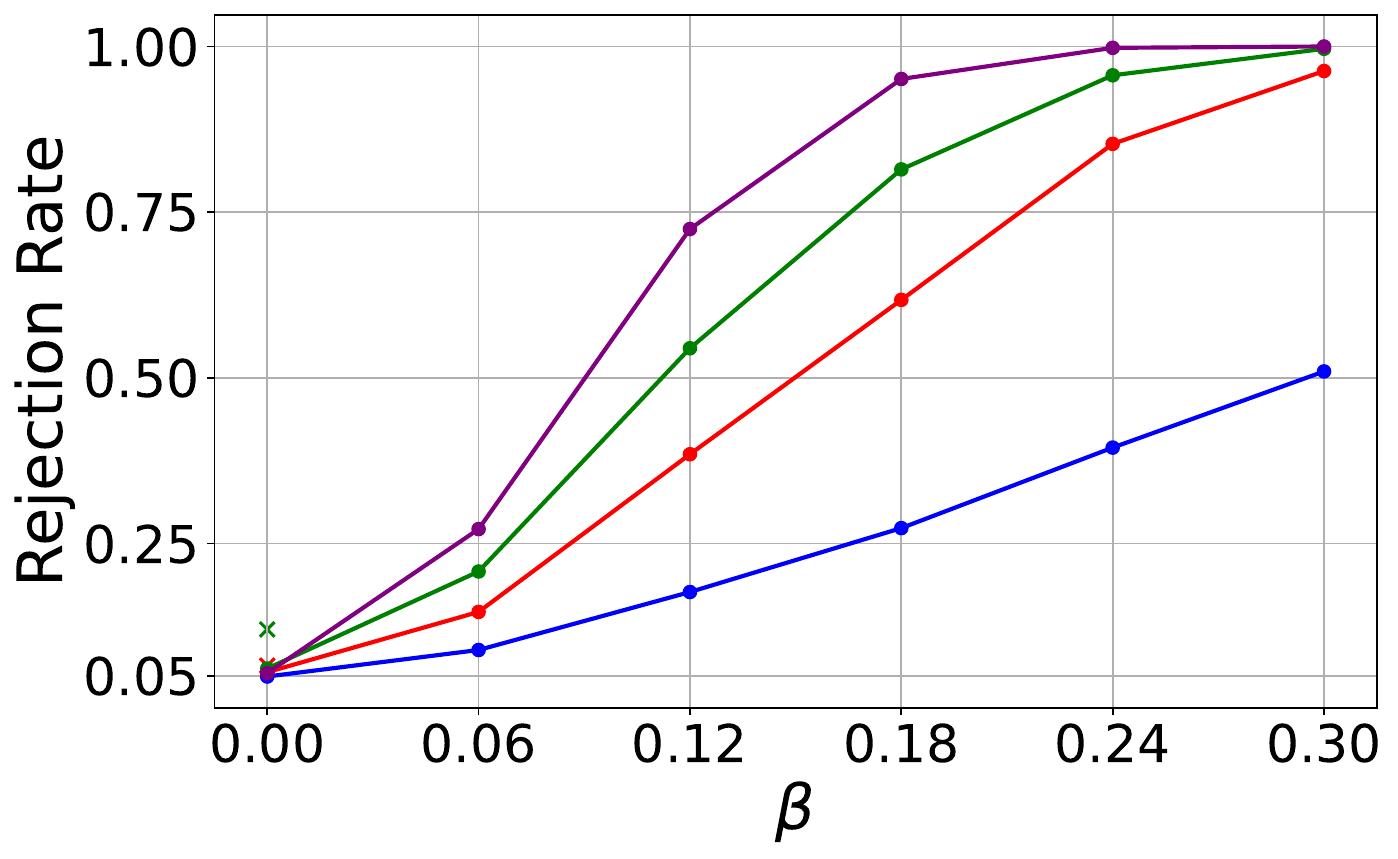}
        \caption{Model 3 ($N=1000$)}
      \end{subfigure}

      
      \begin{subfigure}[b]{0.4\textwidth}
        
        \includegraphics[width=\textwidth]{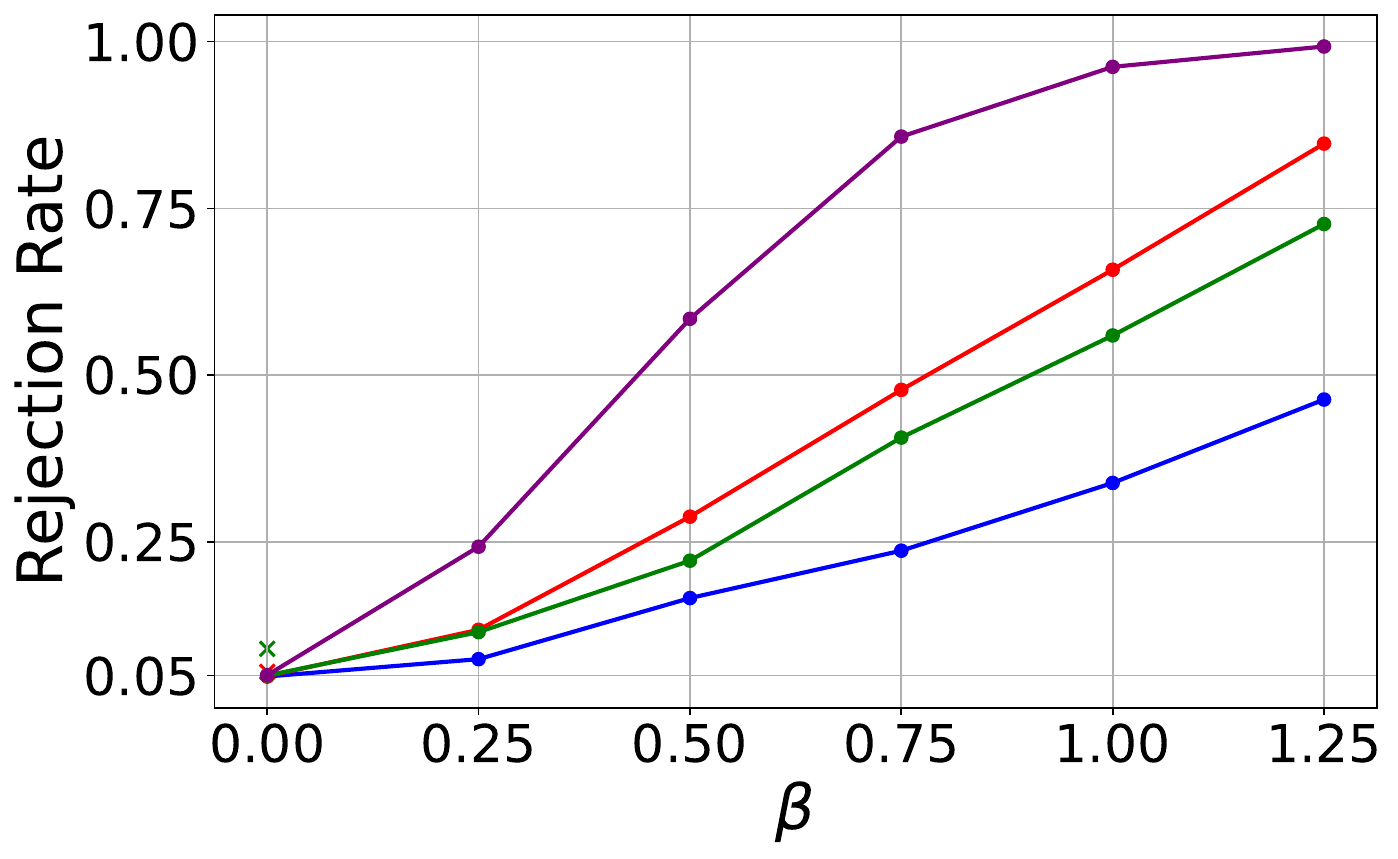}
        \caption{Model 4 ($N=50$)}
      \end{subfigure}
      \hspace{0.1cm}
      \begin{subfigure}[b]{0.4\textwidth}
        
        \includegraphics[width=\textwidth]{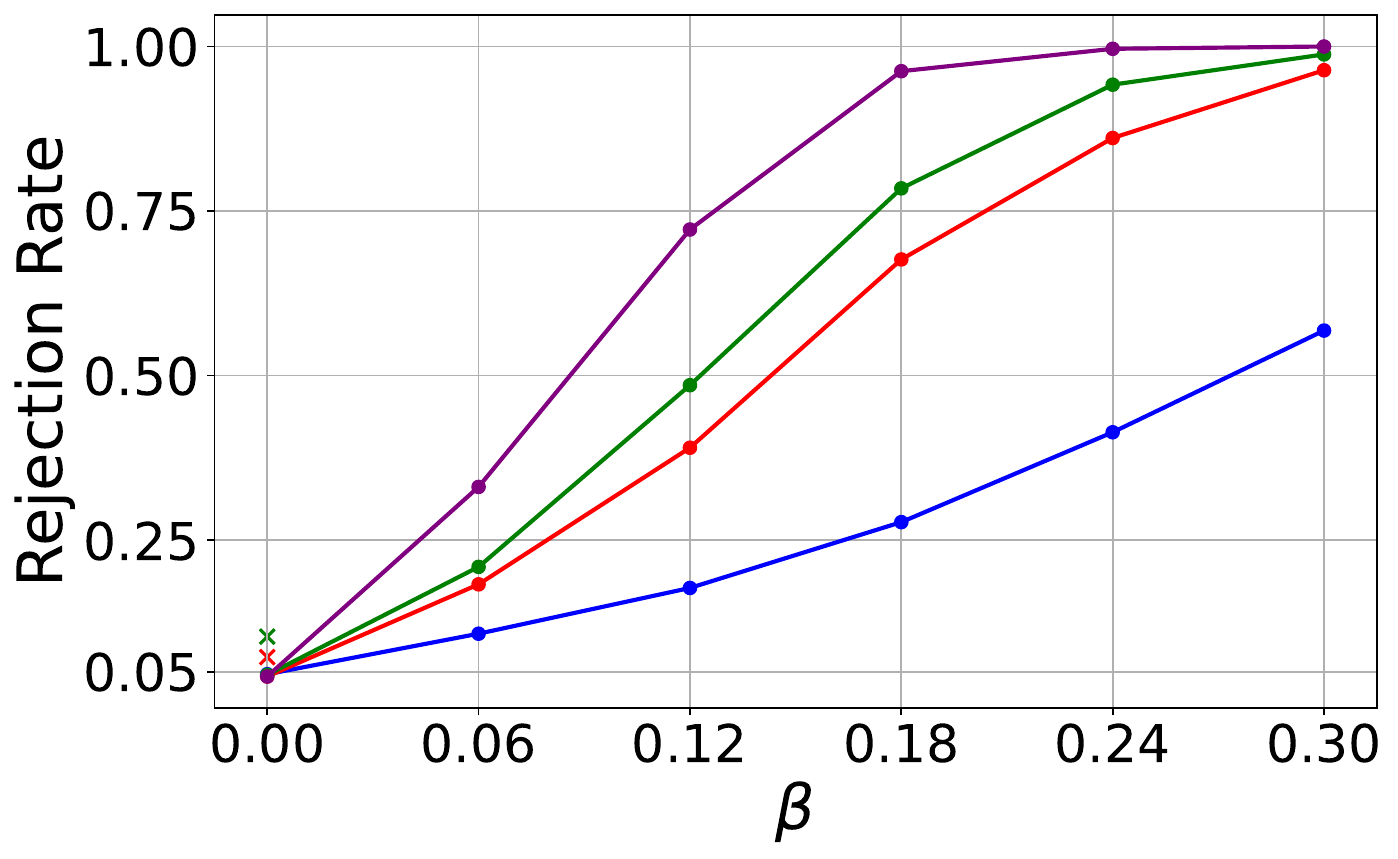}
        \caption{Model 4 ($N=1000$)}
      \end{subfigure}    
      \caption{Type-I error rate (when effect size $\beta=0$) and power (when effect size $\beta>0$) of Methods 1--4 under Models 1--4 with missing covariates. The sample size is set to $N=50$ and $N=1000$ (level $\alpha=0.05$). The covariate missingness rate is $25\%$ (for covariate $x_{ij5}$).} 
      
      \label{fig: covariate missing simulations}
\end{figure}

The simulation results under both covariate and outcome missingness, presented in Figure~\ref{fig: covariate missing simulations}, follow a similar pattern to those in Figure~\ref{fig: single outcome simulations} of the main text. Specifically, under the considered simulation settings, our proposed framework maintains the finite-population-exact type-I error rate (unlike model-based methods, which may violate type-I error rates) while achieving substantially higher power than randomization tests based on non-informative imputation, such as median imputation.

\subsection*{Appendix B.4: Simulation Studies with $T_{M}(\mathbf{Z}, \mathbf{M})$}

Note that the test statistics $T_{M}(\mathbf{Z}, \mathbf{M})$ (the number of missing outcomes among the treated subjects) defined in Remark~\ref{rem: missingness as test} is \textit{not} a valid falsification test for Assumption~\ref{assump: conditional indep}. Specifically, recall that Assumption~\ref{assump: conditional indep} states that outcome missingness $\mathbf{M}$ only depends on observed covariates $\mathbf{X}$, unobserved covariates $\mathbf{U}$, and the true outcomes $\mathbf{Y}$. Under this assumption, the treatment assignments $\mathbf{Z}$ can affect missingness only through their effects on outcomes $\mathbf{Y}$. In other words, Assumption~\ref{assump: conditional indep} only implies that $\mathbf{Z}$ has no effect on $\mathbf{M}$ \textit{if Fisher's sharp null $H_{0}$ holds (i.e., if $\mathbf{Z}$ has no effect on $\mathbf{Y}$)}, but does \textit{not} imply that $\mathbf{Z}$ has no effect on $\mathbf{M}$ regardless of whether $H_{0}$ holds. Instead, Assumption~\ref{assump: conditional indep} allows $\mathbf{Z}$ to affect $\mathbf{M}$ if it has a treatment effect on $\mathbf{Y}$ (through the path $\mathbf{Z}\rightarrow \mathbf{Y} \rightarrow \mathbf{M}$).
    
Therefore, if the test statistic $T_{M}(\mathbf{Z}, \mathbf{M})$ detects an association between $\mathbf{Z}$ and $\mathbf{M}$, there are two possibilities: 
\begin{itemize}
    \item Assumption~\ref{assump: conditional indep} holds, but Fisher's sharp null $H_{0}$ does not hold;
    \item Assumption~\ref{assump: conditional indep} does not hold.
\end{itemize}
Since $T_{M}(\mathbf{Z}, \mathbf{M})$ cannot distinguish between these two possibilities, it is not a valid test for Assumption~\ref{assump: conditional indep}.

To illustrate this point, we conduct a simulation study to present cases where \textit{Assumption~\ref{assump: conditional indep} holds}, yet the rejection rate based on $T_{M}(\mathbf{Z}, \mathbf{M})$ (i.e., the proportion of times the $p$-value from $T_{M}(\mathbf{Z}, \mathbf{M})$ falls below the significance level 0.05) far exceeds the nominal 0.05 level. This simulation highlights that the test statistic $T_{M}(\mathbf{Z}, \mathbf{M})$ is not a valid test for Assumption~\ref{assump: conditional indep}. Specifically, we adopt the same settings and Models 1--4 as described in Section~\ref{subsec: simulation studies}. Note that all these models satisfy Assumption~\ref{assump: conditional indep}. In Table~\ref{tab: T_M outcomes simulation}, we report the rejection rates of $T_{M}(\mathbf{Z}, \mathbf{M})$ under Models 1--4 for various sample sizes $N$ and effect sizes $\beta$. The results show that the rejection rates of $T_{M}(\mathbf{Z}, \mathbf{M})$ can be much higher than the nominal 0.05 level \textit{even when Assumption~\ref{assump: conditional indep} holds}, particularly for large effect size $\beta$. This suggests that $T_{M}(\mathbf{Z}, \mathbf{M})$ cannot guarantee valid type-I error rate control for testing Assumption~\ref{assump: conditional indep} because it would misinterpret an actual treatment effect as evidence against Assumption~\ref{assump: conditional indep}.

\setlength{\tabcolsep}{3pt} 
\begin{table}[ht]
\centering
\begin{tabular}{lcccc}
\toprule
\multirow{2}{*}{}&\multicolumn{2}{c}{$N = 50$}&\multicolumn{2}{c}{$N = 1000$} \\
\cmidrule(rl){2-3} \cmidrule(rl){4-5} 
 &  {$\beta=0.5$} &{$\beta = 1$}  & {$\beta=0.5$}& {$\beta=1$} \\
\midrule
Model 1   & 0.08 & 0.14 & 0.44 &  0.92  \\
Model 2     & 0.08 & 0.14  & 0.42 & 0.87  \\
Model 3  & 0.16 &  0.35   & 0.95 & 1.00   \\
Model 4   & 0.14 & 0.23   & 0.92 & 1.00 \\
\bottomrule
\end{tabular}
\caption{The rejection rates of $T_{M}(\mathbf{Z}, \mathbf{M})$ under Models 1--4 with total sample size $N=50$ and $N=1000$ (based on level $\alpha=0.05$ and 2000 simulated datasets). }
\label{tab: T_M outcomes simulation}
\end{table}

As discussed in Remark~\ref{rem: missingness as test}, under Assumption~\ref{assump: conditional indep}, the $T_{M}(\mathbf{Z}, \mathbf{M})$ is in principle a valid test for Fisher's sharp null $H_{0}$. However, we do not recommend using it for testing $H_{0}$ because it has no statistical power when the treatment has an effect on the true outcomes $\mathbf{Y}$ but no effect on outcome missingness $\mathbf{M}$ (i.e., when Assumption~\ref{assump: conditional indep with more restrictions} in Section~\ref{sec: CI with missing outcomes} holds, which is an important special case of Assumption~\ref{assump: conditional indep}). Also, even when Assumption~\ref{assump: conditional indep} holds but Assumption~\ref{assump: conditional indep with more restrictions} does not, by leveraging the covariate information during outcome imputation, the proposed imputation-assisted randomization tests described in Algorithm~\ref{alg: part} is typically more powerful than $T_{M}(\mathbf{Z}, \mathbf{M})$ when testing $H_{0}$. Specifically, we report the power of $T_{M}(\mathbf{Z}, \mathbf{M})$ and that of Algorithm~\ref{alg: part} under Models 1--4 considered in Section~\ref{subsec: simulation studies} in Figure~\ref{fig: missingness as outcome tests}. The simulation results in Figure~\ref{fig: missingness as outcome tests} suggest that the proposed imputation-assisted randomization tests described in Algorithm~\ref{alg: part} are more powerful than $T_{M}(\mathbf{Z}, \mathbf{M})$ when testing $H_{0}$ under these settings.

\begin{figure}[H]
      \centering
        \captionsetup[subfigure]{skip=2pt} 

        \includegraphics[width=\textwidth]{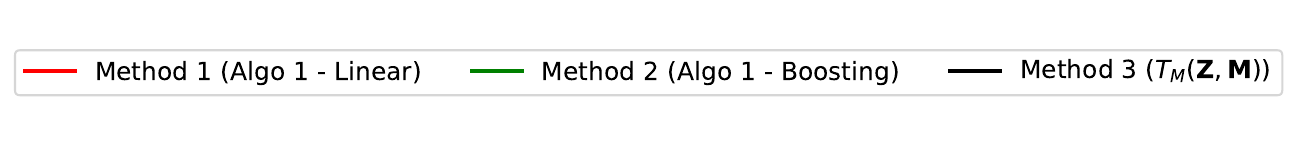}\vspace{-14pt} 
      
      \begin{subfigure}[b]{0.4\textwidth}
        \includegraphics[width=\textwidth]{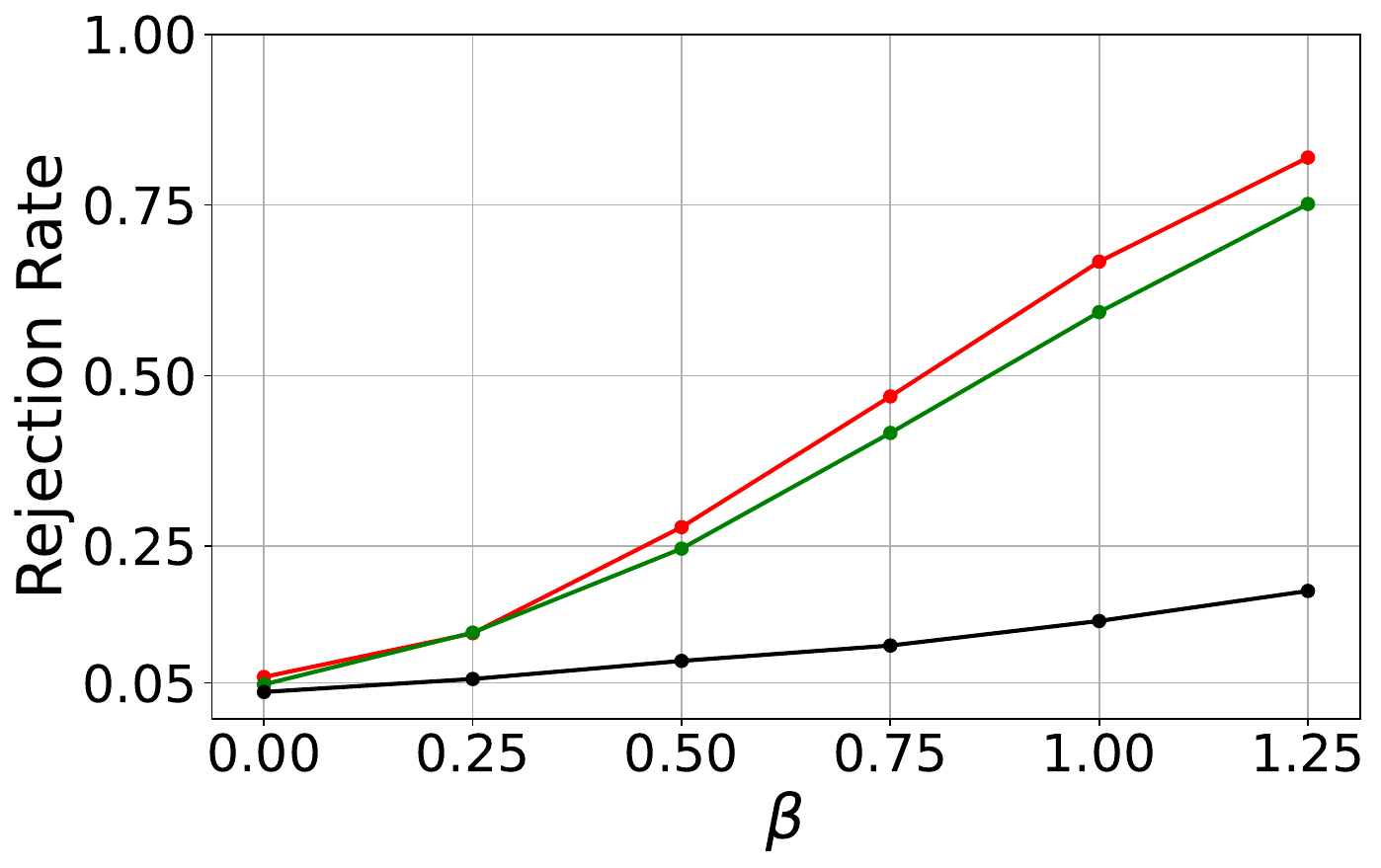}
        \caption{Model 1 ($N=50$)}
      \end{subfigure}
      \hspace{0.1cm}
      \begin{subfigure}[b]{0.4\textwidth}
        \includegraphics[width=\textwidth]{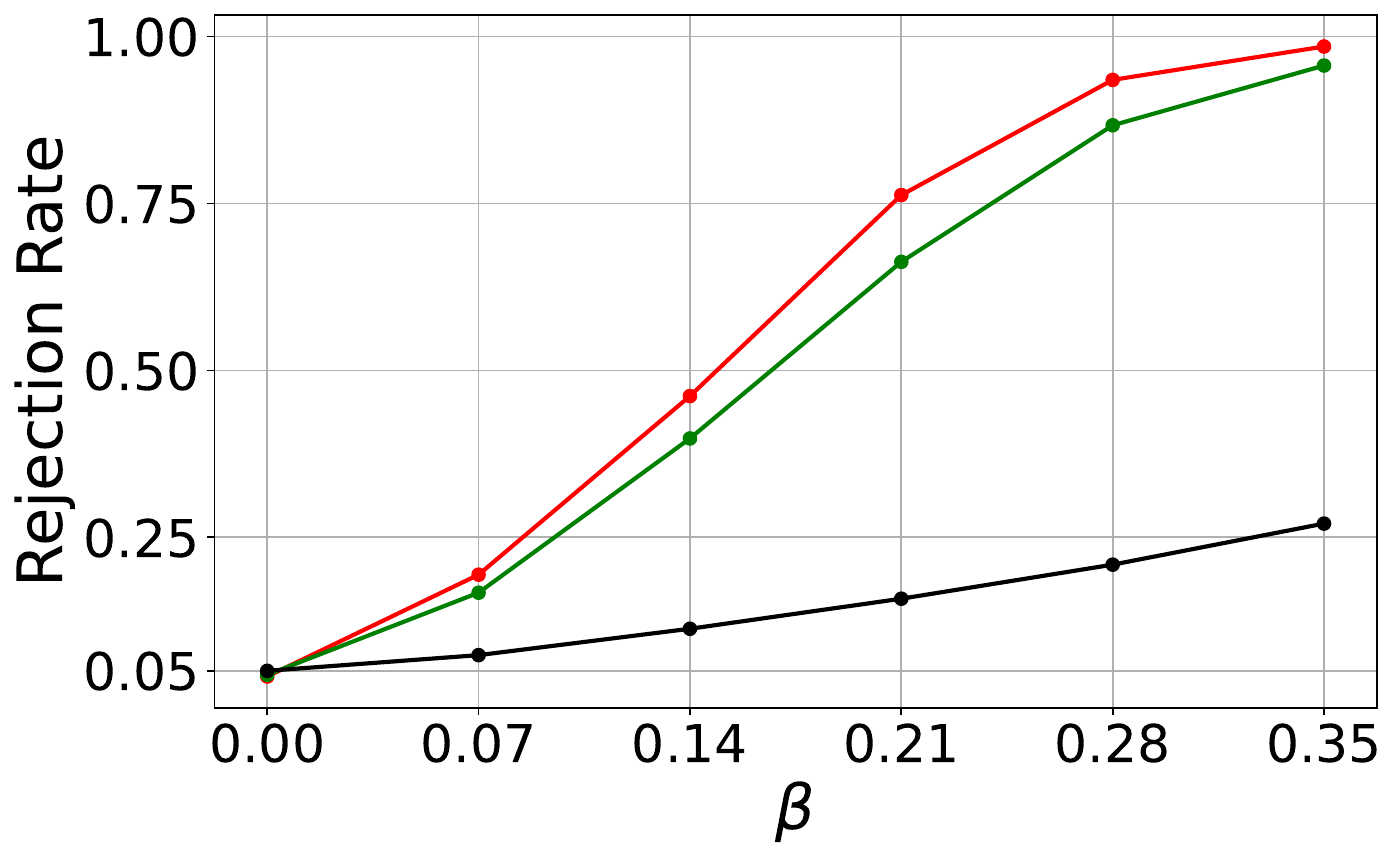}
        \caption{Model 1 ($N=1000$)}
      \end{subfigure}


      \begin{subfigure}[b]{0.4\textwidth}
        
        \includegraphics[width=\textwidth]{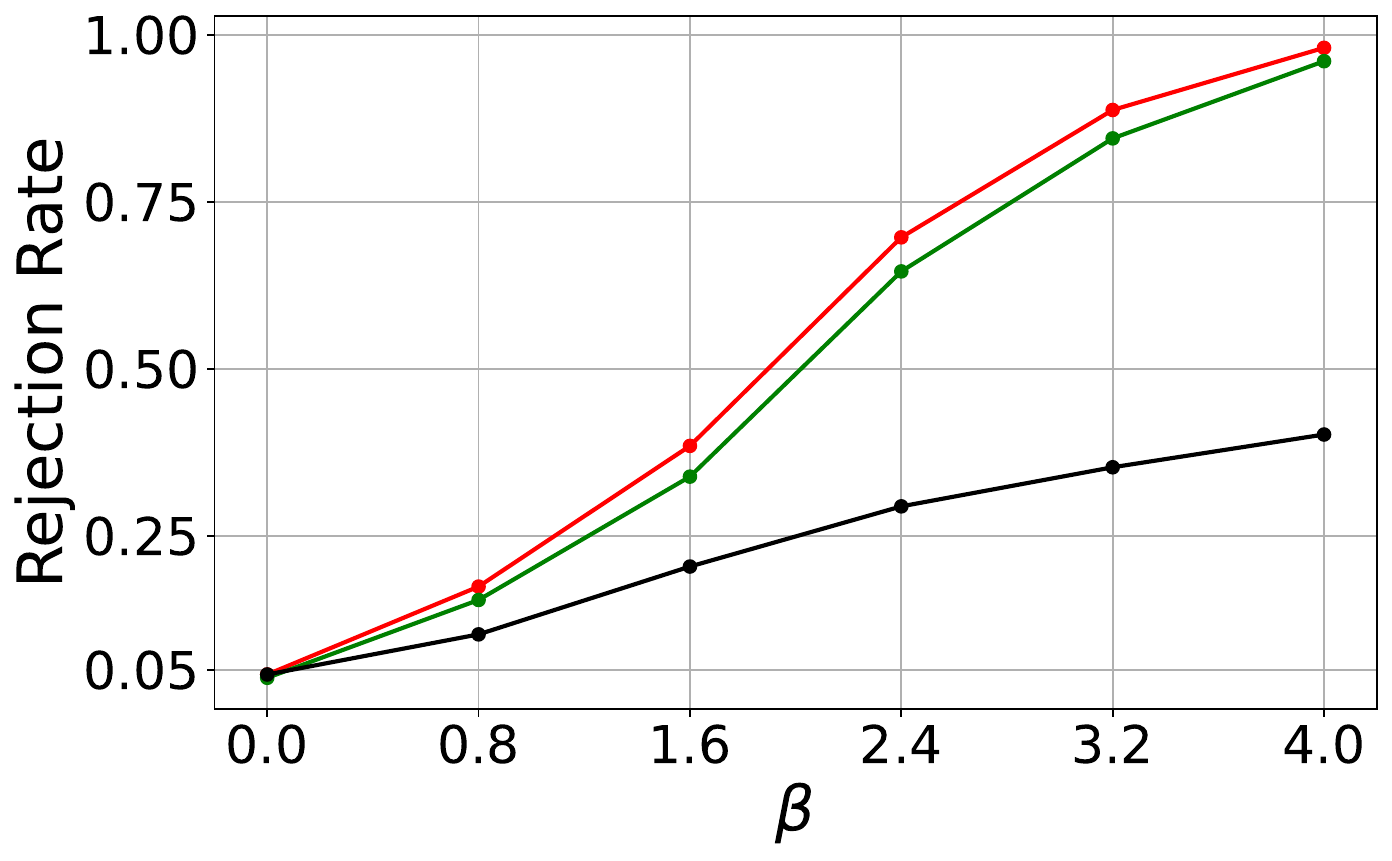}
        \caption{Model 2 ($N=50$)}
      \end{subfigure}
      \hspace{0.1cm}
      \begin{subfigure}[b]{0.4\textwidth}
        
        \includegraphics[width=\textwidth]{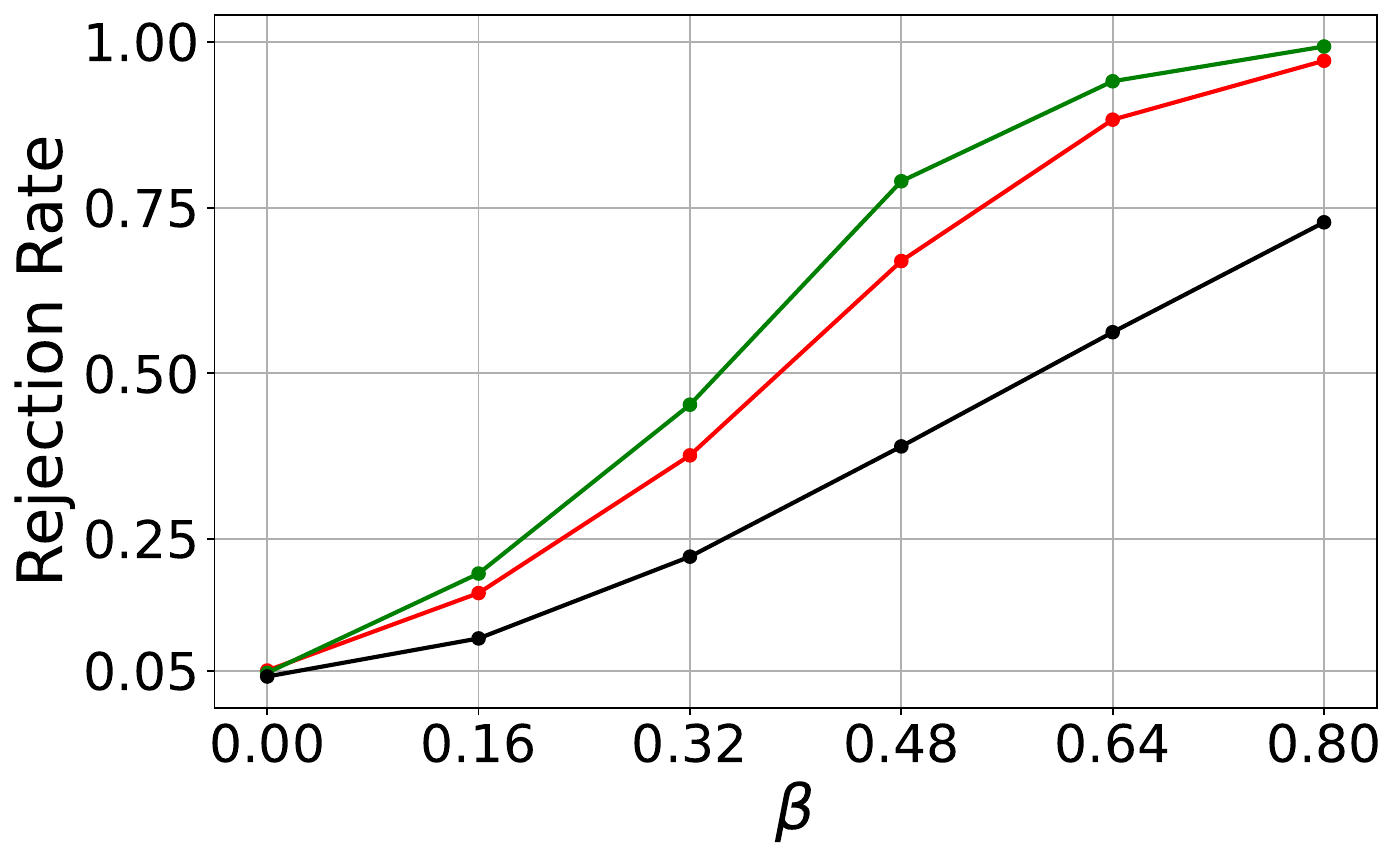}
        \caption{Model 2 ($N=1000$)}
      \end{subfigure}
    
    
      \begin{subfigure}[b]{0.4\textwidth}
        
        \includegraphics[width=\textwidth]{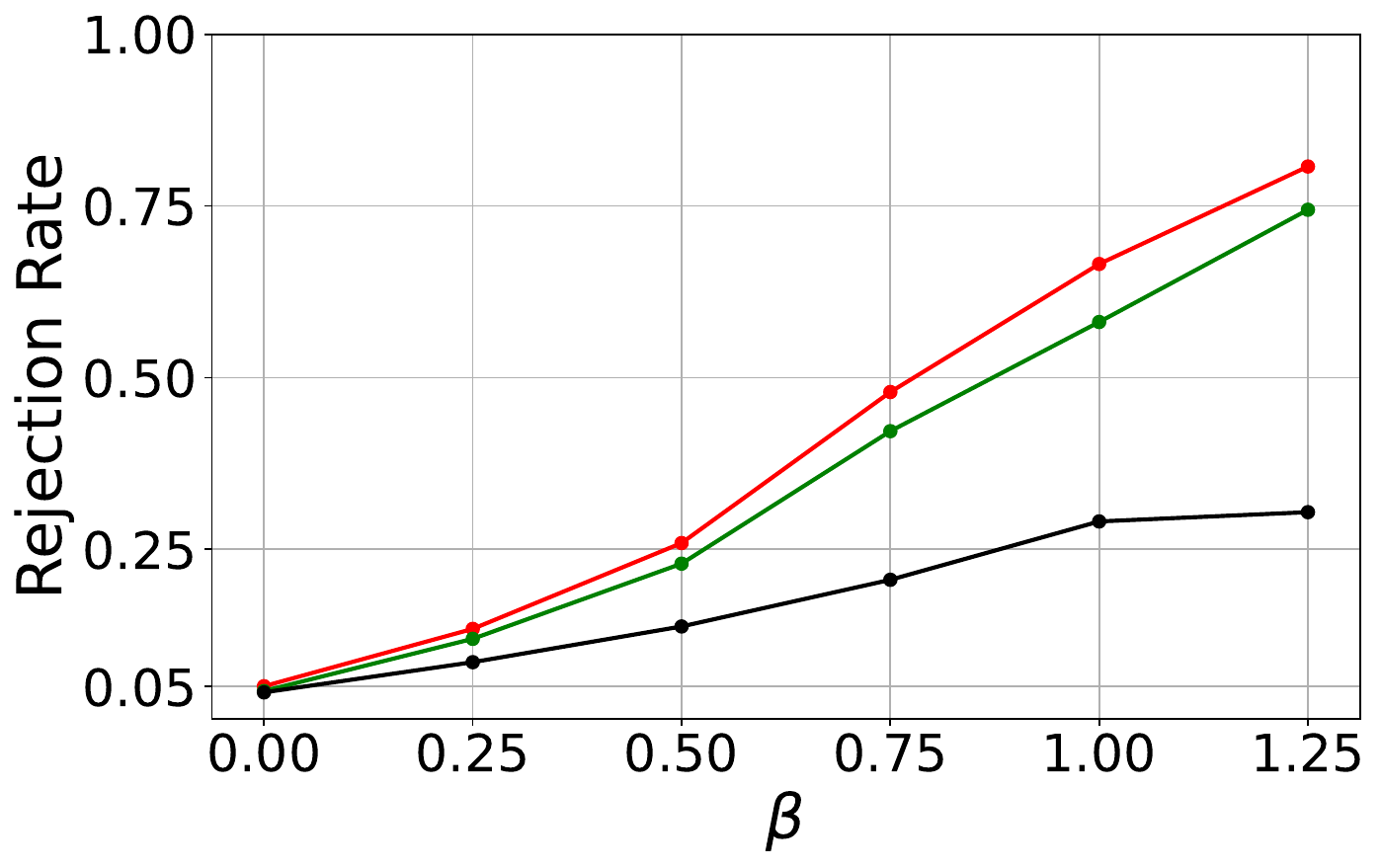}
        \caption{Model 3 ($N=50$)}
      \end{subfigure}
      \hspace{0.1cm}
      \begin{subfigure}[b]{0.4\textwidth}
        
        \includegraphics[width=\textwidth]{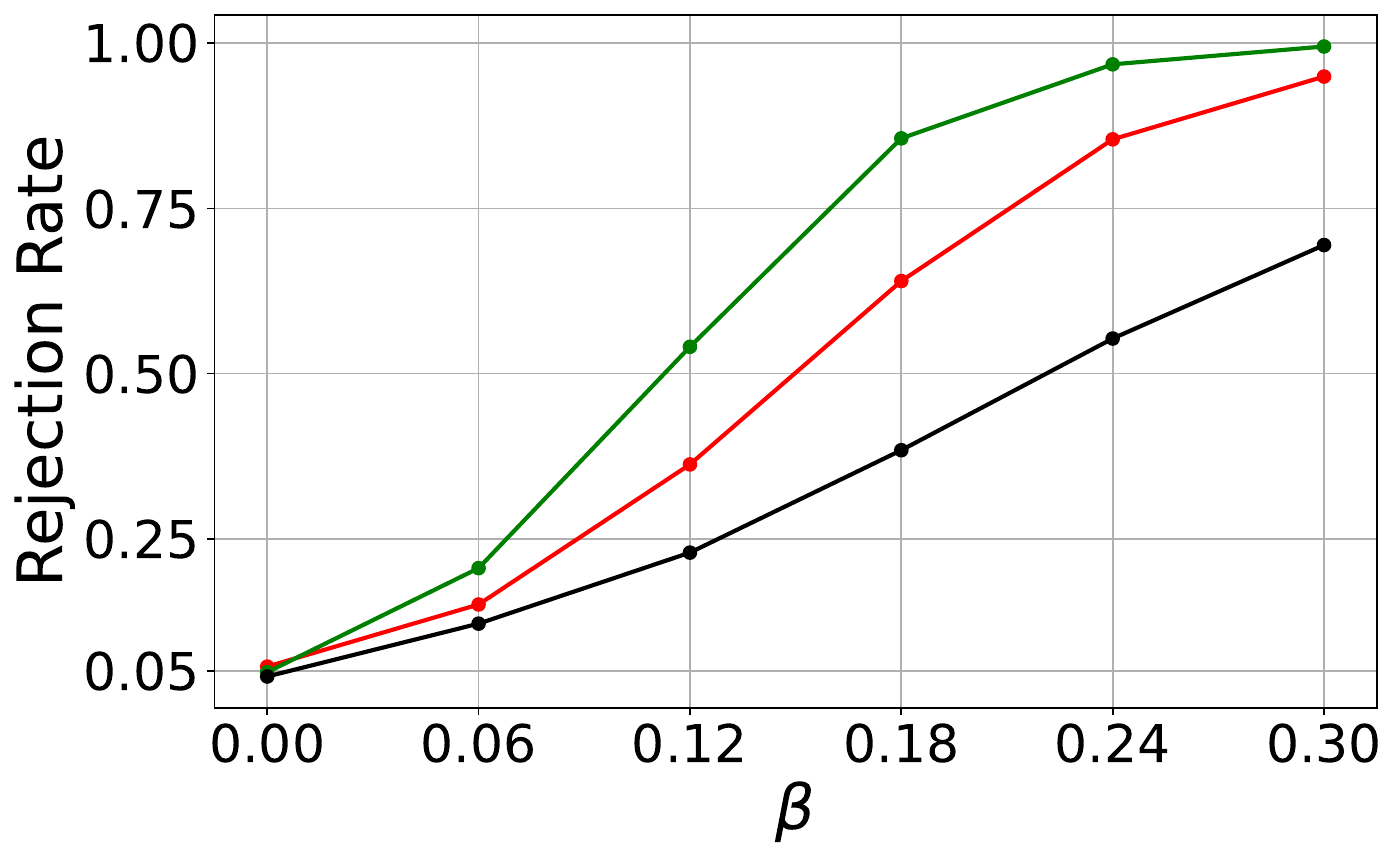}
        \caption{Model 3 ($N=1000$)}
      \end{subfigure}

      
      \begin{subfigure}[b]{0.4\textwidth}
        
        \includegraphics[width=\textwidth]{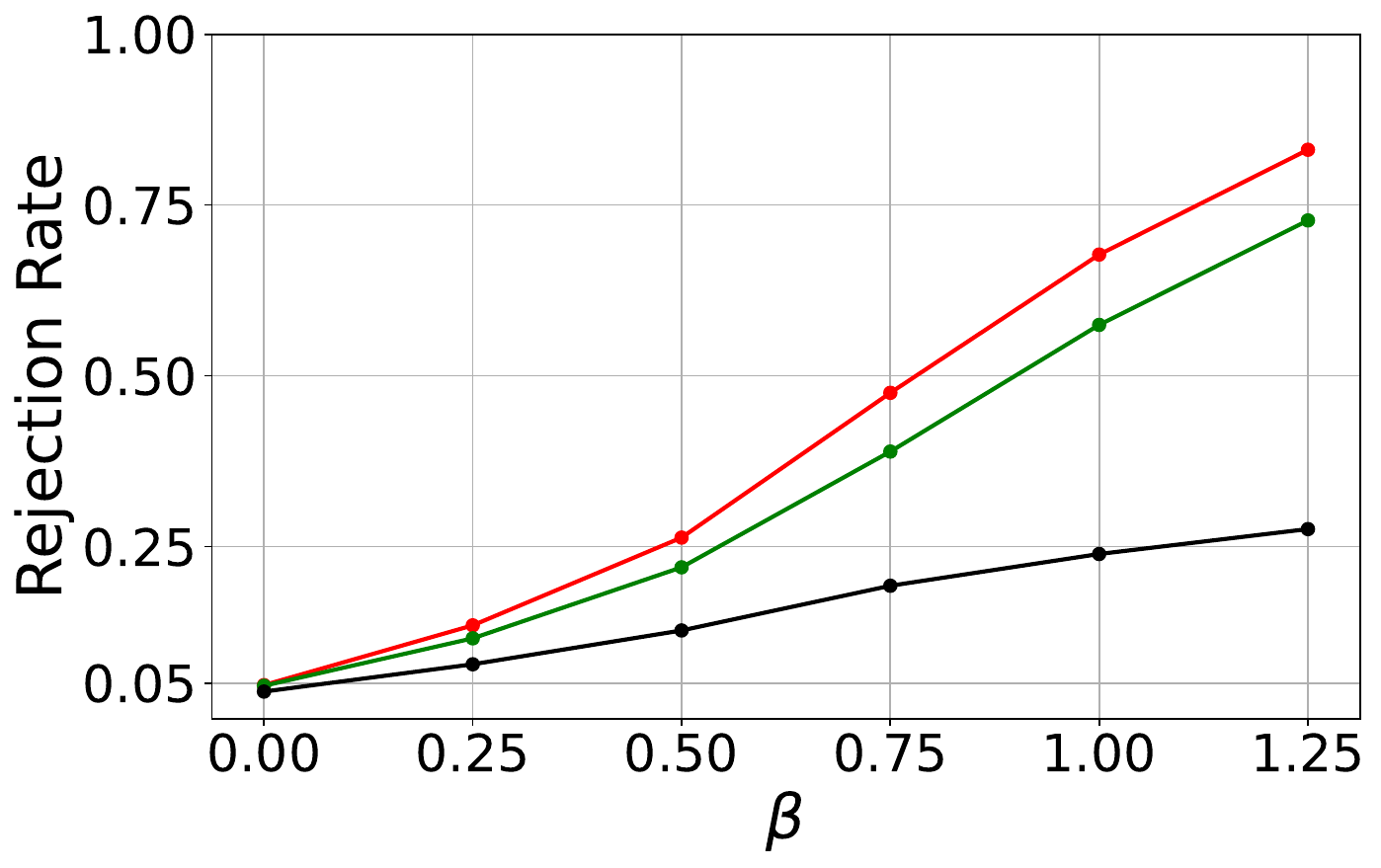}
        \caption{Model 4 ($N=50$)}
      \end{subfigure}
      \hspace{0.1cm}
      \begin{subfigure}[b]{0.4\textwidth}
        
        \includegraphics[width=\textwidth]{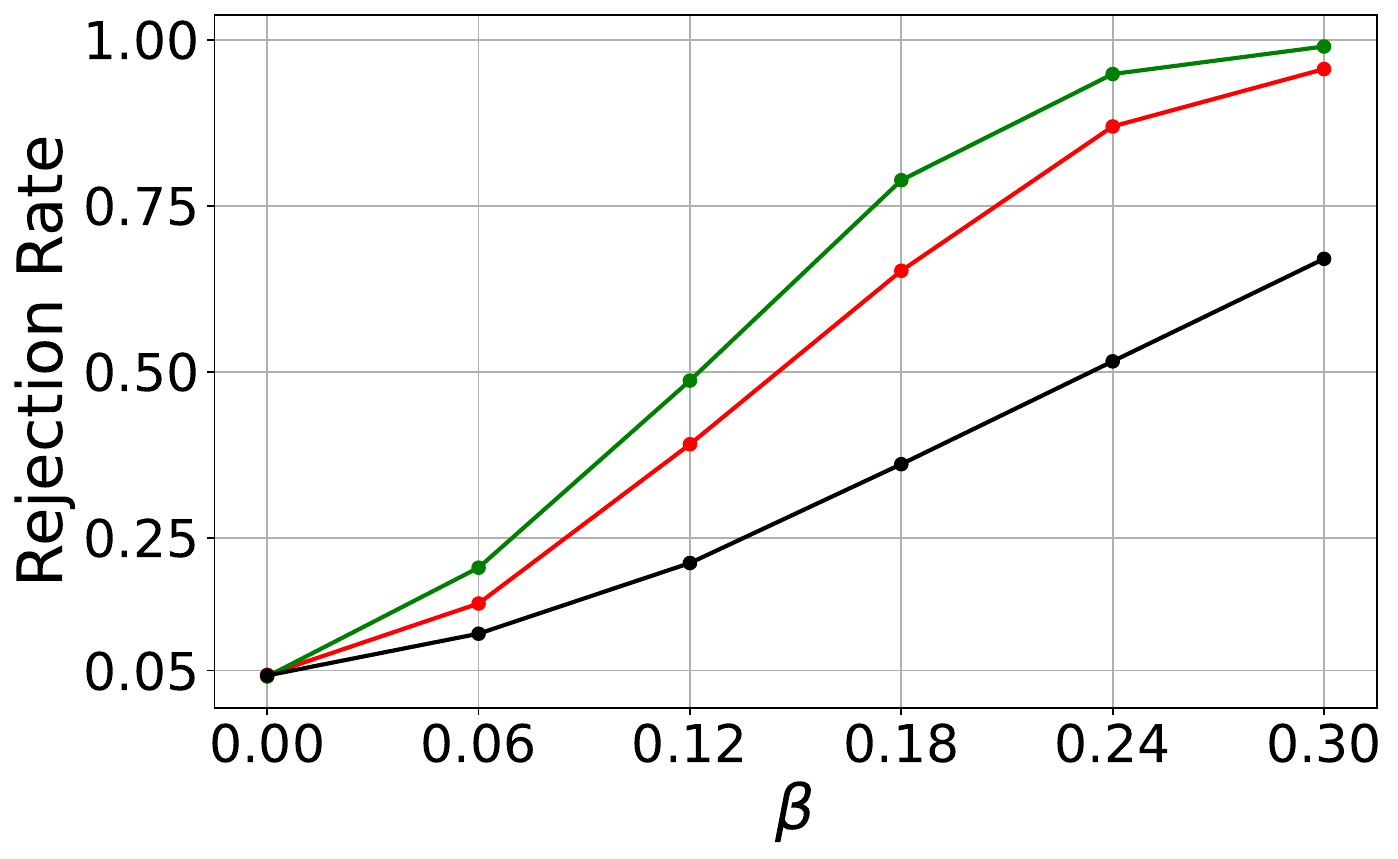}
        \caption{Model 4 ($N=1000$)}
      \end{subfigure}    
      \caption{Type-I error rate (when effect size $\beta=0$) and power (when effect size $\beta>0$) of our method and those of $T_{M}(\mathbf{Z}, \mathbf{M})$ with sample size $N=50$ and $N=1000$ (level $\alpha=0.05$). }
      \label{fig: missingness as outcome tests}
\end{figure}

\subsection*{Appendix B.5: Simulation Studies with Multiple Outcomes}

We then consider the multiple missing outcomes case. Specifically, we consider the following data-generating process for the three true outcomes $Y_{ij1}$, $Y_{ij2}$, and $Y_{ij3}$ and the corresponding missingness indicators $M_{ij1}$, $M_{ij2}$, and $M_{ij3}$:
\begin{itemize}
    \item Model 5 (Multiple outcomes; Heterogeneous treatment effects; non-linear model for the true outcomes; non-linear selection model for the missingness status; with interference in the missingness mechanism): $Y_{ij1} =\frac{1}{4}\beta Z_{ij}  +  \sum \limits_{p=1}^{5} x_{ijp} + \frac{1}{\sqrt{5}} \sum \limits_{p, p'} x_{ijp} x_{ijp^{\prime}} + \sin(u_{ij}) + \alpha_{i1} + \epsilon_{ij1}$, $Y_{ij2} = \beta Z_{ij}\left( 1 + x_{ij1} + u_{ij}\right) - \frac{1}{\sqrt{5}} \sum \limits_{p, p'} x_{ijp} \sigma(1 - x_{ijp^{\prime}}) + \alpha_{i2} + \epsilon_{ij2}$, $Y_{ij3} = \beta Z_{ij}  \sum \limits_{p=1}^{5} |x_{ijp}|  + \frac{1}{5}\sum\limits_{p, p^{\prime}, p^{\prime \prime}} x_{ijp} x_{ijp^{\prime}} \cos(x_{ijp''}) + u_{ij} + \alpha_{i3} + \epsilon_{ij3}$,  \newline $M_{ij1} = \mathbbm{1}\left\{ \frac{1}{\sqrt{5}} \sum \limits_{p=1}^{5} \sigma(x_{ijp}) + \frac{1}{5}  \sum \limits_{p, p'} x_{ijp} x_{ijp^{\prime}} + 5\sigma(Y_{ij1}) + u_{ij} > \lambda_1 \right\}$, 
    \newline $M_{ij2} = \mathbbm{1}\left\{ \frac{1}{\sqrt{5}} \sum \limits_{p=1}^{5} x_{ijp}^3 + \frac{1}{5} \sum \limits_{p, p'} x_{ijp} x_{ijp^{\prime}} + \frac{5}{2} \sum \limits_{k=1}^{3} \sigma(Y_{ijk}) + \sigma(1 - u_{ij}) > \lambda_2 \right\}$, \newline and $M_{ij3} = \mathbbm{1}\left\{ \frac{1}{\sqrt{5}} \sum \limits_{p=1}^{5} p x_{ijp} + \frac{1}{5\sqrt{5}}\sum\limits_{p, p^{\prime}, p^{\prime \prime}} x_{ijp} x_{ijp'} x_{ijp''} + \frac{5}{3} \sum \limits_{k=1}^{3} \sigma(Y_{ijk}) + \sin(u_{ij}^2) > \lambda_3 \right\}$.

\end{itemize}
In Model 5, the stratum-level random effect $\alpha_{ik}$ are generated through: $\alpha_{ik}\overset{\text{i.i.d.}}{\sim} N(0, 0.1)$ for each stratum $i$ and for each $k = 1, 2, 3$, and the individual-level random effect $\epsilon_{ijk}$ are generated through the following process: $\epsilon_{ijk} \overset{\text{i.i.d.}}{\sim} N(0, 0.2)$, for each subject $ij$ and each outcome $k = 1, 2, 3$. Therefore, the total variance of the three terms that observed data cannot capture is $\text{var}(u_{ij})+\text{var}(\alpha_{ik})+\text{var}(\epsilon_{ijk})=0.5$. We set $(\lambda_{1}, \lambda_{2}, \lambda_{3})$ such that the missingness rate is $50\%$ for each outcome and for each simulated dataset. In Table~\ref{tab: multiple outcomes simulation}, we report the simulated power of Methods 1--4 (using the Holm-Bonferroni method to adjust for multiple outcomes) under Model 5, setting level $\alpha=0.05$. Table~\ref{tab: multiple outcomes simulation} shows, similar to the single outcome case considered in Models 1--4, the imputation and re-imputation approach (Methods 2 and 3) can also increase the statistical power of a randomization test with multiple missing outcomes.

\setlength{\tabcolsep}{3pt} 
\begin{table}[ht]
\centering
\begin{tabular}{lcccc}
\toprule
\multirow{2}{*}{}&\multicolumn{2}{c}{$N = 50$}&\multicolumn{2}{c}{$N = 1000$} \\
\cmidrule(rl){2-3} \cmidrule(rl){4-5} 
 &  {$\beta=1.50$} &{$\beta = 2.00$}  & {$\beta=0.20$}& {$\beta=0.25$} \\
\midrule
Method 1 (Median Imputation)  & 0.40 & 0.53 & 0.36 &  0.45  \\
Method 2 (Algo 1 -- Linear)    & 0.80 & 0.92  & 0.58&0.74  \\
Method 3 (Algo 1 -- Boosting) & 0.61 &  0.74  & 0.67&0.85   \\
Method 4 (Oracle)  & 0.97 & 0.99   &0.87 & 1.00 \\
\bottomrule
\end{tabular}
\caption{Power of Methods 1--4 under Model 5 (the multiple outcomes case), with total sample size $N=50$ and $N=1000$ (based on level $\alpha=0.05$ and 2000 simulated datasets). }
\label{tab: multiple outcomes simulation}
\end{table}
\vspace{-0.5 cm}

\section*{Appendix C: Additional Details on the Simulation Studies in Section 3.3 of the Main Text}

\subsection*{Appendix C.1: Implementation of Imputation-Assisted Randomization Tests Using Chained Equations with Bayesian Ridge and Gradient Boosting Predictive Models}
For implementing the imputation-assisted randomization tests under the imputation and re-imputation framework, we need to specify an outcome imputation algorithm $\mathcal{G}$. For implementing the outcome imputation algorithm $\mathcal{G}$ in the simulation studies, we utilize Chained Equations Imputation, which imputes missing data through an iterative series of predictive models (for which we use a linear predictive model and a non-linear boosting predictive model, respectively). Specifically, we implement Chained Equations Imputation through the \texttt{IterativeImputer} module in the \textsf{Python} library \texttt{scikit-learn} (\citealp{pedregosa2011scikit}), in which almost all the parameters are set to their default values except the following three parameters: (i) we explicitly specify some predictive model embedded in Chained Equations Imputation through the command \texttt{estimator=Regressor()}, which will be described in detail in later paragraphs; (ii) we set \texttt{max\_iter} to 3 to reduce simulation time (the default value is 10); and (iii) we set \texttt{random\_state} to 0 to ensure reproducibility. The corresponding \textsf{Python} command is presented below:
\begin{verbatim}
IterativeImputer(estimator=Regressor(), max_iter=3)
\end{verbatim}

Under the Chained Equations Imputation framework, we consider two predictive models (corresponding to specifying the \texttt{Regressor()} in the above command), one linear and one non-linear (boosting). 
\begin{itemize}

    \item (Linear Predictive Model): The linear predictive model embedded in Chained Equations Imputation is Bayesian ridge regression (MacKay, 1992; Tipping, 2001), which is a widely used form of regularized linear regression that can offer enhanced robustness in the face of ill-posed problems compared with a traditional linear regression. All parameters were retained at their default settings during the implementation.
    
    \item (Boosting Predictive Model): The boosting predictive models embedded in Chained Equations Imputation are XGBoost regressor (\citealp{chen2016xgboost}) (when the sample size $N=50$) and LightGBM regressor (\citealp{ke2017lightgbm}) (when sample size $N=1000$), two commonly used robust gradient boosting machines; see Section~\ref{subsec: simulation studies} in the main text for details. All parameters are set to default values. 

\end{itemize}

\subsection*{Appendix C.2: Computation Time of the Proposed Framework}
We report the average computation time of the proposed imputation and re-imputation framework described in Algorithm~\ref{alg: part}, calculated under Models 1--4 in Table~\ref{tab: computation time}. Each value represents the average over $1000$ simulated datasets. All models were run on an Intel Xeon Platinum 8268 CPU @ 2.90GHz with a single core. 

\begin{table}[ht]
    \centering
    \caption{Average computation time (in seconds) of Algorithm~\ref{alg: part} under Models 1--4 based on linear imputation model and boosting, respectively. Each value is obtained by averaging over the 1000 simulated datasets. }
    \label{tab: computation time}
    \setlength{\tabcolsep}{10pt} 
    \begin{tabular}{lcccc}
        \toprule
        \multirow{2}{*}{} & \multicolumn{2}{c}{$N = 50$} & \multicolumn{2}{c}{$N = 1000$} \\
        \cmidrule(lr){2-3} \cmidrule(lr){4-5}
        & Algo 1 - Linear & Algo 1 - Boosting & Algo 1 - Linear & Algo 1 - Boosting \\
        \midrule
        Model 1 & 196.74 & 1837.24 & 217.22 & 7763.28 \\
        Model 2 & 192.74 & 1834.37 & 217.24 & 7744.24 \\
        Model 3 & 192.80 & 1832.37 & 216.31 & 7734.58 \\
        Model 4 & 192.57 & 1831.69 & 215.69 & 7736.12 \\
        \bottomrule
    \end{tabular}
    \vspace{0.3cm} 
\end{table}

As expected, Table~\ref{tab: computation time} shows that the computational cost of applying boosting methods (XGBoost when $N=50$ and LightGBM when $N=1000$) in Algorithm~\ref{alg: part} is much higher than that of applying linear models. Given the fact that Algorithm~\ref{alg: part} based on boosting methods typically provides greater statistical power than linear models when the underlying data-generating processes of outcome and missingness mechanisms are complex, a meaningful future research direction would be to better understand the trade-off between statistical power and computational cost under the proposed imputation and re-imputation framework.

Here, we provide some practical guidance for researchers implementing the imputation and re-imputation framework: 
\begin{itemize}
\item Small sample sizes: Computational cost is not a major concern when choosing between a simple linear model and a more complex machine learning method (e.g., boosting), as both should yield results within an acceptable time frame.
 \item Large sample sizes with limited computational resources: Linear models are a more computationally efficient choice.  
\item Large sample sizes with ample computational resources: Researchers can use parallel computing and/or GPUs to significantly reduce computation time when applying boosting methods (or other machine learning techniques). Specifically, the re-imputation step in Algorithm~\ref{alg: part} can be executed in parallel across permuted datasets. If multiple CPU or GPU cores are available, parallelization can substantially decrease computation time. Furthermore, since boosting and many other machine learning methods are highly optimized for GPU architectures, using high-performance GPUs can further accelerate computation.
\end{itemize}

\subsection*{Appendix C.3: Simulation Studies on the Convergence Rate of $p$-value Reported by the Proposed Framework}

One important property of Theorems~\ref{thm: hypothesis testing} and \ref{thm: hypothesis testing with covariate adjustment} is that the convergence rate (as the number of re-imputation runs $L$ goes to infinity) of $p$-value under Fisher's sharp null $H_{0}$ reported by Algorithms~\ref{alg: part} and \ref{alg: part with covariate adjustment} does not depend on the specific imputation algorithm (e.g., either a finite-dimensional parametric imputation model or non-parametric methods such as XGBoost) adopted in Algorithms~\ref{alg: part} and \ref{alg: part with covariate adjustment}. This is because, \textit{under Fisher's sharp null $H_{0}$} and Assumption~\ref{assump: conditional indep}, the treatment information \textit{has no predictive power} (i.e., no better than a random guess) for missing outcomes during the re-imputation runs, regardless of the imputation algorithm adopted in Algorithms~\ref{alg: part} and \ref{alg: part with covariate adjustment}; see the proofs of Theorems~\ref{thm: hypothesis testing} and \ref{thm: hypothesis testing with covariate adjustment} in Appendix A for details. 

To illustrate the above point under Models 1--4 considered in Section~\ref{subsec: simulation studies} and Fisher's sharp null $H_{0}$ (i.e., when setting $\beta=0$ in Models 1--4), in Table~\ref{tab: convergence rate}, we report the number of re-imputation runs $L$ required for the $p$-value reported by Algorithm~\ref{alg: part} to achieve some specific convergence level. Specifically, we report the minimal number of re-imputation runs $L^{*}$ (averaged over 2000 simulation datasets) such that as the number of re-imputation runs $L$ surpasses $L^{*}$ and continues to increase, the changes of $p$-values across different $L$ will be less than some prespecified $\gamma$ (e.g., set $\gamma=0.01$ or $0.005$). From Table~\ref{tab: convergence rate}, we can find that the minimal number of re-imputation runs required by some level of convergence of $p$-value reported under the proposed framework does not vary much across different imputation algorithms (e.g., linear imputation model or XGBoost). This confirms that the convergence rate of  $p$-value reported by the proposed framework does not change across different imputation algorithms adopted in the proposed framework.


\begin{table}[H]
    \centering
    \caption{The minimal number of re-imputation runs $L^{*}$ needed for the $p$-value to achieve certain level of convergence (measured by $\gamma$). Each $L^{*}$ reported in the table has been averaged over the $2000$ simulation datasets. }
    \setlength{\tabcolsep}{10pt} 
    \begin{tabular}{lcccc}
        \toprule
        \multirow{2}{*}{$\gamma=0.01$} & \multicolumn{2}{c}{$N = 50$} & \multicolumn{2}{c}{$N = 1000$} \\
        \cmidrule(lr){2-3} \cmidrule(lr){4-5}
        & Algo 1 - Linear & Algo 1 - Boosting & Algo 1 - Linear & Algo 1 - Boosting \\
        \midrule
        Model 1 & 2002.09 & 1945.93 & 2003.97 & 2019.65 \\
        Model 2 & 2038.79 & 2027.33 & 2016.74 & 2003.43 \\
        Model 3 & 1975.20 & 2025.84 & 1963.17 & 1948.30 \\
        Model 4 & 1951.29 & 1950.46 & 1965.92 & 1979.09 \\
        \toprule
        \multirow{2}{*}{$\gamma=0.005$} & \multicolumn{2}{c}{$N = 50$} & \multicolumn{2}{c}{$N = 1000$} \\
        \cmidrule(lr){2-3} \cmidrule(lr){4-5}
        & Algo 1 - Linear & Algo 1 - Boosting & Algo 1 - Linear & Algo 1 - Boosting \\
        \midrule
        Model 1 & 4418.34 & 4339.03 & 4491.54 & 4461.29 \\
        Model 2 & 4519.73 & 4527.74 & 4489.19 & 4433.36 \\
        Model 3 & 4405.48 & 4430.10 & 4400.44 & 4449.95 \\
        Model 4 & 4361.75 & 4327.80 & 4448.62 & 4403.44 \\
        \bottomrule
    \end{tabular}
    \label{tab: convergence rate}
\end{table}

\section*{Appendix D: Additional Discussions and Remarks}

\begin{remark}\label{rem: counterexample to Assumption 3}
\emph{Assumption~\ref{assump: conditional indep} can be violated if the treatment assignments $\mathbf{Z}$ have a direct effect on outcome missingness $\mathbf{M}$ (i.e., there is a direct path from $\mathbf{Z}$ to $\mathbf{M}$) in addition to the effect through the path $\mathbf{Z}\rightarrow \mathbf{Y}\rightarrow \mathbf{M}$; see Figure~\ref{fig: counterexample} for a detailed illustration. } 
\begin{figure}[h]
        \centering
        \begin{tikzpicture}
            \node (XU) at (0,0) {$X, U$};
            \node (Y) [below = 1.2cm of XU] {$Y$};
            \node (M) [right = of Y] {$M$};
            \node (Z) [left = of Y] {$Z$};
    
            \path[->] (XU) edge (Y);
            \path[->] (XU) edge (M);
            \path[->] (Z) edge (Y);
            \path[->] (Y) edge (M);
            \path[->] (Z) edge[bend right=45] (M);

            \draw[->] (XU) -- (Z);
    
            \path (XU) -- (Z) coordinate[pos=0.5] (midpoint1);
            \draw[ thick] ($(midpoint1) + (-0.08,0) +(-0.15,-0.15)$) -- ($(midpoint1)+ (-0.08,0)+ (0.15,0.15)$);
            \draw[ thick] ($(midpoint1)+ (-0.08,0)+(0.15,-0.15)$) -- ($(midpoint1)+ (-0.08,0)+(-0.15,0.15)$);
            \node (annotation1) [left = 0.8cm of midpoint1, align=center] {\scriptsize Removed by \\ \scriptsize Randomization};
            \draw[->] (annotation1) -- ($(midpoint1) + (-0.08,0)$);
        \end{tikzpicture}
        \caption{A causal diagram that violates the outcome missingness mechanism stated in Assumption~\ref{assump: conditional indep}.}
           \label{fig: counterexample}
\end{figure}
\end{remark}

\begin{remark}
   \emph{In addition to randomized experiments, the randomization assumption (\ref{eqn: randomization assumption}) is also widely adopted in matched or stratified observational studies when assuming matching or stratifying on observed covariates (i.e., $\mathbf{x}_{ij}=\mathbf{x}_{ij^{\prime}}$ for all $i, j, j^{\prime}$) and no unobserved covariates (\citealp{rosenbaum2002observational}). Therefore, the proposed framework in this work can also be applied in these settings. } 
\end{remark}

\begin{remark}\emph{
    When the sample size is small enough and the outcome imputation algorithm $\mathcal{G}$ is deterministic, instead of randomly drawing a treatment assignment $\mathbf{Z}^{(l)}$ from the randomization distribution $\mathcal{P}$ in each simulation run, we can directly list all possible treatment assignments $\Omega$ and calculate the finite-population-exact $p$-value based on the imputation and re-imputation procedure (see equation (\ref{eqn: one-sided exact p-value deterministic}) in Lemma~\ref{lem: p-value}). }
\end{remark}

\begin{remark}
    \emph{Setting $\widehat{p}=\frac{1}{L}\sum_{l=1}^{L}\mathbbm{1}\{T^{(l)}\geq t\}$ in Algorithm~\ref{alg: part} or $\widehat{p}_{\text{adj}}=\frac{1}{L}\sum_{l=1}^{L}\mathbbm{1}\{T^{(l)}\geq t\}$ in Algorithm~\ref{alg: part with covariate adjustment} can give us an one-sided (greater than) $p$-value. To obtain an one-sided (less than) $p$-value, we just need to let $\widehat{p}=\frac{1}{L}\sum_{l=1}^{L}\mathbbm{1}\{T^{(l)}\leq t\}$ in Algorithm~\ref{alg: part} or $\widehat{p}_{\text{adj}}=\frac{1}{L}\sum_{l=1}^{L}\mathbbm{1}\{T^{(l)}\leq t\}$ in Algorithm~\ref{alg: part with covariate adjustment}. To obtain a two-sided $p$-value, following the general strategy proposed in \citet{luo2021leveraging}, we just need to let $\widehat{p}=2\min\{\frac{1}{L}\sum_{l=1}^{L}\mathbbm{1}\{T^{(l)}\geq t\}, \frac{1}{L}\sum_{l=1}^{L}\mathbbm{1}\{T^{(l)}\leq t\}\}$ in Algorithm~\ref{alg: part} or $\widehat{p}_{\text{adj}}=2\min\{\frac{1}{L}\sum_{l=1}^{L}\mathbbm{1}\{T^{(l)}\geq t\}, \frac{1}{L}\sum_{l=1}^{L}\mathbbm{1}\{T^{(l)}\leq t\}\}$ in Algorithm~\ref{alg: part with covariate adjustment}.}
\end{remark}

\begin{remark}
    \emph{In the simulation studies, the detailed procedure for median imputation is as follows: we first obtain the sample median and sample variance of the available outcome data (denoted as $\widehat{\sigma}^{2}$), and then impute each missing outcome with independent random draws from a normal distribution that centers at the sample median and has variance $\widehat{\sigma}^{2}$. Such a stochastic version of median imputation can avoid underestimating the standard error of imputed outcome data, a common drawback of the deterministic version of median imputation (\citealp{zhang2016missing, jadhav2019comparison}). }
\end{remark}

\begin{remark}
   \emph{As emphasized in the main text, the imputation and re-imputation framework (i.e., Algorithm~\ref{alg: part} and Algorithm~\ref{alg: part with covariate adjustment}) allow the observed covariates $\mathbf{X}$ to have missing values, as their finite-population-exact type-I error rate control (i.e., Theorem~\ref{thm: hypothesis testing} and Theorem~\ref{thm: hypothesis testing with covariate adjustment}) does not require any conditions on whether and how any values of $\mathbf{X}$ were missing. In Theorem~\ref{thm: confidence region} in Section~\ref{sec: CI with missing outcomes} of the main text, for convenience and clarity of the notations and for illustrating the main idea, we temporarily assume that the observed covariates $\mathbf{X}$ in the input data are complete and do not have any missingness. This is because if the treatment effect model in Theorem~\ref{thm: confidence region} involves any observed covariates (i.e., effect modifications or subgroup effects), for any individuals with missing values of any of these effect-modifier covariates, their outcomes would also be considered as missing outcomes because we may not be able to compute the counterfactual outcomes for these individuals due to missingness in their effect-modifier covariates. In this case, we would need to treat the outcomes of these individuals as missing outcomes (even if their original outcome records are available) and then apply the imputation and re-imputation framework to construct confidence regions for the causal parameters involved in the treatment effect model. Then, the theoretical conclusions in Theorem~\ref{thm: confidence region} still hold in this case. }
\end{remark}

\begin{remark}\label{rem: test statistics with additional details}\emph{
   As mentioned in the main text, the test statistic $T=T(\mathbf{Z}, \widehat{\mathbf{Y}})$ embedded in Algorithm~\ref{alg: part} can be any test statistic based on $\mathbf{Z}$ and $\widehat{\mathbf{Y}}$. For example, in the single outcome case (i.e., $K=1$), if we choose the permutational $t$-test statistic, then $T(\mathbf{Z}, \widehat{\mathbf{Y}})=\sum_{i=1}^{I}\sum_{j=1}^{n_{i}}Z_{ij}\widehat{Y}_{ij}$. If we choose the Wilcoxon rank sum test statistic, then $T(\mathbf{Z}, \widehat{\mathbf{Y}})=\sum_{i=1}^{I}\sum_{j=1}^{n_{i}}Z_{ij}\text{rank}(\widehat{Y}_{ij})$ where $\text{rank}(\widehat{Y}_{ij})=\sum_{i^{\prime}=1}^{I}\sum_{j^{\prime}=1}^{n_{i^{\prime}}}\mathbbm{1}\{\widehat{Y}_{ij}\geq \widehat{Y}_{i^{\prime}j^{\prime}}\}$. In the multiple outcomes case, we can choose $T(\mathbf{Z}, \widehat{\mathbf{Y}})$ to be some linear combination of the $K$ test statistics $T_{1}, \dots, T_{K}$, among which each $T_{k}$ focuses on the $k$-th outcome. For example, we can let $T_{k}(\mathbf{Z}, \widehat{\mathbf{Y}}_{k})=\sum_{i=1}^{I}\sum_{j=1}^{n_{i}}Z_{ij}\text{rank}(\widehat{Y}_{ijk})$ where $\text{rank}(\widehat{Y}_{ijk})=\sum_{i^{\prime}=1}^{I}\sum_{j^{\prime}=1}^{n_{i^{\prime}}}\mathbbm{1}\{\widehat{Y}_{ijk}\geq \widehat{Y}_{i^{\prime}j^{\prime}k}\}$, and then let $T=\sum_{k=1}^{K}\lambda_{k} T_{k}$, where $\lambda_{1},\dots, \lambda_{K}$ are prespecified weights (\citealp{rosenbaum2016using}). In addition to this combined statistic strategy, an alternative way to hypothesis testing in the multiple outcomes case is to combine the $K$ $p$-values $\widehat{p}_{1}, \dots, \widehat{p}_{K}$ through the Bonferroni correction or the Holm-Bonferroni method, where each $\widehat{p}_{k}$ is reported by applying the individual test $T_{k}$ in Algorithm~\ref{alg: part}.}
\end{remark}

\begin{remark}\label{rem: differences with other cov adj}\rm
  As mentioned in Section~\ref{sec: covariate adjustment}, there are a few other recent works on covariate adjustment in randomized experiments with incomplete outcome data (e.g., \citealp{chang2023covariate}; \citealp{zhao2023covariate}; \citealp{wang2024handling}), all of which have made important contributions to this topic. These recent works focus on the average treatment effect among the super-population, while our work focuses on Fisher's sharp null and its extensions (in model (\ref{eqn: effect model with specified beta_0}) in the main text) among the finite population (i.e., design-based causal inference). Because of this, the required assumptions and proposed approaches of these recent works and those of our work are very different. First, these recent works in the super-population setting typically require the observed data to fully capture outcome missingness. In contrast, our framework allows outcome missingness to depend on unobserved data such as unobserved covariates. Second, these recent works in the super-population setting require either the outcome model or the missingness model to be correctly specified. In contrast, our finite-population framework can still ensure finite-sample validity even when both the imputation model and the covariate adjustment model were misspecified. Third, these recent works assume that the outcome missingness indicators are independently and identically distributed realizations (at the individual or cluster level) from some super-population model or distribution to invoke super-population asymptotics. In contrast, focusing on the finite-population setting, our framework allows outcome missingness interference between study subjects and between different outcome variables, and its statistical validity holds for any sample size. 
\end{remark}

\begin{remark}\label{rem: Assumptions 2 and 3}
    \rm Recall that the key difference between Assumption~\ref{assump: conditional indep} and Assumption~\ref{assump: conditional indep with more restrictions} is that Assumption~\ref{assump: conditional indep} allows outcome missingness $\mathbf{M}$ to depend on the true outcomes $\mathbf{Y}$ (in addition to dependence on observed covariates $\mathbf{X}$ and unobserved covariates $\mathbf{U}$) but Assumption~\ref{assump: conditional indep with more restrictions} does \textit{not} allow dependence of $\mathbf{M}$ on $\mathbf{Y}$. This difference is exactly the key reason why Assumption~\ref{assump: conditional indep with more restrictions} (along with the randomization assumption stated in Assumption \ref{assump: randomization design}) suffices to ensure the validity of a confidence region/interval obtained by inverting a randomization test (as shown in Theorem~\ref{thm: confidence region}) but Assumption~\ref{assump: conditional indep} does not. Specifically, note that the feasibility of inverting a hypothesis test to construct a confidence region requires that the potential outcome missingness indicators $\mathcal{M}=\{M_{ijk}(\mathbf{z}): i \in [I], j \in [n_{i}], k \in [K], \mathbf{z}\in\Omega \}$ are known to us under each hypothetical effect size of the treatment, which ensures that the permutational distribution of the randomization test statistic under each hypothetical effect size can be obtained by Monte-Carlo simulation (i.e., the re-imputation step). When the treatment effect size is non-zero, Assumption~\ref{assump: conditional indep with more restrictions} can still ensure that each outcome missingness indicator $M_{ijk}$ does not change with different treatment assignments $\mathbf{Z}$, which implies that the potential missingness indicator $M_{ijk}(\mathbf{z})$ equals the observed value $M_{ijk}$ for all possible treatment assignments $\mathbf{z}\in \Omega$. Instead, if outcome missingness $\mathbf{M}$ can depend on the true outcomes $\mathbf{Y}$ as stated in Assumption~\ref{assump: conditional indep}, the counterfactual $M_{ijk}(\mathbf{z})$ is unknown unless we impose strong modeling assumptions on the dependence of $\mathbf{M}$ on $\mathbf{Y}$. Therefore, Assumption~\ref{assump: conditional indep} does not suffice to obtain the permutational distribution of the randomization test statistic when the hypothetical effect size is non-zero, which makes it infeasible to construct a confidence region by inverting a randomization test.
\end{remark}

\begin{remark}\label{rem: Assumption 3 is strong}
   \em Assumption~\ref{assump: conditional indep with more restrictions} does not hold when outcome missingness $\mathbf{M}$ depends on the true outcomes $\mathbf{Y}$, which is referred to as ``self-censoring" or ``outcome-dependent missingness" in the missing outcomes literature \citep{d2010new, wang2014instrumental}. Self-censoring may exist in several practical settings. For example, suppose the outcome of interest is some survival time censored at 2 years. In this case, self-censoring exists because survival time beyond 2 years is always missing. Another example of self-censoring is the missingness of self-reported outcomes concerning some sensitive topics such as drug use or risky sexual behaviors. In this setting, self-censoring may occur when participants refuse to report their true outcomes on these sensitive questions. For other relevant examples, see \citet{joffe2012g}, \citet{ chen2023causal}, and \citet{li2023self}. Meanwhile, as discussed in Remark~\ref{rem: self-censoring in Assumption 2} in the main text, Assumption~\ref{assump: conditional indep} allows self-censoring (i.e., the dependence of $\mathbf{M}$ on $\mathbf{Y}$) and is, therefore, weaker than Assumption~\ref{assump: conditional indep with more restrictions}. 
\end{remark}

\section*{Appendix E: More Details on the Real Data Application}

First, we clarify the notations for the cluster randomized experiments here. In the main text, we use index $i$ to denote the $i$-th stratum. In this section, we instead use index $i$ to denote the $i$-th cluster, but we still use index $j$ to denote $j$-th subject (in a cluster) and index $k$ to denote the $k$-th outcome. For example, the $Z_{ij}$, $Y_{ijk}$, $M_{ijk}$, and $Y^{*}_{ijk}$ denote the treatment indicator, $k$-th true outcome, missingness indicator for the $k$-th outcome, and $k$-th realized  outcome, of subject $j$ in cluster $i$. Similarly, all the other notations, concepts, null hypotheses, treatment effect models, and algorithms indexed by $ijk$ introduced in the main text can be translated to the cluster randomized experiments setting following the above argument. Similar to the argument in the main text, if there is only one outcome variable of interest, we can omit index $k$ here.

In a cluster (group) randomized experiment, the unit of randomization is each cluster. For example, in a cluster randomized experiment following complete randomization (in which we assume $I_{T}$ out of total $I$ clusters receive the treatment), we have $\mathcal{Z}=\{\mathbf{z}=(z_{11}, \dots, z_{In_{I}})\in \{0,1\}^{N}: z_{ij}=z_{ij^{\prime}} \text{ for all $i, j, j^{\prime}$ and } \sum_{i=1}^{I}z_{i1}=I_{T}\}$ (so we have $|\mathcal{Z}|={I\choose I_{T}}$) and 
\begin{equation}\label{eqn: cluster randomization assumption}
    P(\mathbf{Z}=\mathbf{z})={I\choose I_{T}}^{-1} \quad \text{for all $\mathbf{z}\in \mathcal{Z}$}.
\end{equation}

The original Work, Family, and Health Study (WFHS) adopted a biased-coin adaptive design for cluster-level treatment assignments. For methodology illustration purposes, we assume a more straightforward and basic randomization design (i.e., cluster complete randomization described above). A similar simplification strategy has also been adopted by \citet{wang2024handling}, in which the authors assume the treatment assignments follow independent coin flips. As pointed out by \citet{wang2024handling}, such a simplification strategy should lead to a slight overestimation of the variance (i.e., slightly more conservative statistical inference). Also, it is worth emphasizing that, as discussed in the main text, the current version of the imputation and re-imputation framework (e.g., Algorithm~\ref{alg: part} and Algorithm~\ref{alg: part with covariate adjustment}) works for a wide range of common randomization designs, such as complete randomization, stratified or blocked randomization, paired randomization, Bernoulli randomization, and cluster randomization, among many others. It is a meaningful future research direction to investigate how to extend the imputation and re-imputation framework to other randomization designs, such as biased-coin adaptive design, covariate-adaptive or response-adaptive randomization design, and stepped wedge design. 

The Work, Family, and Health Study \citep{WFHS2018} was conducted in the information technology division of a Fortune 500 company (pseudonym: Tomo). The description of the treatment (i.e., intervention) taken from the codebook of the WFHS (\url{https://www.icpsr.umich.edu/web/DSDR/studies/36158}) is listed as follows:
\begin{longtable}{p{5cm}| p{11cm}}
\hline
\textbf{Treatment Name} & \textbf{Description from the user guide of the WFHS (\url{https://www.icpsr.umich.edu/web/DSDR/studies/36158})}\\
\hline

Treatment condition (coded as ``CONDITION"): 1 (51.4\%); 2 (48.6\%); Missingness rate (0.0\%).
&  ``The intervention included both supervisory training on strategies to facilitate employees’ control over work time and work redesign activities that identified new ways to work that meet business needs, while increasing employee control over work time. 1=Intervention; 2=Control."  
\\
\hline
\end{longtable}

We also list the detailed description taken from the codebook of the WFHS (\url{https://www.icpsr.umich.edu/web/DSDR/studies/36158}) of each of the seven covariates:
\begin{longtable}{p{5cm}| p{11cm}}
\hline
\textbf{Covariate Name} & \textbf{Description from the codebook of the WFHS (\url{https://www.icpsr.umich.edu/web/DSDR/studies/36158})}\\
\hline
The baseline value of control over work hours (coded as ``SCWM\_CWH"): Mean: 3.53; Standard deviation: 0.64; Missingness rate (10.7\%).
& ``Measures the degree to which employees control the arrangement of the hours that they work (\citealp{thomas1995impact}). \citet{thomas1995impact} reasoned that inflexible work hours would create work-family conflict but that certain supportive elements of work (“family-supportive variables”) could help control the effects. We used a modified scale from the original 14 items proposed by \citet{thomas1995impact}." 
\\
\hline
Employee (coded as ``EMPLOYEE"): 0 (21.2\%); 1 (78.8\%); Missingness rate (0.0\%).
& ``Employee Indicator Variable: 0 = Not an employee (i.e., a manager); 1 = Employee (i.e., not a manager)." 
\\
\hline
Group job function (coded as ``RMZFN"):  1 (40.7\%); 2 (58.5\%); Missingness rate (0.8\%).
& ``Randomized Group’s Job Function: 1=Core; 2=Support." 
\\
\hline
Work-family conflict (coded as ``SCWM\_FTWC"): 
Mean: 2.12; Standard deviation: 0.65; Missingness rate (0.2\%)
& ``Work-family conflict is defined as a type of inter-role conflict where work and family roles are incompatible (\citealp{greenhaus1985sources}). This scale measures the extent to which time and effort devoted to work interfere with family responsibilities and the extent to which time and effort devoted to family interfere with work responsibilities." 
\\
\hline
Time adequacy (coded as ``SCWM\_TIMEALL"): Mean: 3.58; Standard deviation: 0.59; Missingness rate (4.9\%).
& ``Time adequacy questions are part of the larger Family Resource Scale-Revised (FRS) (\citealp{van2001family}), which is an assessment of a family member’s perceptions of available resources across a range of areas." 
\\
\hline
Perceived stress scale (PSS) (coded as ``SCEM\_STRS"): Mean: 8.54; Standard deviation: 2.68; Missingness rate (21.2\%).
& ``The Perceived Stress Scale (PSS) is by far the most widely used scale of stress appraisals and has been found to be more predictive of physical and mental health outcomes and use of health services than event-based stress checklists. The PSS has been found to predict many adverse physical and mental health outcomes (\citealp{cohen1983global}). The PSS has shown discriminant validity with regard to life event measures of stress (i.e., each measure of stress predicts different stress-related processes or outcomes). The PSS has good internal consistency (alpha = 0.78)."  
\\
\hline
Psychological distress scale (coded as ``SCEM\_DIST"): Mean: 10.74; Standard deviation: 3.14; Missingness rate (0.2\%).
& ``The K6 is the most widely used mental health screening scale in the United States and has been used in numerous psychiatric and social epidemiology studies, including the National Household Survey on Drug Abuse (e.g., Centers for Disease Control and Prevention, 2004; U.S. Department of Health and Human Services, 2004). Extensive clinical validation of the K-6 has been conducted. When compared to a gold-standard structured diagnostic interview, the K-6 has demonstrated excellent precision, including sensitivity and specificity (area under the curve = 0.86). The K-6 has excellent internal consistency (alpha = 0.89) and has been demonstrated to better predict a diagnosis of serious mental illness than a number of other well-validated scales of psychological distress (\citealp{kessler2003screening})."  
\\
\hline

\end{longtable}

The outcome variable is the 6-month follow-up value of control over work hours (coded as ``SCWM\_CWH" at the follow-up survey): Mean: 3.72; Standard deviation: 0.63; Missingness rate (19.1\%).

\section*{Appendix F: Extension of the Proposed Framework to Handling Missingness in Survival Data}

In survival analysis, there are two major types of missingness in outcomes: missing survival time due to censoring (e.g., the participants leaving the study) and missing censoring indicators (i.e., missing censoring status information) (\citealp{cook2004analysis, brownstein2021bayesian}). While missing survival times due to censoring has been extensively studied in the causal inference literature (\citealp{rosenbaum2002observational, rubin2006causal, li2023randomization}), missing censoring indicators remain an underexplored issue, despite their presence in many settings (see Example~\ref{example: missing censoring} below) (\citealp{cook2004analysis, brownstein2015parameter, brownstein2021bayesian}). 
    
To fill this gap in design-based causal inference, we extend our proposed framework to conduct randomization tests in right-censored survival data with missing censoring indicators. 

\begin{example}[Missing Censoring Indicators in Survival Data]\label{example: missing censoring}
In many survival (time-to-event) datasets, missing censoring indicators can arise due to the gap between the screening and formal diagnosis during the last follow-up (\citealp{brownstein2015parameter, brownstein2021bayesian}). Specifically, at the last follow-up, some studies first conduct screening (e.g., cancer screening or mental health disorder screening) for participants, followed by a formal diagnosis of the event of interest (e.g., cancer or a mental health disorder) for those who screen positive. However, in such studies, not all participants who screened positive during the last follow-up necessarily undergo the final diagnosis for the event of interest. In this case, the censoring indicators of those participants are missing because it remains unclear whether their survival time (time-to-event) surpassed the final study period (i.e., whether their events of interest occurred before they left the study). See Figure~\ref{fig: missing censoring status} for an illustration. 
\end{example}

Specifically, for each subject $j$ in stratum $i$, we let $T_{ij}$ denote the true, possibly unobserved, survival time and let $C_{ij}$ denote the censoring time. Then, we let the time at risk $R_{ij}=\min\{T_{ij}, C_{ij}\}$ and the event indicator $\Delta_{ij}=\mathbbm{1}\{T_{ij}\leq C_{ij}\}$. In some studies (e.g., those described in Example~\ref{example: missing censoring}),  the censoring indicators $\Delta_{ij}$ for some subjects are missing because it is unknown whether $T_{ij}\leq C_{ij}$ or $T_{ij}> C_{ij}$ due to unknown status of the event (e.g., having cancer or a specific disorder) at the last follow-up. We let $M^{\Delta}_{ij}$ denote the missingness indicator for $\Delta_{ij}$, i.e., $M^{\Delta}_{ij}=1$ if $\Delta_{ij}$ is missing and $M^{\Delta}_{ij}=0$ otherwise. Let $\mathbf{T}=(T_{11},\dots, T_{In_{I}})$ denote the vector of the true (but possibly unobserved) survival times, $\mathbf{C}=(C_{11},\dots, C_{In_{I}})$ the vector of the true censoring times, $\mathbf{R}=(R_{11},\dots, R_{In_{I}})$ the vector of the true times at risk, $\mathbf{\Delta}=(\Delta_{11},\dots, \Delta_{In_{I}})$ the vector of true censoring indicators, and $\mathbf{M}^{\Delta}=(M^{\Delta}_{11},\dots, M^{\Delta}_{In_{I}})$ the vector of missingness indicators for the censoring status. Following the potential outcomes framework, for each subject $j$ in stratum $i$, we let $T_{ij}(1)$ (or $T_{ij}(0)$) denote the potential outcome of survival time under the treatment (or under control), $C_{ij}(\mathbf{z})$ denote the potential outcome of censoring time under the treatment assignments $\mathbf{Z}=\mathbf{z}\in \Omega$, $R_{ij}(\mathbf{z})$ denote the potential time at risk under $\mathbf{Z}=\mathbf{z}\in \Omega$, $\Delta_{ij}(\mathbf{z})$ denote the potential censoring indicator under $\mathbf{Z}=\mathbf{z}\in \Omega$, and $M^{\Delta}_{ij}(\mathbf{z})$ denote the potential missingness indicator for censoring status under $\mathbf{Z}=\mathbf{z}\in \Omega$. In design-based causal inference, all the potential post-treatment variables are fixed values (including the potential survival time $\mathcal{T}=\{(T_{ij}(1), T_{ij}(0)): i \in [I], j \in [n_{i}]\}$, the potential censoring time $\mathcal{C}=\{C_{ij}(\mathbf{z}): i \in [I], j \in [n_{i}], \mathbf{z}\in \Omega \}$, the potential time at risk $\mathcal{R}=\{R_{ij}(\mathbf{z}): i \in [I], j \in [n_{i}], \mathbf{z}\in \Omega\}$, the potential censoring indicator $\mathit{\Delta}=\{\Delta_{ij}(\mathbf{z}): i \in [I], j \in [n_{i}], \mathbf{z}\in \Omega\}$, and the potential missingness indicators for censoring status $\mathcal{M}^{\Delta}=\{M^{\Delta}_{ij}(\mathbf{z}): i \in [I], j \in [n_{i}], \mathbf{z}\in \Omega\}$), and the only source of randomness is the randomization of treatment assignments $\mathbf{Z}$ \citep{rosenbaum2002observational, li2017general}. 

\begin{figure}    
    \centering
    \caption{An illustration of missing censoring indicators due to the gap between the screening and formal diagnosis during the last follow-up. This figure follows Figure 1 in \citet{brownstein2021bayesian} but with some notational adjustments. }\label{fig: missing censoring status}
    \begin{tikzpicture}[
    node distance=0.6cm and 1cm, 
    every node/.style={rectangle, rounded corners, draw=black, text width=2.8cm, align=center, minimum height=0.5cm, font=\small},
    >=stealth
]
\node (start) {Patient \( ij \) enrolled with baseline covariate \( \mathbf{X}_{ij} \)};
\node (followup1) [below left=of start, text width=2.6cm, yshift=-0.1cm] {Screening \( Q_{ij} = 1 \) at last follow-up };
\node (followup2) [below right=of start, text width=2.6cm, yshift=-0.1cm] {Screening \( Q_{ij} = 0 \) at last follow-up };
\node (examined) [below=of followup1, text width=2.6cm, yshift=-0.1cm] {Patient examined: \( M_{ij}^{\Delta} = 0 \)};
\node (notexamined) [right=of examined, text width=2.6cm, yshift=0.1cm] {Patient not examined: \\ \( M_{ij}^{\Delta} = 1 \Rightarrow \Delta_{ij} \text{ missing} \)};
\node (censored) [below=of followup2, text width=2.6cm, yshift=-0.1cm] {Patient censored: \\ \( \Rightarrow \Delta_{ij} = 0 \)};
\node (diagnosed) [below left=of examined, text width=2.6cm, yshift=-0.1cm] {Diagnosed with disease \\ \( \Rightarrow \Delta_{ij} = 1 \)};
\node (notdiagnosed) [below right=of examined, text width=2.6cm, yshift=-0.1cm] {No evidence of disease \\ \( \Rightarrow \Delta_{ij} = 0 \)};

\draw[->] (start) -- (followup1);
\draw[->] (start) -- (followup2);
\draw[->] (followup1) -- (examined);
\draw[->] (followup1) -- (notexamined);
\draw[->] (followup2) -- (censored);
\draw[->] (examined) -- (diagnosed);
\draw[->] (examined) -- (notdiagnosed);

\end{tikzpicture}
\end{figure}

We consider the following general family of the censoring mechanisms. 

\begin{assumption}[A General Censoring Mechanism]\label{assump: censoring}
     Conditional on the finite-population dataset in hand, there exists an unknown map $\psi: \mathbb{R}^{N \times (p_{x}+p_{u}+1)} \rightarrow \mathbb{R}^{N}$ such that $\mathbf{C}=\psi(\mathbf{X}, \mathbf{U}, \mathbf{T})$. 
\end{assumption}
The censoring mechanisms stated in Assumption~\ref{assump: censoring} are flexible as they allow arbitrary forms of dependence of censoring on observed and unobserved covariates, the true survival time, and interference between study subjects. 

In addition, we consider the following general family of mechanisms of missingness in censoring indicators. 

\begin{assumption}[A General Missingness Mechanism for Censoring Status]\label{assump: missingness of censoring}
     Conditional on the finite-population dataset in hand, there exists an unknown map $\upsilon: \mathbb{R}^{N \times (p_{x}+p_{u}+2)} \rightarrow \{0,1\}^{N}$ such that $\mathbf{M}^{\Delta}=\upsilon(\mathbf{X}, \mathbf{U}, \mathbf{T}, \mathbf{C})$. 
\end{assumption}
Assumption~\ref{assump: missingness of censoring} allows missingness in censoring indicators to depend on observed and unobserved covariates, the true survival time, the true censoring time, and the interference between study subjects. Moreover, there are no restrictions on the forms of such dependence as the map $\upsilon$ in Assumption~\ref{assump: missingness of censoring} can be any map from $\mathbb{R}^{N \times (p_{x}+p_{u}+2)}$ to $\{0,1\}^{N}$.

In design-based causal inference with survival data, Fisher's sharp null of no treatment effect on survival time (\citealp{rosenbaum2002covariance, li2023randomization}) claims that:
\begin{equation*}
    \widetilde{H}_{0}: T_{ij}(1)=T_{ij}(0) \text{ for all $i,j$}.
\end{equation*}
When the censoring status information is complete, i.e., when $M^{\Delta}_{ij}=0$ for all $i, j$, a commonly used family of test statistics for testing $\widetilde{H}_{0}$ is the weighted logrank test statistics (Chapter 7, \citet{fleming2013counting}), which includes the logrank test, the Gehan-Wilcoxon test, and the Wilcoxon-Prentice test as specific examples, among many others. Specifically, for any $t\geq 0$, let 
\begin{align*}
    &\overline{N}_{1}(t)=\sum_{i=1}^{I}\sum_{j=1}^{n_{i}}Z_{ij}\mathbbm{1}\{R_{ij}\geq t \},\\
    &\overline{N}_{0}(t)=\sum_{i=1}^{I}\sum_{j=1}^{n_{i}}(1-Z_{ij})\mathbbm{1}\{R_{ij}\geq t \},\\
    &\overline{N}(t)=\overline{N}_{1}(t)+\overline{N}_{0}(t)=\sum_{i=1}^{I}\sum_{j=1}^{n_{i}}\mathbbm{1}\{R_{ij}\geq t \}
\end{align*}
be the number of subjects at risk at time $t$ in the treated, control, and both groups, respectively. Let 
\begin{align*}
    &\overline{D}_{1}(t)=\sum_{i=1}^{I}\sum_{j=1}^{n_{i}}Z_{ij}\Delta_{ij} \mathbbm{1}\{R_{ij}= t \},\\
    &\overline{D}_{0}(t)=\sum_{i=1}^{I}\sum_{j=1}^{n_{i}}(1-Z_{ij})\Delta_{ij}\mathbbm{1}\{R_{ij}= t \},\\
    &\overline{D}(t)=\overline{D}_{1}(t)+\overline{D}_{0}(t)=\sum_{i=1}^{I}\sum_{j=1}^{n_{i}}\Delta_{ij}\mathbbm{1}\{R_{ij}= t \}
\end{align*}
be the number of subjects having events at time $t$ in the treated, control, and both groups, respectively. Let $t_{1}< t_{2}< \dots < t_{d}$ be the distinct values of the observed event times. A weighted logrank test statistic takes the form 
\begin{equation*}
    A_{W}(\mathbf{Z}, \mathbf{R}, \mathbf{\Delta})=\sum_{l=1}^{d}W(t_{l})\{\overline{D}_{1}(t_{l})-\overline{E}(t_{l})\}, \ \text{where $\overline{E}(t)=\frac{\overline{D}(t)\overline{N}_{1}(t) }{\overline{N}(t)}$},
\end{equation*}
and $W(t)$ is a prespecified weight function. For example, if $W(t)\equiv 1$, we get the logrank test (\citealp{mantel1966evaluation}). If $W(t)=\overline{N}(t)/(N+1)$, we get the Gehan-Wilcoxon test (\citealp{gehan1965generalized}), which reduces to the two-sample Wilcoxon test in uncensored data. If $W(t)=\widehat{S}(t-)=\prod_{l: t_{l}< t}\{1-\frac{\overline{D}(t_l)}{\overline{N}(t_l)}\}$ (i.e., $W(t_{l})=\prod_{l^{\prime}=1}^{l-1}\{1-\frac{\overline{D}(t_{l^{\prime}})}{\overline{N}(t_{l^{\prime}})}\}$ for $l\geq 2$ and $W(t_{l})=1$ for $l=1$), we get the Prentice-Wilcoxon test (\citealp{prentice1978linear}), which uses the Kaplan-Meier estimator as the weight. Under Assumption~\ref{assump: randomization design}, Assumption~\ref{assump: censoring}, and Fisher's sharp null $\widetilde{H}_{0}$, both $\mathbf{R}$ and $\mathbf{\Delta}$ are fixed under different treatment assignments $\mathbf{Z}$. Therefore, when there is no missingness in censoring indicators, a randomization test using the weighted logrank test statistic $A_{W}(\mathbf{Z}, \mathbf{R}, \mathbf{\Delta})$ is finite-population-exact for testing $\widetilde{H}_{0}$ under Assumptions~\ref{assump: randomization design} and \ref{assump: censoring}.

We consider the setting that missingness of censoring indicators may exist at the final screening stage (i.e., at time $t_{d}$ of the study). Specifically, let $M^{\Delta}_{ij}$ denote the missingness indicator of subject $ij$: $M^{\Delta}_{ij}=1$ if the censoring indicator $\Delta_{ij}$ is missing and $M^{\Delta}_{ij}=0$ otherwise. In this case, for a subject $ij$ with $M^{\Delta}_{ij}=1$, we have either $\Delta_{ij}=1$ (i.e., $T_{ij}=C_{ij}=R_{ij}=t_{d}$) or $\Delta_{ij}=0$ (i.e., $T_{ij}>C_{ij}=R_{ij}=t_{d}$). For subject $ij$, we denote the observed realized censoring indicator as $\Delta^{*}_{ij}$, which may be subject to missingness at time $t_{d}$ (i.e., the last follow-up). That is, the realized value $\Delta_{ij}^{*}$ equals the true value $\Delta_{ij}$ if $M_{ij}^{\Delta}=0$, and equals ``Missing" if $M_{ij}^{\Delta}=1$. Let $\mathbf{\Delta}^{*}=(\Delta^{*}_{11},\dots, \Delta^{*}_{In_{I}})$. In Algorithm~\ref{alg: iart for censored data}, we extend the imputation and re-imputation framework proposed in Section~\ref{subsec: PART} to conduct a design-based test for $\widetilde{H}_{0}$ with missing censoring indicators.

\begin{algorithm} 
\SetAlgoLined
\caption{The ``imputation and re-imputation" framework for censored data with missingness in censoring indicators.} \label{alg: iart for censored data}
\vspace*{0.12 cm}
\KwIn{A prespecified number of re-imputation runs $L$ (e.g., $L=10{,}000$), some chosen imputation algorithm $\mathcal{G}$, some chosen test statistics $A$ (e.g., $A$ can be a weighted logrank test $A_{W}$), the randomization design $\mathcal{P}$ (see Assumption~\ref{assump: randomization design}), the observed treatment indicators $\mathbf{Z}$, the observed covariates $\mathbf{X}^{*}$ with possible missingness, the observed time at risk $\mathbf{R}$, and the observed censoring indicators with missingness $\mathbf{\Delta}^{*}$. Note that $\mathbf{\Delta}^{*}$ contain all the observed missingness status information $\mathbf{M}^{\Delta}$. }
\vspace*{0.12 cm}
\begin{enumerate}
    \item Use $(\mathbf{Z}, \mathbf{X}^{*}, \mathbf{R}, \mathbf{\Delta}^{*})$ and the chosen imputation algorithm $\mathcal{G}$ for imputing the missing censoring indicators ($\mathcal{G}$ can be any imputation algorithm) to obtain the full imputed censoring indicators $\widehat{\mathbf{\Delta}}$ at the final testing stage (i.e., time $t_{d}$): 
    \begin{equation*}
 \mathcal{G}: (\mathbf{Z}, \mathbf{X}^{*}, \mathbf{R}, \mathbf{\Delta}^{*})\mapsto \widehat{\mathbf{\Delta}}=(\widehat{\Delta}_{11}, \dots, \widehat{\Delta}_{In_{I}}). \quad \text{(The Imputation Step)}
    \end{equation*}
    We then calculate $a=A(\mathbf{Z}, \mathbf{R}, \widehat{\mathbf{\Delta}})$ based on $\widehat{\mathbf{\Delta}}$.
    \item For each $l=1, \dots, L$:
\begin{enumerate}
        \item Randomly generate $\mathbf{Z}^{(l)}$ according to the randomization design $\mathcal{P}$.
    \item Obtain the full imputed censoring indicators $\widehat{\mathbf{\Delta}}^{(l)}$ based on $(\mathbf{Z}^{(l)}, \mathbf{X}^{*}, \mathbf{R}, \mathbf{\Delta}^{*})$ and algorithm $\mathcal{G}$:
    \begin{equation*}
  \mathcal{G}: (\mathbf{Z}^{(l)}, \mathbf{X}^{*}, \mathbf{R}, \mathbf{\Delta}^{*})\mapsto \widehat{\mathbf{\Delta}}^{(l)}=(\widehat{\Delta}_{11}^{(l)}, \dots, \widehat{\Delta}_{In_{I}}^{(l)}). \quad \text{(The Re-Imputation Step)}
    \end{equation*}
    \item Calculate $A^{(l)}=A(\mathbf{Z}^{(l)}, \mathbf{R}, \widehat{\mathbf{\Delta}}^{(l)})$ for the $l$-th re-imputation run.
    \end{enumerate} 
\end{enumerate}
\textbf{Output:} The approximate finite-population-exact $p$-value $\widehat{p}=\frac{1}{L}\sum_{l=1}^{L}\mathbbm{1}\{A^{(l)}\geq a\}$.
\end{algorithm}

The following Theorem~\ref{thm: hypothesis testing for censored data} gives the theoretical guarantee of finite-population-exact type-I error rate control for testing Fisher's sharp null $\widetilde{H}_{0}$ using Algorithm~\ref{alg: iart for censored data}. The detailed proof is in Appendix G.

\begin{theorem}\label{thm: hypothesis testing for censored data}
   Consider the imputation and re-imputation framework in Algorithm~\ref{alg: iart for censored data} paired with any chosen imputation algorithm $\mathcal{G}: (\mathbf{Z}, \mathbf{X}^{*}, \mathbf{R}, \mathbf{\Delta}^{*})\mapsto \widehat{\mathbf{\Delta}}$ and any chosen test statistic $A=A(\mathbf{Z}, \mathbf{R}, \widehat{\mathbf{\Delta}})$. Let $a$ be the observed value of $A$ based on the observed data $(\mathbf{Z}, \mathbf{R}, \mathbf{\Delta}^{*})$ and imputation algorithm $\mathcal{G}$. Let $p=P(A\geq a\mid \widetilde{H}_{0})$ denote the true finite-population-exact $p$-value under $\widetilde{H}_{0}$. For the approximate $p$-value $\widehat{p}$ reported by Algorithm~\ref{alg: iart for censored data}, under Assumptions \ref{assump: randomization design}, \ref{assump: censoring}, and ~\ref{assump: missingness of censoring}, we have $\widehat{p}\xrightarrow{a.s.} p$ as the number of re-imputation runs $L\rightarrow \infty$. Moreover, for any $\epsilon>0$ and for all $L$, we have $P(| \widehat{p}-p|\geq \epsilon)\leq 2\exp(-2L\epsilon^{2})$.
\end{theorem}

We conduct a simulation study to assess the type-I error rate control and finite-sample power of Algorithm~\ref{alg: iart for censored data}. Similar to the simulation studies in Section~\ref{subsec: simulation studies} in the main text, we consider a stratified randomized experiment with $I$ strata. Each stratum has $10$ subjects, among which we randomly assign the treatment to $5$ subjects and control to the remaining $5$ subjects. 

We follow the same procedure as that in Section~\ref{subsec: simulation studies} to generate the five-dimensional observed covariates $(x_{ij1}, \dots, x_{ij5})$ for each subject $j$ in stratum $i$: $(x_{ij1}, x_{ij2}) \overset{\text{i.i.d.}}{\sim} \mathcal{N}\left(\begin{pmatrix} \frac{1}{2} \\ -\frac{1}{3} \end{pmatrix}, \begin{pmatrix} 1 & \frac{1}{2} \\ \frac{1}{2} & 1 \end{pmatrix}\right)$, $(x_{ij3}, x_{ij4}) \overset{\text{i.i.d.}}{\sim} \text{Laplace}\left(\begin{pmatrix} 0 \\ \frac{1}{\sqrt{3}} \end{pmatrix}, \begin{pmatrix} 1 & \frac{1}{\sqrt{2}} \\ \frac{1}{\sqrt{2}} & 1 \end{pmatrix}\right)$, and $x_{ij5} \overset{\text{i.i.d.}}{\sim} \text{Bernoulli}(1/3)$. Similar to Section~\ref{subsec: simulation studies}, we also generate an aggregate unobserved covariate $u_{ij}$ for each subject $ij$: $u_{ij} \overset{\text{i.i.d.}}{\sim} N(0, 0.2)$. In the outcome-generating process, we still include a stratum-level random effect $\alpha_{i}$ and an individual-level random effect $\epsilon_{ij}$: $\alpha_{i} \overset{\text{i.i.d.}}{\sim} N(0, 0.1)$ and $\epsilon_{ij} \overset{\text{i.i.d.}}{\sim} N(0, 0.2)$. Therefore, the total variance of unobserved terms is $\text{var}(u_{ij})+\text{var}(\alpha_{i})+\text{var}(\epsilon_{ij})=0.5$ (we will normalize the coefficients of the observed covariates to make the variance contributed by observed terms and that contributed by unobserved terms comparable). 

We consider the following data-generating process consisting of: 
\begin{itemize}
    \item 1) A model for generating the true time of event $T_{ij}$: $T_{ij}\sim \text{Poisson}(8 \cdot|\lambda_{T, ij}|)$, where $\lambda_{T,ij}=\beta Z_{ij} \left(1 + x_{ij1} + \frac{1}{\sqrt{5}} \sum_{p=1}^{5} |x_{ijp}| \right) + \frac{1}{\sqrt{5}} \sum_{p=1}^{5} x_{ijp} + \frac{1}{5} \sum_{p=1}^{5} \sum_{p'=1}^{5} x_{ijp} \sigma(1 - x_{ijp^{\prime}}) + u_{ij} + \alpha_{i}+\epsilon_{ij}$;
    \item 2) A model for generating the censoring time $C_{ij}$: $C_{ij}=\widetilde{C}_{ij}\cdot \mathbbm{1}\{\widetilde{C}_{ij}<10\}+10\cdot \mathbbm{1}\{\widetilde{C}_{ij}\geq 10\}$, where $\widetilde{C}_{ij}\sim \text{Poisson}(0.6 \cdot |\lambda_{Cij}|)$ with $\lambda_{Cij}=\frac{1}{\sqrt{5}} \sum \limits_{p=1}^{5} p x_{ijp} + \frac{1}{\sqrt{5}} \sum \limits_{p=1}^{5} p \cos(x_{ijp}) + 10 \sigma(T_{ij}) + u_{ij} + \frac{1}{3}\sum \limits_{j'=1}^{10}x_{ij'1} +  \frac{1}{3}\sum \limits_{j'=1}^{10}T_{ij'}$;
    \item 3) A model for generating the missingness indicator $M^{\Delta}_{ij}$ for censoring status at time $t_{d}$: $M_{ij}^{\Delta}=\mathbbm{1}\{\frac{1}{\sqrt{5}} \sum \limits_{p=1}^{5} p \cos(x_{ijp}) + 10 \sigma(T_{ij}) + 10 \sigma(C_{ij}) + u_{ij} + \frac{1}{3}\sum \limits_{j'=1}^{10}x_{ij'1} +  \frac{1}{3}\sum \limits_{j'=1}^{10}T_{ij'}+ \frac{1}{3}\sum \limits_{j'=1}^{10}C_{ij'}>\lambda_{\Delta}\}$, where $\lambda_{\Delta}$ is a tuning parameter for controlling the outcome missingness rate (we set $\lambda_{\Delta}$ such that the missingness rate is $50\%$ for each simulated dataset), and $\sigma(x)=\exp(x)/(1+\exp(x))$. 
\end{itemize} 

Note that the above data-generating process satisfies Assumption~\ref{assump: randomization design} (randomization design), Assumption~\ref{assump: censoring} (censoring mechanism), and Assumption~\ref{assump: missingness of censoring} (mechanism of missingness in censoring indicators). We consider two small sample size scenarios in which we set the total sample size $N=50$ (corresponding to $I=5$) and $N=200$ (corresponding to $I=20$), and two large sample size scenarios in which we set $N=1000$ (corresponding to $I=100$) and $N=2000$ (corresponding to $I=200$). 

In each model and each simulation scenario, we implement the following four methods for design-based hypothesis testing with missing outcomes (for all  methods, we use the Prentice-Wilcoxon test statistic): 
\begin{itemize}
    \item Method S1 (Non-Informative Imputation): The classic randomization-based Prentice-Wilcoxon test by imputing all the missing censoring status $\Delta_{ij}=\mathbbm{1}\{T_{ij}\leq C_{ij}\}$ with value $0$ (i.e., treat all lost to follow-up as censoring). 
    
    \item Method S2 (Algo 3 -- Logistic): The imputation-assisted randomization-based Prentice-Wilcoxon test based on the imputation and re-imputation framework described in Algorithm~\ref{alg: iart for censored data}, in which we set the embedded outcome imputation algorithm $\mathcal{G}$ to be chained equations imputation based on logistic regression. 
    
    \item Method S3 (Algo 3 -- Boosting): The imputation-assisted randomization-based Prentice-Wilcoxon test based on the imputation and re-imputation framework described in Algorithm~\ref{alg: iart for censored data}, in which we set the embedded outcome imputation algorithm $\mathcal{G}$ to be chained equations imputation based on a boosting algorithm for classification. Specifically, in the small sample size scenarios (when $N=50$ and $N=100$), the boosting algorithm we choose is the commonly used XGBoost algorithm \citep{chen2016xgboost}. In the large sample size scenarios (when $N=1000$ and $N=2000$), the boosting algorithm we use is the widely used LightGBM algorithm \citep{ke2017lightgbm}, a newer gradient boosting decision tree algorithm with less computational cost than XGBoost, which is particularly suitable for implementing more re-imputation runs (i.e., larger $L$) in Algorithm~\ref{alg: iart for censored data} with larger datasets. 
    
    \item Method S4 (Oracle): The classic randomization-based Prentice-Wilcoxon test based on the complete (oracle) censoring status information. 
\end{itemize}

\setcounter{figure}{10} 
\begin{figure}[H]
      \centering
        \captionsetup[subfigure]{skip=2pt} 

        \includegraphics[width=\textwidth]{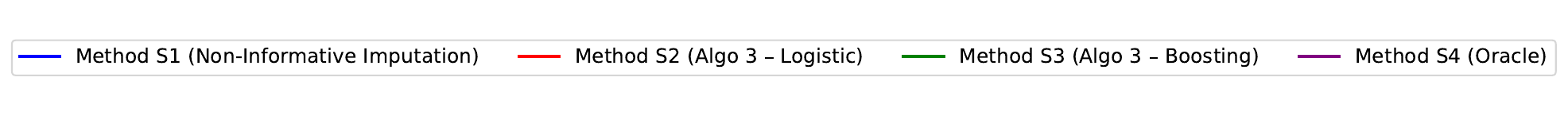}\vspace{-10pt} 
      
      \begin{subfigure}[b]{0.4\textwidth}
        \includegraphics[width=\textwidth]{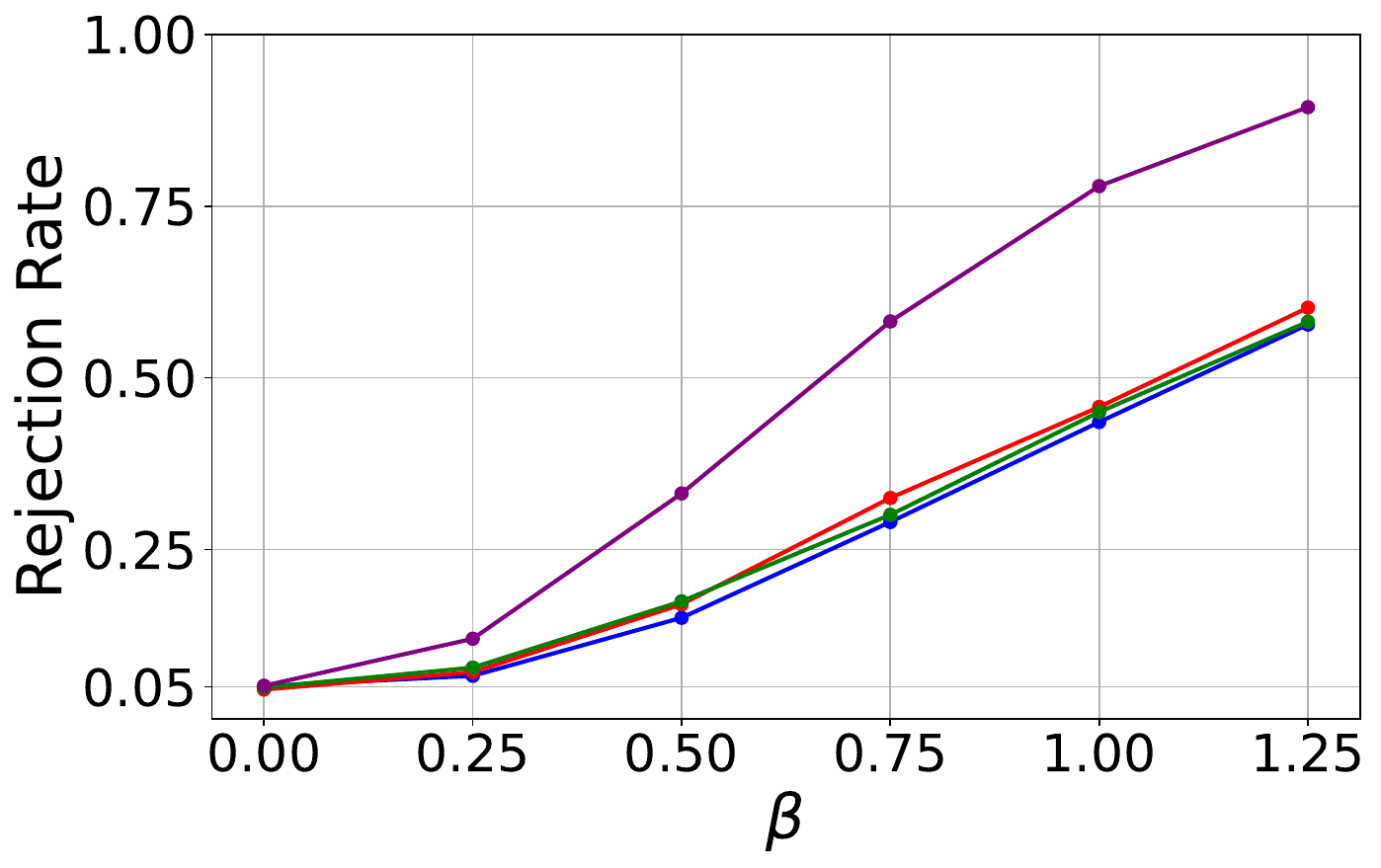}
        \caption{Survival Model ($N=50$)}
      \end{subfigure}
      \hspace{0.1cm}
      \begin{subfigure}[b]{0.4\textwidth}
        \includegraphics[width=\textwidth]{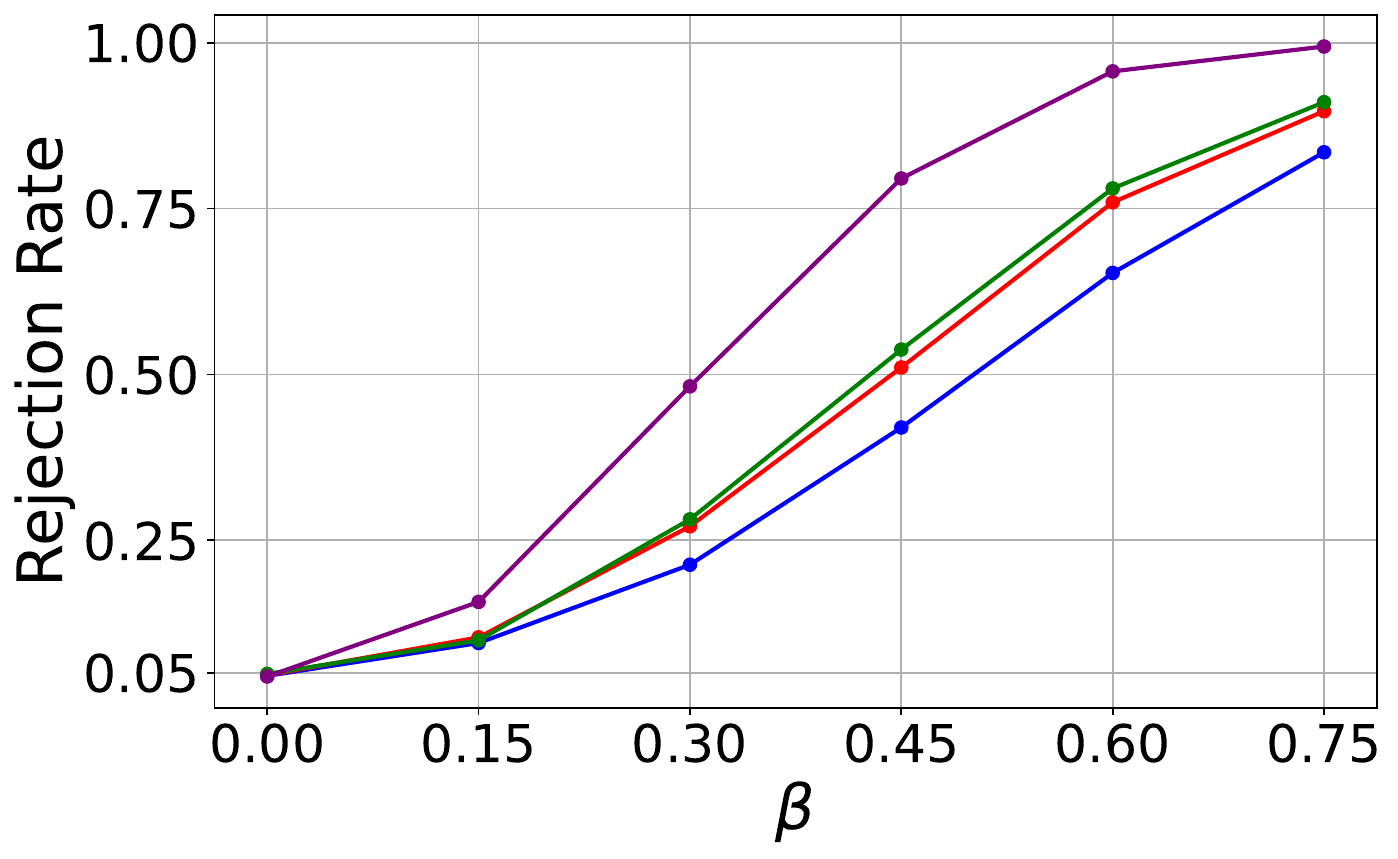}
        \caption{Survival Model ($N=200$)}
      \end{subfigure}

      \begin{subfigure}[b]{0.4\textwidth}
        
        \includegraphics[width=\textwidth]{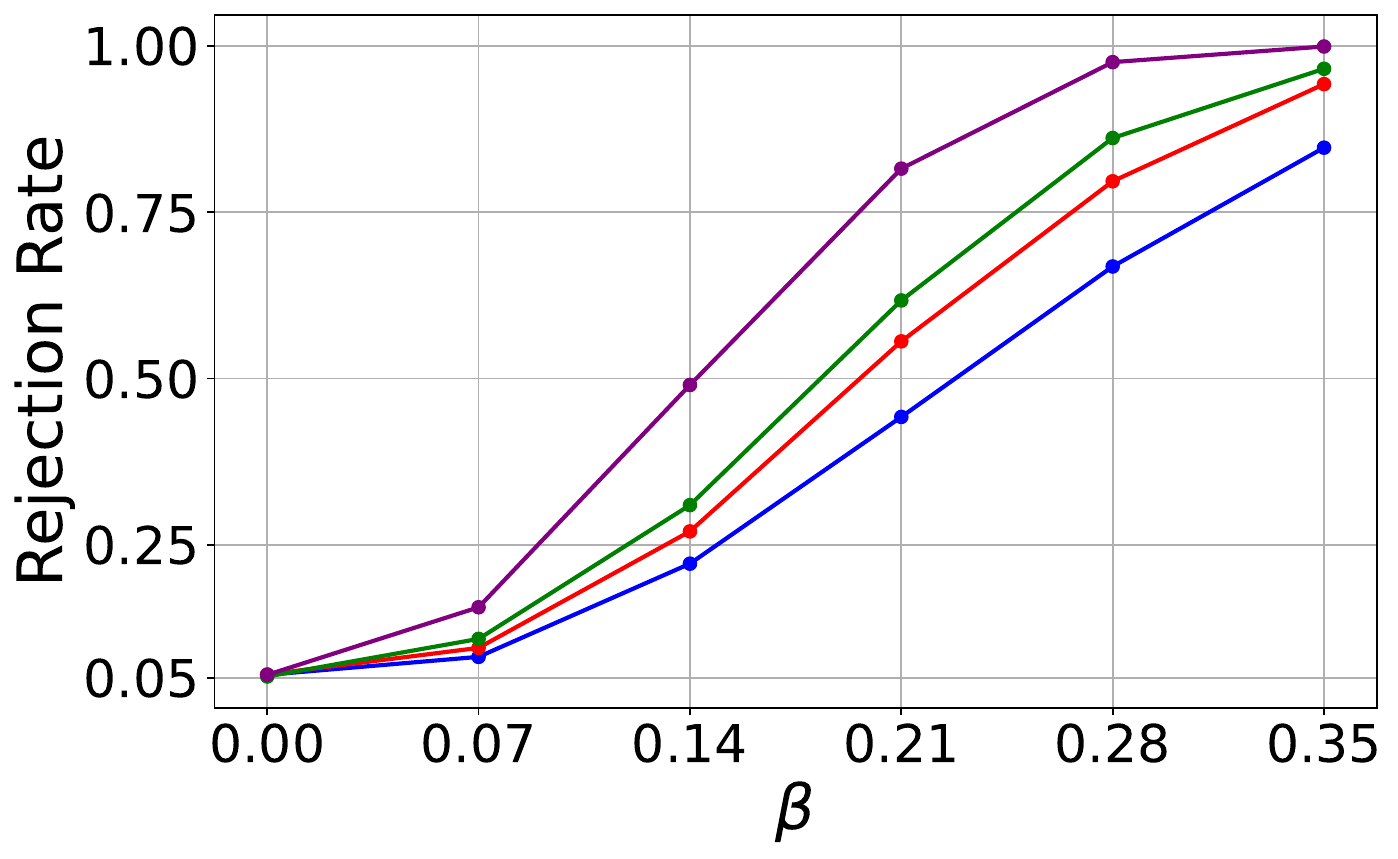}
        \caption{Survival Model ($N=1000$)}
      \end{subfigure}
      \hspace{0.1cm}
      \begin{subfigure}[b]{0.4\textwidth}
        
        \includegraphics[width=\textwidth]{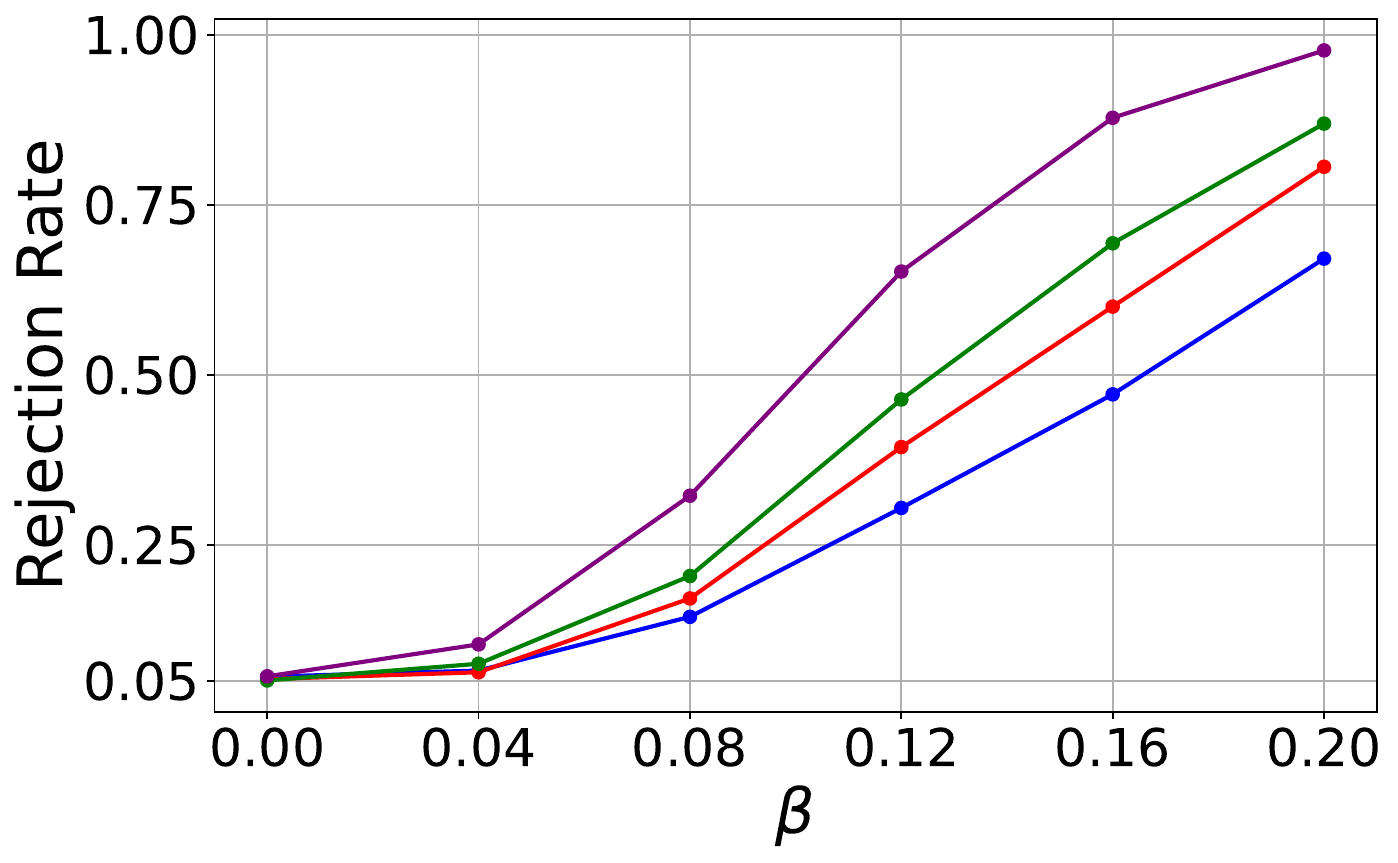}
        \caption{Survival Model ($N=2000$)}
      \end{subfigure}
      \caption{Type-I error rate (when effect size $\beta=0$) and power (when effect size $\beta>0$) of Methods S1--4 under survival model using Algorithm~\ref{alg: iart for censored data} with sample size $N=50$, $N=200$, $N=1000$ and $N=2000$ (level $\alpha=0.05$). }
  \label{fig: censoring simulations}
\end{figure}

The simulation results are presented in Figure~\ref{fig: censoring simulations}, which are based on $2000$ simulated datasets and $L=10{,}000$ re-imputation runs. Figure~\ref{fig: censoring simulations} delivers two important messages: First, the proposed imputation and re-imputation framework described in Algorithm~\ref{alg: iart for censored data}, when applied to the Prentice-Wilcoxon test for survival data with missingness in censoring indicators, can ensure finite-population-exact type-I error rate control with either parametric imputation model (e.g., logistic model) or nonparametric boosting imputation method under the considered data generating process that allows existence of non-linearity, unobserved covariates, outcome-dependent censoring, outcome-dependent missingness in censoring, and interference between subjects. This confirms the theoretical guarantee of type-I error rate control stated in Theorem~\ref{thm: hypothesis testing for censored data}. Second, compared with the commonly used non-informative imputation method, the proposed imputation and re-imputation framework described in Algorithm~\ref{alg: iart for censored data} can improve the statistical power of randomization-based Prentice-Wilcoxon test (without sacrificing the exact type-I error rate control) by incorporating the covariate information into the imputation procedure for missingness in censoring indicators. Such gains in statistical power become more substantial as the sample size increases because more samples are available for training a powerful imputation model used in the imputation and re-imputation framework.

\section*{Appendix G: Proof of Theorem~\ref{thm: hypothesis testing for censored data}}

To prove Theorem~\ref{thm: hypothesis testing for censored data}, we first prove the following lemma that gives an explicit form of the finite-population-exact distribution of the test statistic $A(\mathbf{Z}, \mathbf{R}, \widehat{\mathbf{\Delta}})$ (based on imputed censoring indicators) under $\widetilde{H}_{0}$.

\begin{lemma}\label{lem: p-value under censoring}
  Let $\mathcal{G}: (\mathbf{Z}, \mathbf{X}^{*}, \mathbf{R}, \mathbf{\Delta}^{*})\mapsto \widehat{\mathbf{\Delta}}$ be any imputation algorithm for obtaining the imputed censoring indicators $\widehat{\mathbf{\Delta}}$, and $A(\mathbf{Z}, \mathbf{R}, \widehat{\mathbf{\Delta}})$ be any test statistic based on $\mathbf{Z}$, $\mathbf{R}$, and $\widehat{\mathbf{\Delta}}$. Let $P(A(\mathbf{Z}, \mathbf{R}, \widehat{\mathbf{\Delta}})\geq a \mid \widetilde{H}_{0})$ be the finite-population-exact $p$-value under $\widetilde{H}_{0}$ given the observed value $a$ of $A(\mathbf{Z}, \mathbf{R}, \widehat{\mathbf{\Delta}})$. Under Assumptions \ref{assump: randomization design}, \ref{assump: censoring}, and \ref{assump: missingness of censoring}, if the imputation algorithm $\mathcal{G}$ is deterministic (i.e., $\widehat{\mathbf{\Delta}}=\mathcal{G}(\mathbf{Z}, \mathbf{X}^{*}, \mathbf{R}, \mathbf{\Delta}^{*})$ is a deterministic vector given $(\mathbf{Z}, \mathbf{X}^{*}, \mathbf{R}, \mathbf{\Delta}^{*})$), we have
    \begin{equation}\label{eqn: one-sided exact p-value deterministic under censoring}
    P(A(\mathbf{Z}, \mathbf{R}, \widehat{\mathbf{\Delta}})\geq a \mid \widetilde{H}_{0})=\sum_{\mathbf{z}\in \Omega}\big[\mathbbm{1}\{A(\mathbf{Z}=\mathbf{z}, \mathbf{R}, \widehat{\mathbf{\Delta}}=\mathcal{G}(\mathbf{Z}=\mathbf{z}, \mathbf{X}^{*}, \mathbf{R}, \mathbf{\Delta}^{*}))\geq a\}\times P(\mathbf{Z}=\mathbf{z})\big],
    \end{equation}
    and if the imputation algorithm $\mathcal{G}$ is stochastic (i.e., $\widehat{\mathbf{\Delta}}=\mathcal{G}(\mathbf{Z}, \mathbf{X}^{*}, \mathbf{R}, \mathbf{\Delta}^{*})$ is a random vector conditional on $(\mathbf{Z}, \mathbf{X}^{*}, \mathbf{R}, \mathbf{\Delta}^{*})$), we have
    \begin{equation}\label{eqn: one-sided exact p-value stochastic under censoring}
     P(A(\mathbf{Z}, \mathbf{R}, \widehat{\mathbf{\Delta}})\geq a \mid \widetilde{H}_{0})=\sum_{\mathbf{z}\in \Omega}\big[P\{A(\mathbf{Z}=\mathbf{z}, \mathbf{R}, \widehat{\mathbf{\Delta}}=\mathcal{G}(\mathbf{Z}=\mathbf{z}, \mathbf{X}^{*}, \mathbf{R}, \mathbf{\Delta}^{*}))\geq a\}\times P(\mathbf{Z}=\mathbf{z})\big].
    \end{equation}
\end{lemma}

\begin{proof}
    We prove Lemma~\ref{lem: p-value under censoring} by discussing the following two cases separately.

    \textbf{Case 1: The imputation algorithm $\mathcal{G}$ is deterministic.}

    To prove that equation (\ref{eqn: one-sided exact p-value deterministic under censoring}) holds, the key point is to show that the realized censoring indicators $\mathbf{\Delta}^{*}=(\Delta_{11}^{*}, \dots, \Delta_{In_{I}}^{*})$ are invariant under any treatment assignments $\mathbf{Z}$. Note that under $\widetilde{H}_{0}$, the true survival time $\mathbf{T}=(T_{11}, \dots, T_{In_{I}})$ are invariant under any $\mathbf{Z}$. Then, based on Assumption~\ref{assump: censoring}, such invariance of $\mathbf{T}$ further implies that the true censoring time $\mathbf{C}=(C_{11}, \dots, C_{In_{I}})$ are also invariant under any $\mathbf{Z}$. Finally, under Assumption~\ref{assump: missingness of censoring}, invariance of $\mathbf{T}$ and $\mathbf{C}$ implies that the censoring status missingness indicators $\mathbf{M}^{\Delta}=(M_{11}^{\Delta}, \dots, M_{In_{I}}^{\Delta})$ are also invariant under different $\mathbf{Z}$. Recall that for each realized censoring indicator $\Delta_{ij}^{*}$, we have $\Delta_{ij}^{*}=\Delta_{ij}=\mathbbm{1}\{T_{ij}\leq C_{ij}\}$ if $M_{ij}^{\Delta}=0$ and $\Delta_{ij}^{*}=\text{``Missing"}$ if $M^{\Delta}_{ij}=1$. Therefore, the realized censoring indicators $\mathbf{\Delta}^{*}=(\Delta_{11}^{*}, \dots, \Delta_{In_{I}}^{*})$ are invariant under any treatment assignments $\mathbf{Z}$. That is, if we let $\mathbf{\Delta}^{*}(\mathbf{z})=(\Delta_{11}^{*}(\mathbf{z}), \dots, \Delta_{In_{I}}^{*}(\mathbf{z}))$ denote the potential realized censoring indicators under $\mathbf{Z}=\mathbf{z}$, we have the observed realized censoring indicators $\mathbf{\Delta}^{*}=\mathbf{\Delta}^{*}(\mathbf{z})$ for any $\mathbf{z}$. Then, we have 
    \begin{align*}
        P(A(\mathbf{Z}, \mathbf{R}, \widehat{\mathbf{\Delta}})\geq a \mid \widetilde{H}_{0})&=\sum_{\mathbf{z}\in \Omega}\big[\mathbbm{1}\{A(\mathbf{Z}=\mathbf{z}, \mathbf{R}, \widehat{\mathbf{\Delta}}=\mathcal{G}(\mathbf{Z}=\mathbf{z}, \mathbf{X}^{*}, \mathbf{R}, \mathbf{\Delta}^{*}(\mathbf{z})))\geq a\}\times P(\mathbf{Z}=\mathbf{z})\big]\\
        &=\sum_{\mathbf{z}\in \Omega}\big[\mathbbm{1}\{A(\mathbf{Z}=\mathbf{z}, \mathbf{R}, \widehat{\mathbf{\Delta}}=\mathcal{G}(\mathbf{Z}=\mathbf{z}, \mathbf{X}^{*}, \mathbf{R}, \mathbf{\Delta}^{*}))\geq a\}\times P(\mathbf{Z}=\mathbf{z})\big].
    \end{align*}
    Therefore, equation (\ref{eqn: one-sided exact p-value deterministic under censoring}) holds. 
    
 \textbf{Case 2: The imputation algorithm $\mathcal{G}$ is stochastic.}

    In Case 1, we have shown that the observed realized outcomes $\mathbf{\Delta}^{*}=\mathbf{\Delta}^{*}(\mathbf{z})$ for any $\mathbf{z}$ when \textit{both} Fisher's sharp null $\widetilde{H}_{0}$ and Assumptions~\ref{assump: censoring} and \ref{assump: missingness of censoring} hold. Then, we have 
    \begin{align*}
        P(A(\mathbf{Z}, \mathbf{R}, \widehat{\mathbf{\Delta}})\geq a \mid \widetilde{H}_{0})&=\sum_{\mathbf{z}\in \Omega}\big[P\{A(\mathbf{Z}=\mathbf{z}, \mathbf{R}, \widehat{\mathbf{\Delta}}=\mathcal{G}(\mathbf{Z}=\mathbf{z}, \mathbf{X}^{*}, \mathbf{R}, \mathbf{\Delta}^{*}(\mathbf{z})))\geq a\}\times P(\mathbf{Z}=\mathbf{z})\big]\\
        &=\sum_{\mathbf{z}\in \Omega}\big[P\{A(\mathbf{Z}=\mathbf{z}, \mathbf{R}, \widehat{\mathbf{\Delta}}=\mathcal{G}(\mathbf{Z}=\mathbf{z}, \mathbf{X}^{*}, \mathbf{R}, \mathbf{\Delta}^{*}))\geq a\}\times P(\mathbf{Z}=\mathbf{z})\big].
    \end{align*}
    Therefore, equation (\ref{eqn: one-sided exact p-value stochastic under censoring}) also holds. 
\end{proof}

Now, we are ready to prove Theorem~\ref{thm: hypothesis testing for censored data}.

\begin{proof}[(Proof of Theorem~\ref{thm: hypothesis testing for censored data})]
    Recall that in Algorithm~\ref{alg: iart for censored data}, we consider 
   \begin{align*}
        \widehat{p}=\frac{1}{L}\sum_{l=1}^{L}\mathbbm{1}\{A^{(l)}\geq a\}=\frac{1}{L}\sum_{l=1}^{L}\mathbbm{1}\{A(\mathbf{Z}^{(l)},\mathbf{R}, \widehat{\mathbf{\Delta}}^{(l)})\geq a\}.
   \end{align*}
  For $l=1,\dots, L$, random variables $\mathbbm{1}\{A^{(l)}\geq a\}$ independently and identically follow Bernoulli distribution. Moreover, applying the arguments in Lemma~\ref{lem: p-value under censoring}, if the imputation algorithm $\mathcal{G}$ is deterministic, we have 
  \begin{align*}
      &\quad \ P(A^{(l)}\geq a \mid \widetilde{H}_{0}) \\
      &=P(A(\mathbf{Z}^{(l)}, \mathbf{R}, \widehat{\mathbf{\Delta}}^{(l)})\geq a \mid \widetilde{H}_{0}) \\
      &= P(A(\mathbf{Z}^{(l)}, \mathbf{R}, \mathcal{G}(\mathbf{Z}^{(l)}, \mathbf{X}^{*}, \mathbf{R}, \mathbf{\Delta}^{*}))\geq a \mid \widetilde{H}_{0}) \quad (\text{$\mathbf{\Delta}^{*}$ is the observed realized censoring indicators}) \\
      &=\sum_{\mathbf{z}\in \Omega}\big[\mathbbm{1}\{A(\mathbf{Z}=\mathbf{z}, \mathbf{R}, \widehat{\mathbf{\Delta}}=\mathcal{G}(\mathbf{Z}=\mathbf{z}, \mathbf{X}^{*}, \mathbf{R}, \mathbf{\Delta}^{*}))\geq a\}\times P(\mathbf{Z}=\mathbf{z})\big]\\
      &=\sum_{\mathbf{z}\in \Omega}\big[\mathbbm{1}\{A(\mathbf{Z}=\mathbf{z}, \mathbf{R}, \widehat{\mathbf{\Delta}}=\mathcal{G}(\mathbf{Z}=\mathbf{z}, \mathbf{X}^{*}, \mathbf{R}, \mathbf{\Delta}^{*}(\mathbf{z})))\geq a\}\times P(\mathbf{Z}=\mathbf{z})\big]\\
      &=P(A(\mathbf{Z}, \mathbf{R}, \widehat{\mathbf{\Delta}})\geq a \mid \widetilde{H}_{0}),
  \end{align*}
  and if the imputation algorithm $\mathcal{G}$ is stochastic, we also have 
  \begin{align*}
      &\quad \ P(A^{(l)}\geq a \mid \widetilde{H}_{0}) \\
      &=P(A(\mathbf{Z}^{(l)}, \mathbf{R}, \widehat{\mathbf{\Delta}}^{(l)})\geq a \mid \widetilde{H}_{0}) \\
      &= P(A(\mathbf{Z}^{(l)}, \mathbf{R}, \mathcal{G}(\mathbf{Z}^{(l)}, \mathbf{X}^{*}, \mathbf{R}, \mathbf{\Delta}^{*}))\geq a \mid \widetilde{H}_{0})  \\
      &=\sum_{\mathbf{z}\in \Omega}\big[P\{A(\mathbf{Z}=\mathbf{z}, \mathbf{R}, \widehat{\mathbf{\Delta}}=\mathcal{G}(\mathbf{Z}=\mathbf{z}, \mathbf{X}^{*}, \mathbf{R}, \mathbf{\Delta}^{*}))\geq a\}\times P(\mathbf{Z}=\mathbf{z})\big]\\
      &=\sum_{\mathbf{z}\in \Omega}\big[P\{A(\mathbf{Z}=\mathbf{z}, \mathbf{R}, \widehat{\mathbf{\Delta}}=\mathcal{G}(\mathbf{Z}=\mathbf{z}, \mathbf{X}^{*}, \mathbf{R}, \mathbf{\Delta}^{*}(\mathbf{z})))\geq a\}\times P(\mathbf{Z}=\mathbf{z})\big]\\
      &=P(A(\mathbf{Z}, \mathbf{R}, \widehat{\mathbf{\Delta}})\geq a \mid \widetilde{H}_{0}).
  \end{align*}
  Therefore, applying the law of large numbers, we have 
  \begin{equation*}
      \text{$\widehat{p} \xrightarrow{a.s.} p$ as the number of re-imputation runs $L\rightarrow \infty$.}
  \end{equation*}
  Also, invoking Hoeffding's inequality (\citealp{hoeffding1994probability}), for any $\epsilon>0$ and any $L$, we have 
  \begin{equation*}
      P(| \widehat{p}-p|\geq \epsilon)\leq 2\exp(-2L\epsilon^{2}).
  \end{equation*}
\end{proof}

\putbib[references]
\end{bibunit}

\end{document}